\DeclareFontFamily{U}{mathx}{\hyphenchar\font45}
\DeclareFontShape{U}{mathx}{m}{n}{
	<5> <6> <7> <8> <9> <10> gen * mathx
	<10.95> mathx10 <12> <14.4> <17.28> <20.74> <24.88> mathx12
}{}
\DeclareSymbolFont{mathx}{U}{mathx}{m}{n}
\DeclareMathAccent{\widecheck}{0}{mathx}{"71}
\DeclareMathAlphabet{\mymathbf}{U}{bbold}{m}{n}
\definecolor{OliveGreen}{rgb}{0, 0.5, 0}
\newcommand{\SSnec}{\mathcal{S}^{\texttt{nec}}}
\newcommand{\SSrem}{\mathcal{S}^{\texttt{rem}}}
\newtheorem{theorem}{Theorem}[section]
\newtheorem{definition}[theorem]{Definition}
\newtheorem{lemma}[theorem]{Lemma}
\newtheorem{example}[theorem]{Example}
\newtheorem{corollary}[theorem]{Corollary}
\newtheorem{proposition}[theorem]{Proposition}
\newcommand{\longthmtitle}[1]{\mbox{}\textup{\textsl{(#1):}}}
\newcommand{\CC}{\mathcal{C}}
\newcommand{\DD}{\mathcal{D}}
\newcommand{\EE}{\mathcal{E}}
\newcommand{\FF}{\mathcal{F}}
\newcommand{\GG}{\mathcal{G}}
\newcommand{\HH}{\mathcal{H}}
\newcommand{\II}{\mathcal{I}}
\newcommand{\JJ}{\mathcal{J}}
\newcommand{\MM}{\mathcal{M}}
\newcommand{\NN}{\mathcal{N}}
\newcommand{\PP}{\mathcal{P}}
\newcommand{\QQ}{\mathcal{Q}}
\newcommand{\RR}{\mathcal{R}}
\newcommand{\SSS}{\mathcal{S}}
\newcommand{\VV}{\mathcal{V}}
\newcommand{\WW}{\mathcal{W}}
\newcommand{\tillf}{\widetilde{f}}
\newcommand{\tillC}{\widetilde{C}}
\newcommand{\tillD}{\widetilde{D}}
\newcommand{\tillM}{\widetilde{M}}
\newcommand{\tillV}{\widetilde{V}}
\newcommand{\tilllamb}{\widetilde{\lambda}}
\newcommand{\tillGamma}{\widetilde{\Gamma}}
\newcommand{\tillDD}{\widetilde{\DD}}
\newcommand{\tillFF}{\widetilde{\FF}}
\newcommand{\tillHH}{\widetilde{\HH}}
\newcommand{\tillMM}{\widetilde{\MM}}
\newcommand{\tillPP}{\widetilde{\PP}}
\newcommand{\tillQQ}{\widetilde{\QQ}}
\newcommand{\tillRR}{\widetilde{\RR}}
\newcommand{\tillWW}{\widetilde{\WW}}
\newcommand{\hatf}{\widehat{f}}
\newcommand{\hatPP}{\widehat{\PP}}
\newcommand{\hatSSS}{\widehat{\SSS}}
\newcommand{\hatGamma}{\widehat{\Gamma}}
\newcommand{\real}{{\mathbb{R}}}
\newcommand{\realnonnegative}{{\mathbb{R}}_{\ge 0}}
\newcommand{\naturalnumbers}{\mathbb{N}}
\newcommand{\setdef}[2]{\{#1 \; | \; #2\}}
\newcommand{\setdefb}[2]{\Bigl\{#1 \; | \; #2\Bigr\}}
\newcommand{\map}[3]{#1:#2 \rightarrow #3}
\newcommand{\abs}[1]{\ensuremath{\left\lvert{#1}\right\rvert}}
\newcommand{\norm}[1]{\ensuremath{\| #1 \|}}
\newcommand{\classfn}{\mathcal{K}}
\newcommand{\aset}{\RR^{\operatorname{act}}}
\newcommand{\uset}{\RR^{\operatorname{use}}}
\newcommand{\tilaset}{\widetilde{\RR}^{\operatorname{act}}}
\newcommand{\tiluset}{\widetilde{\RR}^{\operatorname{use}}}
\newcommand{\hatuset}{\widehat{\RR}^{\operatorname{use}}}
\newcommand{\Ja}{\JJ^{\operatorname{act}}}
\newcommand{\Ju}{\JJ^{\operatorname{use}}}
\newcommand{\tillJa}{\widetilde{\JJ}^{\operatorname{act}}}
\newcommand{\vin}{v^{\operatorname{in}}}
\newcommand{\vout}{v^{\operatorname{out}}}
\newcommand{\VI}{\operatorname{VI}}
\newcommand{\SOL}{\operatorname{SOL}}
\newcommand{\lmvec}{\lambda^{\operatorname{vec}}}
\newcommand{\lmWE}{\lambda^{\operatorname{WE}}}
\newcommand{\hatlmWE}{\widehat{\lambda}^{\operatorname{WE}}}
\newcommand{\tillmWE}{\widetilde{\lambda}^{\operatorname{WE}}}
\newcommand{\checklmWE}{\widecheck{\lambda}^{\operatorname{WE}}}
\newcommand{\tillmvec}{\widetilde{\lambda}^{\operatorname{vec}}}
\newcommand{\delmplus}{\delta \lambda^+}
\newcommand{\delmminus}{\delta \lambda^-}
\newcommand{\tildelmplus}{\delta \widetilde{\lambda}^+}
\newcommand{\tildelmminus}{\delta \widetilde{\lambda}^-}
\newcommand{\oprocendsymbol}{\hbox{$\bullet$}}
\newcommand{\oprocend}{\relax\ifmmode\else\unskip\hfill\fi\oprocendsymbol}
\newcommand{\coco}{\operatorname{coco}}
\newcommand{\vecPP}{\vec{\PP}}
\newcommand{\vecDD}{\vec{\DD}}
\newcommand{\vecD}{\vec{D}}
\newcommand{\vecM}{\vec{\MM}}
\newcommand{\veclamb}{\vec{\lambda}}
\newcommand{\vecmWE}{\vec{\lambda}^{\operatorname{WE}}}
\newcommand{\ifinclude}[1]{}
\renewcommand{\ifinclude}[1]{#1}
\title{Wardrop equilibrium and Braess's paradox for varying demand}
\author{Jasper Verbree, 
	Ashish Cherukuri
	\thanks{J. Verbree and A. Cherukuri 
		are with the ENgineering and TEchnology institute Groningen (ENTEG), University of Groningen, 
		\texttt{\{j.verbree, a.k.cherukuri\}@rug.nl}.}
}
\begin{document}
	\maketitle
	\begin{abstract}
		This work explores the relationship between the set of Wardrop equilibria~(WE) of a routing game, the total demand of that game, and the occurrence of Braess's paradox~(BP). The BP formalizes the counter-intuitive fact that for some networks, removing a path from the network decreases congestion at WE. For a single origin-destination routing games with affine cost functions, the first part of this work provides tools for analyzing the evolution of the WE as the demand varies. It characterizes the piece-wise affine nature of this dependence by showing that the set of directions in which the WE can vary in each piece is the solution of a variational inequality problem. In the process we establish various properties of changes in the set of used and minimal-cost paths as demand varies. As a consequence of these characterizations, we derive a procedure to obtain the WE for all demands above a certain threshold. The second part of the paper deals with detecting the presence of BP in a network. We supply a number of sufficient conditions that reveal the presence of BP and that are computationally tractable. We also discuss a different perspective on BP, where we establish that a path causing BP at a particular demand must be strictly beneficial to the network at a lower demand. Several examples throughout this work illustrate and elaborate our findings.
	\end{abstract}

	\section{Introduction}
	With the ever increasing amount of traffic that finds its way to the road, the challenges of designing efficient traffic networks becomes more and more prevalent. Unfortunately, when drivers make routing choices in a selfish manner, that is, when traffic routes according to Wardrop equilibrium (WE)~\cite{JRC-NESM:11}, road networks are susceptible to innefficiencies which are hard to predict in advance, and hard to detect when they occur. For instance, in some cases travel cost of all drivers could be decreased by removing some roads in the network; a phenomenon known as Braess's paradox (BP). Constructing a road that subsequently increases travel cost of all drivers using a network is of course the epitome of innefficiency. However, detecting whether or not a traffic network is or will be susceptible to BP, even when using a simple model in which travel costs on roads are affine functions of the traffic on that road, is in general NP-hard \cite{TR:06}. 
	
	In addition to being computationally expensive to detect, BP is also known to be highly dependent on the amount of traffic, or demand, that traverses a network. Though the removal of a road can decrease travel cost for one level of demand, it can increase travel cost at other levels of demand. Removal of a road responsible for BP therefore requires careful consideration of the effect of that removal for the whole range of demands which the network can experience.
	
	Motivated by the above listed observations, we study the relation between these two challenges. To be specific, we study non-atomic, single origin-destination routing games with affine cost functions on the edges of the network. For these type of games we analyze the effects that changes in the demand have on the WE of these games, and then use the obtained observations to gain insight in how the effects of BP change as the demand varies. On the basis of these insights we can give various easy-to-check sufficient conditions revealing the presence of BP.

	\subsection*{Literature review}
	Wardrop equilibria(WE), first introduced in \cite{JGW:52}, have long been an important part of traffic modeling and control. The underlying idea of this concept is that when the flow over a traffic network is a WE, no driver can achieve a lower travel cost by switching its routing choices. In this way it captures the behavior of drivers minimizing their individual travel cost in an intuitive way. The concept has practical use since the set of WE is equivalently given by the set of solutions of a Variational Inequality (VI), which can be computed efficiently. We refer to \cite{JRC-NESM:11} for a survey on the subject of WE in general.
	
	Research on routing games with changing demand, and the effects thereof on WE and the associated costs, often analyzes a more general framework then we consider here. In \cite{AN-DP-PD:07}, WE are studied in a dynamic setting, where demand is time-dependent. Work in \cite{RC-VD-MS:21} is focused on studying the relation between the total cost under WE and the minimally attainable total cost, the difference between which is known as the price of anarchy. A  thorough investigation of the continuity of WE of routing games and monotonicity of the associated costs with respect to changes in the demand is done in~\cite{MAH:78}. When multiple Origin-Destination (OD) pairs are considered, \cite{CF:79} shows how costs under WE can decrease as demand increases. In \cite{MP:04,MJ-MP:07} the sensitivity of WE to changes in a broader class of parameters than only the demand is investigated. 
	Most closely related to our work in this context is \cite{MP:04}. The conclusion obtained in the first part of this work are similar to results obtained in \cite{MP:04}, the difference being that we consider a more limited set of cost functions (namely affine cost functions), only consider a single OD-pair, and crucially, do not constrain the WE to be unique. In addition, several intermediate results obtained in our analysis  turn out to be useful in the second part of this paper, concerning the Braess's paradox.
	
	Similar to the subject of Wardrop equilibria, the related concept of Braess's paradox (BP), has been the subject of a lot of research since its introduction in~\cite{DB:68} (see~\cite{DB-AN-TW:05} for an English translation). Various approaches to detecting the presence of BP have been studied, of which we mention a few examples here. In~\cite{JNH-RAA:01}, a necessary and sufficient condition for the occurrence of BP are given based on properties of the solutions of a minimization problem. Similarly, in~\cite{CM-QC-SA-BS-VND:19}, a comparison between flows that minimize specific objectives, such as WE, and the flow that minimizes the average cost are used to identify links that potentially cause BP. In~\cite{SAB-AC-MT-CB:14}, a heuristic method is employed first identifying single links which cause BP, and subsequently finding the subset of all these links that when removed lead the lowest cost under WE. Rather than focusing on a network with given demand and cost functions, the work~\cite{IM:06} investigates the relation between network topology and BP, showing that any network that is not series-parallel is subject to BP for some assignment of cost functions. However, despite these investigations, tractable options for detecting BP remain limited, and the general problem of detecting BP has in fact been shown to be NP-hard \cite{TR:06}. In light of this computational intractability, the likelihood of a network being subject to Braess's paradox has been studied. In~\cite{GV-TR:06}, it is shown that for randomly generated Erd\"{o}s-Renyi graphs, the likelihood of there existing a set of edges which cause BP at some demand tends to one as the number of vertices tends to infinity. In~\cite{RS-WIZ:83}, necessary and sufficient conditions for the occurrence of BP are given, and it is shown, under the limiting assumption that paths that were used under WE before removal of an edge are still used under WE after removal as well, that BP is as likely to occur as not. Closer to the subject of our work, the relation between demand and BP has also been studied in several papers. For the archetypical example of the Wheatstone network with affine costs, BP has been shown to only appear within a finite range of demands \cite{EIP-SLP:97,VZ-EA:15}. It has also been shown that under somewhat restrictive assumptions an edge responsible for BP will for higher levels of demand remain unused \cite{AN:10}.

	\subsection*{Statement of contributions}
	The aim of this paper is to study the relationship between changing demand, Wardrop equilibrium (WE) and Braess's paradox (BP), in the context of non-atomic, single-origin destination routing games with affine, non-decreasing cost functions on the edges. The paper can be broadly divided into two parts, the first characterizes the change in WE as demand shifts and the second considers Braess's paradox, its identification and the effect of demand on it.
	
	When the WE is unique for all demand values, the map from the demand to the WE and the cost incurred by each path at WE (loosely termed as the WE cost-vector) is piece-wise affine. These pieces are defined by a set of points that we refer to as breakpoints. That is, the WE and the WE cost-vector vary linearly with demand between any two consecutive breakpoints. These breakpoints are finite in number. When the WE is not unique, the situation is similar, the only change being that the map from the demand to WE is set-valued. In the first part of our work, our contributions consist of the following results concerning the evolution of the set of WE:
	\begin{enumerate}
		\item In (Corollary~\ref{cor:interval-of-active-set} and Lemma~\ref{lem:inclusion-active-and-used}), we analyze how the used sets and active sets evolve between and at breakpoints. The used set is the set of paths that carry non-zero flow for at least on WE. The active set is the set of paths that incur minimum cost at WE among all paths. 
		\item The main result of the first part (Proposition~\ref{prop:characterize-directions-of-increase}) is the characterization of the directions in which the WE can vary as the demand increases. We call these directions the directions of increase. We show that these can be computed by solving a VI problem. Essentially this is a generalization of results in \cite{MP:04}, where a similar property was derived under the assumption that WE are unique.
		\item In the process of establishing the main result, we also characterize the map from demand to the WE-cost (cost incurred by the entire flow at WE) and from demand to the WE cost-vector (cost of each path at a WE) between consecutive breakpoints (Proposition~\ref{prop:evolution-lmvec} and Corollary~\ref{cor:evolution-lmWE}). 
		\item The final set of results in this section studies the behavior of WE in the ``final interval'' (Section~\ref{sec:obtain-DM}); that is, the interval that starts at the largest finite breakpoint and extends till positive infinity. Specifically, we give a method to compute the largest finite breakpoint (Corollary~\ref{cor:obtainDM}), the direction in which WE cost and WE cost-vector evolve in the final interval (Proposition~\ref{prop:characterize-D-infinity} and Lemma~\ref{lem:obtaining-lambda-D_M}), and the active set in the final interval (Lemma~\ref{eq:finding-JaM}). 
	\end{enumerate}
	Using the results of the first part, we study Braess's paradox and give the following main insights that form the second part of our contributions:
	\begin{enumerate}
		\item We introduce the concept of a \emph{necessary} set of paths, which for a given demand is a set of paths on which the total amount of flow is positive under any WE. We show in Proposition~\ref{prop:unnecesary-and-BP} that when a set of paths is not necessary at a demand, then this indicates that it is either unnecessary for all lower levels of demand, or this set is responsible for BP on some lower level of demand.
		\item  Next we provide a set of affine functions which give a set of sufficient conditions for the presence of BP (Lemma~\ref{lem:lambda_vs_u}). Specifically, if at any demand, the WE-cost of the game is more than the value that any of these affine functions take at this demand, then the network is subject to BP. 
		In fact we show that the network is subject to BP if and only if its WE-cost is less than one of these affine functions (Proposition~\ref{prop:BP-ifandonlyif}). However, obtaining all of these affine functions is computationally intractable. Therefore we provide two more sufficient conditions, based on these affine fuctions, that detect BP at a particular demand in a tractable manner. The first one (Corollary~\ref{prop:delta-lambda-increase-BP}) shows that  if the slope of WE-cost increases at any breakpoint, then the network is subject to BP. Additionally, if at a given demand, the set of directions of increase does not overlap with a simplex, then the routing game is subject to BP at that demand (Proposition~\ref{prop:losing-flow-implies-BP}) 
		\item The next set of results views BP in a broader perspective and explores its dependence on the demand. We show that any set of paths responsible for BP at some demand must have been strictly beneficial for some lower level of demand (Corollary~\ref{cor:BP-beneficial}). Here, beneficial means that removal of that set of paths will strictly increase the WE-cost. Motivated by this observation, we introduce two simple metrics that measure the value of a set of paths over a range of demands. For the first metric, we show that over a range of demand, removing a set of paths does not have a negative impact cost-wise (Proposition~\ref{pr:J}). When considering the second metric, we show that the over a range of demand, removing a set of paths does not have a positive impact cost-wise (Proposition~\ref{pr:W}). Thus, these two results give a guideline for the planner to remove paths based on the metric that needs to be improved on.
	\end{enumerate}
	Several results listed above are supported with illustrative examples that clarify nuances and aid understanding. 
	
	\subsection*{Notation}
	The set of natural numbers is denoted by $\naturalnumbers$. For any $n \in \mathbb{N}$ we write ${[n] := \{1,2,\cdots,n\}}$ and ${[n]_0 := \{0,1,\cdots,n\}}$. We use $\mymathbf{0}_n$ to denote the $n$-dimensional vector of zeros. If the dimension is clear from the context, we simply write $\mymathbf{0}$. The hyperplane in $\real^n$ containing set of points whose elements add up to $D \in \real$ is written as ${\HH_D := \setdef{f \in \real^n}{\sum_{i \in [n]} f_i = D}}$. For any $f^-,f^+ \in \real^n$ and $\mu \in [0,1]$ we use $\coco_{\mu}(f^-,f^+) : = \mu f^- + (1 - \mu)f^+$ to denote the convex combination of $f^-$ and $f^+$.
	For a set $\PP \subseteq [n]$ we use ${\PP^c = \setdef{p \in [n]}{p \notin \PP}}$ to denote the complement of that set, and write $\abs{\PP}$ for the number of elements in $\PP$. For a matrix $A \in \real^{n \times n}$, we let $A_i \in \real^{1 \times n}$ denote the $i$-th row of that matrix. Given a map ${g: \real \rightarrow \real^n}$, we use $\frac{\partial^+}{\partial x} g(x) : = \lim_{h \rightarrow 0^+}\frac{g(x + h) - g(h)}{h}$ and ${\frac{\partial^-}{\partial x} g(x) : = \lim_{h \rightarrow 0^-}\frac{g(x + h) - g(h)}{h}}$ to denote the right-hand and left-hand derivative of $g$ at $x$, respectively, whenever these limits exists.
	
	\section{Model and problem statement} \label{sec:model}
	Consider a network defined by a directed graph $\GG = (\VV,\EE)$, where $\VV = [N]$, $N \in \mathbb{N}$, is the set of vertices and $\EE \subseteq \VV \times \VV$ is the set of edges. Each edge $e_k \in \EE$ consists of an ordered pair of vertices $(\vin_k,\vout_k)$, where $\vin_k, \vout_k \in \VV$. A path $p$ from $v_o$ to $v_{d}$ is then an ordered set of edges $(e_{p_1},\cdots,e_{p_l})$ such that $\vin_{p_1} = v_{o}$, $\vout_{p_l} = v_d$ and $\vout_{pk} = \vin_{pk+1}$ for all $k \in [l-1]$. In addition, we define paths as being acyclic, meaning that $\vout_{pj} \neq \vin_{p_i}$ for all $1 \leq i \leq j \leq l$. 
	To this network, we associate an origin $v_o \in \VV$ and a destination $v_d \in \VV$. We denote the set of paths in the graph starting at $v_o$ and ending at $v_d$ by $\PP$. Consider an amount of traffic, denoted $D \in \realnonnegative$, and referred to as the \emph{demand}, that needs to be routed from origin to destination using paths in $\PP$. The way the traffic is divided over the paths gives rise to a flow $f \in \realnonnegative^n$, where $n = |\PP|$ is the number of paths from origin to destination, and an element $f_p$ of $f$ denotes the amount of flow that is routed over path $p$. The flow is assumed to be non-atomic, meaning that $f$ is a continuous variable. The set of \emph{feasible} flows is then given by
	\begin{equation}\label{eq:feasible}
		\FF_D := \setdefb{f \in \realnonnegative^n}{\sum_{p \in \PP}f_p = D}.
	\end{equation}
	For a flow $f$, the amount of traffic on edge $e_k \in \EE$ is given~by
	\begin{equation}\label{eq:flow-e-p}
		f_{e_k} := \sum_{\setdef{p \in \PP}{e_k \in p}}f_p.
	\end{equation} 
	We associate a real-valued cost function $f_{e_k} \mapsto C_{e_k}(f_{e_k})$ to each edge $e_k$, which models the time it takes to traverse an edge $e_k$ given the flow on that edge. We assume that these functions are in the set $\classfn$ of continuous, non-negative, non-decreasing, affine functions. We write
	\begin{equation}\label{eq:edge-cost}
		C_{e_k}(f_{e_k}) := \alpha_{e_k} f_{e_k} + \beta_{e_k},
	\end{equation}
	where $\alpha_{e_k},\beta_{e_k} \in \realnonnegative$.
	The cost of traversing a path $p$ is given by the summation of the costs of all the edges in the path, that is,
	\begin{equation*}
		C_p(f) := \sum_{e_k \in p} C_{e_k}(f_{e_k}).
	\end{equation*}
	For notational convenience, we define the following:
	\begin{align*}
		C_e(f) 	&:= \big(C_{e_1}(f_{e_1}),\cdots,C_{e_q}(f_{e_q})\big)^\top,	\\
		C(f)	:= \big(C_1(f),&\cdots,C_n(f)\big)^\top,	\text{ and } \,
		\CC		:= \{C_{e_k}\}_{e_k \in \EE},
	\end{align*}
	where $q := \abs{\EE}$. Here, $C_e$ collects the costs of all edges in a vector and similarly, $C$ collects the costs of all paths. Combining~\eqref{eq:flow-e-p} and~\eqref{eq:edge-cost}, we can express the \emph{path-cost function} as:
	\begin{equation}\label{eq:path-cost}
		C(f) = A f + \beta,
	\end{equation}
	where $\beta := (\beta_p)_{p \in \PP}$ is the vector with entries $\beta_p = \sum_{e_k \in p}\beta_{e_k}$ and $A \in \realnonnegative^{n \times n}$ is a matrix with the $(p,r)$-th entry given by ${A_{pr} = \sum_{e_k \in (p \cap r)} \alpha_{e_k}}$. Therefore, $A$ is symmetric and in fact positive semi-definite. This can be seen by noting that ${A = B^\top Q B}$, where $B \in \real^{ q \times n}$ is defined by $B_{k,i} = 1$ if $e_k \in p_i$ and $B_{k,i} = 0$ otherwise, and $Q \in \real^{q \times q}$ is a diagonal matrix with $Q_{k,k} = \alpha_{e_k} \geq 0$.
	
	Taken together, the set of paths $\PP$, the set of cost functions $\CC$ over the edges in the paths of $\PP$, and the demand $D \geq 0$ define a \emph{routing game}, which we denote with the tuple $(\PP,\CC,D)$. We will also refer to $(\PP,\CC)$ as a routing game when we want to consider it for  multiple levels of demand.
	We assume that participants of a routing game want to minimize the cost that they incur as they traverse the network, and choose a path accordingly. To include this in our model, we assume that the resulting flow over paths is a \textit{Wardrop equilibrium} (WE).
	\begin{definition} \label{def:WE} \longthmtitle{Wardrop Equilibrium}
		Given a routing game $(\PP,\CC,D)$, a flow ${f^D \in \realnonnegative^n}$ is called a Wardrop equilibrium if $f^D \in \FF_D$ and for every $p \in \PP$ such that $f^D_p > 0$ we have
		\begin{equation} \label{eq:WE-condition}
			C_p(f^D) \leq C_r(f^D) \quad \text{for all } r \in \PP.
		\end{equation}
		We will denote the set of all Wardrop equilibria as $\WW_D$. \oprocend
	\end{definition}
	The intuition behind this definition is that when the flow is in WE, none of the participants can decrease their travel-time by choosing another path. Note that this implies that all paths receiving a positive amount of flow incur the same cost. 
	
	A flow $f \in \FF_{D}$ satisfying \eqref{eq:WE-condition} exists~\cite[Section 3.1.2]{MJB-CBM-CBW:55}, and thus we know that a WE exists. In general, there can be multiple solutions to \eqref{eq:WE-condition}, showing that WE are 
	not necessarily unique. However, when $\alpha_{e_k} > 0$ for all $e_k \in \EE$, then for each edge, the flow at different WE is same; that is, if $f^D, \hatf^D \in \WW_D$, then $f^D_{e_k} = \hatf^D_{e_k}$~\cite{JRC-NESM:11}.
	When $\alpha_{e_k} = 0$ for some $e_k \in \EE$, then the flow on edges are not necessarily the same for all WE, but the edge costs are; that is , for $f^D, \hatf^D \in \WW_D$, we have $C_{e_k}(f^D) = C_{e_k}(\hatf^D)$ for all $e_k \in \EE$. In fact, $\hatf^D \in \FF_{D}$ is a WE if and only if $C_{e_k}(f^D) = C_{e_k}(\hatf^D)$. This property is sometimes referred to as the \emph{essential uniqueness} of a WE \cite{TR-ET:02}. Note that since the functions in $\CC$ are affine, this gives a way of representing $\WW_{D}$ as a set of points satisfying a number of affine equality constraints. Hence, $\WW_{D}$ is convex. Furthermore, since for each edge the cost is equal for all WE in $\WW_{D}$, it follows that the same holds for the costs on all paths, that is, $C(f^D) = C(\hatf^D)$. As a consequence of this property,  we can define the set of paths that are \emph{active} at demand $D$ as
	\begin{equation} \label{eq:relevant-road-constraint}
		\aset_D  :=  \setdef{p  \in  \PP }{  C_p(f^D)  \leq  C_r(f^D), \, \,  \forall f^D  \in  \WW_D,  \, \,  \forall r  \in  \PP}.
	\end{equation}
	This set collects all the paths that have minimum cost under any WE for the demand $D$. Similarly, we define the set of paths that are \emph{used} at demand $D$ as
	\begin{equation} \label{eq:used-road-constraint}
		\uset_D := \setdef{p \in \PP}{\exists f^D \in \WW_D \text{ such that } f^D_p > 0}.
	\end{equation}
	Note that from \eqref{eq:WE-condition}, we have $\uset_D \subseteq \aset_D$.
	The set $\aset_{D}$ and $\uset_{D}$ will play an essential role in the upcoming exposition where we analyze the evolution of WE as demand varies. As such, we define the following notation: for a set ${\RR = \{p_1,p_2,\cdots,p_m\} \subseteq \PP}$ and a flow $f \in \real^n$ we write
	\begin{align*}
		f_\RR &:= (f_{p_1},f_{p_2}, \cdots, f_{p_m})^\top,	\\
		C_\RR(f) &:= \big(C_{p_1}(f),C_{p_2}(f), \cdots, C_{p_m}(f)\big)^\top.
	\end{align*}
	That is, $f_\RR$ and $C_\RR$ collect the flow and costs for all paths in $\RR$. 
	We define a map 
	${\map{\lmWE}{\realnonnegative}{\realnonnegative}}$ that stands for the cost experienced by participants under WE as
	\begin{equation}\label{eq:lambda-defined}
		\lmWE(D) := \min_{p \in \PP} C_p(f^D) = C_r(f^D), \quad r \in \aset_D.
	\end{equation}
	Similarly, we define the map $\map{\lmvec}{\realnonnegative}{\realnonnegative^{n}}$, ${D \mapsto \lmvec(D) := C(f^D)}$, where $f^D$ is any flow that belongs to the set $\WW_D$. We will refer to $\lmWE$ and $\lmvec$ as the \emph{WE-cost} and the \emph{WE cost-vector},
	respectively. We will also denote the $p$-th component of $\lmvec(D)$ by $\lmvec_p(D)$.
	
	The exposition of the following sections can be broadly categorized in two themes. The first section (Section~\ref{sec:variation-of-WE}) gathers results related to behavior of the WE as the demand increases. The second one (Section~\ref{sec:BP-results}) introduces the notion of Braess's paradox and studies how to detect it and how it varies with the demand. Wherever necessary we illustrate the intuition and implications of our results using appropriate examples. 
	
	\section{Preliminaries} \label{sec:preliminaries}
	Here we introduce some preliminary material on how the set of WE can be obtained by solving either a variational inequality problem (VI) or alternatively an optimization problem. We then gather some useful results on the evolution of the set of WE and the active and used sets as the demand varies.
	\begin{definition} \label{def:VI}\longthmtitle{Variational inequalities (VIs)}
		Given a map $G: \real^n \rightarrow \real^n$ and a set $\mathcal{X} \subset \real^n$, the associated variational inequality problem, denoted $\VI(\mathcal{X},G)$ is to find $x^* \in \mathcal{X}$ such that the following holds:
		\begin{equation} \label{eq:VI-condition}
			G(x^*)^\top(x - x^*) \geq 0, \enskip \forall x \in \mathcal{X}.
		\end{equation}
		The set of solutions $x^* \in \mathcal{X}$ satisfying the above property is denoted as $\SOL(\mathcal{X}, G)$. \oprocend
	\end{definition}
	\begin{proposition}\longthmtitle{WE as the solution of a VI~\cite{MJS:79}} \label{prop:BP-VI}
		Let $(\PP,\CC,D)$ be given. A flow $f^D$ is a Wardrop equilibrium if and only if $f^D \in \SOL(\FF_D,C)$.
	\end{proposition}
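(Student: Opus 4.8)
The plan is to exploit the fact that the feasible set $\FF_D$ from~\eqref{eq:feasible} is a scaled copy of the standard simplex, over which the variational inequality collapses to the minimization of a \emph{linear} functional, and a linear functional over a simplex is minimized by concentrating mass on the minimal-cost coordinates.

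First I would unpack $\SOL(\FF_D,C)$ using Definition~\ref{def:VI}. Here the map $G$ of that definition is the path-cost map $C$, so the vector $C(f^D)$ appearing in~\eqref{eq:VI-condition} is a \emph{fixed} element of $\realnonnegative^n$ once $f^D$ is chosen. Hence $f^D \in \SOL(\FF_D,C)$ is equivalent to $f^D \in \FF_D$ together with
\[
C(f^D)^\top f^D \;\le\; C(f^D)^\top f \qquad \text{for all } f \in \FF_D,
\]
i.e.\ $f^D$ minimizes the linear objective $f \mapsto \sum_{p \in \PP} C_p(f^D)\, f_p$ over $\FF_D$. This step uses only that $C$ is well defined and that $\FF_D \neq \emptyset$ (true since $D \ge 0$); neither affineness nor monotonicity of $C$ is needed.

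Second, I would evaluate this minimum. Writing $c := C(f^D)$, for any $f \in \FF_D$ we have $\sum_p c_p f_p \ge (\min_{r \in \PP} c_r)\sum_p f_p = D \min_{r \in \PP} c_r$, with equality if and only if $f_p = 0$ whenever $c_p > \min_{r \in \PP} c_r$. Therefore $f^D$ minimizes $c^\top f$ over $\FF_D$ exactly when $f^D_p > 0$ forces $C_p(f^D) = \min_{r \in \PP} C_r(f^D)$, which is precisely the Wardrop condition~\eqref{eq:WE-condition}. Since the membership requirement $f^D \in \FF_D = \setdef{f \in \realnonnegative^n}{\sum_{p\in\PP} f_p = D}$ is common to both statements, this establishes both implications at once.

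For readers who prefer the two directions spelled out separately: if $f^D \in \WW_D$, then for every $p$ with $f^D_p>0$ we have $C_p(f^D) = \min_r C_r(f^D)$, so $C(f^D)^\top f^D = D\min_r C_r(f^D) \le C(f^D)^\top f$ for all $f \in \FF_D$, giving~\eqref{eq:VI-condition}; conversely, if $f^D$ solves the VI but some pair $p,r$ has $f^D_p>0$ and $C_p(f^D) > C_r(f^D)$, then the flow $f$ obtained from $f^D$ by shifting $\epsilon \in (0, f^D_p]$ units from path $p$ to path $r$ lies in $\FF_D$ and satisfies $C(f^D)^\top (f - f^D) = \epsilon\bigl(C_r(f^D) - C_p(f^D)\bigr) < 0$, contradicting~\eqref{eq:VI-condition}. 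I do not expect a genuine obstacle here; the only points needing care are keeping track that the VI is evaluated at the fixed point $f^D$ (so the objective is genuinely linear in $f$), and the degenerate case $D=0$, where $\FF_0 = \{\mymathbf{0}\}$ and both conditions hold vacuously.
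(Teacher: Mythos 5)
The paper does not prove this proposition; it is stated as a known result and attributed to the reference~\cite{MJS:79}. Your argument is correct and is precisely the standard proof of this equivalence: for fixed $f^D$ the VI condition~\eqref{eq:VI-condition} says that $f^D$ minimizes the linear functional $f \mapsto C(f^D)^\top f$ over the scaled simplex $\FF_D$, and such a minimizer is characterized by placing flow only on minimal-cost paths, which is exactly~\eqref{eq:WE-condition}; your handling of the perturbation in the converse direction and of the degenerate case $D=0$ is also correct.
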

	We note that when $C(f) = Af$ for some matrix $A \in \real^{n \times n}$, we use the notation $\SOL(\FF,A) := \SOL(\FF,C)$.
	
	Alternatively, WE can be characterized as the solution of a convex optimization problem.
	This characterization will give us the tools for analyzing the evolution of the WE cost $\lmWE$ as demand changes.
	\begin{proposition}\longthmtitle{WE as the solution of an optimization problem~\cite{JRC-NESM:11}} \label{prop:beckmann}
		Let $(\PP,\CC,D)$ be given. A flow $f^D$ is a Wardrop equilibrium if and only if it is an optimal solution of the following minimization problem
		\begin{equation} \label{eq:vD-defined}
			V(D) := \min_{f \in \FF_D} \sum_{e_k \in \EE} \int_0^{f_{e_k}}C_{e_k}(z)dz.
		\end{equation}
	\end{proposition}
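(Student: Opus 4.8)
The plan is to exploit the convexity of the objective in~\eqref{eq:vD-defined} and to reduce the statement to the variational characterization of WE already recorded in Proposition~\ref{prop:BP-VI}. Write $\Phi(f) := \sum_{e_k \in \EE} \int_0^{f_{e_k}} C_{e_k}(z)\,dz$ for the objective, viewed as a function of the path-flow vector $f \in \real^n$ via the linear relations $f_{e_k} = \sum_{p : e_k \in p} f_p$ from~\eqref{eq:flow-e-p}.

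First I would establish the regularity of $\Phi$. Each $C_{e_k}$ is continuous, so by the fundamental theorem of calculus the scalar map $x \mapsto \int_0^x C_{e_k}(z)\,dz$ is continuously differentiable with derivative $C_{e_k}(x)$; and since $C_{e_k}$ is non-decreasing, this scalar map is convex. Composition with the linear map $f \mapsto f_{e_k}$ preserves both convexity and $C^1$ smoothness, and summing over $e_k \in \EE$ shows $\Phi$ is a convex, continuously differentiable function on $\real^n$. A chain-rule computation gives $\partial \Phi / \partial f_p = \sum_{e_k \in p} C_{e_k}(f_{e_k}) = C_p(f)$, i.e.\ $\nabla \Phi(f) = C(f)$. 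This gradient identity is the main bookkeeping step, and essentially the only place where one must be careful about which edges appear in which path.

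Next I would invoke the standard first-order optimality condition for a differentiable convex function over a convex set: since $\FF_D$ is a nonempty compact polytope by~\eqref{eq:feasible}, the minimum in~\eqref{eq:vD-defined} is attained, and a point $f^D \in \FF_D$ minimizes $\Phi$ over $\FF_D$ if and only if $\nabla \Phi(f^D)^\top (f - f^D) \ge 0$ for all $f \in \FF_D$. Substituting $\nabla \Phi(f^D) = C(f^D)$, this is exactly the condition $f^D \in \SOL(\FF_D, C)$ of Definition~\ref{def:VI}. Finally, Proposition~\ref{prop:BP-VI} identifies $\SOL(\FF_D, C)$ with $\WW_D$, which closes the equivalence.

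The only genuine obstacle is the smoothness/convexity verification of $\Phi$ together with the gradient identity $\nabla \Phi = C$; everything afterwards is the textbook equivalence ``minimizer of a convex $C^1$ function over a convex set $\Leftrightarrow$ solution of the associated VI.'' If one preferred to avoid citing that equivalence, the same argument runs by hand: for the ``only if'' direction, convexity of $\Phi$ gives $\Phi(f) \ge \Phi(f^D) + C(f^D)^\top(f - f^D) \ge \Phi(f^D)$ using the VI inequality; for the ``if'' direction, minimality of $\Phi$ along the segment $\mu \mapsto \coco_{\mu}(f, f^D)$ and differentiation at $\mu = 1$ recovers the VI inequality. Either way the proof is short once the gradient computation is in place.
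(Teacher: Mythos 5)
The paper does not actually prove Proposition~\ref{prop:beckmann}; the statement is imported from the literature via the citation attached to it, so there is no internal proof to compare yours against. Your argument is correct and is exactly the classical Beckmann--McGuire--Winsten potential argument: the objective of~\eqref{eq:vD-defined} is a convex, continuously differentiable function of the path-flow vector whose gradient is the path-cost map $C$, the first-order optimality condition over the compact convex polytope $\FF_D$ is then literally the variational inequality of Definition~\ref{def:VI}, and Proposition~\ref{prop:BP-VI} converts membership in $\SOL(\FF_D,C)$ into the WE property. Reducing to Proposition~\ref{prop:BP-VI} is legitimate within the paper's framework, since that result is likewise taken as given, and your gradient computation $\partial \Phi/\partial f_p = \sum_{e_k \in p} C_{e_k}(f_{e_k}) = C_p(f)$ is the only place where care is needed. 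One cosmetic slip in your fallback argument: with the paper's convention $\coco_\mu(f,f^D) = \mu f + (1-\mu)f^D$, the minimizer $f^D$ sits at $\mu = 0$, so the one-sided derivative you want is at $\mu = 0^+$ rather than $\mu = 1$ (or swap the two arguments); the resulting inequality $C(f^D)^\top(f - f^D) \ge 0$ is unaffected.
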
 
	The relationship between $V$ and $\lmWE$ is given in the following useful result from~\cite{RC-VD-MS:21}.
	\begin{proposition}\longthmtitle{Properties of WE cost $\lmWE$~\cite{RC-VD-MS:21}} \label{prop:pw-affine}
		Let $(\PP,\CC)$ be given. The functions $\lmWE$ and $\lmvec$ are continuous, piece-wise affine, and $\lmWE$ is non-decreasing on the interval $[0,\infty)$. Moreover, for any $D > 0$ we have
		\begin{equation*}
			\frac{\partial}{\partial D}V(D) = \lmWE(D).
		\end{equation*}
		Consequently, $V$ is convex and differentiable on $(0,\infty)$.
	\end{proposition}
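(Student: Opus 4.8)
The plan is to read $V$ off the optimization characterization of Proposition~\ref{prop:beckmann} and treat it as the optimal-value function of a parametric convex quadratic program in which the demand $D$ enters only as the right-hand side of the single equality constraint $\sum_{p\in\PP}f_p=D$. First I would make the Beckmann potential explicit: since each $C_{e_k}$ is affine, $\int_0^{f_{e_k}}C_{e_k}(z)\,dz=\tfrac12\alpha_{e_k}f_{e_k}^2+\beta_{e_k}f_{e_k}$, so with the factorization $A=B^\top QB$ from after \eqref{eq:path-cost} the objective in \eqref{eq:vD-defined} is $\Phi(f):=\tfrac12 f^\top A f+\beta^\top f$, a convex quadratic with $\nabla\Phi(f)=Af+\beta=C(f)$ by \eqref{eq:path-cost}. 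Thus $V(D)=\min\setdef{\Phi(f)}{f\ge\mymathbf 0,\ \sum_{p\in\PP}f_p=D}$. Convexity of $D\mapsto V(D)$ is then the standard fact that partial minimization of the jointly convex $(f,D)\mapsto\Phi(f)$ over the convex set $\setdef{(f,D)}{f\ge\mymathbf 0,\ \sum_{p\in\PP} f_p=D}$ is convex. Continuity of $\lmvec$ and $\lmWE$ I would obtain from upper hemicontinuity of $D\mapsto\WW_D$ (Berge's theorem) together with the essential uniqueness recalled after Definition~\ref{def:WE}: along any $D_n\to D$ with $f^{D_n}\in\WW_{D_n}$, boundedness of $\FF$ on compact $D$-ranges and upper hemicontinuity give a subsequential limit $f^\star\in\WW_D$, and since $C(f^\star)=\lmvec(D)$ is the same for every such limit, $\lmvec(D_n)=C(f^{D_n})\to\lmvec(D)$ and hence $\lmWE(D_n)=\min_{p}\lmvec_p(D_n)\to\lmWE(D)$.

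Next, the derivative formula. For $D>0$ the program has a Slater point (any strictly positive $f$ with $\sum_{p\in\PP} f_p=D$), so KKT holds: there are $\mu(D)\in\real$ and $\nu(D)\in\realnonnegative^{n}$ with $C(f^D)=\mu(D)\mymathbf 1+\nu(D)$ and $\nu(D)^\top f^D=0$. For any used path $p$ this forces $\nu_p(D)=0$ and $C_p(f^D)=\mu(D)$, so by \eqref{eq:lambda-defined} $\mu(D)=\lmWE(D)$; note this pins $\mu(D)$ down independently of the chosen WE, since all WE share the same path costs. Sensitivity analysis for convex programs under right-hand-side perturbation then gives $\partial V(D)=\{\mu(D)\}$, hence $V$ is differentiable on $(0,\infty)$ with $\frac{\partial}{\partial D}V(D)=\lmWE(D)$. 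Convexity of $V$ makes $V'=\lmWE$ non-decreasing on $(0,\infty)$, and the continuity established above extends this to $[0,\infty)$.

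It remains to prove the piecewise-affine claim, which I expect to be the main obstacle. The idea is a parametric-programming partition of $[0,\infty)$: for each candidate used set $\RR\subseteq\PP$, impose $f_{\RR^c}=\mymathbf 0$ and solve the equality-constrained stationarity system $A_{\RR\RR}f_\RR+\beta_\RR=\mu\mymathbf 1$, $\sum_{p\in\RR}f_p=D$ for $(f_\RR,\mu)$; its right-hand side is affine in $D$, so the solution (where single-valued) is affine in $D$, and the set of $D$ for which this candidate is feasible and satisfies the sign conditions $f_\RR\ge\mymathbf 0$ and $C_{\RR^c}(f^D)\ge\mu$ is polyhedral, i.e.\ an interval. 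Since $\PP$ is finite there are finitely many such intervals, and by existence of WE they cover $[0,\infty)$; on each, $\mu(D)=\lmWE(D)$ and the essentially unique edge flows $(f^D_{e_k})_{e_k\in\EE}$ are affine in $D$, whence $\lmvec(D)=C(f^D)$ — a function of the edge flows alone — is affine there too. The delicate points are handling the non-uniqueness of $f^D$ (one must phrase the partition in terms of the essentially unique objects: edge flows, path costs $C_p(f^D)$, and $\mu=\lmWE(D)$, rather than an arbitrary selection $f^D$) and verifying that finitely many polyhedral $D$-pieces genuinely tile $[0,\infty)$; everything else is routine convex analysis and linear algebra.
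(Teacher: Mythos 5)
The paper does not prove this proposition; it is imported verbatim from the cited reference \cite{RC-VD-MS:21}, so there is no in-paper argument to compare against. Your proof is, however, essentially the standard derivation that such a reference would contain, and it is correct in outline: the identification of $V$ with the value function of the parametric convex QP $\min\{\tfrac12 f^\top A f+\beta^\top f : f\ge \mymathbf{0},\ \mymathbf{1}^\top f=D\}$, convexity of $V$ by partial minimization, continuity of $\lmvec$ via upper hemicontinuity plus essential uniqueness, the KKT/Slater identification of the demand multiplier with $\lmWE(D)$ and hence $\partial V(D)=\{\lmWE(D)\}$ for $D>0$, and the finite polyhedral partition by candidate used sets for piecewise affinity are all sound steps. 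The one place you correctly flag as delicate deserves to be closed explicitly: when $A_{\RR\RR}$ is singular the stationarity system has a nontrivial solution set, and you need (i) that the region of validity in $D$ is still an interval — which follows because it is the projection onto the $D$-axis of the polyhedron of triples $(f_\RR,\mu,D)$ satisfying the affine equalities and sign conditions — and (ii) that $Af^D$ (hence $\lmvec(D)=Af^D+\beta$) is single-valued and affine on each piece despite $f^D$ not being so. Point (ii) is where positive semidefiniteness does real work: if $A_{\RR\RR}v=0$ then $v^\top A_{\RR\RR}v=0$, and since $A=B^\top QB$ with $Q\succeq 0$ this forces $Q^{1/2}B_{\cdot\RR}v=0$ and therefore $A_{\cdot\RR}v=0$ for \emph{all} rows, so the entire cost vector is constant across the solution set and inherits affine dependence on $D$ from any affine selection (e.g.\ via the pseudoinverse). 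Your phrase ``the essentially unique edge flows'' is slightly imprecise — edge flows on edges with $\alpha_{e_k}=0$ need not be unique, as the paper notes after Definition~\ref{def:WE} — but the object you actually need, the edge \emph{costs} and hence $\lmvec$, is unique, so the conclusion stands.
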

	The next result summarizes properties of the active set $\aset_{D}$, and how it changes as the demand varies. Recall that for any two vectors $f^-,f^+ \in \real^p$ and any scalar ${\mu \in [0,1]}$, we use the notation ${\coco_\mu (f^-,f^+) = \mu f^- + (1 - \mu)f^+}$ to represent convex combinations.
	\begin{lemma}\longthmtitle{Evolution of the active set~\cite{RC-VD-MS:21}} \label{lem:facts-on-evolution-for-affine}
		Let $(\PP,\CC)$ be given.
		For any ${0 \le D^- \le D^+}$ that satisfy ${\aset_{D^-} = \aset_{D^+}}$, the following hold
		\begin{enumerate}
			\item For all $D \in [D^-,D^+]$, we have $\aset_{D} = \aset_{D^-} = \aset_{D^+}$.
			\item If $f^{D^-} \in \WW_{D^-}$, $f^{D^+} \in \WW_{D^+}$ and ${\mu \in [0,1]}$,
			then $\coco_{\mu}(f^{D^-},f^{D^+}) \in \WW_{T}$, where $T = \coco_{\mu}(D^-,D^+)$.
		\end{enumerate}
	\end{lemma}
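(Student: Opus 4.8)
The plan is to prove part (2) first by directly verifying the Wardrop condition \eqref{eq:WE-condition} on the convex combination, and then to read off part (1) as a consequence. The only structural inputs I would use are: (i) the path-cost map is affine (see \eqref{eq:path-cost}), so it commutes with convex combinations, $C(\coco_\mu(f^-,f^+)) = \coco_\mu(C(f^-),C(f^+))$ for any $f^-,f^+$; (ii) essential uniqueness of the WE, which makes $\lmvec(D) = C(f^D)$ independent of the chosen $f^D \in \WW_D$, hence gives both the description $\aset_D = \setdef{p \in \PP}{\lmvec_p(D) = \lmWE(D)}$ and the always-valid bound $\lmvec_p(D) \ge \lmWE(D)$; and (iii) the inclusion $\uset_D \subseteq \aset_D$ noted after \eqref{eq:used-road-constraint}.

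For part (2), I would write $\aset := \aset_{D^-} = \aset_{D^+}$, fix $f^{D^-} \in \WW_{D^-}$, $f^{D^+} \in \WW_{D^+}$, $\mu \in [0,1]$, and set $g := \coco_\mu(f^{D^-},f^{D^+})$ and $T := \coco_\mu(D^-,D^+)$. Feasibility $g \in \FF_T$ is immediate from nonnegativity of the pieces and $\sum_p g_p = \mu D^- + (1-\mu)D^+ = T$. By (i), $C_p(g) = \mu\lmvec_p(D^-) + (1-\mu)\lmvec_p(D^+)$ for every $p$. For $p \in \aset$, both $\lmvec_p(D^-) = \lmWE(D^-)$ and $\lmvec_p(D^+) = \lmWE(D^+)$ hold (since $\aset$ equals each of the two active sets), so $C_p(g)$ takes the common value $\ell := \coco_\mu(\lmWE(D^-),\lmWE(D^+))$; for an arbitrary $r \in \PP$, the bound $\lmvec_r(D) \ge \lmWE(D)$ gives $C_r(g) \ge \ell$. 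Finally, if $g_p > 0$ then $f^{D^-}_p > 0$ or $f^{D^+}_p > 0$, so $p \in \uset_{D^-} \cup \uset_{D^+} \subseteq \aset_{D^-} \cup \aset_{D^+} = \aset$; hence $C_p(g) = \ell \le C_r(g)$ for all $r$, which is exactly \eqref{eq:WE-condition}. This shows $g \in \WW_T$.

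For part (1), given $D \in [D^-,D^+]$ I would choose $\mu \in [0,1]$ with $D = \coco_\mu(D^-,D^+)$ and construct $g \in \WW_D$ via part (2). Then $\lmWE(D) = \min_r C_r(g) = \ell$, and by essential uniqueness $\aset_D = \setdef{p}{C_p(g) = \ell}$. The computation above already gives $\aset \subseteq \aset_D$. For the reverse inclusion, if $p \notin \aset$ then $\lmvec_p(D^-) > \lmWE(D^-)$ and $\lmvec_p(D^+) > \lmWE(D^+)$; taking the $\mu$-combination (and treating the endpoint cases $\mu \in \{0,1\}$ directly) yields $C_p(g) > \ell$, so $p \notin \aset_D$. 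Hence $\aset_D = \aset$ for all $D$ in the interval.

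I do not anticipate a deep obstacle: the argument is essentially bookkeeping around the affine structure of $C$. The two points that need care are the repeated appeal to essential uniqueness — especially when some $\alpha_{e_k} = 0$, so edge flows may differ across WE while path costs do not, which is what licenses treating $\lmvec$ and $\aset_D$ as functions of $D$ alone — and, in part (1), keeping the inequality $C_p(g) > \ell$ strict for $p \notin \aset$ over the full closed interval; this is the one place where the hypothesis $\aset_{D^-} = \aset_{D^+}$ (rather than a one-sided inclusion) is genuinely exploited.
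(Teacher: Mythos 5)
Your proof is correct. The paper does not actually prove this lemma itself (it is imported from \cite{RC-VD-MS:21}), but your argument --- verify the Wardrop condition on the convex combination via affineness of $C$ together with $\uset_{D}\subseteq\aset_{D}$, then read the active set off the resulting cost vector, using the equality $\aset_{D^-}=\aset_{D^+}$ to keep the inequality strict for inactive paths --- is precisely the technique the paper deploys for the closely related Corollary~\ref{cor:extended-affine-combinations} and in Appendix~\ref{ap:proof-of-corollary-used-sets}, so there is nothing to flag.
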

	A consequence of the above result is that for any ${D \in \realnonnegative}$ there exists an interval $[D^-,D^+]$ containing $D$ such that $\aset_{T} = \aset_D$ for all $T \in (D^-,D^+)$, while $\aset_{T} \neq \aset_D$ for all $T \in [D^-,D^+]^c$. This observation, together with the fact that there are only a finite number of subsets of $\PP$ leads to the upcoming result. It states that there is a finite set of disjoint open intervals covering the non-negative real line, excepting a finite number of points, such that the active and used set remain constant in each of these intervals. Furthermore, no two such active sets, associated to different open intervals, are equal, and the same holds for the associated used sets.
	\begin{corollary} \longthmtitle{Piece-wise constant evolution of the active and used sets}\label{cor:interval-of-active-set}
		Let $(\PP,\CC)$ be given. There exists a finite set of points ${\DD := \{  D_0,D_1,\cdots,D_M,D_{M+1} \} \subset \realnonnegative \cup \{+\infty\}}$ with $D_0  =  0$, $D_{M+1} =  \infty$ and $D_{j} > D_{j - 1}$ for all ${j \in [M+1]}$, and corresponding sets of subsets of $\PP$ denoted  $\{\Ja_0,\Ja_1,\cdots,\Ja_M\}$ and $\{\Ju_0,\Ju_1,\cdots,\Ju_M\}$, such that, for all $i \in [M]_0$ and $D \in (D_i,D_{i+1})$, we have 
		\begin{equation*}
			\aset_D = \Ja_i, \quad \uset_D = \Ju_i .
		\end{equation*}
		Furthermore, $\Ja_i \neq \Ja_j$ and  $\Ju_i \neq \Ju_j$ for all $i \neq j$.
	\end{corollary}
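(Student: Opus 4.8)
\textit{Proof strategy.}
For a subset $\JJ \subseteq \PP$ write $\II_\JJ := \setdef{D \in \realnonnegative}{\aset_D = \JJ}$ and $\II'_\JJ := \setdef{D \in \realnonnegative}{\uset_D = \JJ}$. The plan relies on three facts: (i) each $\II_\JJ$ is an interval; (ii) $\uset_D$ is constant on every open interval on which $\aset_D$ is constant; and (iii) each $\II'_\JJ$ is an interval. Granting these, the result follows by bookkeeping. The set $\PP$ has only finitely many subsets, so only finitely many $\II_\JJ$ and $\II'_\JJ$ are nonempty, and by (i) and (iii) each family partitions $\realnonnegative$ into intervals. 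Let $\DD$ collect $0$, $+\infty$, and every finite endpoint of every nonempty $\II_\JJ$ and $\II'_\JJ$, ordered as $0 = D_0 < D_1 < \cdots < D_{M+1} = +\infty$. A change of $\aset_D$ (or of $\uset_D$) as $D$ sweeps an open interval $(D_i, D_{i+1})$ would place a common boundary point of two intervals of one family strictly inside $(D_i, D_{i+1})$, hence an element of $\DD$ there, which is impossible; thus $\aset_D \equiv \Ja_i$ and $\uset_D \equiv \Ju_i$ on $(D_i, D_{i+1})$. If $\Ja_i = \Ja_j$ with $i \neq j$, then (i) forces $\aset_D$ to be constant on the entire range spanned by $(D_i, D_{i+1})$ and $(D_j, D_{j+1})$, whence (ii) makes $\uset_D$ constant there as well, so every breakpoint $D_k$ in between is interior to an interval of each family — contradicting $D_k \in \DD$; hence $\Ja_i \neq \Ja_j$. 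Distinctness of the $\Ju_i$ is obtained similarly, additionally invoking the interpolation argument below (because (ii) does not directly give constancy of $\aset_D$ where $\uset_D$ is constant).

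Fact (i) is exactly Lemma~\ref{lem:facts-on-evolution-for-affine}(1) (the local-constancy remark preceding the statement): $D^-, D^+ \in \II_\JJ$ forces $\aset_T = \JJ$ for every $T \in [D^-, D^+]$. For (ii): if $\aset_D \equiv \JJ$ on an open interval, then Lemma~\ref{lem:facts-on-evolution-for-affine}(2) shows that any convex combination of a WE at $D^-$ and one at $D^+$ is a WE at the corresponding demand, so a path carrying flow at $D^-$ or at $D^+$ carries flow at every interior $T$; sandwiching a pair $T, T'$ between two such pairs yields $\uset_T = \uset_{T'}$. The inclusion $\JJ \subseteq \uset_T$ in (iii) is clean: given $\uset_{D^-} = \uset_{D^+} = \JJ$ and $T = \coco_\mu(D^-, D^+)$ with $\mu \in (0,1)$, use convexity of $\WW_{D^\pm}$ to pick WEs $f^{D^\pm} \in \WW_{D^\pm}$ with support exactly $\uset_{D^\pm} = \JJ$; since $\JJ \subseteq \aset_{D^\pm}$ one has $C_p(f^{D^\pm}) = \lmWE(D^\pm)$ for $p \in \JJ$ and $C_r(f^{D^\pm}) \geq \lmWE(D^\pm)$ for every $r$, so by affineness of $C$ the flow $\coco_\mu(f^{D^-}, f^{D^+}) \in \FF_T$ puts every $p \in \JJ$ at cost $\coco_\mu(\lmWE(D^-), \lmWE(D^+))$ and no path below that, so it is a WE at $T$ with support $\JJ$. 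The same construction supplies the interpolation argument used above: if $\uset_D \equiv \JJ$ over an interval across which $\aset_D$ changes at some $D_k$, then with $D^\pm$ on either side of $D_k$ the flow $\coco_\mu(f^{D^-}, f^{D^+})$ is a WE at $D_k$ whose common cost on $\JJ$ equals $\coco_\mu(\lmWE(D^-), \lmWE(D^+)) = \lmWE(D_k)$; a path active at $D^-$ but not at $D^+$ then incurs at $D_k$ the strictly larger value $\coco_\mu(\lmWE(D^-), C_p(f^{D^+}))$, contradicting that it is active at $D_k$ (which holds by continuity of $\lmvec$, Proposition~\ref{prop:pw-affine}).

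The hard part, and the main obstacle, is the reverse inclusion $\uset_T \subseteq \JJ$ in (iii): no path $p$ is used strictly between two demands at which it is unused. I would argue by contradiction. If $p \in \uset_T \setminus \JJ$, then $p \in \aset_T$, and since $D \mapsto C_p(f^D) - \lmWE(D)$ is non-negative, continuous, and piece-wise affine (Proposition~\ref{prop:pw-affine}), it vanishes identically on the whole $\aset_D$-piece containing $T$, so $p$ is active throughout that piece. Chasing $p$ across the finitely many breakpoints toward $D^-$ or toward $D^+$, and invoking at each step the interpolation-plus-continuity device above, $p$ would be forced to remain active and then used all the way to an endpoint, contradicting $p \notin \JJ$. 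Ruling out such a ``transient'' used path is the delicate step; every other ingredient reduces to Lemma~\ref{lem:facts-on-evolution-for-affine} together with the continuity and piece-wise affineness of $\lmvec$ from Proposition~\ref{prop:pw-affine}.
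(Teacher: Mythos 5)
Your facts (i) and (ii), and their proofs via Lemma~\ref{lem:facts-on-evolution-for-affine}, match the paper's treatment: the active-set part is imported from the cited reference, and constancy of $\uset_D$ on each open interval $(D_i,D_{i+1})$ is obtained by exactly your convex-combination/sandwiching argument. The genuine gap is fact (iii), which you correctly flag as the obstacle but do not prove. The ``chasing across breakpoints'' step fails on both legs: a path active throughout one active-set piece need not remain active on the adjacent piece (the active set genuinely changes at breakpoints --- that is what makes them breakpoints), and being active nowhere implies being used, so nothing forces $p$ to ``remain active and then used all the way to an endpoint.'' There is also a circularity hazard in your first step: you use that $\lmvec_p-\lmWE$ is affine on the whole active-set piece containing $T$, but Proposition~\ref{prop:pw-affine} does not identify the affine pieces with the active-set intervals; that identification is essentially Proposition~\ref{prop:evolution-lmvec}, which the paper proves \emph{after} and \emph{using} this corollary (it can be recovered directly from Lemma~\ref{lem:facts-on-evolution-for-affine}-2, but you would have to say so).

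The repair is that (iii) is not needed. To prove $\Ju_i\neq\Ju_j$ the paper does not require the used set to be constant on the whole range between the two intervals; it only uses the hypothesis $\uset_{D^-}=\uset_{D^+}$ at two specific points $D^-\in(D_i,D_{i+1})$ and $D^+\in(D_j,D_{j+1})$. Exactly as in your ``clean'' inclusion argument, take WE at the two demands; their convex combination is a WE at every intermediate demand $T$ (its support lies in the common used set, on which the cost is the convex combination of the two WE-costs), while a path in $\aset_{D^-}\setminus\aset_{D^+}$ --- which exists without loss of generality since $\Ja_i\neq\Ja_j$ --- incurs a strictly larger cost at $T$ and hence is inactive there. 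Choosing $T\in(D_i,D_{i+1})$ contradicts the constancy of the active set on that interval. This single contradiction replaces both your claim (iii) and the separate ``interpolation'' device, and it also removes the need to put endpoints of the sets $\II'_\JJ$ into $\DD$ (whose finiteness you cannot guarantee without (iii)): the active-set breakpoints alone suffice, which is precisely the content of the corollary.
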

	In the above result, the claim regarding the piece-wise constant evolution of active sets was established in~\cite[Section 4]{RC-VD-MS:21}. This property in combination with Lemma~\ref{lem:facts-on-evolution-for-affine}-2 and~\eqref{eq:path-cost} yields the claim regarding piece-wise evolution of the used sets. For the sake of completeness, a proof of this last part is given in Appendix~\ref{ap:proof-of-corollary-used-sets}.
	
	We call the demands in the set $\DD$ \emph{breakpoints}, as these are the points where the active set changes. Note that the above result makes no claims about the relationship between the active set at a breakpoint and the adjacent open intervals. We will shed light on this aspect in Lemma~\ref{lem:inclusion-active-and-used} in the subsequent section. Finally, we note that the points in the set $\DD$ are also the points where the WE cost-vector $\lmvec$ is non-differentiable, further motivating our choice for the term \emph{breakpoints}. We formally establish this in Proposition~\ref{prop:evolution-lmvec} in the next section.
	
	\section{Evolution of WE: Directions of increase and its properties} \label{sec:variation-of-WE} 
	In this section we analyze the evolution of the WE as the demand increases. 
	We demonstrate our results throughout this section using the two examples explained below. They will provide the necessary intuition and context for the technical exposition. 
	\begin{figure}
		\centering
		\begin{minipage}[t]{0.5 \textwidth}
			\centering
			\begin{tikzpicture}[,->,shorten >=1pt,auto,node distance=1.5 cm,
				semithick]
				
				\tikzset{VertexStyle/.style = {shape          = circle,
						ball color     = white,
						text           = black,
						inner sep      = 2pt,
						outer sep      = 0pt,
						minimum size   = 18 pt}}
				\tikzset{EdgeStyle/.style   = {
						double          = black,
				}}
				\tikzset{LabelStyle/.style =   {draw,
						fill           = white,
						text           = black}}
				\node[VertexStyle](A){$v_o$};
				\node[VertexStyle, above right=of A](B){};
				\node[VertexStyle, below right=of A](C){};
				\node[VertexStyle, below right=of B](D){$v_d$};
				
				\path (A)  edge      node{$e_1 $} (B)
				(A)  edge             node[below]{$e_3 \enskip$} (C)
				(B)	edge[dashed]			node{$e_5$}	(C)
				(B)	edge			node{$e_2$}	(D)
				(C)	edge			node[below]{$\enskip e_4$}	(D);
			\end{tikzpicture}
			\caption{The Wheatstone network.}
			\label{fig:wheatstone}
		\end{minipage}%
		\begin{minipage}[t]{0.5 \textwidth}
			\centering
			\includegraphics[width = \textwidth]{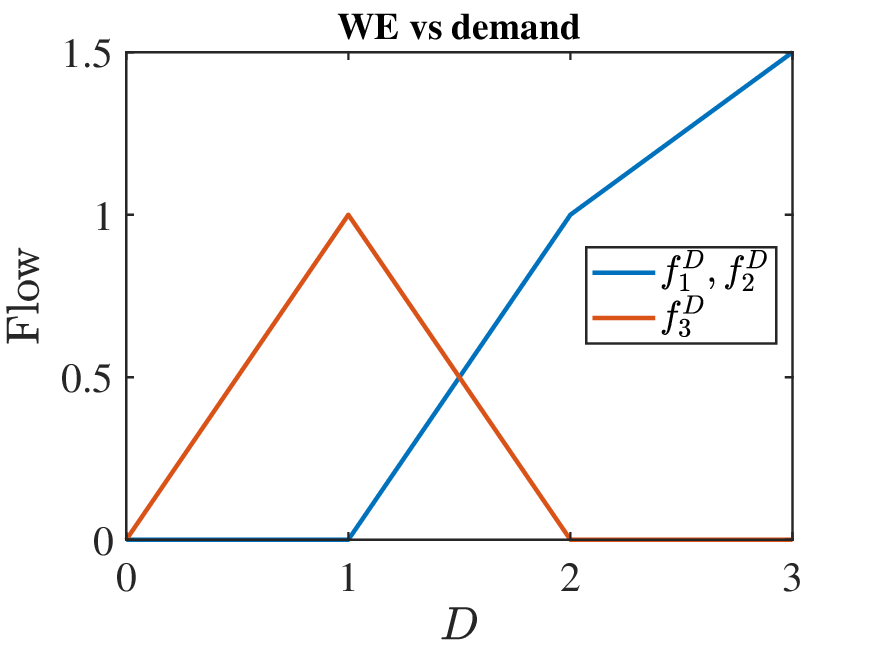}
			\caption{The WE at different demands for the routing game defined by the Wheatstone network (Figure~\ref{fig:wheatstone}) and costs~\eqref{eq:Wheatstone-simple-path-cost}.}
			\label{fig:wheatstone-WE}
		\end{minipage}
	\end{figure}
	\begin{example}\longthmtitle{Evolution of WE} \label{ex:secIV}
		{\rm We discuss here two simple networks:
			\begin{enumerate}[wide= 0pt,label=(Case-\alph*), ref=\ref{ex:secIV}\alph*]
				\item \label{ex:secIV-1} For the first example, consider the Wheatstone network depicted in Figure~\ref{fig:wheatstone}, with edge-cost functions given by
				\begin{equation} \label{eq:Wheatstone-simple-edge-costs}
					\begin{aligned}
						&C_{e_1}(f_{e_1}) = f_{e_1},	&	&C_{e_2}(f_{e_2}) = 1,	\\
						&C_{e_3}(f_{e_3}) = 1,	&	&C_{e_4}(f_{e_4}) = f_{e_4}, \hspace{8 pt}	\\
						&	&C_{e_5}(f_{e_5}) = 0.	&
					\end{aligned}
				\end{equation}
				There are three paths from the origin to the destination in this network, namely, ${p_1 = (e_1,e_2)}$, $p_2 = (e_3,e_4)$ and ${p_3 = (e_1,e_5,e_4)}$, with path-cost functions given respectively by
				\begin{equation} \label{eq:Wheatstone-simple-path-cost}
					\begin{split}
						C_1(f) &= f_{1} + f_3 + 1,	\\
						C_2(f) &= f_2 + f_3 + 1,	\\
						C_3(f) &= f_1 + f_2 + 2f_3,
					\end{split}
				\end{equation}
				where we recall that $f =(f_1,f_2,f_3)$ is the vector of path flows. Having defined the routing game $(\PP,\CC)$, the explicit expression for the WE as a function of the demand $D$, which is illustrated in Figure~\ref{fig:wheatstone-WE}, is given by the following:
				\begin{equation} \label{eq:Wheatstone-simple-WE}
					f^D  = 
					\begin{cases}
						\left(\begin{array}{ccc}
							0,	&0,	&D
						\end{array}\right)^\top & \text{for } D \in [0,1],	\\
						\left(\begin{array}{ccc}
							D-1,	&D-1,	&2-D
						\end{array}\right)^\top & \text{for } D \in [1,2],	\\
						\left(\begin{array}{ccc}
							\frac{D}{2},	&\frac{D}{2},	&0
						\end{array}\right)^\top & \text{for } D \in [2,\infty).
					\end{cases}
				\end{equation}
				It is important to note that $f^D$ changes continuously, and moreover, it evolves in a piece-wise affine manner. The points at which the evolution changes from one affine piece to the next will later turn out to be exactly the breakpoints in $\DD$ defined in Corollary~\ref{cor:interval-of-active-set}. Notice that in this example the WE is unique for any demand and  the evolution of the WE as demand increases is fully characterized by the right-hand derivative of the map $D \mapsto f^D$:
				\begin{equation*} 
					f^{\delta}(D) := \frac{\partial^+}{\partial D} f^D = 
					\begin{cases}
						\left(\begin{array}{ccc}
							0,	&0,	&1
						\end{array}\right)^\top & \text{for } D \in [0,1),	\\
						\left(\begin{array}{ccc}
							1,	&1,	&-1
						\end{array}\right)^\top & \text{for } D \in [1,2),	\\
						\left(\begin{array}{ccc}
							\frac{1}{2},	&\frac{1}{2},	&0
						\end{array}\right)^\top & \text{for } D \in [2,\infty).
					\end{cases}
				\end{equation*}
				\begin{figure}
					\centering
					\begin{tikzpicture}[,->,shorten >=1pt,auto,node distance=1.5 cm,
						semithick]
						
						\tikzset{VertexStyle/.style = {shape          = circle,
								ball color     = white,
								text           = black,
								inner sep      = 2pt,
								outer sep      = 0pt,
								minimum size   = 18 pt}}
						\tikzset{EdgeStyle/.style   = {
								double          = black,
						}}
						\tikzset{LabelStyle/.style =   {draw,
								fill           = white,
								text           = black}}
						\node[VertexStyle](A){$v_o$};
						\node[VertexStyle, right=of A](B){};
						\node[VertexStyle, right=of B](C){$v_d$};
						
						\path (A) edge[bend left= 30] node{$e_1 $} 	(B)
						(A) edge[bend right= 30] node[below]{$e_3$}	(B)
						(B) edge[bend left=  30] node{$e_2$}		(C)
						(B) edge[bend right= 30] node[below]{$e_4$}	(C);
					\end{tikzpicture}
					\caption{The Wheatstone network after merging the top and bottom nodes.}
					\label{fig:three-node-four-edge}
				\end{figure}
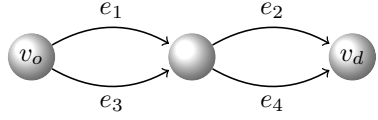
				Note that on the intervals where the map $D \mapsto f^D$ is affine, naturally the direction in which the WE moves is constant.
				\item \label{ex:secIV-2} When the WE are not unique, the situation can become more complicated. To illustrate this, we slightly modify our example. Instead of connecting the top and bottom node in the Wheatstone network (Figure~\ref{fig:wheatstone}) with the edge $e_5$, we merge them into one node.
				The resulting network is depicted in Figure~\ref{fig:three-node-four-edge}. The edge-cost functions in $\CC$ are still given by \eqref{eq:Wheatstone-simple-edge-costs}, except that edge $e_5$ no longer exists. There are now four paths in $\PP$, namely $p_1 = (e_1,e_2)$, $p_2 = (e_3,e_4)$, $p_3 = (e_1,e_4)$, and $p_4 = (e_3,e_2)$, and the path-cost function is given by
				\begin{equation} \label{eq:Wheatstone-double-cost}
					C(f) = Af + b = \left(\begin{array}{cccc}
						1	&0	&1	&0	\\
						0	&1	&1	&0	\\
						1	&1	&2	&0	\\
						0	&0	&0	&0	\\
					\end{array}\right)f + \left(\begin{array}{c}
						1	\\
						1	\\
						0	\\
						2	\\
					\end{array}\right).
				\end{equation}
				Note that $A f^0 = \mymathbf{0}$ for $f^0 = (-1, -1, 1, 1)^\top$ and so, for any flow $f$ and any value $\epsilon > 0$, we have $C(f + \epsilon f^0) = C(f)$. In other words, rerouting equal amounts of flow from $p_1$ to $p_2$ and from $p_3$ to $p_4$, or vice versa, does not change the path-cost. As a consequence, the WE of $(\PP,\CC)$ are not always unique, and we instead find a set of WE given by
				\begin{equation}  \label{eq:three-node-four-edge-WE}
					\begin{aligned}
						\WW_D  :=    
						\enskip \begin{cases}
							\left\{ 
							\left(
							\begin{array}{cccc}
								0	&0	&D	&0
							\end{array}
							\right) ^\top 
							\right\} &
							\text{for }  D  \in  [0,1],
							\\
							\setdef{ f  \in  \FF_D   }{ 	f_1  +  f_3  =  1, f_2  +  f_3  =  1 } & 
							\text{for }  D  \geq  1.	\\
						\end{cases}
					\end{aligned}
				\end{equation}
				Using this, we obtain the directions along which the WE can move as the demand increases as:
				\begin{equation} \label{eq:three-node-four-edge-increase}
					\begin{aligned}
						\Gamma_D  := 
						\begin{cases}
							\setdef{f^\delta  \in  \HH_{1} }{   f^\delta_1, f^\delta_2, f^\delta_4 = 0, \hspace{2 pt} f^\delta_3 = 1},	& \text{for } D  \in  [0,1), 	
							\\
							\setdef{f^\delta  \in  \HH_1 }{    f^\delta_1,f^\delta_2,f^\delta_4 \geq 0, \hspace{2 pt}  f^\delta_1  =  f^\delta_2  =  - f^{\delta}_3}, &\text{for } D  =  1,
							\\
							\setdef{f^\delta  \in  \HH_1 }{   f^\delta_1 + f^\delta_3  =  0, \hspace{2 pt} f^\delta_2 + f^\delta_3  =  0}, &\text{for } D  \geq  1.
						\end{cases}
					\end{aligned}
				\end{equation}
				Instead of a single direction, we find a set of directions along which the set of WE evolves. That is, for every $D$ and every $f^\delta \in \Gamma_D$, there exist $f^D \in \WW_{D}$ and $\epsilon^* > 0$ such that for all ${\epsilon \in [0,\epsilon^*)}$ the flow ${f^D + \epsilon f^\delta}$ is a WE for demand $D + \epsilon$. Note that, as in the previous example, the non-negative real line is divided into intervals such that the set of directions $\Gamma_{D}$ is constant in each interval. This possibility of nonunique WE introduces most of the technicalities that need to be addressed in the upcoming results. \oprocend
			\end{enumerate}
		}
	\end{example}
	Our first objective is to prove that the WE changes continuously with respect to the demand $D$.
	\begin{lemma}\longthmtitle{Continuity of the map $D \mapsto \WW_D$}\label{lem:continuity-under-affine-costs}
		Let $(\PP,\CC)$ be given. The (set-valued) map $D \mapsto \WW_{D}$ is continuous; that is, for every $D \geq 0$ and $\epsilon > 0$ the following hold:
		\begin{enumerate}
			\item There exists a $\delta > 0$ such that for all $f^D \in \WW_{D}$ and $T \ge 0$ satisfying $\abs{T - D} < \delta$ there exists $f^T \in \WW_{T}$ such that $\norm{f^D - f^T} < \epsilon$. 
			\item There exists a $\delta > 0$ such that for all $T \ge 0$ satisfying $\abs{T - D} < \delta$ and all $f^T \in \WW_{T}$ there exists $f^D \in \WW_{D}$ such that $\norm{f^D - f^T} < \epsilon$. 
		\end{enumerate}
	\end{lemma}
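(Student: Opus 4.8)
The plan is to recast $\WW_D$ as the solution set of a system of finitely many linear equalities and inequalities whose coefficient matrix is \emph{independent of} $D$ and whose right-hand side depends \emph{continuously} on $D$, and then to read off both claims from the Hoffman error bound for such systems.

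First I would establish the polyhedral description
\[
	\WW_D \;=\; \setdefb{f \in \realnonnegative^n}{\textstyle\sum_{p \in \PP} f_p = D, \ C(f) = \lmvec(D)},
\]
with $C(f) = Af + \beta$ as in~\eqref{eq:path-cost}. The inclusion ``$\subseteq$'' is exactly the essential-uniqueness fact recalled in Section~\ref{sec:model}: every WE lies in $\FF_D$ and all WE have the same path-cost vector $\lmvec(D)$. For ``$\supseteq$'', take $f \in \FF_D$ with $C(f) = \lmvec(D) = C(f^D)$ for some fixed WE $f^D$; then $Af = Af^D$, hence $(f - f^D)^\top A (f - f^D) = 0$, and writing $A = B^\top Q B$ with $Q \succeq 0$ this forces $QB(f - f^D) = \mymathbf{0}$, i.e.\ $C_{e_k}(f_{e_k}) = C_{e_k}(f^D_{e_k})$ for every edge $e_k$; the essential-uniqueness characterization ($\hatf^D \in \FF_D$ is a WE if and only if its edge costs agree with those of a WE) then shows that $f$ is a WE. By Proposition~\ref{prop:pw-affine}, $D \mapsto \lmvec(D)$ is continuous, so in the above system the right-hand data $(D, \lmvec(D) - \beta)$ depends continuously on $D$, while the coefficient matrix (assembled from $-I$, the all-ones row $\mathbf{1}^\top$ and $-\mathbf{1}^\top$, and $A$ and $-A$) does not.

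Next I would invoke Hoffman's error bound. Since a WE exists at every demand, $\WW_T \neq \emptyset$ for all $T \ge 0$, and Hoffman's lemma provides a constant $L$ depending only on the fixed coefficient matrix such that, for every $x \in \real^n$ and every $T \ge 0$, the distance from $x$ to $\WW_T$ is at most $L$ times the Euclidean norm of the vector of constraint violations of $x$ with respect to the system defining $\WW_T$. Taking $x = f^D \in \WW_D$, the constraints $f \geq \mymathbf{0}$ are satisfied and the only violations are $\abs{\sum_p f^D_p - T} = \abs{D - T}$ and $\norm{Af^D - (\lmvec(T) - \beta)} = \norm{\lmvec(D) - \lmvec(T)}$, so
\[
	\operatorname{dist}(f^D, \WW_T) \;\le\; L\big(\abs{D - T} + 2\norm{\lmvec(D) - \lmvec(T)}\big);
\]
by the same argument with the roles of $D$ and $T$ exchanged, $\operatorname{dist}(f^T, \WW_D)$ obeys the same bound for every $f^T \in \WW_T$. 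Given $\epsilon > 0$, continuity of $\lmvec$ at $D$ then yields a single $\delta > 0$ making the right-hand side strictly less than $\epsilon$ whenever $\abs{T - D} < \delta$; since $\WW_T$ and $\WW_D$ are nonempty and closed, the respective distances are attained, which is precisely what the two items of the statement assert (with a common $\delta$, and uniformly over the choice of equilibrium).

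The only step that is not essentially bookkeeping is the polyhedral reduction in the second paragraph---specifically, obtaining a defining matrix that does not depend on $D$, which is what makes the Hoffman constant uniform in $T$; everything after that is the classical error bound together with continuity of $\lmvec$ from Proposition~\ref{prop:pw-affine}. If one prefers to avoid Hoffman's lemma for the second item, that half can instead be proved by a routine compactness argument: any sequence $f^{T_k} \in \WW_{T_k}$ with $T_k \to D$ is bounded (it lies in $\FF_{T_k}$), and passing to a limit in the variational inequality characterization of Proposition~\ref{prop:BP-VI} shows every limit point of $(f^{T_k})$ is a WE at demand $D$.
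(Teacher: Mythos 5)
Your argument is correct, but it is genuinely different from the paper's. The paper disposes of this lemma in one line by appealing to an external stability theorem for variational inequalities: since $A$ is symmetric positive semi-definite it is cocoercive, and \cite[Theorem 4.2]{SMR:07} then gives continuity of $D \mapsto \SOL(\FF_D, C)$ directly, with no explicit use of the affine structure beyond cocoercivity. You instead exploit the affine structure fully: the identity $\WW_D = \setdef{f \ge 0}{\mymathbf{1}^\top f = D, \ Af = \lmvec(D) - \beta}$ (your verification of the nontrivial inclusion via $(f - f^D)^\top A (f - f^D) = 0$ and $A = B^\top Q B$ is exactly right) turns $\WW_D$ into a polyhedron with a $D$-independent coefficient matrix, and Hoffman's error bound plus continuity of $\lmvec$ from Proposition~\ref{prop:pw-affine} then yields both semicontinuity statements at once, with a common $\delta$ and uniformly over the choice of equilibrium. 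What your route buys is a quantitatively stronger and more self-contained conclusion --- since $\lmvec$ is piece-wise affine, you actually get local Lipschitz continuity of $D \mapsto \WW_D$ in the Hausdorff metric --- at the price of being tied to affine costs; the paper's citation-based route is shorter and would survive generalization to nonlinear cocoercive cost maps. One dependency worth flagging explicitly if you write this up: you lean on the continuity of $\lmvec$ asserted in Proposition~\ref{prop:pw-affine} (imported from \cite{RC-VD-MS:21}), whereas the paper later derives continuity of $\lmvec$ \emph{from} this lemma in the proof of Lemma~\ref{lem:inclusion-active-and-used}; there is no circularity because \cite{RC-VD-MS:21} establishes continuity of $\lmvec$ independently, but the logical ordering should be stated. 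Your closing remark that the outer-semicontinuity half can alternatively be obtained by a compactness argument through the VI characterization of Proposition~\ref{prop:BP-VI} is also sound, provided one approximates each $x \in \FF_D$ by feasible points $x_k \in \FF_{T_k}$ before passing to the limit.
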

	The first and second part of the above definition are known as lower- and upper-semicontinuity, respectively \cite{JPA-HF:09}, or alternatively as inner- and outer-semicontinuity, respectively~\cite{RTR-RJBW:09}. 
	The result follows from \cite[Theorem 4.2]{SMR:07} and noting that since $A$ is positive semi-definite and symmetric, it is \emph{cocoercive} \cite[Theorem 3.3]{SMR:07}.
	With the above continuity property established, we can start our investigation into the evolution of the set of WE by looking at the evolution of the active and used sets. Our next result shows how these sets evolve as the demand varies in the interval $[D_i,D_{i+1}]$, where ${D_i,D_{i+1} \in \DD}$. Recall that the set $\DD$ contains the points where the active and used sets change (see Corollary~\ref{cor:interval-of-active-set}).
	\begin{lemma}\longthmtitle{Relationship between active and used sets over an interval}\label{lem:inclusion-active-and-used}
		For a given $(\PP,\CC)$, let $D_i,D_{i+1} \in \DD$ and let $\Ja_i \subseteq \PP$ and $\Ju_i \subseteq \PP$ be the associated active  and used sets on $(D_i,D_{i+1})$, respectively. Then, we have
		\begin{subequations}
			\begin{align}
				\uset_{D_i} &\subseteq \Ju_i \subseteq 	\Ja_i \subseteq \aset_{D_i},	\label{eq:series-inclusions-one}
				\\
				\uset_{D_{i+1}} &\subseteq \Ju_i 	\subseteq \Ja_{i} \subseteq \aset_{D_{i+1}}. \label{eq:series-inclusions-two}
			\end{align}
		\end{subequations}
	\end{lemma}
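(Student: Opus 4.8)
The plan is to prove the two inclusion chains piece by piece, treating the endpoints $D_i$ and $D_{i+1}$ symmetrically (for~\eqref{eq:series-inclusions-two} one takes $D_{i+1} < \infty$, so that $\uset_{D_{i+1}}$ and $\aset_{D_{i+1}}$ are defined). The middle inclusion $\Ju_i \subseteq \Ja_i$ is immediate: fixing any $D \in (D_i,D_{i+1})$, Corollary~\ref{cor:interval-of-active-set} gives $\Ju_i = \uset_D$ and $\Ja_i = \aset_D$, and $\uset_D \subseteq \aset_D$ was noted right after~\eqref{eq:used-road-constraint}.

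For the used-set inclusions $\uset_{D_i} \subseteq \Ju_i$ and $\uset_{D_{i+1}} \subseteq \Ju_i$ I would use the inner-semicontinuity of the map $D \mapsto \WW_D$ (part~1 of Lemma~\ref{lem:continuity-under-affine-costs}). Take $p \in \uset_{D_i}$, so there is $f^{D_i} \in \WW_{D_i}$ with $f^{D_i}_p =: c > 0$. Applying Lemma~\ref{lem:continuity-under-affine-costs}-1 at $D_i$ with $\epsilon = c/2$ produces some $\delta > 0$, which we may shrink so that $D_i + \delta < D_{i+1}$, such that for every $T \in (D_i, D_i + \delta)$ there is $f^T \in \WW_T$ with $\norm{f^{D_i} - f^T} < c/2$, and hence $f^T_p > c/2 > 0$. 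Since $T \in (D_i,D_{i+1})$, Corollary~\ref{cor:interval-of-active-set} gives $\uset_T = \Ju_i$, so $p \in \Ju_i$. The same argument applied at $D_{i+1}$, now approaching from the left, yields $\uset_{D_{i+1}} \subseteq \Ju_i$.

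For the active-set inclusions $\Ja_i \subseteq \aset_{D_i}$ and $\Ja_i \subseteq \aset_{D_{i+1}}$ I would rely on the continuity of $\lmvec$ and $\lmWE$ from Proposition~\ref{prop:pw-affine}. Fix $p \in \Ja_i$. For every $D \in (D_i,D_{i+1})$ we have $p \in \aset_D$, so $C_p(f^D) \le C_r(f^D)$ for every WE $f^D$ and every $r \in \PP$; equivalently, $\lmvec_p(D) = \lmWE(D)$, this being an identity between well-defined functions thanks to the essential uniqueness of the path costs. Letting $D \downarrow D_i$ and using continuity of $\lmvec_p$ and $\lmWE$ gives $\lmvec_p(D_i) = \lmWE(D_i)$, i.e., $C_p(f^{D_i}) = \min_{r \in \PP} C_r(f^{D_i})$ for every $f^{D_i} \in \WW_{D_i}$, which is precisely the statement $p \in \aset_{D_i}$. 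Letting $D \uparrow D_{i+1}$ gives $p \in \aset_{D_{i+1}}$ in the same way.

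I expect the active-set inclusions to be the delicate step: one must translate ``$p$ is active on the whole open interval'' into the scalar identity $\lmvec_p \equiv \lmWE$ there, pass to the limit using continuity, and then lift the limiting equality back to the quantified definition~\eqref{eq:relevant-road-constraint} of $\aset_{D_i}$ (and $\aset_{D_{i+1}}$) --- each lift relying on the essential uniqueness of the path costs at that breakpoint. The used-set inclusions are more routine, the only thing to watch being that the perturbed demand $T$ is kept strictly inside $(D_i, D_{i+1})$ so that Corollary~\ref{cor:interval-of-active-set} applies. All the heavy machinery --- continuity of $D \mapsto \WW_D$, continuity and piecewise-affine structure of $\lmvec$ and $\lmWE$, and essential uniqueness --- is already available, so no new technical tool is needed.
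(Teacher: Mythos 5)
Your proposal is correct and follows essentially the same route as the paper: the used-set inclusions via inner-semicontinuity of $D \mapsto \WW_D$ (Lemma~\ref{lem:continuity-under-affine-costs}-1), the middle inclusion from $\uset_D \subseteq \aset_D$, and the active-set inclusions via continuity of the WE cost-vector. The only (immaterial) difference is that for the active sets the paper argues by contrapositive --- a strict cost gap $\lmvec_p(D_i) < \lmvec_r(D_i)$ for $r \notin \aset_{D_i}$ persists for $T$ slightly above $D_i$, so $r \notin \Ja_i$ --- whereas you pass the identity $\lmvec_p \equiv \lmWE$ on the open interval to the limit at the endpoint.
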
 
	\ifinclude{
		\begin{proof}
			We will first establish~\eqref{eq:series-inclusions-one}. Let $p \in \aset_{D_i}$ and $r \in (\aset_{D_i})^c$. It follows that $\lmvec_p(D_i) < \lmvec_r(D_i) $. Continuity of the map $D \mapsto \WW_{D}$, as proven in Lemma~\ref{lem:continuity-under-affine-costs}, implies continuity of the map $\lmvec(\cdot)$. Therefore, it follows that for small enough $\epsilon > 0$ we have $\lmvec_p(T) < \lmvec_r(T)$ for all $T \in [D_i,D_i + \epsilon)$. It follows that $r \in (\aset_T)^c$ for all $T \in [D_i,D_i + \epsilon)$, which shows that $r \in (\Ja_i)^c$. Thus we have $\Ja_i \subseteq \aset_{D_i}$. Similarly, it follows from Lemma~\ref{lem:continuity-under-affine-costs} that for small enough $\epsilon > 0$ there exists $f^T \in \WW_{T}$ such that $f^T_p > 0$ for all $p \in \uset_{D_i}$ and $T \in[D_i,D_i + \epsilon)$. This shows that $\uset_{D_i} \subseteq \uset_T$ for all $T \in [D_i,D_i + \epsilon)$. Thus we have $\uset_{D_i} \subseteq \Ju_{i}$. By definition of the active and used sets, we also have $\Ju_{i} \subseteq \Ja_i$, proving~\eqref{eq:series-inclusions-one}. The result~\eqref{eq:series-inclusions-two} concerning $\uset_{D_{i+1}}$ and $\aset_{D_{i+1}}$ follows by the same arguments considering the interval $(D_{i+1} - \epsilon,D_{i+1}]$.
		\end{proof}
	}
	The implication of the above is that when the demand $D$ moves from the point $D_i$ into the interval $(D_i,D_{i+1})$, the used set $\uset_{D}$ can only gain elements, while the active set $\aset_{D}$ can only lose elements. When the demand $D$ then moves from the interval $(D_i,D_{i+1})$ to the point $D_{i+1}$ the situation is reversed. That is, $\uset_{D}$ can only lose elements, while  $\aset_{D}$ can only gain elements. Also note that since $\Ju_i \neq \Ju_j$ and $\Ja_i \neq \Ja_j$ for all $i \neq j$, both the active and the used set must change as $D$ moves from $(D_{i-1},D_i)$ to $(D_{i},D_{i+1})$.  Turning our attention back to Example~\ref{ex:secIV-1}, using~\eqref{eq:Wheatstone-simple-path-cost} and~\eqref{eq:Wheatstone-simple-WE}, we derive $\aset_{D}$ and $\uset_{D}$ as:
	\begin{equation*}
		\begin{aligned}
			(\aset_{D} , \uset_{D})  =  \begin{cases}
				(\{p_3\}, \emptyset)	 &\text{for } D = 0,	\\
				(\{p_3\}, \{p_3\}) &\text{for } D \in (0,1),	\\
				(\{p_1,p_2,p_3\}, \{p_3\})	 &\text{for } D = 1,	\\
				(\{p_1,p_2,p_3\}, \{p_1,p_2,p_3\}) &\text{for } D \in (1,2),	\\
				(\{p_1,p_2,p_3\}, \{p_1,p_2\}) &\text{for } D = 2,	\\
				(\{p_1,p_2\}, \{p_1,p_2\})	 &\text{for } D \in (2,\infty).	\\
			\end{cases}
		\end{aligned}
	\end{equation*}
	This illustrates the behavior stated in Lemma~\ref{lem:inclusion-active-and-used}. This lemma also allows us to establish the following minor extension of Lemma~\ref{lem:facts-on-evolution-for-affine}:
	\begin{corollary} \longthmtitle{Convex combinations of WE in $[D_i,D_{i+1}]$} \label{cor:extended-affine-combinations}
		For a given $(\PP,\CC)$ and $D_i, D_{i+1} \in \DD$, let $D,T \in [D_i,D_{i+1}]$. Then, for any  ${f^{D} \in \WW_{D}}$, $f^{T} \in \WW_{T}$, and ${\mu \in [0,1]}$, we have $\coco_\mu(f^{D},f^{T}) \in \WW_{\coco_\mu(D,T)}$. 
	\end{corollary}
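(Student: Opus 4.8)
The plan is to imitate the argument behind Lemma~\ref{lem:facts-on-evolution-for-affine}-2, but with the interior active set $\Ja_i$ playing the role of the ``common active set'', and to use Lemma~\ref{lem:inclusion-active-and-used} to control what happens at the endpoints $D_i$ and $D_{i+1}$.

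First I would reduce to a convenient configuration. Since $\coco_\mu(f^{D},f^{T}) = \coco_{1-\mu}(f^{T},f^{D})$ and $\coco_\mu(D,T) = \coco_{1-\mu}(T,D)$, we may assume $D \le T$. If $D = T$ the claim is immediate from convexity of $\WW_{D}$ (noted after Definition~\ref{def:WE}), so assume $D < T$; then $D \in [D_i,D_{i+1})$ and $T \in (D_i,D_{i+1}]$. Set $g := \coco_\mu(f^{D},f^{T})$ and $S := \coco_\mu(D,T) \in [D_i,D_{i+1}]$. Feasibility is preserved under convex combinations, so $g \in \FF_{S}$, and since the path-cost map $C$ is affine we have $C_p(g) = \mu\, C_p(f^{D}) + (1-\mu)\, C_p(f^{T})$ for every $p \in \PP$.

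The two facts I would extract are: (i) $\uset_{D} \cup \uset_{T} \subseteq \Ja_i$, and (ii) $\Ja_i \subseteq \aset_{D}$ and $\Ja_i \subseteq \aset_{T}$. For demands lying in the open interval $(D_i,D_{i+1})$ both follow from Corollary~\ref{cor:interval-of-active-set} (there $\uset = \Ju_i \subseteq \Ja_i = \aset$); for the endpoint cases $D = D_i$ or $T = D_{i+1}$ they are exactly the chains of inclusions $\uset_{D_i} \subseteq \Ju_i \subseteq \Ja_i \subseteq \aset_{D_i}$ and $\uset_{D_{i+1}} \subseteq \Ju_i \subseteq \Ja_i \subseteq \aset_{D_{i+1}}$ supplied by Lemma~\ref{lem:inclusion-active-and-used}. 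From (i), any $p$ with $g_p > 0$ has $f^{D}_p > 0$ or $f^{T}_p > 0$, hence $p \in \Ja_i$. From (ii) together with the characterization~\eqref{eq:lambda-defined} of $\lmWE$, every $p \in \Ja_i$ satisfies $C_p(f^{D}) = \lmWE(D)$ and $C_p(f^{T}) = \lmWE(T)$, so $C_p(g) = \mu\lmWE(D) + (1-\mu)\lmWE(T) =: \bar\lambda$, a value independent of $p$. Finally, for an arbitrary $r \in \PP$ the defining minimality of the WE-cost gives $C_r(f^{D}) \ge \lmWE(D)$ and $C_r(f^{T}) \ge \lmWE(T)$, whence $C_r(g) \ge \bar\lambda$. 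Assembling these: $g \in \FF_{S}$, and every $p$ with $g_p > 0$ satisfies $C_p(g) = \bar\lambda \le C_r(g)$ for all $r \in \PP$; thus $g \in \WW_{S} = \WW_{\coco_\mu(D,T)}$.

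I expect the only genuine obstacle to be the treatment of the endpoints: the naive attempt — applying Lemma~\ref{lem:facts-on-evolution-for-affine}-2 directly with $D^- = D_i$ and $D^+ = D_{i+1}$ — fails because $\aset_{D_i} \neq \aset_{D_{i+1}}$ in general. The resolution is that equality of the endpoint active sets is not actually needed; the one-sided inclusions of Lemma~\ref{lem:inclusion-active-and-used} are enough to guarantee (i) and (ii), while the bound $C_r \ge \lmWE$ required for the non-$\Ja_i$ paths holds for every path unconditionally. (An alternative route, approximating $D$ and $T$ by demands interior to $(D_i,D_{i+1})$, applying Lemma~\ref{lem:facts-on-evolution-for-affine}-2 there, and passing to the limit via the continuity of $D \mapsto \WW_D$ from Lemma~\ref{lem:continuity-under-affine-costs}, also works but is less direct.)
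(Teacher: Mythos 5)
Your proof is correct and follows essentially the same route as the paper's: both reduce the endpoint cases to the observation that any path carrying positive flow in the convex combination lies in $\Ju_i$ by Lemma~\ref{lem:inclusion-active-and-used}, hence is active at both $D$ and $T$, and then use affineness of $C$ to transfer the minimality of its cost to the combined flow. Your uniform treatment of all endpoint configurations via the two inclusions (i) and (ii) is a slightly tidier packaging of what the paper handles case by case, but it is not a different argument.
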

	\ifinclude{
		\begin{proof}
			The result for the case $D,T \in (D_i,D_{i+1})$ is already stated in Lemma~\ref{lem:facts-on-evolution-for-affine}. Now consider the case ${D = D_i}$ and ${T = D_{i+1}}$. For given $\mu \in [0,1]$, $f^D \in \WW_D$, and $f^T \in \WW_T$, denote ${f^\mu := \coco_\mu(f^{D},f^{T})}$. Since $f^{D}, f^{T} \geq 0$, we have $f^\mu \geq 0$, and it follows in a straigthforward manner that $f^\mu \in \FF_{T_\mu}$, where $T_\mu := \coco_\mu(D,T)$.
			
			Now let $p \in \PP$ be a path such that $f^\mu_p > 0$. It follows that either $f^{D}_p > 0$ or $f^{T}_p > 0$. Since $D = D_i$ and $T = D_{i+1}$, we obtain either $p \in \uset_{D_i}$ or $p \in \uset_{D_{i+1}}$. In both cases, Lemma~\ref{lem:inclusion-active-and-used} implies that $p \in \Ju_{i}$, and subsequently the same result implies that $p \in \aset_{D_i} \cap \aset_{D_{i+1}}$. Thus, we have
			\begin{align*}
				C_p(f^{D}) &\leq C_r(f^{D}), \quad \forall 	r \in \PP,
				\\
				C_p(f^{T}) &\leq C_r(f^{T}), \quad \forall 	r \in \PP.
			\end{align*}
			Since the function $C$ is affine and $f^\mu$ is a convex combination of $f^D$ and $f^T$, we get $C_p(f^\mu) \leq C_r(f^\mu)$ for all $r \in \PP$. This establishes the WE condition \eqref{eq:WE-condition} and thus we have shown that $f^\mu \in \WW_{T_\mu}$. The cases $D = D_i$, $T \in (D_i,D_{i+1})$ and $D \in (D_i,D_{i+1})$, $T = D_{i+1}$ follow using similar arguments.
		\end{proof}
	}
	The main goal of this section is to characterize the evolution of the set of WE. In Example~\ref{ex:secIV}, we noted the existence of a (set of) \emph{direction(s)} $f^\delta$ along which the WE moved when the demand increases in the interval between points of $\DD$. We call such a direction a \emph{direction of increase}. The collection of all such vectors, denoted $\Gamma_D$, is referred to as the set of directions of increase. Our goal is thus to characterize this set $\Gamma_{D}$, which is formally defined as follows:
	\begin{definition} \longthmtitle{Set of directions of increase} \label{def:directions-of-increase}
		Let $(\PP,\CC,D)$ be given. The set of \emph{directions of increase} $\Gamma_D$ is the set of all directions $f^\delta \in \HH_{1}$ in which the flow can be increased, starting from some WE in $\WW_{D}$, such that the new flow is a WE as long as the increase is small enough. That is,
		\begin{align*}
			\Gamma_{D} := \setdef{f^\delta &\in \HH_{1}
			}{
				\exists f^D \in \WW_{D}, \hspace{2 pt} 	\bar{\epsilon} > 0 \text{ such that } f^{D} + \epsilon f^\delta \in \WW_{D + \epsilon} \hspace{2 pt} \forall \epsilon \in [0,\bar{\epsilon}]}. \oprocendsymbol
		\end{align*}
	\end{definition} 
	This definition of $\Gamma_{D}$ puts emphasis on the local properties of this set. However, we note that for any $D \in [D_i,D_{i+1})$ every WE $f^T \in \WW_{T}$, where $T \in (D,D_{i+1}]$, is of the form $f^D + (T - D)f^\delta$, where $f^D \in \WW_{D}$ and $f^\delta \in \Gamma_{D}$. In other words, all WE in the range $[D,D_{i+1}]$ are characterized by $\WW_{D}$ and $\Gamma_{D}$. This fact can be derived from Corollary~\ref{cor:extended-affine-combinations}.
	
	\subsection{Characterizing $\Gamma_{D}$}
	Before we address the evolution of the WE by deriving an expression for $\Gamma_D$, we first shift our attention to the evolution of the WE-cost $\lmWE$ and the WE cost-vector $\lmvec$. We characterize the evolution of the WE cost-vector $\lmvec$ and relate it to the set $\Gamma_{D}$. 

	\begin{proposition} \label{prop:evolution-lmvec} \longthmtitle{The evolution of $\lmvec$}
		Let $(\PP,\CC)$ be given. For any ${i \in [M]_0}$, there exists a vector $\delta C^i$ such that the following hold:
		\begin{enumerate}
			\item for all $D \in [D_i,D_{i+1})$ and $f^\delta \in \Gamma_{D}$
			\begin{equation} \label{eq:fdelta-hyperplane}
				Af^\delta = \delta C^i,
			\end{equation}
			\item for all $T \in [D_i,D_{i+1}]$
			\begin{equation}\label{eq:lm-t-d}
				\lmvec(T) = \lmvec(D_i) + (T - D_i)\delta C^i,
			\end{equation}
			\item $\delta C^i \neq \delta C^{i+1}$.
		\end{enumerate}
	\end{proposition}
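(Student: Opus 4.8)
The plan is to establish the three items in the order (2), (1), (3), since (2) is what naturally produces the vector $\delta C^i$. For (2), fix $i \in [M]_0$; the only ingredient is Corollary~\ref{cor:extended-affine-combinations}. Applied with $D = D_i$, $T = D_{i+1}$ and any $f^{D_i} \in \WW_{D_i}$, $f^{D_{i+1}} \in \WW_{D_{i+1}}$, it gives $\coco_\mu(f^{D_i}, f^{D_{i+1}}) \in \WW_{\coco_\mu(D_i, D_{i+1})}$ for all $\mu \in [0,1]$; applying the affine map $C$ and using that $\lmvec(S) = C(f^S)$ for every $f^S \in \WW_S$ (essential uniqueness of WE), this becomes $\lmvec(\coco_\mu(D_i, D_{i+1})) = \coco_\mu(\lmvec(D_i), \lmvec(D_{i+1}))$ for all $\mu$. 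Hence $\lmvec$ restricted to $[D_i, D_{i+1}]$ is the restriction of an affine map, and setting $\delta C^i := \frac{\lmvec(D_{i+1}) - \lmvec(D_i)}{D_{i+1} - D_i}$ gives (2). When $i = M$ (so $D_{i+1} = \infty$) one runs the same argument with $T$ any finite point larger than $D_M$, letting $T \to \infty$, to conclude that $\lmvec$ is affine on $[D_M,\infty)$, with $\delta C^M$ its slope.

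For (1), fix $i$, $D \in [D_i, D_{i+1})$, and $f^\delta \in \Gamma_D$. By Definition~\ref{def:directions-of-increase} there exist $f^D \in \WW_D$ and $\bar\epsilon > 0$, which we may shrink so that $D + \bar\epsilon \le D_{i+1}$, such that $f^D + \epsilon f^\delta \in \WW_{D+\epsilon}$ for all $\epsilon \in [0,\bar\epsilon]$. For such $\epsilon$ we have, on the one hand, $\lmvec(D+\epsilon) = C(f^D + \epsilon f^\delta) = \lmvec(D) + \epsilon A f^\delta$ because $C = A(\cdot) + \beta$, and, on the other hand, $\lmvec(D+\epsilon) - \lmvec(D) = \epsilon\, \delta C^i$ by item (2). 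Equating the two and cancelling $\epsilon > 0$ yields $A f^\delta = \delta C^i$. (This shows $\Gamma_D \subseteq \{f^\delta \in \HH_1 : A f^\delta = \delta C^i\}$; the reverse inclusion, and hence the exact description of $\Gamma_D$, is what Proposition~\ref{prop:characterize-directions-of-increase} will provide.)

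For (3), suppose toward a contradiction that $\delta C^i = \delta C^{i+1}$ for some consecutive pair. Item (2) on the intervals $[D_i, D_{i+1}]$ and $[D_{i+1}, D_{i+2}]$ then shows every component $\lmvec_p$ is affine on all of $[D_i, D_{i+2}]$. Recall that $\lmWE(\cdot) = \min_{p \in \PP} \lmvec_p(\cdot)$ and that $p \in \aset_D$ iff $\lmvec_p(D) = \lmWE(D)$. By Lemma~\ref{lem:inclusion-active-and-used}, $\uset_{D_{i+1}} \subseteq \Ja_i \cap \Ja_{i+1}$, and $\uset_{D_{i+1}} \neq \emptyset$ since $D_{i+1} > 0$; fix $r \in \Ja_i \cap \Ja_{i+1}$. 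Then $r$ is active on both $(D_i, D_{i+1})$ and $(D_{i+1}, D_{i+2})$, so $\lmWE = \lmvec_r$ on $(D_i, D_{i+1}) \cup (D_{i+1}, D_{i+2})$, and by continuity on all of $[D_i, D_{i+2}]$; thus $\lmWE$ is affine there. Now let $p \in \Ja_i$ be arbitrary. Then $\lmvec_p = \lmWE$ on the positive-length interval $(D_i, D_{i+1})$; since two affine functions on $[D_i, D_{i+2}]$ that agree on a subinterval of positive length coincide, $\lmvec_p = \lmWE$ on all of $[D_i, D_{i+2}]$, forcing $p \in \aset_D = \Ja_{i+1}$ for every $D \in (D_{i+1}, D_{i+2})$. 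Hence $\Ja_i \subseteq \Ja_{i+1}$, and the symmetric argument starting from $p \in \Ja_{i+1}$ gives $\Ja_{i+1} \subseteq \Ja_i$, so $\Ja_i = \Ja_{i+1}$, contradicting Corollary~\ref{cor:interval-of-active-set}.

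Items (1) and (2) are essentially bookkeeping with the affine structure of $C$ and the already-established convex-combination property of WE, the only mild subtlety being the unbounded final interval (where the corollary is invoked for pairs of finite demands). The genuine obstacle is item (3): affineness of the individual functions $\lmvec_p$ on $[D_i, D_{i+2}]$ is not enough on its own, since an affine inequality valid on a subinterval may fail elsewhere. What makes the argument go through is first upgrading to affineness of the lower envelope $\lmWE$ on the whole of $[D_i, D_{i+2}]$, which relies on the existence of a path used (hence active) on both adjacent intervals — furnished by $\uset_{D_{i+1}} \subseteq \Ja_i \cap \Ja_{i+1}$ in Lemma~\ref{lem:inclusion-active-and-used} — after which the rigidity of affine functions agreeing on a subinterval forces $\Ja_i = \Ja_{i+1}$.
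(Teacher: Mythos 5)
Your proof is correct and follows essentially the same route as the paper's: item (2) via Corollary~\ref{cor:extended-affine-combinations} and the affine form of $C$, item (1) by equating the two expressions for $\lmvec(D+\epsilon)$ and cancelling $\epsilon$, and item (3) by contradiction using $\emptyset \neq \uset_{D_{i+1}} \subseteq \Ja_i \cap \Ja_{i+1}$ from Lemma~\ref{lem:inclusion-and-used-sets-placeholder}\footnote{That is, Lemma~\ref{lem:inclusion-active-and-used}.}, together with the piece-wise affine structure. The only difference is the endgame of (3): the paper exhibits a path $r' \in \Ja_{i+1}\setminus\Ja_i$ whose cost ordering relative to some $p' \in \Ja_i$ must flip on $(D_{i+1},D_{i+2})$, whereas you show $\Ja_i = \Ja_{i+1}$ outright (via rigidity of affine functions agreeing on a subinterval) and contradict the distinctness of active sets in Corollary~\ref{cor:interval-of-active-set} --- both equally valid, and your handling of the unbounded final interval in (2) is in fact slightly more careful than the paper's.
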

	\ifinclude{
		\begin{proof}
			We start with the second claim, which is a consequence of the affine form of $C$, given in \eqref{eq:path-cost}, and the convexity result in Corollary~\ref{cor:extended-affine-combinations}. Let $f^{D_i} \in \WW_{D_i}$ and $f^{D_{i+1}} \in \WW_{D_{i+1}}$ and define
			\begin{equation*}
				f^{\delta_0} := (D_{i+1} - D_i)^{-1}(f^{D_{i+1}} - f^D).
			\end{equation*}
			By Corollary~\ref{cor:extended-affine-combinations} any convex combination of $f^{D_i}$ and $f^{D_{i+1}}$ is a WE. To be specific, pick some $\mu \in [0,1]$ and let $f^{\mu} := \coco_\mu(f^{D_i},f^{D_{i+1}})$. Then, $f^{\mu} \in \WW_{T_\mu}$, where ${T_\mu = \coco_\mu(D_i,{D_{i+1}})}$. Furthermore, we have 
			\begin{align*}
				f^\mu	&= f^{D_i} + (1-\mu)(f^{D_{i+1}} - 	f^{D_i})	\\
				&= f^{D_i} + (T_\mu - D_i)\frac{f^{D_{i+1}} 	- f^{D_i}}{{D_{i+1}} - D_i}	\\
				&= f^{D_i} + (T_\mu - D_i)f^{\delta_0}.
			\end{align*}
			Using the affine nature of $C$, given in \eqref{eq:path-cost}, and the definition of $\lmvec$ we derive
			\begin{equation} \label{eq:affine-variation-cost}
				\begin{aligned}
					\lmvec(T_{\mu}) &= C(f^\mu)	\\
					&= C\big(f^{D_i} + (T_\mu - 	D_i)f^{\delta_0}\big)	\\
					&= C(f^{D_i}) + (T_\mu - D_i) A 	f^{\delta_0}	\\
					&= \lmvec(D_i) + (T_\mu - D_i) A 	f^{\delta_0}.
				\end{aligned}
			\end{equation}
			Since the above holds for all $\mu \in [0,1]$ it follows that, for all  $T \in [D_i,D_{i+1}]$, we have ${\lmvec(T) = \lmWE(D_i) + (T_\mu - D_i) A f^{\delta_0}}$. Setting $\delta C^i := A f^{\delta_0}$ the second statement is proven.
			
			To show the first statement, let $D^- \in [D_i,D_{i+1})$ and let $f^\delta \in \Gamma_{D^-}$. It follows that there exists $D^+ \in (D^-,D_{i+1}]$ and $f^{D^+} \in \WW_{D^+}$ such that
			\begin{equation*}
				f^{D^+} = f^{D^-} + (D^+-D^-)f^\delta.
			\end{equation*}
			Using the same derivation as in \eqref{eq:affine-variation-cost} we find
			\begin{align}
				\lmvec(D^+) = \lmvec(D^-) + (D^+ - D^-)A f^\delta. \label{eq:comp1}
			\end{align}
			However, from \eqref{eq:lm-t-d} we have
			\begin{align}
				\lmvec(D^+) &= \lmvec(D_i) + (D^- - 	D_i)\delta C^i + (D^+ - D^-) \delta C^i	\notag
				\\
				&= \lmvec(D^-) + (D^+ - D^-) \delta C^i, 	\label{eq:comp2}
			\end{align}
			where we have again used $\lmvec(D^-) = \lmvec(D_i) + (D^- - D_i)\delta C^i$. Comparing expressions~\eqref{eq:comp1} and~\eqref{eq:comp2}, we get  $A f^\delta = \delta C^i$, proving the first statement.
			
			We show the third statement using contradiction. To this end, assume that ${\delta C^i = \delta C^{i+1}}$ for some $i \in [M]_0$ and let ${p,r \in \Ja_{i}}$. By definition of the active set it follows that $\lmvec_{p}(D) = \lmvec_{r}(D)$ for all $D \in (D_i,D_{i+1})$. Combined with \eqref{eq:lm-t-d} this implies $\delta C^i_p =\delta C^i_r$. Under the assumption that $\delta C^i = \delta C^{i+1}$ we then have, by \eqref{eq:lm-t-d}, that $\lmvec_{p}(D) = \lmvec_{r}(D)$ for all $D \in (D_i,D_{i+2})$ and $p,r \in \Ja_{i}$. As a consequence either all of the paths in $\Ja_{i}$ are in the active set on the interval $(D_{i+1},D_{i+2})$ or none of them are. That is, one of the following holds: $\Ja_{i} \cap \Ja_{i+1} = \emptyset$ or $\Ja_{i} \subseteq \Ja_{i + 1}$. However, note that from Lemma~\ref{lem:inclusion-active-and-used} we obtain $\uset_{D_{i+1}} \subseteq \Ju_i \subseteq \Ja_{i}$ and $\uset_{D_{i+1}} \subseteq \Ju_{i+1} \subseteq \Ja_{i + 1}$. Since $D_{i+1} > 0$,  we have $\uset_{D_{i+1}} \neq \emptyset$ and it follows that $\Ja_{i} \cap \Ja_{i+1}\neq \emptyset$. Thus, we must have  $\Ja_{i} \subseteq \Ja_{i + 1}$.
			
			Now since $\Ja_{i} \neq \Ja_{i + 1}$, there must exist some path $r' \in (\Ja_{i})^c \cap \Ja_{i+1}$. Since $r'$ is not in the active set $\Ja_{i}$, one can find a path $p' \in \Ja_i$ and a demand $D \in (D_{i},D_{i+1})$ such that $\lmvec_{p'}(D) < \lmvec_{r'}(D)$. On the other hand, by Lemma~\ref{lem:inclusion-active-and-used}, $r' \in \Ja_{i+1}$ implies ${r' \in \aset_{D_{i+1}}}$ and so,  $\lmvec_{r'}(D_{i+1}) \le \lmvec_{p'}(D_{i+1})$. Combining the above two facts with the affine form~\eqref{eq:lm-t-d}, we deduce that $\delta C^i_{p'} > \delta C^i_{r'}$. Since $\delta C^{i+1} = \delta C^i$ it follows from~\eqref{eq:lm-t-d} that $\lmvec_{p'}(D) > \lmvec_{r'}(D)$ for all $D \in (D_{i+1},D_{i+2})$. However, this contradicts the fact that $p' \in \Ja_i \subseteq \Ja_{i+1}$. 
			
			We see that we arrive at a contradiction, and therefore the premise must be false. We conclude that ${\delta C^{i+1} \not = \delta C^i}$.
		\end{proof}
	}

	Proposition~\ref{prop:evolution-lmvec} shows that directions of increase at any demand $D$ belong to a subspace and the map defining this subspace relates to the map $\lmvec$. The above result also shows that $\lmvec$ is affine on the intervals between the breakpoints in $\DD$ and non-differentiable at the points in $\DD$. The latter fact is due to the last conclusion in the result. 
	Below we prove a similar result for the evolution of $\lmWE$.
	\begin{corollary} \label{cor:evolution-lmWE} \longthmtitle{Evolution of $\lmWE$}
		Let $(\PP,\CC)$ be given. For any ${i \in [M]_0}$, there exists a value $\delta \lambda^i \geq 0$ such that for all $T \in [D_i,D_{i+1}]$
		\begin{equation*}
			\lmWE(T) = \lmWE(D_i) + (T-D_i) \delta 	\lambda^i.
		\end{equation*}
		Furthermore, $\delta \lambda^i = \min_{r \in \Ja_{i}}\delta C^i_r$.
	\end{corollary}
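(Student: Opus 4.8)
The plan is to obtain $\lmWE$ on the interval $[D_i,D_{i+1}]$ directly from the affine evolution of the full cost vector $\lmvec$ already established in Proposition~\ref{prop:evolution-lmvec}, keeping track only of the components indexed by the active set $\Ja_i$, since it is precisely those paths that determine the minimum cost $\lmWE$ on $(D_i,D_{i+1})$. Fix $i\in[M]_0$ and let $\delta C^i$ be the vector supplied by Proposition~\ref{prop:evolution-lmvec}.

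First I would note that for every $p\in\Ja_i$ the $p$-th component of $\lmvec$ agrees with $\lmWE$ at the left endpoint: Lemma~\ref{lem:inclusion-active-and-used} gives $\Ja_i\subseteq\aset_{D_i}$, so $\lmvec_p(D_i)=C_p(f^{D_i})=\lmWE(D_i)$ by the definition~\eqref{eq:lambda-defined} of $\lmWE$. Taking the $p$-th component of~\eqref{eq:lm-t-d} then yields $\lmvec_p(T)=\lmWE(D_i)+(T-D_i)\,\delta C^i_p$ for all $T\in[D_i,D_{i+1}]$ and all $p\in\Ja_i$. Next, by Corollary~\ref{cor:interval-of-active-set}, for $T\in(D_i,D_{i+1})$ the active set is $\aset_T=\Ja_i$, i.e., $\Ja_i$ is exactly the set of cost-minimizing paths at demand $T$; hence $\lmWE(T)=\lmvec_p(T)$ for every $p\in\Ja_i$. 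In particular the quantity $\delta C^i_p$ must coincide for all $p\in\Ja_i$, and writing $\delta\lambda^i:=\min_{p\in\Ja_i}\delta C^i_p$ for this common value we obtain $\lmWE(T)=\lmWE(D_i)+(T-D_i)\,\delta\lambda^i$ on $(D_i,D_{i+1})$; this already establishes the ``furthermore'' assertion. The identity holds trivially at $T=D_i$ and extends to the remaining endpoint (and to the whole half-line $[D_M,\infty)$ in the case $i=M$) by continuity of $\lmWE$ (Proposition~\ref{prop:pw-affine}). Finally, $\delta\lambda^i\ge0$ because $\lmWE$ is non-decreasing (Proposition~\ref{prop:pw-affine}) and $D_{i+1}>D_i$.

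The step that needs the most care is passing from the open interval to the closed one. On $(D_i,D_{i+1})$ everything is clean because $\Ja_i$ is literally the set of minimizers of $p\mapsto\lmvec_p(T)$, so $\lmWE(T)$ coincides with the affine function $\lmWE(D_i)+(T-D_i)\delta\lambda^i$; but at the breakpoints the active sets $\aset_{D_i}$ and $\aset_{D_{i+1}}$ can be strictly larger than $\Ja_i$ (indeed Lemma~\ref{lem:inclusion-active-and-used} only gives the inclusions), so the open-interval argument does not apply verbatim there and one must route the extension through continuity of $\lmWE$. A minor bookkeeping point is the final interval $i=M$: there $D_{M+1}=+\infty$, so the bracket $[D_M,D_{M+1}]$ must be read as the half-line $[D_M,\infty)$, consistently with the convention used in Proposition~\ref{prop:evolution-lmvec} and with the dedicated treatment of this interval in Section~\ref{sec:obtain-DM}.
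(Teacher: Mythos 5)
Your proof is correct and follows essentially the same route as the paper's (much terser) argument: both rest on equation~\eqref{eq:lm-t-d} from Proposition~\ref{prop:evolution-lmvec} together with the observation that $\lmWE(D)=\lmvec_p(D)$ for any $p\in\Ja_i$ on the open interval $(D_i,D_{i+1})$. The extra details you supply — the inclusion $\Ja_i\subseteq\aset_{D_i}$ from Lemma~\ref{lem:inclusion-active-and-used}, the constancy of $\delta C^i_p$ over $\Ja_i$, the continuity extension to the endpoints, and the sign of $\delta\lambda^i$ from monotonicity of $\lmWE$ — are all accurate elaborations of what the paper leaves implicit.
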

	\begin{proof}
		The result follows from \eqref{eq:lm-t-d} and noting that for any $i \in [M]_0$ and ${D \in (D_i,D_{i+1})}$ we have $\lmWE(D) = \lmvec_{p}(D)$ for any $p \in \Ja_{i}$.
	\end{proof}
	We illustrate Proposition~\ref{prop:evolution-lmvec} and Corollary~\ref{cor:evolution-lmWE}, as well as some of the nuances of the evolution of the cost with an example:		
	\begin{example}\longthmtitle{$\lmWE$ can be differentiable at points in $\DD$} \label{ex:break-in-active-set-vs-break-in-cost}
		{\rm
			\begin{enumerate}[wide= 0pt,label=(\alph*), ref=\ref{ex:break-in-active-set-vs-break-in-cost}\alph*]
				\item  \label{ex:break-in-active-set-vs-break-in-cost-1} First we show that, even though we know $\delta C^i \neq \delta C^{i+1}$, we can have $\delta C^i = \delta C^j$ whenever $j \notin \{i-1,i,i+1\}$. That is, the direction in which $\lmvec$ varies as demand changes is different in consecutive intervals between breakpoints but it can be same for non-consecutive intervals. Note that this differs from how active and used sets vary, (see Corollary~\ref{cor:interval-of-active-set}),since these sets are not same for any pair of intervals. 
				Consider the network depicted in Figure~\ref{fig:wheatstone-with-parallel}, which is the Wheatstone network with the additional edge $e_6$. 
				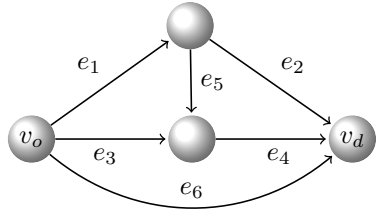
\begin{figure}
					\centering
					\begin{tikzpicture}[,->,shorten >=1pt,auto,node distance=1.5 cm,
						semithick]
						
						\tikzset{VertexStyle/.style = {shape          = circle,
								ball color     = white,
								text           = black,
								inner sep      = 2pt,
								outer sep      = 0pt,
								minimum size   = 18 pt}}
						\tikzset{EdgeStyle/.style   = {
								double          = black,
						}}
						\tikzset{LabelStyle/.style =   	{draw,
								fill           = white,
								text           = black}}
						\node[VertexStyle](A){$v_o$};
						\node[VertexStyle, above right=of 	A, xshift=6mm](B){};
						\node[VertexStyle, right=of A](C){};
						\node[VertexStyle, right=of C](D){$v_d$};
						
						\path (A)  edge      node{$e_1 $} (	B)
						(A)  edge             	node[below]{$e_3 \enskip$} (C)
						(B)	edge			node{$e_5$}	(C)
						(B)	edge			node{$e_2$}	(D)
						(C)	edge				node[below]{$\enskip e_4$}	(D)
						(A) edge[bend right = 40]      	node{$e_6 $} (D);
					\end{tikzpicture}
					\caption{The Wheatstone network with an added parallel path.}
					\label{fig:wheatstone-with-parallel}
				\end{figure}
				For $e_1$, $e_2$, $e_3$, $e_4$, and $e_5$ we use the cost functions given in \eqref{eq:Wheatstone-simple-edge-costs}, and for the new edge $e_6$ we use
				\begin{equation*}
					C_{e_6}(f_{e_6}) := 2.1.
				\end{equation*}
				There are four paths through this network, given by ${p_1 := (e_1,e_2)}$, $p_2 := (e_3,e_4)$, $p_3 := (e_1,e_5,e_4)$ and ${p_4 := (e_6)}$. The resulting path-cost function is given by 
				\begin{equation*}
					C(f) = Af + b = 	\left(\begin{array}{ccccc}
						1	&0	&1	&0	\\
						0	&1	&1	&0	\\
						1	&1	&2	&0	\\
						0	&0	&0	&0	\\
					\end{array}\right)f + 	\left(\begin{array}{c}
						1	\\
						1	\\
						0	\\
						2.1	\\
					\end{array}\right)
				\end{equation*}
				As long as the WE-cost of this game is lower than the constant cost of path $p_4$, the WE will be the same as that of the game in Example~\ref{ex:secIV-1}, with in addition $f_{p_4} = 0$. Furthermore, as demand increases and the WE-cost reaches the constant cost of path $p_4$, all subsequent flow will be routed on path $p_4$.  That is, we have
				\begin{equation*}
					f^D  = 
					\begin{cases}
						\left(\begin{array}{cccc}
							0,	&0,	&D,	&0
						\end{array}\right)^\top & \text{for 	} D \in [0,1],	\\
						\left(\begin{array}{cccc}
							D  - 1,	&D  - 1,	&2 - D,	&0
						\end{array}\right)^\top & \text{for 	} D \in [1,2],	\\
						\left(\begin{array}{cccc}
							\frac{D}{2},	&\frac{D}{2},		&0,	&0
						\end{array}\right)^\top & \text{for 	} D \in [2,2.2],	\\
						\left(\begin{array}{cccc}
							1.1,	&1.1,	&0,	&D  -  2.2
						\end{array}\right)^\top & \text{for } D \in [2.2,\infty).
					\end{cases}
				\end{equation*}
				Consequently, we obtain
				\begin{equation*}
					\begin{aligned}
						\lmvec(D)  =  \begin{cases}
							\left(\begin{array}{cccc}
								1  +  D,	&1  +  D,		&2D,	&2.1  
							\end{array}\right)^\top  		&\text{for } D  \in  [0,1],	\\
							\left(\begin{array}{cccc}
								2,	&2,	&2,	&2.1  
							\end{array}\right)^\top  		&\text{for } D  \in  [1,2],	\\
							\left(\begin{array}{cccc}
								1  +  \frac{D}{2},	&1  	+  \frac{D}{2},	&D,	&2.1  
							\end{array}\right)^\top  		&\text{for } D  \in  [2,2.2],\\
							\left(\begin{array}{cccc}
								2.1,	&2.1,	&2.2,		&2.1  
							\end{array}\right)^\top  		&\text{for } D  \in  [2.2,\infty).
						\end{cases}
					\end{aligned}
				\end{equation*}
				\begin{figure}
					\centering
					\includegraphics[width = 0.6\textwidth]{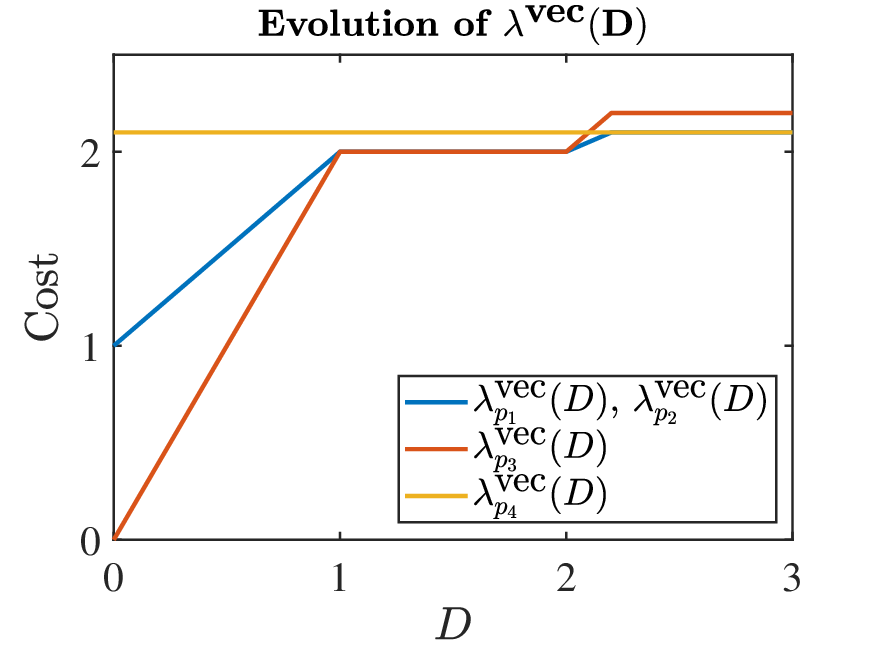}
					\caption{The evolution of $\lmvec(D)$ for the routing game discussed in Example~\ref{ex:break-in-active-set-vs-break-in-cost-1}.}
					\label{fig:constant-cost-evolution}
				\end{figure}
				In Figure~\ref{fig:constant-cost-evolution}, we plot these maps, and it is immediately apparent that on the intervals $D \in (1,2)$ and $D \in (2.2,\infty)$ the cost of all paths remain constant. In other words, we have $\delta C^1 = \delta C^3 = \mymathbf{0}$.
				
				\item \label{ex:break-in-active-set-vs-break-in-cost-2} Note that even though $\lmvec$ is necessarily not differentiable at the points in $\DD$, the same does not hold for $\lmWE$. To see this, again consider the network in Figure~\ref{fig:wheatstone-with-parallel}, where as before the cost functions of the edges $e_1$, $e_2$, $e_3$ and $e_4$ are given by \eqref{eq:Wheatstone-simple-edge-costs}, but for the edges $e_5$ and $e_6$ we set
				\begin{equation*}
					C_{e_5}(f_{e_5}) := f_{e_5}, \quad		C_{e_6}(f_{e_6}) := 2 + f_{e_6}.
				\end{equation*}
				The resulting path-cost function is given by
				\begin{equation*}
					C(f) = Af + b = 	\left(\begin{array}{ccccc}
						1	&0	&1	&0	\\
						0	&1	&1	&0	\\
						1	&1	&3	&0	\\
						0	&0	&0	&1	\\
					\end{array}\right)f + 	\left(\begin{array}{c}
						1	\\
						1	\\
						0	\\
						2	\\
					\end{array}\right)
				\end{equation*}
				and we get the following expression for the WE:
				\begin{equation*}
					f^D  = 
					\begin{cases}
						\left( \begin{matrix} 0, & 0, & D, 	& 0 \end{matrix} \right)^\top    &  \text{for } D \in [0,\frac{1}{2}],
						\\
						\frac{1}{3}\left( \begin{matrix}
							2D  -  1, 	&2D  -  1,	&2  -  	D, 	&0
						\end{matrix}\right)^\top   & 	\text{for } D \in [\frac{1}{2},2],	
						\\
						\frac{1}{3}\left(\begin{matrix}
							D + 1,	&D + 1,	&0,	&D - 2
						\end{matrix}\right)^\top   	&\text{for } D \in [2,\infty).
					\end{cases}
				\end{equation*}
				Note that for $D \in (\frac{1}{2},2)$ and $D \in (2,\infty)$ we have ${\aset_{D} = \{p_1,p_2,p_3\}}$ and ${\aset_D = \{p_1,p_2,p_4\}}$, respectively. Therefore, $\lmvec$ should not be differentiable at $D = 2$, which is verified by noting that
				\begin{equation*}
					\lmvec_{p_4}(D) = \begin{cases}
						2	&\text{for } D \in [0,2],	\\
						\frac{1}{3}D + \frac{4}{3}		&\text{for } D \in [2,\infty).
					\end{cases} 
				\end{equation*}
				However, $\lmWE$ is given by
				\begin{equation*}
					\lmWE(D) = \begin{cases}
						3D \quad& \text{for } D \in 	[0,\frac{1}{2}],	\\
						\frac{1}{3}D + \frac{4}{3} \quad & 	\text{for } D \in [\frac{1}{2}, \infty).
					\end{cases}
				\end{equation*}
				The full evolution of $\lmvec$ and $\lmWE$ is depicted in Figure~\ref{fig:Cost-differentiable}.
				\begin{figure}
					\centering
					\includegraphics[width = 0.6\textwidth]{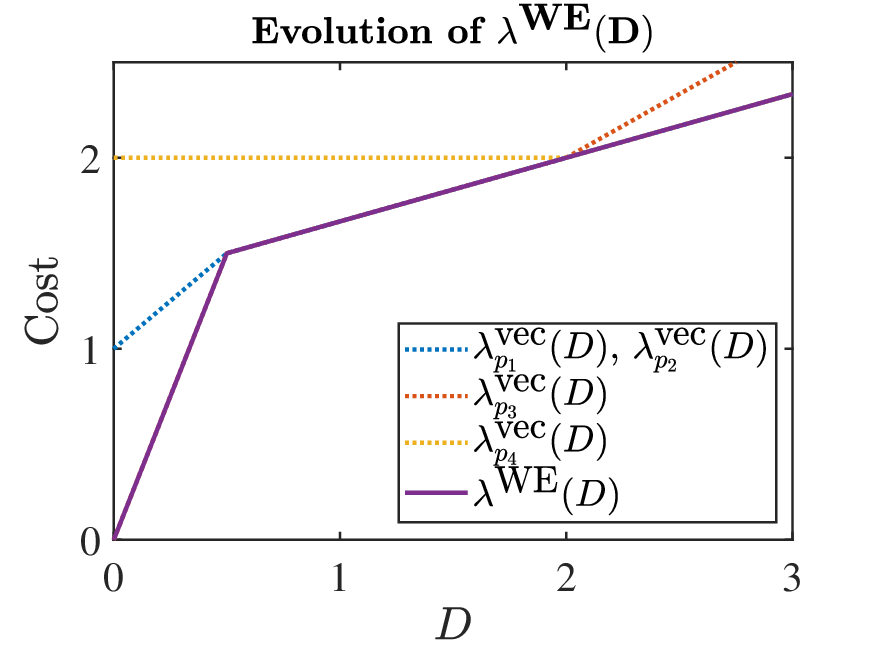}
					\caption{The evolution of $\lmWE$ and $\lmvec_{p_4}$ for the routing game discussed in Example~\ref{ex:break-in-active-set-vs-break-in-cost-2}.}
					\label{fig:Cost-differentiable}
				\end{figure}
				We see that $\lmWE$ is differentiable on $(\frac{1}{2},\infty)$, despite the fact that $\lmvec_{p_3}$ and $\lmvec_{p_4}$ clearly show a breakpoint at $D = 2$.
				\oprocend
			\end{enumerate}
		}
	\end{example}
	With the evolution of $\lmvec$ and $\lmWE$ established, our next goal is to characterize $\Gamma_D$ in a comprehensive way. 
	For ease of exposition we first define the set of \emph{directions of feasibility}. 
	\begin{definition}\longthmtitle{Direction of feasibility}\label{def:feasiblity}
		Let $(\PP,\CC,D)$ be given. 
		The set of \emph{directions of feasibility} $\MM_D$ is the set of all directions $f^\delta \in \HH_{1}$ in which the flow can be increased such that the no flow is assigned to or taken from paths that are inactive under WE, and a non-negative flow is assigned to paths that are unused under WE. That is,
		\begin{equation} \label{eq:feasible-increase-direction}
			\MM_D  :=  \setdef{f^{\delta}  \in  \HH_1 }{ f_{\aset_D \setminus \uset_D}^{\delta} \geq 0, \enskip  f_{(\aset_D)^c}^{\delta} = 0},
		\end{equation}
		where $(\aset_D)^c = \PP \setminus \aset_D$. \oprocend
	\end{definition}
	Intuitively, for any direction $f^\delta$ in $\MM_D$ there exists a WE $f^D \in \WW_{D} $ and a value $\bar{\epsilon} > 0$ such that for all $\epsilon \in [0,\bar{\epsilon}]$ we have $f^D + \epsilon f^\delta \in \FF_{D + \epsilon}$, and no flow is routed onto inactive paths. That is, there exists a WE in $\WW_{D}$ such that moving from that WE in the direction $f^\delta$ for a small enough amount will keep the resulting flow feasible, and contained within the set of paths that was active. The main result of this section is that $\Gamma_{D}$ can be obtained as the set of solutions of a VI problem, where the feasible set is given by $\MM_D$. Recall that $A \in \real^{n \times n}$ is the matrix related to the flow dependent part of the path-cost function $C$; i.e. $C(f) = Af + \beta$. We will show that the following holds:
	\begin{proposition} \longthmtitle{Directions of increase as solutions to a VI} \label{prop:characterize-directions-of-increase}
		Let $(\PP,\CC,D)$ be given. Then,
		\begin{equation*}
			\Gamma_D = \SOL(\MM_D,A).
		\end{equation*}
	\end{proposition}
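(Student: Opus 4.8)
The plan is to establish the two inclusions $\Gamma_D \subseteq \SOL(\MM_D,A)$ and $\SOL(\MM_D,A) \subseteq \Gamma_D$ directly, using only the affine form $C(f) = Af + \beta$, the VI characterization of Proposition~\ref{prop:BP-VI}, and the essential uniqueness of WE (so that for every $f^D \in \WW_D$ one has $C_p(f^D) = \lmWE(D) =: \lambda$ for $p \in \aset_D$, $C_p(f^D) > \lambda$ for $p \notin \aset_D$, and $f^D_p = 0$ for $p \notin \uset_D$). Two preliminary reductions set things up. First, $\Gamma_D \subseteq \MM_D$: if $f^\delta \in \Gamma_D$ with witness $f^D$ and threshold $\bar\epsilon$, then for $p \notin \aset_D$ continuity of $\lmvec$ (Lemma~\ref{lem:continuity-under-affine-costs}) keeps $p$ inactive at $D+\epsilon$ for small $\epsilon$, so the WE $f^D + \epsilon f^\delta$ assigns it zero flow; since such $p$ is also unused, $f^D_p = 0$, whence $f^\delta_p = 0$, and for $p \in \aset_D \setminus \uset_D$ nonnegativity of $f^D + \epsilon f^\delta$ with $f^D_p = 0$ forces $f^\delta_p \geq 0$. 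Second, averaging the witness $f^D$ with a WE that is strictly positive on all of $\uset_D$ (which exists by convexity of $\WW_D$) and applying Corollary~\ref{cor:extended-affine-combinations} for $\epsilon$ small enough that $D, D+\epsilon$ share a common interval $[D_i,D_{i+1}]$, we may assume the witness itself satisfies $f^D_p > 0$ for all $p \in \uset_D$.

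For $\Gamma_D \subseteq \SOL(\MM_D,A)$, fix $f^\delta \in \Gamma_D$ with such a witness $f^D$ and an arbitrary $g \in \MM_D$. For all sufficiently small $\epsilon > 0$ the point $h := f^D + \epsilon g$ lies in $\FF_{D+\epsilon}$ (the sum is $D+\epsilon$; nonnegativity holds because $f^D > 0$ on $\uset_D$, $g \geq 0$ on $\aset_D \setminus \uset_D$ where $f^D$ vanishes, and both vanish off $\aset_D$). Since $f^D + \epsilon f^\delta \in \SOL(\FF_{D+\epsilon},C)$, testing \eqref{eq:VI-condition} against $h$ and dividing by $\epsilon$ gives $C(f^D + \epsilon f^\delta)^\top(g - f^\delta) \geq 0$; expanding $C(f^D + \epsilon f^\delta) = C(f^D) + \epsilon A f^\delta$ yields $C(f^D)^\top(g-f^\delta) + \epsilon (A f^\delta)^\top(g-f^\delta) \geq 0$. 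The crucial point is that $g - f^\delta$ is supported on $\aset_D$ and has zero coordinate sum, while $C(f^D) \equiv \lambda$ on $\aset_D$, so $C(f^D)^\top(g-f^\delta) = 0$. Hence $(A f^\delta)^\top(g-f^\delta) \geq 0$ for every $g \in \MM_D$, i.e. $f^\delta \in \SOL(\MM_D,A)$.

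For the reverse inclusion, fix $f^\delta \in \SOL(\MM_D,A)$, put $\gamma := A f^\delta$ and $c := (f^\delta)^\top A f^\delta$, choose $\bar f^D \in \WW_D$ strictly positive on $\uset_D$, and consider $x_\epsilon := \bar f^D + \epsilon f^\delta$. First I would read off the structure of $\gamma$: since $f^\delta$ minimizes $g \mapsto \gamma^\top g$ over $\MM_D$, testing against the feasible directions $e_p - e_q$ with $p,q \in \uset_D$, and $e_p - e_q$ with $p \in \aset_D \setminus \uset_D$, $q \in \uset_D$, shows $\gamma_p$ equals the common value $c$ for $p \in \uset_D$ and $\gamma_p \geq c$ for $p \in \aset_D \setminus \uset_D$; combined with $\bar f^D$ being supported on $\uset_D$ and $C(\bar f^D) \equiv \lambda$ on $\aset_D$, this gives $\gamma^\top \bar f^D = cD$ and $\beta^\top f^\delta = \lambda - cD$. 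For small $\epsilon$, $x_\epsilon \in \FF_{D+\epsilon}$ by the same nonnegativity check, and since $h \mapsto C(x_\epsilon)^\top(h - x_\epsilon)$ is affine on the simplex $\FF_{D+\epsilon}$, it suffices to verify $C(x_\epsilon)^\top\big((D+\epsilon)e_p - x_\epsilon\big) \geq 0$ at each vertex $(D+\epsilon)e_p$. Substituting $C(x_\epsilon) = C(\bar f^D) + \epsilon\gamma$, $C(x_\epsilon)^\top x_\epsilon = \lambda D + \epsilon(\lambda + cD) + \epsilon^2 c$ and the identities above, a short computation collapses this to $\epsilon(D+\epsilon)(\gamma_p - c) \geq 0$ when $p \in \aset_D$ (an equality exactly for $p \in \uset_D$, which certifies that the used paths share the WE-cost at $D+\epsilon$) and to $(D+\epsilon)\big[(C_p(\bar f^D) - \lambda) + \epsilon(\gamma_p - c)\big]$ when $p \notin \aset_D$, which is positive for small $\epsilon$ since $C_p(\bar f^D) > \lambda$. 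As there are finitely many paths, one $\bar\epsilon > 0$ serves for all of them (and keeps $x_\epsilon$ feasible), so $x_\epsilon \in \WW_{D+\epsilon}$ for $\epsilon \in [0,\bar\epsilon]$ and $f^\delta \in \Gamma_D$. Combining the two inclusions proves $\Gamma_D = \SOL(\MM_D,A)$.

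I expect the reverse direction to be the main obstacle: one must extract the right information about $\gamma = A f^\delta$ from the VI (care is needed with which $e_p - e_q$ are feasible directions in $\MM_D$, and the argument degenerates when $\uset_D$ is small, in particular at $D=0$ where $\bar f^D = 0$), and then push the exact---not merely leading-order---bookkeeping through the three-case vertex computation, so that the $p \in \uset_D$ case comes out as a genuine equality rather than just an inequality. The forward direction is comparatively routine once the witness has been normalized to be positive on $\uset_D$, which is the only place Corollary~\ref{cor:extended-affine-combinations} is needed.
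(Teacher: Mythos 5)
Your proof is correct, but it reaches the two inclusions by a genuinely different route than the paper, most notably in the direction $\Gamma_D \subseteq \SOL(\MM_D,A)$. The paper proves $\SOL(\MM_D,A) \subseteq \Gamma_D$ \emph{first}, then gets the converse indirectly: it invokes Lemma~\ref{le:gamma-m} for $\Gamma_D \subseteq \MM_D$, uses Proposition~\ref{prop:evolution-lmvec} to conclude that $Af^\delta$ takes the same value $\delta C^i$ for every $f^\delta \in \Gamma_D$, and then applies the ``if and only if'' characterization in Proposition~\ref{prop:properties-solution-VI}-\ref{pr:sol:2} to the element of $\Gamma_D \cap \SOL(\MM_D,A)$ guaranteed by the first inclusion. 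Your argument is self-contained and order-independent: you test the Wardrop VI at demand $D+\epsilon$ against the competitor $f^D + \epsilon g$ for an arbitrary $g \in \MM_D$ and observe that the zeroth-order term $C(f^D)^\top(g - f^\delta)$ vanishes because $g - f^\delta$ is supported on $\aset_D$, sums to zero, and $C(f^D)$ is constant there --- a cleaner mechanism that bypasses Proposition~\ref{prop:evolution-lmvec} entirely, at the price of needing the witness normalized to be strictly positive on $\uset_D$ in \emph{both} directions (the paper only needs this for $\SOL \subseteq \Gamma$). For the reverse inclusion your vertex-by-vertex evaluation of the affine map $x \mapsto C(x_\epsilon)^\top(x - x_\epsilon)$ on the simplex $\FF_{D+\epsilon}$ is an explicit, computational rendering of what the paper does qualitatively via the pointwise WE condition~\eqref{eq:WE-condition} together with parts \ref{pr:sol:3} and \ref{pr:sol:4} of Proposition~\ref{prop:properties-solution-VI}; the two are equivalent, and your re-derivation of the structure of $\gamma = Af^\delta$ by testing coordinate-swap directions is exactly those two parts in disguise.

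Two loose ends you already flagged deserve tightening but are not gaps. First, identifying the common value of $\gamma$ on $\uset_D$ with $c = (f^\delta)^\top A f^\delta$ requires knowing additionally that $\gamma_p = c$ on the \emph{positive support} of $f^\delta$, including paths $p \in \aset_D \setminus \uset_D$ with $f^\delta_p > 0$; this needs the extra test directions $e_q - e_p$ with $f^\delta_p > 0$, which is precisely Proposition~\ref{prop:properties-solution-VI}-\ref{pr:sol:3}. Second, when $\uset_D = \emptyset$ (i.e.\ $D = 0$) the $\uset_D$-based test directions degenerate, but the support-based ones above still deliver $\gamma_p \geq c$ on $\aset_D$ with equality on the support, which is all the vertex computation uses; so the argument survives the degenerate case once rephrased this way.
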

	Working towards a proof of the above, we first show two intermediate statements. The first states that the set of directions of increase is contained in the set of directions of feasibility.
	\begin{lemma}\longthmtitle{$\Gamma_D$ is a subset of $\MM_D$}\label{le:gamma-m}
		Let $(\PP,\CC,D)$ be given. Then, $\Gamma_{D} \subseteq \MM_{D}$.
	\end{lemma}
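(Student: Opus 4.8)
The plan is to fix an arbitrary $f^\delta \in \Gamma_D$ and verify that it satisfies the two defining constraints of $\MM_D$ in~\eqref{eq:feasible-increase-direction}, namely $f^\delta_{(\aset_D)^c} = 0$ and $f^\delta_{\aset_D \setminus \uset_D} \ge 0$. Membership in $\HH_1$ is common to both sets, so nothing has to be shown there. By Definition~\ref{def:directions-of-increase}, I would begin by fixing a witness for $f^\delta \in \Gamma_D$, that is, a WE $f^D \in \WW_D$ and a scalar $\bar\epsilon > 0$ such that $f^D + \epsilon f^\delta \in \WW_{D+\epsilon}$ for every $\epsilon \in [0,\bar\epsilon]$.

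For the non-negativity constraint, I would take $p \in \aset_D \setminus \uset_D$. Since $p$ is unused at demand $D$, every WE for $D$ assigns zero flow to $p$; in particular $f^D_p = 0$. Each flow $f^D + \epsilon f^\delta$ is a WE, hence lies in $\FF_{D+\epsilon} \subseteq \realnonnegative^n$, so $\epsilon f^\delta_p = (f^D + \epsilon f^\delta)_p \ge 0$ for all $\epsilon \in (0,\bar\epsilon]$; dividing by $\epsilon$ gives $f^\delta_p \ge 0$. This is the only place where the feasibility constraint $f \ge 0$ is invoked.

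For the constraint that $f^\delta$ vanishes on inactive paths, I would take $p \in (\aset_D)^c$. By essential uniqueness of WE, $\lmvec_p(D) = C_p(f^D)$ is well defined and $p \notin \aset_D$ is equivalent to the strict inequality $\lmvec_p(D) > \lmWE(D)$. I would then invoke continuity of $D \mapsto \WW_D$ (Lemma~\ref{lem:continuity-under-affine-costs}), which passes to continuity of $\lmvec$ and $\lmWE$ (as already used in the proof of Lemma~\ref{lem:inclusion-active-and-used}), to obtain some $\epsilon_0 \in (0,\bar\epsilon]$ with $\lmvec_p(D+\epsilon) > \lmWE(D+\epsilon)$, i.e.\ $p \notin \aset_{D+\epsilon}$, for all $\epsilon \in [0,\epsilon_0]$. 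Since $\uset_{D+\epsilon} \subseteq \aset_{D+\epsilon}$, the path $p$ is then unused at demand $D+\epsilon$, so the WE $f^D + \epsilon f^\delta$ must satisfy $(f^D + \epsilon f^\delta)_p = 0$; together with $f^D_p = 0$ (valid because $(\aset_D)^c \subseteq (\uset_D)^c$) this forces $f^\delta_p = 0$. Combining the two parts yields $f^\delta \in \MM_D$.

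The argument is essentially routine, and I do not expect a serious obstacle; the one point requiring a little care is ensuring that a path inactive at $D$ remains inactive under small positive perturbations of the demand. If one prefers to avoid the continuity statement at this point, an alternative route is the breakpoint structure: for $D \in [D_i,D_{i+1})$, Corollary~\ref{cor:interval-of-active-set} together with Lemma~\ref{lem:inclusion-active-and-used} give $\aset_{D+\epsilon} = \Ja_i \subseteq \aset_D$ for all sufficiently small $\epsilon > 0$, which immediately delivers that inactive paths at $D$ stay inactive at $D+\epsilon$.
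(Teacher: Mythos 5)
Your proof is correct and follows essentially the same route as the paper: both verify the two defining constraints of $\MM_D$ by noting that $f^D$ and the perturbed WE $f^D+\epsilon f^\delta$ are non-negative and carry flow only on paths in $\aset_D$. The only (cosmetic) difference is that the paper justifies the key inclusion $\uset_{D+\epsilon}\subseteq\aset_D$ directly via Lemma~\ref{lem:inclusion-active-and-used}, which is exactly the alternative you mention at the end, whereas your primary argument re-derives it from continuity of $\lmvec$ — both are valid.
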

	\ifinclude{
		\begin{proof}
			Let $f^\delta \in \Gamma_D$, and let $D_i,D_{i+1} \in \DD$ satisfy ${D \in [D_i,D_{i+1})}$. Then there exist $f^D \in \WW_D$, ${T \in (D, D_{i+1}]}$, and $f^T \in \WW_T$ such that ${f^T = f^D + (T-D)f^\delta}$. Thus we have ${f^\delta = (T - D)^{-1}(f^T - f^D)}$. Note that since ${D \in [D_i,D_{i+1})}$ and $T \in (D, D_{i+1}]$ it follows from Lemma~\ref{lem:inclusion-active-and-used} that $\uset_T \subseteq \aset_{D}$. Therefore $(\aset_{D})^c \subseteq (\uset_T)^c$. Consequently, by definition of the used set \eqref{eq:used-road-constraint}, we have $f_{(\uset_T)^c}^T = 0$ and so, $f_{(\aset_D)^c}^T = 0$.
			By definition of the active set \eqref{eq:relevant-road-constraint} and the WE condition \eqref{eq:WE-condition} we have $f_{(\aset_D)^c}^D = 0$. This shows that $f_{(\aset_D)^c}^\delta = 0$.
			
			Once again, by definition of the used set we have $f_{\aset_D \setminus \uset_D}^D = 0$ and by feasiblity of WE we have $f^T \geq 0$. Therefore $f_{\aset_D \setminus \uset_D}^\delta \geq 0$, which completes the proof.	
		\end{proof}
	}
	In the next statement we establish some properties of sets of the form $\SOL(\MM,A)$ where $\MM$ is of the form \eqref{eq:feasible-increase-direction}, which we require for proving Proposition~\ref{prop:characterize-directions-of-increase}.
	\begin{proposition} \longthmtitle{Properties of $\SOL(\MM,A)$}\label{prop:properties-solution-VI}
		Let $\PP$, $\CC$, $T \in \real$, and $\RR,\QQ \subseteq \PP$ satisfying $\RR \subseteq \QQ$ be given. In addition, let
		\begin{equation*}
			\MM := \setdef{f^\delta \in \HH_T}{f^\delta_{\QQ \setminus \RR} \geq 0, \quad f^\delta_{\QQ^c} = 0}.
		\end{equation*}
		We then have the following properties:
		\begin{enumerate}
			\item \label{pr:sol:1} $\SOL(\MM,A)$ is non-empty.
			\item \label{pr:sol:2} There exists a vector $\delta C \in \real^n$ such that  $f^\delta \in \SOL(\MM,A)$ if and only if $f^\delta \in \MM$ and $A f^\delta = \delta C$.
			\item \label{pr:sol:3} If $f^\delta_p > 0$ for some $f^\delta \in \SOL(\MM,A)$ and $p \in \PP$, then $A_p f^\delta = \min_{r \in \QQ} A_r f^\delta$.
			\item \label{pr:sol:4}  If $p \in \RR$ and $f^\delta \in \SOL(\MM,A)$, then ${A_p f^\delta = \min_{r \in \QQ} A_r f^\delta}$.
		\end{enumerate}
	\end{proposition}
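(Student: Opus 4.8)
\emph{Plan.} I would split the four properties into two groups. Parts~\ref{pr:sol:1} and~\ref{pr:sol:2} rely on $\VI(\MM,A)$ being an affine variational inequality whose map $f\mapsto Af$ is monotone (since $A$ is symmetric and positive semidefinite) over a polyhedron, while parts~\ref{pr:sol:3} and~\ref{pr:sol:4} follow from feeding simple flow-swapping perturbations into the defining inequality~\eqref{eq:VI-condition}. For part~\ref{pr:sol:1}: $\MM$ is a non-empty polyhedron, and since $A$ is symmetric and positive semidefinite one has $\nabla(\tfrac12 f^\top A f)=Af$, so by the first-order optimality condition for convex minimization over a convex set, $\SOL(\MM,A)$ coincides with the set of optimal solutions of the convex quadratic program $\min_{f\in\MM}\tfrac12 f^\top A f$. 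This objective is non-negative on $\real^n$, hence bounded below, so it attains its minimum over the non-empty polyhedron $\MM$ (Frank--Wolfe); thus $\SOL(\MM,A)\neq\emptyset$.

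For part~\ref{pr:sol:2} I would argue from monotonicity rather than through the quadratic program. If $f^\delta,\bar f^\delta\in\SOL(\MM,A)$, then writing~\eqref{eq:VI-condition} at $f^\delta$ against $\bar f^\delta$ and at $\bar f^\delta$ against $f^\delta$ and adding the two inequalities gives $(f^\delta-\bar f^\delta)^\top A(f^\delta-\bar f^\delta)\le 0$, and since $A$ is symmetric and positive semidefinite this forces $A(f^\delta-\bar f^\delta)=0$. Hence $Af$ takes a common value $\delta C$ over $\SOL(\MM,A)$, which is non-empty by part~\ref{pr:sol:1}; this gives the ``only if'' direction. For the converse, fix $f^{\delta,*}\in\SOL(\MM,A)$ and let $f^\delta\in\MM$ with $Af^\delta=\delta C=Af^{\delta,*}$. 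A short computation using $Af^\delta=Af^{\delta,*}$ and the symmetry of $A$ shows that for every $f\in\MM$ one has $(Af^\delta)^\top(f-f^\delta)=(Af^{\delta,*})^\top(f-f^{\delta,*})$, which is non-negative by the VI property of $f^{\delta,*}$; hence $f^\delta\in\SOL(\MM,A)$.

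For parts~\ref{pr:sol:3} and~\ref{pr:sol:4} I would use a single perturbation argument. Let $f^\delta\in\SOL(\MM,A)$ and take $p\in\QQ$ (in part~\ref{pr:sol:3} this holds because $f^\delta_p>0$ and $f^\delta_{\QQ^c}=0$; in part~\ref{pr:sol:4} because $p\in\RR\subseteq\QQ$). For an arbitrary $q\in\QQ$ consider $f^\delta+\epsilon(e_q-e_p)$, with $e_p,e_q$ the standard basis vectors. For all sufficiently small $\epsilon>0$ this point lies in $\MM$: the coordinate sum is unchanged, so it stays in $\HH_T$; coordinates outside $\QQ$ are untouched and stay zero; the $q$-th coordinate only increases, preserving any sign constraint it may carry; and the $p$-th coordinate stays non-negative for $\epsilon\le f^\delta_p$ (part~\ref{pr:sol:3}), or carries no sign constraint since $p\in\RR$ (part~\ref{pr:sol:4}). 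Substituting into~\eqref{eq:VI-condition} and dividing by $\epsilon$ yields $A_q f^\delta\ge A_p f^\delta$; as $q\in\QQ$ was arbitrary, $A_p f^\delta=\min_{r\in\QQ}A_r f^\delta$.

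The one genuinely delicate step is part~\ref{pr:sol:1}: because $\MM$ is in general an unbounded polyhedron, the usual compactness-based existence theorems for variational inequalities do not apply directly, and one must combine the positive semidefiniteness of $A$ (to recast $\SOL(\MM,A)$ as the minimizer set of a convex quadratic program) with the polyhedrality of $\MM$ (so that the finite infimum is actually attained). Everything else is bookkeeping: part~\ref{pr:sol:2} is the standard uniqueness-of-image property of monotone affine VIs together with a one-line converse, and parts~\ref{pr:sol:3}--\ref{pr:sol:4} are immediate consequences of the perturbation above.
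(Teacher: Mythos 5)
Your proposal is correct and follows essentially the same route as the paper's proof: recasting $\SOL(\MM,A)$ as the solution set of the convex quadratic program $\min_{f\in\MM}\tfrac12 f^\top Af$ for existence, the standard monotone-VI argument plus the symmetry-based one-line converse for the constant image $\delta C$, and the flow-swap perturbation $f^\delta+\epsilon(e_q-e_p)$ for the last two parts. If anything, your explicit appeal to the Frank--Wolfe attainment theorem for the unbounded polyhedron $\MM$ is slightly more careful than the paper's brief justification of non-emptiness.
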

	\ifinclude{
		\begin{proof}
			\emph{Part~\ref{pr:sol:1}:} The first claim follows by noting that ${f^\delta \in \SOL(\MM,A)}$ if and only if it solves \cite[Section 1.3.1]{FF-JSP:03}
			\begin{equation} \label{eq:optim-prob}
				\begin{aligned}
					\text{minimize } & \quad 	\frac{1}{2}f^\top A f	\\
					\text{subject to } & \quad f \in \MM.
				\end{aligned}
			\end{equation}
			Since $A$ is positive semi-definite and $\MM$ is closed and convex, this minimization problem has a non-empty solution set, showing that $\SOL(\MM,A)$ is non-empty.
			
			\emph{Part~\ref{pr:sol:2}:} To prove the ``only if'' side, let ${f^{\delta_1},f^{\delta_2} \in \SOL(\MM,A)}$. By Definition~\ref{def:VI}, we have
			\begin{equation}\label{eq:A-ineq}
				\begin{aligned}
					(A f^{\delta_1})^\top (f^{\delta_2} - f^{\delta_1}) & \ge 0,
					\\
					(A f^{\delta_2})^\top (f^{\delta_1} - f^{\delta_2}) & \ge 0.
				\end{aligned}
			\end{equation}
			Since $A$ is symmetric and positive semidefinite we have
			\begin{align*}
				0 & \le (f^{\delta_1} - f^{\delta_2})^\top A (f^{\delta_1} - f^{\delta_2}) 
				\\
				& = (A f^{\delta_1})^\top (f^{\delta_1} - f^{\delta_2})  - (A f^{\delta_2})^\top (f^{\delta_1} - f^{\delta_2}) \le 0,
			\end{align*}
			where the last inequality follows from~\eqref{eq:A-ineq}. Thus, we have $A (f^{\delta_1} - f^{\delta_2}) = 0$.
			This shows that there exists a $\delta C \in \real^n$ such that $A f^\delta = \delta C$ for all $f^\delta \in \SOL(\MM,A)$. Now we show the ``if'' part of the claim.  Let $f^{\delta_0} \in \MM$ be a flow satisfying $A f^{\delta_0} = \delta C$. We wish to show that $f^{\delta_0} \in \SOL(\MM,A)$. Since this set is nonempty, consider some $f^\delta \in \SOL(\MM,A)$. From what we have shown above, we obtain $A f^{\delta_0} = A f^\delta$. Therefore, using this equality, the following derivation holds for any $f \in \MM$:
			\begin{align*}
				(A f^{\delta_0} )^\top (f - f^{\delta_0}) 	&= (A f^{\delta} )^\top (f - f^{\delta_0})	\\
				&= (A f^{\delta} )^\top (f - f^{\delta} + 	f^{\delta} - f^\delta_0)	\\
				&= (A f^{\delta} )^\top (f - f^{\delta}) + 	(A f^{\delta} )^\top (f^{\delta} - f^{\delta_0})	\\
				&= (A f^{\delta} )^\top (f - f^{\delta}) \geq 0.
			\end{align*}
			Here the final inequality holds since $f \in \MM$ and ${f^{\delta} \in \SOL(\MM,A)}$. We see that $(A f^{\delta_0})^\top (f - f^{\delta_0}) \geq 0$ for all $f \in \MM$ and therefore $f^{\delta_0} \in \SOL(\MM,A)$. This shows that the second claim holds.
			
			\emph{Part~\ref{pr:sol:3}:} Let $f^\delta \in \SOL(\MM,A)$ and assume for the sake of contradiction that there exist $p \in \PP$ and $r \in \QQ$ such that $f^\delta_p > 0$ and $A_p f^\delta > A_r f^\delta$. Since $f^\delta_p > 0$ and $r \in \QQ$ it follows by definition of $\MM$, that $f := f^\delta - \epsilon(e_p - e_r) \in \MM$ for small enough $\epsilon > 0$. Here $e_i$ is the vector defined by $(e_i)_i = 1$ and $(e_i)_j = 0$ for all $j \neq i$. We then have
			\begin{align*}
				(A f^\delta)^\top (f - f^\delta) &= 	\epsilon (f^\delta)^\top A(e_p - e_r)	\\
				&= \epsilon ( A_p f^\delta - A_r f^\delta ) 	< 0.
			\end{align*}
			This contradicts $f^\delta \in \SOL(\MM,A)$. We conclude that if $f^\delta_p \neq 0$ then ${A_p f^\delta = \min_{r \in \QQ} A_r f^\delta}$, proving the third claim. The same arguments can be used to prove the last claim.
		\end{proof}
	}
	We note that when the set $\RR$ is empty, the first three statements of the above proposition are equivalent to well-known results on the existence, essential uniqueness, and properties of the cost of WE, which can be proven in much the same way. To be specific, when $\RR$ is empty, the set $\MM$ is simply the feasible set for the routing game with $\PP = \QQ$, and $C(f) = A_{\QQ} f + b_{\QQ}$, where $A_{\QQ}$ is the matrix $A$ with the rows and columns associated to paths in $\QQ^c$ removed, and $b_{\QQ}$ is the vector with the elements associated to $\QQ^c$ removed. The first two claims then state the existence and essential uniqueness of WE and the third claim states that any used path has minimal cost among all paths, which is the WE condition. When $\QQ = \aset_{D}$ and $\RR = \uset_{D}$, we have $\MM = \MM_D$, and the third and fourth claim gain a useful intuitive interpretation. Namely, the third claim shows that for any $f^\delta \in \SOL(\MM_{D},A)$, any path in the active set that gains flow when moving in the direction of $f^\delta$ must have the smallest increase in cost among all active paths. The same holds for any used path, regardless of how the flow on that path changes.
	
	Finally we are ready to prove Proposition~\ref{prop:characterize-directions-of-increase}.
	\ifinclude{
		\begin{proof}[Proof of Proposition~\ref{prop:characterize-directions-of-increase}]
			First we show that $\SOL(\MM_D,A) \subseteq \Gamma_{D}$. To do this we rely heavily on the conclusions of Proposition~\ref{prop:properties-solution-VI}. Note the parallelism between the set $\MM$ given there and the set $\MM_D$ used here; that is $\MM = \MM_D$ when we set  $\QQ : = \aset_D$ and $\RR := \uset_D$. Let $f^D \in \WW_{D}$ be a WE satisfying $f^D_{\uset_{D}} > 0$. Such a WE exists since the set $\WW_{D}$ is convex. To see this, let $r \in \uset_{D}$. By definition of the used set there exists a WE $f^r \in \WW_{D}$ satisfying $f^r_r > 0$. Setting $f^D = \sum_{r \in \uset_{D}} \mu_r f^r$, where $\mu_r \in (0,1)$ for all $r \in \uset_{D}$ and $\sum_{r \in \uset_{D}} \mu_r = 1$ it follows that $f^D \in \WW_{D}$ and $f^D_{\uset_{D}} > 0$. Now, for a given $f^\delta \in \SOL(\MM_{D},A)$ and $\epsilon > 0$ we write
			\begin{equation*}
				f^{D + \epsilon} := f^D + \epsilon f^\delta.
			\end{equation*}
			Note that since $f^D_{\uset_{D}} > 0$ and $f^\delta \in \MM_{D}$ we have $f^{D + \epsilon} \geq 0$ as long as $\epsilon > 0$ is small enough. We will show that for such small enough $\epsilon$ we have $f^{D +\epsilon} \in \WW_{D+\epsilon}$ which then implies ${f^\delta \in \Gamma_D}$.  Let $p \in \PP$ be a path such that $f^{D + \epsilon}_p > 0$. This implies that either (a) $f^D_p > 0$ or (b) $f^\delta_p > 0$. If $f^D_p > 0$ it follows that ${p \in \uset_{D}}$. In addition, Proposition~\ref{prop:properties-solution-VI}-\ref{pr:sol:4} tells us that ${A_p f^\delta = \min_{r \in \aset_{D}} A_r f^\delta}$. Similarly, when $f^\delta_p > 0$, the definition of $\MM_{D}$ shows that $p \in \aset_{D}$ and so from Proposition~\ref{prop:properties-solution-VI}-\ref{pr:sol:3}, we have ${A_p f^\delta = \min_{r \in \aset_{D}} A_r f^\delta}$. Also for both cases (a) and (b), we have $p \in \aset_D$ and so ${C_p(f^D) = \min_{r \in \PP}C_r(f^D)}$. Using these properties and Proposition~\ref{prop:evolution-lmvec}, it follows that
			\begin{align*}
				C_p(f^{D + \epsilon}) &= C_p(f^D) + 	\epsilon A_p f^\delta
				\\ 
				&\le \min_{r \in \PP} \bigl( C_r(f^D) + 	\epsilon A_r f^\delta \bigr)
				\\ 
				&=  \min_{r \in \PP} C_r(f^{D + \epsilon}).
			\end{align*}
			Thus $f^{D+\epsilon}_p > 0$ implies ${C_p(f^{D + \epsilon}) = \min_{r \in \PP} C_r(f^{D + \epsilon})}$ which means the the WE condition \eqref{eq:WE-condition} is satisfied, and it follows that  $f^{D+\epsilon} \in \WW_{D + \epsilon}$. Therefore, ${\SOL(\MM_{D},A) \subseteq \Gamma_{D}}$.
			
			To show $\Gamma_{D} \subseteq \SOL(\MM_D,A)$, let $f^\delta \in \Gamma_D$. Since $\SOL(\MM_{D},A) \subseteq \Gamma_{D}$, we then know that there exists an ${f^{\delta_0} \in \Gamma_{D} \cap \SOL(\MM_D,A)}$. By Lemma~\ref{le:gamma-m} we have $f^\delta \in \MM_{D}$, and by Proposition~\ref{prop:evolution-lmvec} we have ${Af^\delta = A f^{\delta_0}}$. Therefore, it follows from Proposition~\ref{prop:properties-solution-VI} that ${f^\delta \in \SOL(\MM_D,A)}$. In conclusion, we obtain ${\Gamma_{D} = \SOL(\MM_{D},A)}$.
		\end{proof}
	}

	We next illustrate the equivalence established in Proposition~\ref{prop:characterize-directions-of-increase} for the routing game associated to Figure~\ref{fig:three-node-four-edge}, as discussed in Example~\ref{ex:secIV-2}. Consider the demand $D = 1$. Using \eqref{eq:Wheatstone-double-cost} and \eqref{eq:three-node-four-edge-WE} we obtain ${\aset_{D} = \{p_1,p_2,p_3,p_4\}}$ and ${\uset_{D} = \{p_3\}}$. Thus, by definition $\MM_{D} = \setdef{f^\delta \in \HH_{1}}{f^\delta_1, f^\delta_2,f^\delta_4 \geq 0}$. We also have ${\ker(A) = \setdef{f \in \real}{f_1 = f_2 = -f_3}}$. Note that ${\ker(A) \cap \MM_{D}}$ is nonempty and in fact, from~\eqref{eq:three-node-four-edge-increase}, we have $\Gamma_D = \ker(A) \cap \MM_{D}$. Therefore, we will show next that $\SOL(\MM_D,A) = \ker(A) \cap \MM_{D}$, which corroborates the claim of Proposition~\ref{prop:characterize-directions-of-increase}. Select some $f \in \ker(A) \cap \MM_{D}$. Since $A$ is positive semi-definite we obtain
	\begin{equation*}
		(A f^\delta)^\top (f - f^\delta) \leq 0
	\end{equation*}	
	for any $f^\delta \in \MM_{D}$. Using \eqref{eq:VI-condition} in the definition of $\SOL(\MM_{D},A)$ it follows that in order for ${f^\delta \in \SOL(\MM_{D},A)}$ to hold, we need $f^\delta \in \ker(A) \cap \MM_{D}$. Furthermore, when ${f^\delta \in \ker(A) \cap \MM_{D}}$, we have $(A f^\delta)^\top (f - f^\delta) = 0$ for any $f \in \MM_{D}$. In other words, $\SOL(\MM_D,A) = \ker(A) \cap \MM_{D}$. From  \eqref{eq:three-node-four-edge-increase}, we then have $\Gamma_{D} = \SOL(\MM_{D},A)$, as claimed.
	
	The established results have interesting consequences regarding the evolution of the WE in the final interval $(D_M,\infty)$. However, before we present these, we finish our analysis of the evolution of the set of WE by establishing some basic properties of the set $\Gamma_{D}$, which are now straightforward consequences of Proposition~\ref{prop:characterize-directions-of-increase}. First, note that  $\Gamma_{D}$ is closed, which follows from the fact that $\MM_{D}$ is closed \cite[Section 1.1]{FF-JSP:03}. In addition, $\Gamma_{D}$ is convex \cite[Theorem 2.3.5]{FF-JSP:03}.
	\begin{corollary} \label{cor:Gamma-convex-closed}
		Let $(\PP,\CC,D)$ be given. Then $\Gamma_D$ is nonempty, closed, and convex.
	\end{corollary}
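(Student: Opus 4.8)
The plan is to read off all three properties from the identity $\Gamma_D = \SOL(\MM_D,A)$ established in Proposition~\ref{prop:characterize-directions-of-increase}, combined with the structural description of such VI solution sets in Proposition~\ref{prop:properties-solution-VI}. To invoke the latter I would instantiate it with $\QQ := \aset_D$, $\RR := \uset_D$ --- admissible since $\uset_D \subseteq \aset_D$ by the inclusion noted after~\eqref{eq:used-road-constraint} --- and $T := 1$, so that the set denoted $\MM$ there is precisely the set $\MM_D$ of~\eqref{eq:feasible-increase-direction}.

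Nonemptiness is then immediate: Proposition~\ref{prop:properties-solution-VI}-\ref{pr:sol:1} states that $\SOL(\MM_D,A) \neq \emptyset$, and Proposition~\ref{prop:characterize-directions-of-increase} identifies this set with $\Gamma_D$. For closedness and convexity the cleanest argument goes through Proposition~\ref{prop:properties-solution-VI}-\ref{pr:sol:2}: there is a fixed vector $\delta C \in \real^n$ for which $\Gamma_D = \SOL(\MM_D,A) = \setdef{f^\delta \in \MM_D}{A f^\delta = \delta C}$, that is, $\Gamma_D = \MM_D \cap \setdef{f^\delta \in \real^n}{A f^\delta = \delta C}$. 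Here $\MM_D$ is a polyhedron --- by~\eqref{eq:feasible-increase-direction} it is the intersection of the hyperplane $\HH_1$ with finitely many closed half-spaces and coordinate hyperplanes --- hence closed and convex, while $\setdef{f^\delta \in \real^n}{A f^\delta = \delta C}$ is an affine subspace, hence also closed and convex; the intersection of two closed convex sets preserves both properties. An alternative is to cite the standard variational-inequality facts directly: $\SOL(\MM_D,A)$ is closed because $\MM_D$ is, and it is convex because $A$ is positive semidefinite (so the VI is monotone), as recorded in \cite[Section~1.1, Theorem~2.3.5]{FF-JSP:03}; this is the route sketched in the remark preceding the corollary.

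I do not anticipate any genuine obstacle here --- the corollary consolidates facts already in hand --- so there is no real ``hard part.'' The only points that need a moment's care are checking that the hypotheses of Proposition~\ref{prop:properties-solution-VI} are instantiated correctly and observing that $\MM_D$ is genuinely a polyhedron, so that its closedness and convexity require no separate argument. Should one prefer a self-contained nonemptiness proof that avoids quoting Proposition~\ref{prop:properties-solution-VI}-\ref{pr:sol:1}, one can note that $\MM_D$ is a nonempty polyhedron (for instance it contains any direction supported on $\uset_D$ whose entries sum to one when $\uset_D \neq \emptyset$, and reduces to a nonempty simplex when $D = 0$) and then apply the Frank--Wolfe theorem to the convex quadratic program~\eqref{eq:optim-prob}, whose objective $f \mapsto \frac{1}{2} f^\top A f$ is bounded below on $\MM_D$ since $A$ is positive semidefinite; equivalently, $\SOL(\MM_D,A)$ is exactly the optimal set of that program.
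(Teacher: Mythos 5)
Your proposal is correct and follows essentially the same route as the paper: the paper obtains nonemptiness from Proposition~\ref{prop:properties-solution-VI}-\ref{pr:sol:1} via the identification $\Gamma_D = \SOL(\MM_D,A)$ of Proposition~\ref{prop:characterize-directions-of-increase}, and cites closedness of $\MM_D$ \cite[Section 1.1]{FF-JSP:03} and \cite[Theorem 2.3.5]{FF-JSP:03} for closedness and convexity of the solution set, exactly as in your ``alternative'' paragraph. Your polyhedral decomposition $\Gamma_D = \MM_D \cap \setdef{f^\delta \in \real^n}{A f^\delta = \delta C}$ via Proposition~\ref{prop:properties-solution-VI}-\ref{pr:sol:2} is a valid, slightly more self-contained rendering of the same facts.
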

	In Example~\ref{ex:secIV} we noted that the non-negative real line was divided into intervals in which the set of directions of increase $\Gamma_{D}$ remains constant. Using Propostion~\ref{prop:characterize-directions-of-increase}, we can show that this is not an artifact of the considered example and that it holds for any routing game.
	\begin{lemma} \label{cor:evolution-of-GammaD} \longthmtitle{$\Gamma_{D}$ is constant between breakpoints in $\DD$}\label{le:gamma-constant}
		Let $(\PP,\CC)$ and ${D_i,D_{i+1} \in \DD}$ be given. There exists a set $\Gamma^i \subset \HH_1$ such that for all  ${D \in(D_i,D_{i+1})}$ we have $\Gamma_{D} = \Gamma^i$. Furthermore, $\Gamma_{D_i} \subseteq \Gamma^i$ and $\Gamma^{i+1} \cap \Gamma^i = \emptyset$. 
	\end{lemma}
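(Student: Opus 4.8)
The plan is to prove the three assertions in order, using Proposition~\ref{prop:characterize-directions-of-increase} for the constancy on the open interval, the definition of $\Gamma$ directly for the inclusion at the left endpoint, and Proposition~\ref{prop:evolution-lmvec} for the disjointness.

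\emph{Constancy on $(D_i,D_{i+1})$.} By Proposition~\ref{prop:characterize-directions-of-increase}, $\Gamma_D = \SOL(\MM_D,A)$. Inspecting Definition~\ref{def:feasiblity}, the set $\MM_D$ depends on $D$ only through the pair $(\aset_D,\uset_D)$, since $\HH_1$ is fixed. By Corollary~\ref{cor:interval-of-active-set}, $(\aset_D,\uset_D) = (\Ja_i,\Ju_i)$ for every $D \in (D_i,D_{i+1})$, so $\MM_D$ coincides with the single set $\MM^i := \setdef{f^\delta \in \HH_1}{f^\delta_{\Ja_i \setminus \Ju_i} \ge 0,\ f^\delta_{(\Ja_i)^c} = 0}$ throughout that interval. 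Hence $\Gamma_D = \SOL(\MM^i,A) =: \Gamma^i$ is constant on $(D_i,D_{i+1})$, and $\Gamma^i \subseteq \MM^i \subseteq \HH_1$.

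\emph{The inclusion $\Gamma_{D_i} \subseteq \Gamma^i$.} Here the tempting move is to compare the VI feasible sets $\MM_{D_i}$ and $\MM^i$; but by Lemma~\ref{lem:inclusion-active-and-used} one has $\Ja_i \subseteq \aset_{D_i}$ while $\Ju_i \supseteq \uset_{D_i}$, so these two sets are \emph{not} nested and the VI route is clumsy. Instead I would work straight from Definition~\ref{def:directions-of-increase}: for $f^\delta \in \Gamma_{D_i}$ there exist $f^{D_i} \in \WW_{D_i}$ and $\bar{\epsilon} > 0$ with $f^{D_i} + \epsilon f^\delta \in \WW_{D_i + \epsilon}$ for all $\epsilon \in [0,\bar{\epsilon}]$. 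Choose $\epsilon_0 \in (0,\min\{\bar{\epsilon},\, D_{i+1} - D_i\})$, set $T := D_i + \epsilon_0 \in (D_i,D_{i+1})$ and $f^T := f^{D_i} + \epsilon_0 f^\delta \in \WW_T$. Then for every $\epsilon \in [0,\bar{\epsilon} - \epsilon_0]$ one has $f^T + \epsilon f^\delta = f^{D_i} + (\epsilon_0 + \epsilon)f^\delta \in \WW_{T + \epsilon}$ because $\epsilon_0 + \epsilon \le \bar{\epsilon}$; since $\bar{\epsilon} - \epsilon_0 > 0$, this shows $f^\delta \in \Gamma_T = \Gamma^i$ by the previous paragraph. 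This translation of the equilibrium path's base point into the open interval is the only genuinely delicate point in the argument.

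\emph{The disjointness $\Gamma^{i+1} \cap \Gamma^i = \emptyset$} (meaningful when $i+1 \le M$). Every element of $\Gamma^i$ lies in $\Gamma_D$ for some $D \in (D_i,D_{i+1}) \subseteq [D_i,D_{i+1})$, so by Proposition~\ref{prop:evolution-lmvec}-1 it satisfies $A f^\delta = \delta C^i$; likewise every element of $\Gamma^{i+1}$ satisfies $A f^\delta = \delta C^{i+1}$. Since $\delta C^i \ne \delta C^{i+1}$ by Proposition~\ref{prop:evolution-lmvec}-3, the two sets share no element, completing the proof.
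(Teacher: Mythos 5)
Your proof is correct and follows essentially the same route as the paper's: constancy on $(D_i,D_{i+1})$ via Proposition~\ref{prop:characterize-directions-of-increase} together with the constancy of $\MM_D$ from Corollary~\ref{cor:interval-of-active-set}, and disjointness via parts 1 and 3 of Proposition~\ref{prop:evolution-lmvec}. The only cosmetic difference is in the inclusion $\Gamma_{D_i}\subseteq\Gamma^i$, where the paper shifts the base point into the open interval by taking a convex combination and invoking Corollary~\ref{cor:extended-affine-combinations}, while you achieve the same translation directly from the $\bar{\epsilon}$ in Definition~\ref{def:directions-of-increase}; both are valid.
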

	\ifinclude{
		\begin{proof}
			The existence of $\Gamma^i$ such that $\Gamma_{D} = \Gamma^i$ for all ${D \in(D_i,D_{i+1})}$ follows directly from Proposition~\ref{prop:characterize-directions-of-increase} and observing that the set $\MM_{D}$ is same for all $D \in (D_i,D_{i+1})$, which follows from Corollary \ref{cor:interval-of-active-set}. Let $f^\delta \in \Gamma_{D_i}$ and pick $T \in (D_i,D_{i+1}]$, $f^{D_i} \in \WW_{D_i}$ and $f^T \in \WW_{T}$ such that
			\begin{equation*}
				f^T = f^{D_i} + (T - D_i)f^\delta.
			\end{equation*}
			Pick $\mu \in (0,1)$ and let $T_\mu = \coco_{\mu}(D_i, T)$ and $f^\mu = \coco_{\mu}(f^{D_i},f^T)$. Note that ${T_\mu \in (D_i,D_{i+1})}$, ${T \in (T_\mu, D_{i+1})}$, and from Corollary~\ref{cor:extended-affine-combinations}, $f^\mu \in \WW_{T_\mu}$. Moreover, $f^T = f^\mu + (T - T_\mu) f^\delta$. 
			Therefore $f^\delta \in \Gamma_{T_\mu} = \Gamma^i$,
			showing that $\Gamma_{D_i} \subseteq \Gamma^i$. For the last statement, note that if there exists $f^\delta \in \Gamma^i \cap \Gamma^{i+1}$ then the first conclusion of Proposition~\ref{prop:evolution-lmvec} implies $\delta C^{i+1} = \delta C^i$, contradicting the third conclusion of Proposition~\ref{prop:evolution-lmvec}. Therefore  $\Gamma^{i+1} \cap \Gamma^{i} = \emptyset$.
		\end{proof}
	}
	\subsection{Obtaining $D_M$}\label{sec:obtain-DM}
	In this section, we derive several properties pertaining to the interval $(D_M,\infty)$, where we recall that $D_M$ is the final finite-valued breakpoint in $\DD$. In particular, we give a rigorous way in which the value $D_M$ can be computed. In the process and to its own merit we find the directions in which $\lmvec$ varies over $(D_M,\infty)$, as well as associated directions of increase in $\Gamma^M$. This then allows us to obtain an expression for $\lmWE(D)$ for any $D \geq D_M$. Using this we can compute the final active set $\Ja_{M}$, after which the value of $D_M$ can be obtained.
	
	We start with the result that $\Gamma^M$ overlaps with $\SOL(\FF_1,A)$. This gives us a direct method for identifying $\delta C^M$, which is the direction in which the WE cost-vector $\lmvec$ evolves in the interval $(D_M,\infty)$. This will also prove remarkably useful later on, when investigating Braess's paradox.
	\begin{proposition} \longthmtitle{Finding $\delta C_M$ by solving $\SOL(\FF_1,A)$} \label{prop:characterize-D-infinity}
		Let $(\PP,\CC)$ and ${D \geq D_M = \max\big(\DD \setminus \{\infty\}\big)}$ be given.
		Then, we have:
		\begin{enumerate}
			\item $\Gamma_{D} \cap \SOL(\FF_1,A)$ is non-empty,
			\item $A f^\delta = \delta C^M$ for all $f^\delta \in \SOL(\FF_1,A)$,
			\item $\delta \lambda^M = \min_{r \in \PP} \delta C^M_p$.
		\end{enumerate}
	\end{proposition}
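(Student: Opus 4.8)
The plan is to prove the three claims in the order stated, treating the first as the crux and the other two as short consequences. The organizing idea is that the asymptotics of the WE as $D\to\infty$ encode both $\Gamma^M$ and $\SOL(\FF_1,A)$ simultaneously. By Proposition~\ref{prop:beckmann} together with the affine form~\eqref{eq:edge-cost} of the edge costs, the integrals in~\eqref{eq:vD-defined} evaluate to a quadratic, so a flow $f^D$ is a WE at demand $D$ precisely when it minimizes $\tfrac12 f^\top A f + \beta^\top f$ over $\FF_D$; substituting $f=Dg$ with $g\in\FF_1$ shows that $f^D/D$ minimizes $g\mapsto\tfrac{D}{2}g^\top A g+\beta^\top g$ over the compact simplex $\FF_1$. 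First I would, for each $D>D_M$, fix a WE $f^D$ and a WE $f^{D_M}\in\WW_{D_M}$ and set $f^{\delta_D}:=(D-D_M)^{-1}(f^D-f^{D_M})\in\HH_1$; Corollary~\ref{cor:extended-affine-combinations} applied with $i=M$ (taking convex combinations of $f^{D_M}$ and $f^D$) yields $f^{D_M}+\epsilon f^{\delta_D}\in\WW_{D_M+\epsilon}$ for all $\epsilon\in[0,D-D_M]$, hence $f^{\delta_D}\in\Gamma_{D_M}$. Since $f^D/D=\tfrac{D-D_M}{D}f^{\delta_D}+\tfrac1D f^{D_M}$ and $\WW_{D_M}$ is bounded, $\norm{f^D/D-f^{\delta_D}}\to0$ as $D\to\infty$; in particular the directions $f^{\delta_D}$ stay bounded.

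Next I would run the standard perturbation estimate. If $\bar g\in\SOL(\FF_1,A)$ and $m^\star:=\bar g^\top A\bar g=\min_{g\in\FF_1}g^\top A g$, then minimality of $f^D/D$ gives $\tfrac{D}{2}(f^D/D)^\top A(f^D/D)\le\tfrac{D}{2}m^\star+\beta^\top\bar g-\beta^\top(f^D/D)$, and since $\beta^\top g$ is bounded on $\FF_1$ this forces $m^\star\le(f^D/D)^\top A(f^D/D)\le m^\star+O(1/D)$, so $(f^D/D)^\top A(f^D/D)\to m^\star$. By compactness of $\FF_1$ the sequence $(f^D/D)$ has limit points as $D\to\infty$, and by continuity every such limit point $g^\star$ satisfies $g^{\star\top}Ag^\star=m^\star$, i.e. $g^\star$ minimizes $g^\top Ag$ over $\FF_1$, which (as in the proof of Proposition~\ref{prop:properties-solution-VI}) means $g^\star\in\SOL(\FF_1,A)$. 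On the other hand $\norm{f^D/D-f^{\delta_D}}\to0$ with $f^{\delta_D}\in\Gamma_{D_M}$ and $\Gamma_{D_M}$ closed (Corollary~\ref{cor:Gamma-convex-closed}), so $g^\star\in\Gamma_{D_M}$ as well. Hence $\Gamma_{D_M}\cap\SOL(\FF_1,A)\neq\emptyset$, and since $\Gamma_{D_M}\subseteq\Gamma_D$ for every $D\ge D_M$ (trivially for $D=D_M$, and $\Gamma_{D_M}\subseteq\Gamma^M=\Gamma_D$ for $D>D_M$ by Lemma~\ref{le:gamma-constant}), the first claim follows.

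For the second claim I would take $f^{\delta_0}\in\Gamma_{D_M}\cap\SOL(\FF_1,A)$ provided by the first part. The first conclusion of Proposition~\ref{prop:evolution-lmvec} with $i=M$ gives $Af^{\delta_0}=\delta C^M$; meanwhile $\FF_1$ is exactly the set $\MM$ of Proposition~\ref{prop:properties-solution-VI} with $\QQ=\PP$, $\RR=\emptyset$, $T=1$, so part~\ref{pr:sol:2} of that proposition furnishes a vector $\delta C$ with $\SOL(\FF_1,A)=\setdef{f^\delta\in\FF_1}{Af^\delta=\delta C}$; evaluating at $f^{\delta_0}$ gives $\delta C=Af^{\delta_0}=\delta C^M$, whence $Af^\delta=\delta C^M$ for all $f^\delta\in\SOL(\FF_1,A)$. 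For the third claim, Corollary~\ref{cor:evolution-lmWE} gives $\delta\lambda^M=\min_{r\in\Ja_M}\delta C^M_r$, so $\min_{r\in\PP}\delta C^M_r\le\delta\lambda^M$. For the reverse inequality, note that $\lmvec_r(D)\ge\lmWE(D)$ for every $r\in\PP$ and every $D$; by the second conclusion of Proposition~\ref{prop:evolution-lmvec} and by Corollary~\ref{cor:evolution-lmWE} (again with $i=M$) both sides are affine in $D$ on $[D_M,\infty)$ with respective slopes $\delta C^M_r$ and $\delta\lambda^M$, and an affine function dominating another on an unbounded interval must have the larger slope, so $\delta C^M_r\ge\delta\lambda^M$ for all $r$; therefore $\min_{r\in\PP}\delta C^M_r=\delta\lambda^M$.

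The step I expect to be the main obstacle is the first claim. One cannot argue purely algebraically that the minimizers of $\tfrac12 g^\top Ag$ over $\MM_{D_M}$ meet those over $\FF_1$, since these are genuinely different feasible sets; the content that makes the intersection nonempty is precisely that $\Ja_M$ and $\Ju_M$ are the \emph{final} active and used sets, a fact visible only through the large-demand behaviour of the WE, which is why the proof must pass through the optimization characterization and a limiting argument. The care required there is in handling possible non-uniqueness of WE: choosing one $f^D$ for each $D$, controlling the associated directions $f^{\delta_D}\in\Gamma_{D_M}$, and invoking closedness of $\Gamma_{D_M}$ to pass to a subsequential limit. The remaining two claims are then routine consequences of results already established.
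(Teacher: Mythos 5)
Your proof is correct, but your route to the first (and central) claim is genuinely different from the paper's. The paper stays entirely inside the WE/VI framework: it fixes $D$, forms directions $f^{\delta,i}=(T_i-D)^{-1}(f^{T_i}-f^D)$ toward a sequence $T_i\to\infty$, shows these lie in $\Gamma_D$ and are bounded, extracts a convergent subsequence whose limit is nonnegative and, by closedness of $\Gamma_D$, lies in $\Gamma_D\cap\FF_1$; it then needs a \emph{second} step, a contradiction argument on the long-run behaviour of $\lmvec$, to verify that any such nonnegative direction satisfies the WE condition of the auxiliary game over $\FF_1$ and hence lies in $\SOL(\FF_1,A)$ via Proposition~\ref{prop:BP-VI}. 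You instead route through the Beckmann potential of Proposition~\ref{prop:beckmann}: normalizing by $D$ makes $f^D/D$ an exact minimizer of $g\mapsto\tfrac{D}{2}g^\top Ag+\beta^\top g$ over the compact simplex, the perturbation bound forces every limit point to minimize $g^\top Ag$ and hence to lie in $\SOL(\FF_1,A)$, and membership in $\Gamma_{D_M}$ comes for free from closedness because the limit point is also the limit of the directions $f^{\delta_D}\in\Gamma_{D_M}$. This merges the paper's two steps into a single limiting argument and avoids the case analysis on the WE condition; what it gives up is the stronger fact the paper establishes en route, namely that \emph{every} element of $\Gamma_D\cap\FF_1$ belongs to $\SOL(\FF_1,A)$ (a fact reused later, e.g.\ in Proposition~\ref{prop:nonnegative-Gamma_M}) — but since the proposition only asserts non-emptiness, nothing is missing. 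One presentational quibble: boundedness of the $f^{\delta_D}$ is an input to, not a consequence of, $\norm{f^D/D-f^{\delta_D}}\to 0$; both follow immediately from $\norm{f^D}=O(D)$, so this is cosmetic. Your arguments for claims 2 and 3 coincide with the paper's.
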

	\ifinclude{
		\begin{proof}
			We note that the second statement follows from the first. To see this, let ${f^\delta \in \Gamma_{D} \cap \SOL(\FF_1,A)}$. From Proposition~\ref{prop:evolution-lmvec} we have $A f^\delta = \delta C^M$, and Proposition \ref{prop:properties-solution-VI} then gives us $A \tillf^\delta = \delta C^M$ for all $f^\delta \in \SOL(\FF_1,A)$.
			Thus, to prove the first two claims, it is enough to show that $\Gamma_{D} \cap \SOL(\FF_1,A)$ is non-empty, which we now do.
			
			First we show that $\Gamma_{D} \cap \FF_{1}$ is non-empty. For any ${f^\delta \in \Gamma_{D}}$ we already have ${f^\delta \in \HH_1}$. Therefore to establish that $\Gamma_{D} \cap \FF_{1}$ is non-empty it is enough to prove the existence of $f^\delta \in \Gamma_{D}$ such that $f^\delta \geq 0$. This we will do by constucting a sequence $\{f^{\delta, i}\}_{i \in \mathbb{N}} \subset \Gamma_{D}$ that converges to an $f^\delta$ satisfying $f^\delta \ge 0$. Then, using the fact that $\Gamma_{D}$ is closed (as stated in Corollary~\ref{cor:Gamma-convex-closed}), we conclude that $f^\delta \in \FF_1 \cap \Gamma_D$.
			
			Let $f^D \in \WW_{D}$, and let $\{f^{T_i}\}_{i \in \naturalnumbers}$ be a sequence of WE satisfying $f^{T_i} \in \WW_{T_i}$, where $D_M \le D < T_1$, $T_i < T_{i+1}$ for all $i$, and $\lim_{i \rightarrow \infty} T_i = \infty$. We then define the sequence $\{f^{\delta,i}\}_{i \in \naturalnumbers}$ by setting
			\begin{equation*}
				f^{\delta,i} := (T_i - D)^{-1}(f^{T_i} - f^D). 
			\end{equation*}
			Since $D \geq D_M$ and $T_i > D$ for all $i \in \naturalnumbers$, it follows from Corollary~\ref{cor:extended-affine-combinations} that any convex combination of $f^D$ and $f^{T_i}$ is a WE; that is, for any $\epsilon \in [0,T_i - D)$ we find that $f^D + \epsilon f^{\delta,i}$ is a WE. By Definition~\ref{def:directions-of-increase}, we deduce that $f^{\delta,i} \in \Gamma_{D}$ for all $i \in \naturalnumbers$. In addition, the sequence $\{f^{\delta,i}\}_{i \in \naturalnumbers}$ is bounded. To show this, 
			let $t := \max_{r \in \PP} f^D_r$. Since $f^{T_i} \geq 0$ for all $i \in \naturalnumbers$ we obtain $(f^{T_i}_r - f^D_r) \geq -t$ for all $r \in \PP$ and all $i \in \naturalnumbers$. Therefore, $f^{\delta, i}_r \geq -t(T_i - D)^{-1}$ for all $r \in \PP$ and $i \in \naturalnumbers$. 
			Letting $\RR_i := \setdef{r \in \PP}{f^{\delta, i}_r < 0}$ it follows that
			\begin{equation*}
				\sum_{r \in \RR_i} f^\delta_r \geq -n t (T_i - D)^{-1}.
			\end{equation*}
			Since $f^{\delta, i} \in \HH_1$ the above implies that  $f^{\delta, i}_r \leq nt(T_i - D)^{-1} + 1$ for all $r \in \PP$. Since $T_1 \leq T_i$ for all $i \in \naturalnumbers$ we get ${-t(T_1 - D)^{-1} \leq f^{\delta,i}_r \leq - n t (T_1 - D)^{-1} + 1}$ for all $r \in \PP$ and $i \in \naturalnumbers$. In other words, $\{f^{\delta, i}\}_{i \in \mathbb{N}}$ is bounded.
			Therefore, the sequence contains a subsequence, denoted $\{f^{\delta, i_k}\}_{k \in \naturalnumbers}$, that converges. For this subsequence, since $\lim_{k \rightarrow \infty} T_{i_k} = \infty$ it follows that $\lim_{k \to \infty} f^{\delta, i_k}_r \geq 0$ for all $r \in \PP$. We see that
			\begin{equation*}
				f^{\delta} := \lim_{k \to \infty} f^{\delta, i_k} \geq 0.
			\end{equation*}
			Since $\Gamma_{D}$ is closed, it follows that $f^{\delta} \in \Gamma_{D}$. Thus, there exists $f^{\delta} \in \Gamma_{D} \cap \FF_1$.
			
			The final step is to show that $f^{\delta} \in \Gamma_{D} \cap \FF_1$ implies ${f^\delta \in \SOL(\FF_{1},A)}$. This we do by showing that $f^\delta$ is a WE of the routing game with the path cost function $\tillC(f) = Af$ and the feasible set $\FF_{1}$.
			
			Given an $f^{\delta} \in \Gamma_{D} \cap \FF_1$, let $p \in \PP$ satisfy $f^\delta_p > 0$. By definition of the set $\Gamma_{D}$ we then know that for some $T > D$ there exist WE $f^T \in \WW_{T}$ satisfying $f^T_p > 0$. Since $D \geq D_M$ this implies that the path $p$ is used, and therefore active, in the interval $(D_M,\infty)$. In other words, $p \in \Ja_{M}$.
			
			For the sake of contradiction assume that $r \in \PP$ satsifies $A_r f^\delta < A_p f^\delta$. Since $f^\delta \in \Gamma_{D}$ we have $A f^\delta = \delta C^M$, which implies $\lmvec(T) = \lmvec(D) + (T - D)A f^\delta$ for any $T \in (D,\infty)$. It follows that for large enough $T$ we have $\lmvec_r(T) < \lmvec_p(T)$, which implies that $p \notin \aset_{T}$. Since ${T > D}$, and $D \geq D_M$ this implies that for some ${T \in (D_M,\infty)}$, the path $p$ is not active. In other words $p \notin \Ja_{M}$. We have arrived at a contradiction, and therefore we conclude that $A_p f^\delta = \min_{r \in \PP}A_r f^\delta$. In other words, $f^\delta$ is a WE of the routing game with the path cost function $\tillC(f) = Af$ and the feasible set $\FF_{1}$. By Proposition \ref{prop:BP-VI} this implies $f^\delta \in \SOL(\FF_{1},A)$.
			
			For the third statement, note that when $\delta C^M_r < \delta \lambda^M$ for some $r \in \PP$, this implies that in the interval $(D_M,\infty)$, the cost of path $r$ under WE increases at a slower rate than the WE-cost. Since the cost of all paths in this interval increases at a constant rate, this implies that at some demand $T > D_M$, $\lmvec_r(T) < \lmWE(T)$. However, the WE-cost is the minimal cost of all paths under WE, and therefore this is not possible. Thus we have $\delta \lambda^M = \min_{r \in \PP} \delta C^M$. This concludes the proof.
		\end{proof}
	}
	The intuition behind the above result is as follows. Consider a demand $D$ in the final interval $[D_M,\infty)$, for which we have $\Gamma_{D} \subseteq \Gamma^M$ by Lemma~\ref{cor:evolution-of-GammaD}, and consider the set of directions in which the WE moves as demand increases from $D$; that is, consider the set $\Gamma_D$. Since the interval $[D_M,\infty)$ stretches to infinity without encountering any other breakpoint, there must be some direction in $\Gamma_D$ along which we can move indefinitely. For any direction $f^\delta$ which takes flow from some path, that is the vector has some negative component, one can only move a finite amount in that direction as sooner or later the flow on that path then becomes zero, and we can no longer move in that direction. Thus, there must be some $f^\delta \in \Gamma_{D} \cap \FF_1$. In addition, when moving in such a ``nonnegative'' direction of increase the cost of all paths receiving flow must remain minimal to satisfy the WE conditions. Therefore, the cost of paths receiving flow must show the minimal increase among all paths. This implies that in fact $f^\delta \in \SOL(\FF_1,A)$, which in turn shows the above result.
	Note that these observations do not imply  $\SOL(\MM_{D},A) = \SOL(\FF_1,A)$, as for instance demonstrated in Example~\ref{ex:secIV-2}. There we see that for any $D \geq D_M = 1$ there exist $f^\delta \in \Gamma_{D}$ and $p \in \PP$ such that $f^\delta_p < 0$.
	
	Having obtained $\delta C^M$ by solving $\SOL(\FF_{1},A)$, we now give a full characterization of $\Gamma^M \cap \SOL(\FF_{1},A)$ in the below result. Note that the result considers $\Gamma_D$ for some $D \geq D_M$ instead of $\Gamma^M$, but whenever $D > D_M$ we have $\Gamma_{D} = \Gamma^M$.
	\begin{proposition} \longthmtitle{Obtaining $\Gamma^M \cap \SOL(\FF_{1},A)$ } \label{prop:nonnegative-Gamma_M}
		Let $(\PP,\CC)$ and $D \geq D_M$ be given. The set $\Gamma_D \cap \SOL(\FF_{1},A)$ is equal to the set of solutions of the following minimization problem:
		\begin{equation} \label{eq:min-beta}
			\begin{aligned}
				\textnormal{ minimize } \quad	& \beta^\top f^\delta	\\
				\textnormal{subject to } \quad  &A f^\delta = \delta C^M,\\
				&f^\delta \in \FF_{1}.
			\end{aligned}
		\end{equation}
	\end{proposition}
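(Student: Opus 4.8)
The plan is to identify the feasible set of \eqref{eq:min-beta} with $\SOL(\FF_1,A)$, and then to show that inside this set the minimizers of $\beta^\top f^\delta$ are exactly the flows supported on $\Ja_M$, which in turn are exactly $\Gamma_D\cap\SOL(\FF_1,A)$. For the first point, take $\QQ=\PP$ and $\RR=\emptyset$ in Proposition~\ref{prop:properties-solution-VI}, so that $\MM=\FF_1$; then part~\ref{pr:sol:2} (with nonemptiness from part~\ref{pr:sol:1}) together with the second statement of Proposition~\ref{prop:characterize-D-infinity} gives $\SOL(\FF_1,A)=\setdef{f^\delta\in\FF_1}{Af^\delta=\delta C^M}$, which is precisely the feasible region of \eqref{eq:min-beta}.

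\emph{Rewriting the objective.} Fix any WE $f^{D_M}\in\WW_{D_M}$. Since $C(f)=Af+\beta$, every path $p$ satisfies $\beta_p=C_p(f^{D_M})-A_pf^{D_M}$, so for $f^\delta\in\SOL(\FF_1,A)$,
\begin{equation*}
\beta^\top f^\delta = \sum_{p\in\PP}C_p(f^{D_M})f^\delta_p - (Af^{D_M})^\top f^\delta .
\end{equation*}
The first term is $\ge\lmWE(D_M)\sum_p f^\delta_p=\lmWE(D_M)$, with equality iff $f^\delta_p>0$ implies $C_p(f^{D_M})=\lmWE(D_M)$, i.e. iff $f^\delta_{(\aset_{D_M})^c}=\mymathbf{0}$. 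For the second term, symmetry of $A$ and $Af^\delta=\delta C^M$ give $(Af^{D_M})^\top f^\delta=\sum_p f^{D_M}_p\,\delta C^M_p$; every $p$ with $f^{D_M}_p>0$ lies in $\uset_{D_M}\subseteq\Ju_M\subseteq\Ja_M$ by Lemma~\ref{lem:inclusion-active-and-used}, and $\delta C^M_p=\delta\lambda^M$ for every $p\in\Ja_M$ (a path active throughout $(D_M,\infty)$ has cost $\lmWE$ there, and both evolve affinely with slopes $\delta C^M_p$ and $\delta\lambda^M$ by Proposition~\ref{prop:evolution-lmvec} and Corollary~\ref{cor:evolution-lmWE}), so this term equals the constant $\delta\lambda^M D_M$. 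Hence $\beta^\top f^\delta\ge\lmWE(D_M)-\delta\lambda^M D_M$ on $\SOL(\FF_1,A)$, with equality exactly when $f^\delta_{(\aset_{D_M})^c}=\mymathbf{0}$.

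\emph{Passing from $\aset_{D_M}$ to $\Ja_M$ and to $\Gamma_D$.} For $f^\delta\in\SOL(\FF_1,A)$ any $p$ with $f^\delta_p>0$ has $\delta C^M_p=A_pf^\delta=\min_{r\in\PP}A_rf^\delta=\delta\lambda^M$ (Proposition~\ref{prop:properties-solution-VI}-\ref{pr:sol:3} and the third statement of Proposition~\ref{prop:characterize-D-infinity}); if moreover $p\in\aset_{D_M}$ then $\lmvec_p(D_M)=\lmWE(D_M)$, whence, by the affine forms in Proposition~\ref{prop:evolution-lmvec} and Corollary~\ref{cor:evolution-lmWE}, $\lmvec_p(T)=\lmWE(T)$ for all $T\ge D_M$, i.e. $p\in\Ja_M$. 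With $\Ja_M\subseteq\aset_{D_M}$ (Lemma~\ref{lem:inclusion-active-and-used}), the conditions $f^\delta_{(\aset_{D_M})^c}=\mymathbf{0}$ and $f^\delta_{(\Ja_M)^c}=\mymathbf{0}$ are thus equivalent for $f^\delta\in\SOL(\FF_1,A)$, so the minimizers of \eqref{eq:min-beta} are $\setdef{f^\delta\in\SOL(\FF_1,A)}{f^\delta_{(\Ja_M)^c}=\mymathbf{0}}$; this minimum is attained because the first statement of Proposition~\ref{prop:characterize-D-infinity} supplies a point of $\Gamma_D\cap\SOL(\FF_1,A)$, which by Lemma~\ref{le:gamma-m} and Definition~\ref{def:feasiblity} is supported on $\aset_D$, hence on $\Ja_M$. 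Finally, Proposition~\ref{prop:characterize-directions-of-increase}, Proposition~\ref{prop:properties-solution-VI}-\ref{pr:sol:2}, Proposition~\ref{prop:evolution-lmvec} and Corollary~\ref{cor:Gamma-convex-closed} give $\Gamma_D=\setdef{f^\delta\in\MM_D}{Af^\delta=\delta C^M}$ for $D\in[D_M,\infty)$, and for $f^\delta\in\FF_1$ membership in $\MM_D$ reduces to $f^\delta_{(\aset_D)^c}=\mymathbf{0}$; since $\Ja_M\subseteq\aset_D$ for every $D\ge D_M$, we conclude $\setdef{f^\delta\in\SOL(\FF_1,A)}{f^\delta_{(\Ja_M)^c}=\mymathbf{0}}=\Gamma_D\cap\SOL(\FF_1,A)$, which finishes the argument.

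\emph{Main obstacle.} The heart of the proof is the identity in the second step: recognizing that $\beta^\top f^\delta$ exceeds the fixed quantity $\lmWE(D_M)-\delta\lambda^M D_M$ by exactly the nonnegative ``excess cost'' $\sum_p\big(C_p(f^{D_M})-\lmWE(D_M)\big)f^\delta_p$ that $f^\delta$ pays relative to a WE at the last breakpoint, so it is minimized precisely when $f^\delta$ puts no flow on paths that are non-minimal at $D_M$. A secondary technicality, handled in the third step, is that at the endpoint $D=D_M$ the set $\aset_{D_M}$ can be strictly larger than $\Ja_M$; this is harmless because the two sets agree on the supports of the flows in $\SOL(\FF_1,A)$.
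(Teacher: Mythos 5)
Your proof is correct and follows essentially the same route as the paper's: both identify the feasible set of \eqref{eq:min-beta} with $\SOL(\FF_1,A)$ via Propositions~\ref{prop:properties-solution-VI} and~\ref{prop:characterize-D-infinity}, exploit the fact that $(f^\delta)^\top A f^{D_M} = D_M\,\delta\lambda^M$ is constant over that set so that minimizing $\beta^\top f^\delta$ amounts to minimizing $(f^\delta)^\top\lmvec(D_M)\ge\lmWE(D_M)$, and characterize the minimizers by the support condition $f^\delta_{(\aset_{D_M})^c}=\mymathbf{0}$. Your packaging of the argument as a single ``excess cost'' identity, with the explicit equivalence of the supports $\aset_{D_M}$ and $\Ja_M$ on $\SOL(\FF_1,A)$, is a slightly cleaner presentation of the same computation rather than a genuinely different method.
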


	\ifinclude{
		\begin{proof}
			We start with the observation that taken together Propositions \ref{prop:properties-solution-VI} and \ref{prop:characterize-D-infinity} imply ${f^\delta \in \SOL(\FF_{1},A)}$ if and only if $f^\delta \in \FF_1$ and $Af^\delta = \delta C^M$. In other words, the feasibility set of the problem~\eqref{eq:min-beta} is equal to $\SOL(\FF_{1},A)$. Thus, our goal is to show that $f^\delta$ is an optimal solution of~\eqref{eq:min-beta} if and only if it is feasible and $f^\delta \in \Gamma_D$.
			
			We proceed with the ``if'' part, that is, if $f^\delta \in \Gamma_D$ and is a feasible point of~\eqref{eq:min-beta}, then it is an optimal solution of~\eqref{eq:min-beta}. To do so, let $f^{D_M} \in \WW_{D_M}$ and consider the minimization problem
			\begin{equation} \label{eq:min-lmvec}
				\begin{aligned}
					\textnormal{ minimize } \quad	& (f^{\delta})^\top(A f^{D_M} + \beta)\\
					\textnormal{subject to } \quad  &A f^\delta = \delta C^M,\\
					&f^\delta \in \FF_{1}.
				\end{aligned}
			\end{equation}
			This minimization problem is actually equivalent to \eqref{eq:min-beta}, meaning that $f^\delta$ is an optimal solution of~\eqref{eq:min-beta} if an only if it is so for~\eqref{eq:min-lmvec}. To see this, note that since $f^{D_M} \in \WW_{D_M}$, we have $Af^{D_M} + \beta = \lmvec(D_M)$. Therefore, if $p \in \PP$ satisfies $f^{D_M}_p > 0$, then it follows that $p \in \uset_{D_M}$, and by Lemma \ref{lem:inclusion-active-and-used} we then have $p \in \Ju_{M}$. We see that $p$ is in the used set on the interval $[D_M,\infty)$, and therefore it must maintain minimal cost under WE among all paths as the demand increase from $D$ to infinity. Consequently ${\delta C^M_p= \min_{r \in \PP} \delta C^M_r = \delta \lambda^M}$. Since $D \geq D_M$ we also have, by Proposition \ref{prop:evolution-lmvec}, that $\delta C^M = A f^\delta$ for any $f^\delta \in \Gamma_D$. Therefore $A_p f^\delta = \delta \lambda^M$ whenever $f^{D_M}_p > 0$. Since $f^{D_M} \in \FF_{D_M}$ this implies
			\begin{equation} \label{eq:D_M-lambda^M}
				(f^{\delta})^\top A f^{D_M} = D_M \delta \lambda^M \quad \text{ for all } f^\delta \in \SOL(\FF_{1},A).
			\end{equation}
			We see that the term $(f^{\delta})^\top A f^{D_M}$ is constant over the feasible set of \eqref{eq:min-lmvec}. Therefore $f^\delta$ solves \eqref{eq:min-lmvec} if and only if it minimizes the term $\beta^\top f^\delta$ over the feasible set. This is exactly the objective function of \eqref{eq:min-beta}, and since the feasible sets of \eqref{eq:min-beta} and \eqref{eq:min-lmvec} are the same, this shows that \eqref{eq:min-beta} and \eqref{eq:min-lmvec} have the same set of optimal solutions.
			
			Next we obtain a lower bound on the objective function of \eqref{eq:min-lmvec}. We have ${\lmvec(D_M) = Af^{D_M} + \beta}$, and in addition $\lmWE(D_M) = \min_{r \in \PP} \lmvec_r(D_M)$. In other words, $\lmvec(D_M) \geq \lmWE(D_M) \mymathbf{1}$, where $\mymathbf{1}$ is the vector of ones. For any $f^\delta \in \FF_{1}$ all elements of $f^\delta$ are nonnegative, and sum to one, and therefore it follows that for all $f^\delta \in \SOL(\FF_{1},A)$ we have
			\begin{align*}
				(f^{\delta})^\top(A f^{D_M} + \beta) &= (f^{\delta})^\top \lmvec(D_M)	\\
				&\geq \lmWE(D_M).
			\end{align*}

			Next we show that any $f^\delta \in \Gamma_D \cap \SOL(\FF_{1},A)$ achieves this lower bound. We know from Proposition \ref{prop:characterize-D-infinity} that $\Gamma_D \cap \SOL(\FF_{1},A)$ is non-empty, so we can pick ${\tillf^\delta \in \Gamma_D \cap \SOL(\FF_{1},A)}$. By Proposition \ref{prop:characterize-directions-of-increase} we then have $\tillf^\delta \in \SOL(\MM_{D},A)$. Now let $p \in \PP$ satisfy $\tillf^\delta_p > 0$. From the definition of $\Gamma_{D}$ it follows that $p \in \uset_{T} = \Ju_M$ for some $T > D$, and by Lemma \ref{lem:inclusion-active-and-used} this implies $p \in \aset_{D_M}$. Therefore $\lmvec_p(D_M) = \lmWE(D_M)$. Since $\tillf^\delta \in \FF_1$ it follows that
			\begin{equation} \label{eq:fdelta-lmvec-lmWE}
				(\tillf^\delta)^\top \lmvec(D_M)  = \lmWE(D_M),
			\end{equation}
			showing that $\tillf^\delta$ achieves the lower bound we established for \eqref{eq:min-lmvec}. In other words, $\tillf^\delta$ is a solution of \eqref{eq:min-lmvec}, and is therefore also a solution to \eqref{eq:min-beta}. This shows that $\Gamma_D \cap \SOL(\FF_{1},A)$ is contained within the set of solutions of \eqref{eq:min-beta}. 
			
			We now proceed with the ``only if'' part. Let $f^\delta$ be a solution of \eqref{eq:min-beta}, and therefore also a solution of \eqref{eq:min-lmvec}. We know from the derivation of~\eqref{eq:fdelta-lmvec-lmWE} that this implies $(f^\delta)^\top \lmvec(D_M) = \lmWE(D_M)$. Since ${\lmvec(D_M) \geq \lmWE(D_M) \mymathbf{1}}$ and $f^\delta \in \FF_{1}$ this gives us
			\begin{equation*}
				\lmvec_{p}(D_M) = \lmWE(D_M) \quad \text{ for all } p \in \PP \text{ satisfying }f^\delta_p > 0.
			\end{equation*}
			In other words, $f^\delta_p > 0$ implies $p \in \aset_{D_M}$. Together with ${f^\delta \in \FF_{1}}$ this shows that $f^\delta \in \MM_{D_M}$. Thus we have $f^\delta \in \MM_{D_M}$ and $A f^\delta = \delta C^M$. From Proposition \ref{prop:characterize-directions-of-increase} we have $\Gamma_{D_M} = \SOL(\MM_{D_M},A)$, from Proposition \ref{prop:evolution-lmvec} we have $A\widehat{f}^\delta = \delta C^M$ for all $\widehat{f}^\delta \in \Gamma_{D_M}$ and from Proposition \ref{prop:properties-solution-VI} it then follows that $\widehat{f}^\delta \in \Gamma_{D_M}$ if and only if $\widehat{f}^\delta \in \MM_{D_M}$ and $A\widehat{f}^\delta = \delta C^M$. Thus we see that $f^\delta \in \Gamma_{D_M}$. From Corollary \ref{cor:evolution-of-GammaD} we have $\Gamma_{D_M} \subseteq \Gamma^M$, and thus we conclude that $f^\delta \in \Gamma_D$ for any $D \geq D_M$. This completes the proof.
		\end{proof}
	}
	Next, we use the obtained results to express $\lmWE(D)$ for any $D \geq D_M$.
	\begin{lemma} \longthmtitle{Obtaining $\lmWE(D)$ for $D \geq D_M$} \label{lem:obtaining-lambda-D_M}
		Let $(\PP,\CC)$ and $D \geq D_M$ be given. In addition, let $\bar{\beta} = \beta^\top f^\delta$, where $f^\delta$ is a solution of \eqref{eq:min-beta}. Then, we have
		\begin{equation*}
			\lmWE(D) = \delta \lambda^M D + \bar{\beta}.
		\end{equation*}
	\end{lemma}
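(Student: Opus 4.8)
The plan is to reduce the claim to a single identity at the point $D_M$ and then read that identity off from the already-established facts about the minimizers of \eqref{eq:min-beta}. First I would apply Corollary~\ref{cor:evolution-lmWE} with $i = M$: since $D_{M+1} = \infty$, it gives, for every $D \geq D_M$,
\[
\lmWE(D) = \lmWE(D_M) + (D - D_M)\delta\lambda^M = \delta\lambda^M D + \bigl(\lmWE(D_M) - \delta\lambda^M D_M\bigr).
\]
Hence it suffices to prove $\lmWE(D_M) = \delta\lambda^M D_M + \bar\beta$, where $\bar\beta = \beta^\top f^\delta$ for a minimizer $f^\delta$ of \eqref{eq:min-beta}.

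Next I would fix such a minimizer $f^\delta$. By Proposition~\ref{prop:nonnegative-Gamma_M} applied with $D = D_M$, we have $f^\delta \in \Gamma_{D_M} \cap \SOL(\FF_{1},A)$; in particular $f^\delta \in \FF_1$ and, by Proposition~\ref{prop:characterize-D-infinity}, $A f^\delta = \delta C^M$. Using Definition~\ref{def:directions-of-increase}, I would pick $f^{D_M} \in \WW_{D_M}$ with $f^{D_M} + \epsilon f^\delta \in \WW_{D_M + \epsilon}$ for all small $\epsilon \geq 0$, so that $\lmvec(D_M) = C(f^{D_M}) = A f^{D_M} + \beta$. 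The substance of the argument is the pair of ``collapsing inner products.'' On one hand, any $p$ with $f^\delta_p > 0$ receives flow at demand $D_M + \epsilon$, hence lies in $\Ju_M$, hence in $\aset_{D_M}$ by Lemma~\ref{lem:inclusion-active-and-used}; combined with $f^\delta \geq 0$ and $\sum_{p \in \PP} f^\delta_p = 1$ this yields $(f^\delta)^\top \lmvec(D_M) = \lmWE(D_M)$ (this is precisely \eqref{eq:fdelta-lmvec-lmWE}). On the other hand, any $p$ with $f^{D_M}_p > 0$ lies in $\uset_{D_M} \subseteq \Ju_M$, so it is used throughout $(D_M,\infty)$, and since $\delta\lambda^M = \min_{r} \delta C^M_r$ by Proposition~\ref{prop:characterize-D-infinity}-(3) this forces $\delta C^M_p = \delta\lambda^M$; with $\sum_{p \in \PP} f^{D_M}_p = D_M$ this gives $(f^{D_M})^\top A f^\delta = (f^{D_M})^\top \delta C^M = \delta\lambda^M D_M$ (this is \eqref{eq:D_M-lambda^M}).

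Finally I would assemble the pieces. Using symmetry of $A$,
\[
\lmWE(D_M) = (f^\delta)^\top \lmvec(D_M) = (f^\delta)^\top A f^{D_M} + (f^\delta)^\top \beta = \delta\lambda^M D_M + \bar\beta,
\]
and substituting back into the formula from Corollary~\ref{cor:evolution-lmWE} yields $\lmWE(D) = \delta\lambda^M D + \bar\beta$ for all $D \geq D_M$. There is no real obstacle beyond bookkeeping: the only delicate point is the flow-support argument that makes the two inner products collapse, and those are exactly the steps already carried out in the proof of Proposition~\ref{prop:nonnegative-Gamma_M}, so the proof can simply cite \eqref{eq:fdelta-lmvec-lmWE} and \eqref{eq:D_M-lambda^M} instead of re-deriving them.
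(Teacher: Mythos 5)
Your proof is correct and follows essentially the same route as the paper's: both rest on Proposition~\ref{prop:nonnegative-Gamma_M} together with the two identities \eqref{eq:fdelta-lmvec-lmWE} and \eqref{eq:D_M-lambda^M}. The only difference is organizational — you evaluate everything at $D_M$ and then extend by the affine formula of Corollary~\ref{cor:evolution-lmWE}, whereas the paper works at a general $D > D_M$ (which costs it one extra collapsing identity, $(f^{\delta})^\top A f^\delta = \delta\lambda^M$) and handles $D = D_M$ at the end; your arrangement is marginally leaner but not a different argument.
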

	\ifinclude{
		\begin{proof}
			First pick $D > D_M$ and let $f^\delta$ be a solution of \eqref{eq:min-beta}. From Proposition~\ref{prop:nonnegative-Gamma_M}, we have $f^\delta \in \Gamma^M \cap \SOL(\FF_{1},A)$. Additionally, $\lmWE(D) = \lmWE(D_M) + (D - D_M) \delta \lambda^M$ holds. Now consider the expression
			\begin{equation*}
				(f^{\delta})^\top\Big(A\big(f^{D_M} + (D - D_M)f^\delta \big) + \beta \Big),
			\end{equation*}
			where $f^{D_M} \in \WW_{D_M}$. Since $f^{D_M} \geq 0$ and $f^\delta \geq 0$ it follows that $f^{D_M} + (D - D_M)f^\delta \geq 0$. In combination with ${f^\delta \in \Gamma^M}$, this shows that $f^{D_M} + (D - D_M)f^\delta \in \WW_{D}$. Thus, we have $A\big(f^{D_M} + (D - D_M)f^\delta \big) + \beta = \lmvec(D)$. By the same arguments used to derive \eqref{eq:fdelta-lmvec-lmWE}, we obtain ${(f^{\delta})^\top\lmvec(D) = \lmWE(D)}$. Thus we have
			\begin{equation} \label{eq:lm-last-one}
				\begin{aligned}
					(f^{\delta})^\top\Big(A\big(f^{D_M} + (D - D_M)f^\delta \big) + \beta \Big) &= (f^{\delta})^\top\lmvec(D)
					\\
					&= \lmWE(D)
				\end{aligned}
			\end{equation}
			We also have the following:
			\begin{equation} \label{eq:lm-last-two}
				\begin{aligned}
					& (f^{\delta})^\top\Big(A\big(f^{D_M} + (D - D_M)f^\delta \big) + \beta \Big)
					\\
					&= (f^\delta)^\top A f^{D_M} + (D - D_M)(f^{\delta})^\top A f^\delta + \beta^\top f^\delta. 
				\end{aligned}
			\end{equation}
			From \eqref{eq:D_M-lambda^M} we have $(f^\delta)^\top A f^{D_M} = D_M \delta \lambda^M$. From arguments similar to those for \eqref{eq:fdelta-lmvec-lmWE} we also find ${(D - D_M)(f^{\delta})^\top A f^\delta = (D - D_M) \delta \lambda^M}$. These facts combined with expressions~\eqref{eq:lm-last-one} and~\eqref{eq:lm-last-two} yield
			\begin{align*}
				\lmWE(D) &=
				(f^{\delta})^\top\Big(A\big(f^{D_M} + (D - D_M)f^\delta \big) + \beta \Big)	\\ 
				&= D\delta \lambda^M + \beta^\top f^\delta.
			\end{align*}
			Since $f^\delta$ is a solution of \eqref{eq:min-beta} we therefore have
			\begin{equation*}
				\lmWE(D) = D\delta \lambda^M + \bar{\beta}.
			\end{equation*}
			The above holds for any $D > D_M$, and from Corollary \ref{cor:evolution-lmWE} it follows that the equality also holds for $D = D_M$, which concludes the proof.
		\end{proof}
	}
	Thus far we have characterized how $\lmvec$ and $\lmWE$ vary over the final interval. Using the latter, our next result shows how we can use this information to obtain $\Ja_{M}$.
	\begin{lemma} \longthmtitle{Obtaining $\Ja_{M}$} \label{eq:finding-JaM}
		For a given $(\PP,\CC,D)$, let $f^\delta$ be an optimal solution of~\eqref{eq:min-beta}. Consider the following index sets:
		\begin{equation*}
			\begin{aligned}
				\II_1 &= \setdef{r \in \PP}{f^\delta_r > 0},
				\\
				\II_2 &= \setdef{r  \in \PP}{f^\delta_r = 0, \hspace{2 pt} \delta C^M_r > \delta \lambda^M},	
				\\
				\II_3 &= \setdef{r \in \PP}{f^\delta_r = 0, \hspace{2 pt} \delta C^M_r = \delta \lambda^M}.
			\end{aligned}
		\end{equation*}
		Let $f^*$ be a solution of the following convex minimization problem:
		\begin{subequations} \label{eq:minimize-for-active-set}
			\begin{align}
				\textnormal{minimize }  
				& \quad f^\top A f + f^\top  \beta	
				\\
				\textnormal{subject to  }&
				\quad f_r = 0 &\forall r \in \II_2,	\label{eq:minimize-for-active-set2}
				\\
				& \quad f_r \geq 0 &\forall r \in \II_3, \label{eq:minimize-for-active-set3}
				\\ 
				& \quad C_r(f) \geq \delta \lambda^M D + \bar{\beta} &\forall r \in \II_3,	\label{eq:minimize-for-active-set4}
				\\
				& \quad C_r(f) = \delta \lambda^M D + \bar{\beta} &\forall r \in \II_1,	\label{eq:minimize-for-active-set5}
				\\
				& \quad \mymathbf{1}^\top f = D. \label{eq:minimize-for-active-set6}
			\end{align}
		\end{subequations}
		We then have that $p \in \Ja_{M}$ if and only if $\delta C_p^M = \delta \lambda^M$ and $C_p(f^*) = \delta \lambda^M D + \bar{\beta}$.
	\end{lemma}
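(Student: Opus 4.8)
The starting observation is that the objective of~\eqref{eq:minimize-for-active-set} is the \emph{total travel cost}: since $C(f) = Af + \beta$, we have $f^\top A f + f^\top \beta = f^\top C(f)$. The plan is to show that (i) the optimal value of~\eqref{eq:minimize-for-active-set} equals $D\lmWE(D) = \delta\lambda^M D + \bar{\beta}$, attained at a genuine Wardrop equilibrium for demand $D$, and (ii) \emph{every} optimal solution $f^*$ of~\eqref{eq:minimize-for-active-set} satisfies $C(f^*) = \lmvec(D)$. Given (ii), the equivalence follows by bookkeeping with the affine formulas for $\lmvec_p$ and $\lmWE$ on $[D_M,\infty)$. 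To build the equilibrium in (i), take any $f^{D_M}\in\WW_{D_M}$ and let $f^\delta$ be the solution of~\eqref{eq:min-beta} used to define the index sets; set $f^D := f^{D_M} + (D - D_M)f^\delta$. As in the proof of Lemma~\ref{lem:obtaining-lambda-D_M} (using that $f^\delta\in\Gamma^M\cap\SOL(\FF_1,A)$ is nonnegative, and Corollary~\ref{cor:extended-affine-combinations}), $f^D\in\WW_D$.

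Next I would verify that $f^D$ is feasible for~\eqref{eq:minimize-for-active-set}. The total-flow constraint~\eqref{eq:minimize-for-active-set6} holds since $\mymathbf{1}^\top f^{D_M} = D_M$ and $\mymathbf{1}^\top f^\delta = 1$. For $r\in\II_2$ we have $f^\delta_r = 0$ by definition, and also $f^{D_M}_r = 0$: otherwise $r\in\uset_{D_M}\subseteq\Ju_{M}$ by Lemma~\ref{lem:inclusion-active-and-used}, so $r$ is active throughout $(D_M,\infty)$ and hence $\delta C^M_r = \delta\lambda^M$, contradicting $r\in\II_2$; thus $f^D_r = 0$, giving~\eqref{eq:minimize-for-active-set2}. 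Constraints~\eqref{eq:minimize-for-active-set3} and~\eqref{eq:minimize-for-active-set4} hold because $f^D\ge 0$ and $f^D\in\WW_D$ (so $C_r(f^D)\ge\lmWE(D)$ for all $r$). For~\eqref{eq:minimize-for-active-set5}: if $r\in\II_1$ then $f^\delta_r>0$ and $f^\delta\in\Gamma_D$ (Proposition~\ref{prop:nonnegative-Gamma_M}), so by Definition~\ref{def:directions-of-increase} $r$ is used at some $T>D$, hence $r\in\Ju_{M}\subseteq\Ja_{M}\subseteq\aset_D$, and $C_r(f^D) = \lmWE(D)$. Moreover $f^D$ is supported on $\aset_D$, so $f^{D\top}C(f^D) = \lmWE(D)\,\mymathbf{1}^\top f^D = D\lmWE(D)$. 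Finally, for \emph{any} $f$ feasible in~\eqref{eq:minimize-for-active-set}, splitting $f^\top C(f)$ over $\II_1,\II_2,\II_3$ and using $f_r=0$ on $\II_2$, $C_r(f)=\lmWE(D)$ on $\II_1$, and $f_r\ge 0$, $C_r(f)\ge\lmWE(D)$ on $\II_3$ yields $f^\top C(f)\ge\lmWE(D)\big(\mymathbf{1}^\top f - \sum_{r\in\II_2}f_r\big) = D\lmWE(D)$. Hence $f^D$ is optimal and the optimal value is $D\lmWE(D) = \delta\lambda^M D + \bar{\beta}$.

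For (ii), note that the feasible set of~\eqref{eq:minimize-for-active-set} is convex (every constraint is affine in $f$ since $C$ is affine) and $\phi(f):=f^\top A f + f^\top\beta$ is convex ($A$ is positive semidefinite). Let $f^*$ be optimal and put $g := f^* - f^D$. The segment $f^D + tg$, $t\in[0,1]$, is feasible, so by the lower bound of the previous paragraph $\phi(f^D+tg)\ge D\lmWE(D)$; by convexity $\phi(f^D+tg)\le(1-t)\phi(f^D)+t\phi(f^*) = D\lmWE(D)$. Thus $t\mapsto\phi(f^D+tg) = \phi(f^D) + t\big(2(Af^D)^\top g + \beta^\top g\big) + t^2\, g^\top A g$ is constant, forcing $g^\top A g = 0$, hence $Ag = 0$, and therefore $C(f^*) = C(f^D) + Ag = C(f^D) = \lmvec(D)$.

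It remains to assemble the equivalence. By Proposition~\ref{prop:evolution-lmvec} and Corollary~\ref{cor:evolution-lmWE}, $\lmvec_p$ and $\lmWE$ are affine on $[D_M,\infty)$ with slopes $\delta C^M_p$ and $\delta\lambda^M$, and $\lmvec_p\ge\lmWE$; by Corollary~\ref{cor:interval-of-active-set}, $\Ja_{M} = \aset_T$ for $T\in(D_M,\infty)$, so $p\in\Ja_{M}$ holds iff $\lmvec_p(T)=\lmWE(T)$ for all such $T$, i.e.\ (continuity at $D_M$ together with equal slopes) iff $\lmvec_p(D_M)=\lmWE(D_M)$ and $\delta C^M_p=\delta\lambda^M$. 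Using the affine formulas and $\lmWE(D) = \delta\lambda^M D + \bar{\beta}$ from Lemma~\ref{lem:obtaining-lambda-D_M}, this is in turn equivalent to $\lmvec_p(D) = \delta\lambda^M D + \bar{\beta}$ and $\delta C^M_p = \delta\lambda^M$. Since $C(f^*) = \lmvec(D)$ by (ii), $\lmvec_p(D) = C_p(f^*)$, which gives exactly the stated condition. The main obstacle is recognizing the objective as the total cost and then running the convexity argument that pins $C(f^*)$ to $\lmvec(D)$ for \emph{every} optimizer; the boundary case $D = D_M$ (where $\Ja_{M}$ may be a strict subset of $\aset_{D_M}$) is the reason the extra condition $\delta C^M_p = \delta\lambda^M$ cannot be dropped, and its treatment needs the affine-extrapolation step above.
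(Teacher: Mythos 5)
Your argument has a genuine gap: it only establishes the lemma for $D \geq D_M$, whereas the statement is meant to hold (and the paper explicitly notes it holds) for \emph{any} $D \in \real$. This is not a boundary technicality but the whole point of the lemma: it is the tool for computing $\Ja_M$, and hence $D_M$ via Corollary~\ref{cor:obtainDM}, \emph{without prior knowledge of} $D_M$. Your construction $f^D := f^{D_M} + (D - D_M)f^\delta$ requires $D \geq D_M$ for nonnegativity (hence for $f^D \in \WW_D$), your identification of the optimal value as $D\,\lmWE(D)$ uses $\lmWE(D) = \delta\lambda^M D + \bar{\beta}$ which fails for $D < D_M$, and your key step (ii), $C(f^*) = \lmvec(D)$, is simply false below $D_M$: there the optimizers' costs are pinned to the affine \emph{extension} $\delta\lambda^M D + \bar{\beta}$, not to the actual WE cost-vector $\lmvec(D)$. (Also, a small slip: the optimal value is $D(\delta\lambda^M D + \bar{\beta})$, not $\delta\lambda^M D + \bar{\beta}$.) The paper's proof is built precisely to avoid this: it uses the comparison point $f' := f^T + (D - T)f^\delta$ with $T > D_M$, which is feasible for~\eqref{eq:minimize-for-active-set} for every $D$ because the problem imposes no sign constraint on $\II_1$ (where $f^\delta_p > 0$ can absorb negative entries), and then, rather than comparing $f^*$ to $\WW_D$, it translates the optimizer \emph{forward} to $f^{D^+} := f^* + (D^+ - D)f^\delta$ for large $D^+ \in (D_M,\infty)$, verifies this is a WE, and reads off $\Ja_M = \aset_{D^+}$ there.

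That said, the skeleton of your argument is sound and your convexity step is a genuinely nice alternative to the paper's Step 4: by showing the objective is constant along the segment joining two optimizers you get $A(f^* - f') = 0$ and hence pin down $C_p(f^*)$ for \emph{all} paths $p$ at once, whereas the paper's equality-in-the-lower-bound argument only controls $C_p(f^*)$ on the support of $f^*$ and must then work at $D^+$ to handle the remaining paths. The repair is mechanical: replace $f^D$ by $f' = f^T + (D-T)f^\delta$ throughout, so that $C(f^*) = C(f') = \lmvec(D_M) + (D - D_M)\delta C^M$ (the affine extension of $\lmvec$ evaluated at $D$); then $p \in \Ja_M$ iff $\lmvec_p(D_M) = \lmWE(D_M)$ and $\delta C^M_p = \delta\lambda^M$, and for $p$ with $\delta C^M_p = \delta\lambda^M$ the first condition is equivalent to $C_p(f^*) = \delta\lambda^M D + \bar{\beta}$, which recovers the stated characterization for arbitrary $D$.
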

	The following is a long proof, and for this reason we first provide some intuition to clarify the underlying ideas. The optimization problem \eqref{eq:minimize-for-active-set} is designed with the aim of identifying the set $\Ja_{M}$. Note that due to the previous results we can already obtain $\delta C^M$ and $\delta \lambda^M$ as well as an element $f^\delta$ of $\Gamma^M$, which provide quite some information about $\Ja_{M}$. For instance, on the interval $(D_M,\infty)$, the cost under WE of any path $r \in \II_2$ increases faster than $\delta \lambda^M$, where the latter is the minimum over all paths of the increase in cost under WE. Thus, the cost of these paths can not remain minimal on this interval, which immediately shows that $\Ja_{M} \cap \II_2 = \emptyset$. Similarly, since $f^\delta \in \Gamma^M$ we know that any path in $\II_1$ is used under WE in the interval $(D_M,\infty)$, which shows that $\II_1 \subseteq \Ja_{M}$.
	What remains is to find which paths in $\II_3$ are in $\Ja_{M}$, which can be done by solving \eqref{eq:minimize-for-active-set}. The reason that this works is that \eqref{eq:minimize-for-active-set} is designed in such a way that for any solution $f^*$ the flow $f^* + \epsilon f^\delta$ is a WE as long as $\epsilon > 0$ is large enough. Appropriately picking $\epsilon$ will then ensure that the constructed flow is a WE with demand in $(D_M,\infty)$. Consequently the active set $\Ja_{M}$ is given by the paths that have minimal cost given the flow $f^* + \epsilon f^\delta$. Using the conditions imposed by the constraints we can show that these are exactly the paths for which $\delta C_p^M = \delta \lambda^M$ and $C_p(f^*) = \delta \lambda^M D + \bar{\beta}$, completing the argument.
	Before we start the proof, we explicitely note that the result holds for \emph{any} $D \in \real$. Therefore it can be used without prior knowledge of $D_M$.
	\ifinclude{
		\begin{proof}[Proof of Lemma~\ref{eq:finding-JaM}]
			We start by noting that from \eqref{eq:min-beta} we have $f^\delta \geq 0$ and from Proposition \ref{prop:characterize-D-infinity} we have $\delta \lambda^M = \min_{r \in \PP}\delta C^M$. This implies that $\PP = \II_1 \cup \II_2 \cup \II_3$.
			
			The first part of the proof is now to show that %
			\begin{equation*} 
				\begin{aligned}
					(f^*)^\top A f^* + (f^*)^\top \beta = D(\delta \lambda^M D + \bar{\beta})
				\end{aligned}
			\end{equation*}
			holds for any optimizer $f^*$ of~\eqref{eq:minimize-for-active-set}. To prove that this is true, we construct a specific optimal solution $f'$ of~\eqref{eq:minimize-for-active-set} in the following way. Let $f^{T} \in \WW_{T}$ for some $T > D_M$, and let 
			\begin{equation*}
				{f' := f^{T} + (D - T)f^\delta}.
			\end{equation*}
			To show that $f'$ is an optimal solution of \eqref{eq:minimize-for-active-set} our first step is to show that it satisfies all the constraints. Then we obtain the objective function value $(f')^\top A f' + (f')^\top\beta$ and finally we show that this value is optimal, proving that $f'$ is a solution of \eqref{eq:minimize-for-active-set}.
			
			\emph{Step 1: $f'$ is feasible:}
			We start by considering the constraints on the paths in $\II_1$ given in~\eqref{eq:minimize-for-active-set5}. Note that since $f^\delta$ is a solution of \eqref{eq:min-beta}, Proposition~\ref{prop:nonnegative-Gamma_M} gives $f^\delta \in \Gamma_{T}$, and Proposition~\ref{prop:characterize-directions-of-increase} then implies $f^\delta \in \SOL(\MM_{T},A)$. Also note that we have
			\begin{equation} \label{eq:cost-at-f-derivation}
				\begin{aligned}
					C(f')	&= Af' + \beta,	\\
					&= A \big(f^T + (D-T) f^\delta \big) + \beta,	\\
					&= Af^T + \beta +(D-T)Af^\delta,	\\
					&= C(f^T) + (D-T)Af^\delta.
				\end{aligned}
			\end{equation}
			Now let $p \in \II_1$, which gives $f^\delta_p > 0$, and
			since $f^\delta \in \Gamma_{T}$ is a direction of increase, this implies that $p \in \uset_{T^+}$ for some $T^+ > T$. However, since $T \in (D_M,\infty)$,  we obtain ${p \in \Ju_{M} \subseteq \Ja_{M}}$. Therefore, we have $p \in \aset_{T}$ and so, $C_p(f^T) = \lmWE(T)$. Consequently, Lemma~\ref{lem:obtaining-lambda-D_M} then tells us that $C_p(f^T) = \delta \lambda^M T + \bar{\beta}$. Furthermore, the fact that $p \in \Ju_{M}$, in combination with Proposition~\ref{prop:properties-solution-VI}, implies $A_p f^\delta = \min_{r \in \Ja_M} A_r f^\delta$. It then follows from Corollary~\ref{cor:evolution-lmWE} that $A_pf^\delta = \delta \lambda^M$. Collecting these deduced facts that $C_p(f^T) = \delta \lambda^M T + \bar{\beta}$ and $A_pf^\delta = \delta \lambda^M$ and employing them in~\eqref{eq:cost-at-f-derivation} then gives us
			\begin{align*}
				C_p(f')	&= \delta \lambda^M T + \bar{\beta} + (D - T) \delta \lambda^M,	\\
				&= \delta \lambda^M D + \bar{\beta}.
			\end{align*}
			Thus, $f'$ satisfies the constraint~\eqref{eq:minimize-for-active-set5}. Similar arguments can be used to show that any path $p \in \II_3$ satisfies $C_p(f^T) \geq \delta \lambda^M T + \bar{\beta}$ and $A_pf^\delta = \delta \lambda^M$, leading to the conclusion that ${C_p(f') \geq \delta \lambda^M D + \bar{\beta}}$. That is, $f'$ satisfies~\eqref{eq:minimize-for-active-set3}. 
			
			To show that the constraint on paths in $\II_2$ holds, let $p \in \II_2$, which by definition means $\delta C^M_p > \delta \lambda^M$. Since $f^\delta \in \Gamma_{T}$, from Proposition~\ref{prop:properties-solution-VI}, we have $\delta C^M = A f^\delta$. In combination with ${\delta \lambda^M = \min_{r \in \Ja_M} A_r f^\delta}$ we see that $\delta C^M_p > \min_{r \in \PP} \delta C^M_r$. In other words, the cost under WE of path $p$ can not remain minimal on the entire interval $(D_M,\infty)$ and therefore $p \notin \Ja_{M}$. Since $T \in (D_M,\infty)$, this implies $p \notin \aset_{T}$, which gives $p \notin \uset_{T}$. Therefore, $f^T_p = 0$. Furthermore, since $p \notin \aset_{T}$ and $f^\delta \in \SOL(\MM_{T},A)$ it follows from the definition of $\MM_{T}$ that $f^\delta_p = 0$. Consequently, $f'_p = 0$, and so, $f'$ satisfies the constraint~\eqref{eq:minimize-for-active-set2}. 
			
			For the constraint~\eqref{eq:minimize-for-active-set3} on the paths in $\II_3$, note that since $f^T \in \FF_T$, we have $f^T \geq 0$. For any $p$ with $f^\delta_p = 0$ it then follows that $f'_p \geq 0$, as required. That the final constraint~\eqref{eq:minimize-for-active-set6} holds follows from the definition of $f'$. Thus, in summary, $f'$ satisfies all constraints in \eqref{eq:minimize-for-active-set}, and is therefore feasible.
			
			\emph{Step 2: Obtaining an expression for $(f')^\top A f' + (f')^\top$:}
			The next step required for showing $f'$ is an optimal solution of~\eqref{eq:minimize-for-active-set} is to obtain an expression of ${(f')^\top A f' + (f')^\top \beta}$, which will later be shown to be the lower bound of the objective function of~\eqref{eq:minimize-for-active-set} over the feasible set. To get this expression, we show that if $f'_p \neq 0$ for some $p$, then ${C_p(f') = \delta \lambda^M D + \bar{\beta}}$. This fact is consequently used to show that ${(f')^\top A f' + (f')^\top \beta = (f')^\top C(f') = D(\delta \lambda^M D + \bar{\beta})}$. 
			
			Let $p$ be a path such that $f'_p \neq 0$. It follows that either $f^\delta_p > 0$ or ${f^\delta_T > 0}$ (note that both vectors are nonnegative, so values less than zero are not possible). For the first case, $f^\delta_p > 0$, we have $p \in \II_1$ which we have already shown implies ${C_p(f') = \delta \lambda^M D + \bar{\beta}}$. For the second case, $f^T_p > 0$, we have $p \in \uset_{T} \subseteq \aset_{T}$, and therefore we have ${C_p(f^T) = \lmWE(T) = \delta \lambda^M T + \bar{\beta}}$.
			
			Setting $\MM_{T} = \MM$ in Proposition~\ref{prop:properties-solution-VI} yields $A_p f^\delta = \min_{r \in \aset_{T}} A_r f^\delta$, and Corollary~\ref{cor:evolution-lmWE} then implies $A_pf^\delta = \delta \lambda^M$. Using these conclusions in~\eqref{eq:cost-at-f-derivation}, we obtain  $C_p(f') = \delta \lambda^M D + \bar{\beta}$. In summary, $f'_p \neq 0$ implies $C_p(f') = \delta \lambda^M D + \bar{\beta}$ and we then have
			\begin{equation} \label{eq:derivation-fstar}
				\begin{aligned}
					(f')^\top A f' + (f')^\top \beta &= (f')^\top C(f),	\\
					&= D(\delta \lambda^M D + \bar{\beta}).	
				\end{aligned}
			\end{equation}

			\emph{Step 3: $f'$ is optimal:}
			To finish proving that $f'$ is an optimal solution of \eqref{eq:minimize-for-active-set}, we show that $D(\delta \lambda^M D + \bar{\beta})$ is in fact a lower bound on $f^\top A f + f^\top \beta$ for any $f$ satisfying the constraints of~\eqref{eq:minimize-for-active-set}. Therefore, let $f$ be an element of the feasible set of \eqref{eq:minimize-for-active-set}. Consider $p$ such that $f_p \neq 0$. Since $\PP = \II_1 \cup \II_2 \cup \II_3$, the constraints in \eqref{eq:minimize-for-active-set} then imply $p \in \II_1 \cup \II_3$. From constraint~\eqref{eq:minimize-for-active-set5}, if $p \in \II_1$, then we have $C_p(f) = \delta \lambda^M D + \bar{\beta}$. On the other hand, if $p \in \II_3$, then we obtain $C_p(f) \geq \delta \lambda^M D + \bar{\beta}$ from~\eqref{eq:minimize-for-active-set4}. 
			
			Since $\mymathbf{1}^\top f =D$, the same derivation as in \eqref{eq:derivation-fstar} then gives
			\begin{equation} \label{eq:ineq-objective} 
				f^\top A f + f^\top \beta \geq D(\delta \lambda^M D + \bar{\beta}D).
			\end{equation}
			We see that $D(\delta \lambda^M D + \bar{\beta})$ is a lower bound on the objective function value of \eqref{eq:minimize-for-active-set}, and $f'$ achieves this lower bound. Since $f'$ is also feasible for this minimization problem, it follows that it is an optimal solution of \eqref{eq:minimize-for-active-set}.
			
			From the above we draw the conclusion that any optimizer $f^*$ of \eqref{eq:minimize-for-active-set} satisfies $(f^*)^\top A f^* + (f^*)^\top \beta = D(\delta \lambda^M D + \bar{\beta})$. As shown in the derivation of~\eqref{eq:ineq-objective}, any feasible $f$ and path $p$ satisfying $f_p \neq 0$ satisfy $C_p(f) \geq \delta \lambda^M D + \bar{\beta}$. Consequently we have the following for any optimizer $f^*$ of \eqref{eq:minimize-for-active-set}:
			\begin{equation} \label{eq:optimizer-equal-cost}
				C_p(f^*) = \delta \lambda^M D + \bar{\beta} \quad \text{for all } p \text{ such that } f^*_p \neq 0.
			\end{equation}
			The next part of the proof is to establish that for any optimizer $f^*$ of \eqref{eq:minimize-for-active-set}, there exists $D^+ > 0$ such that the following holds:
			\begin{equation*}
				f^{D^+} := f^* + (D^+ - D)f^\delta \in \WW_{D^+}.
			\end{equation*}

			\emph{Step 4: $f^{D^+}$ is a WE:} We start by noting that if $f^*_p < 0$, then the constraint~\eqref{eq:minimize-for-active-set5} along with the definition of the set $\II_1$ imply $f^\delta_p > 0$. Thus, there exists large enough $D^+$ such that $f^{D^+} \geq 0$, and therefore, $f^{D^+} \in \FF_{D^+}$. Next, a similar derivation as in~\eqref{eq:cost-at-f-derivation} gives us
			\begin{equation} \label{eq:fDplus-derivation}
				C(f^{D^+}) = C(f^*) + (D^+ - D)Af^\delta.
			\end{equation}
			Now choose $p$ such that $f^{D^+}_p > 0$. This implies that either $f^\delta_p > 0$ or $f^*_p > 0$. For the first case, $f^\delta_p > 0$, we have $p \in \II_1$ and the constraint~\eqref{eq:minimize-for-active-set5} gives us $C_p(f^*) = \delta \lambda^M D + \bar{\beta}$, and we have already shown in Step 1 that in this case $A_pf^\delta = \delta \lambda^M$. Thus, we get
			\begin{equation} \label{eq:fDplus-goal}
				C_p(f^{D^+}) = \delta \lambda^M D^+ + \bar{\beta}.
			\end{equation}
			Now consider the second case $f^*_p > 0$. Using \eqref{eq:optimizer-equal-cost}, we deduce $C_p(f^*) = \delta \lambda^M D + \bar{\beta}$. Furthermore, constraint~\eqref{eq:minimize-for-active-set2} implies that $p \in \II_1 \cup \II_3$. From Step 1, we get that for $p \in \II_1 \cup \II_3$, the expression $\delta C^M_p = \delta \lambda^M$ holds. In combination with \eqref{eq:fDplus-derivation} this shows that \eqref{eq:fDplus-goal} holds.
			
			To establish that $f^{D^+}$ is a WE, all that remains to be shown is that ${C_p(f^{D^+}) \geq \delta \lambda^M D^+ + \bar{\beta}}$ whenever $f^{D^+}_p = 0$. Therefore, let $p$ be a path such that $f^{D^+}_p = 0$. This can occur when $f^*_p < 0$ and $f^\delta_p > 0$, however, in this case the previous arguments already show that \eqref{eq:fDplus-goal} holds. The only other way in which $f^{D^+}_p = 0$ is when $f^*_p = 0$ and $f^\delta_p = 0$. We split this scenario in two cases. First we consider $\delta C_p^M > \delta \lambda^M$. In this case it follows from $\delta C^M = A f^\delta$ in combination with \eqref{eq:fDplus-derivation} that for large enough $D^+$ we get
			\begin{equation} \label{eq:fDplus-goal2}
				C_p(f^{D^+}) \geq \delta \lambda^M D^+ + \bar{\beta}.
			\end{equation}
			The second case is ${\delta C_p^M = \delta \lambda^M}$ (In Step 3 we already argued that $\delta C_p^M = \delta \lambda^M$ for all $p \in \II_1$ and from the defintion of $\II_2$ and $\II_3$ it follows that $\delta C_p^M < \delta \lambda^M$ is not possible.) For this case \eqref{eq:minimize-for-active-set4} gives $C_p(f^*) \geq \delta \lambda^M D + \bar{\beta}$. Once again, using $\delta C^M = A f^\delta$ in combination with \eqref{eq:fDplus-derivation} we find that \eqref{eq:fDplus-goal2} holds. In conclusion, for large enough $D^+$ we have $f^{D^+} \geq 0$ and $f^{D^+}_p = 0$ implies \eqref{eq:fDplus-goal2}, and $f^{D^+}_p > 0$ implies  \eqref{eq:fDplus-goal}. In other words, as long as $D^+$ is large enough, $f^{D^+}$ is a WE. It follows that we can pick $D^+$ such that $D^+ \in (D_M,\infty)$ and $f^{D^+} \in \WW_{D^+}$.
			
			To finish the proof we now have $\Ja_{M} = \aset_{D^+}$. Since the cost under WE is unique, it follows that $p \in \Ja_{M}$ if and only if it has minimal cost among all paths for the flow $f^{D^+}$. In Step 4, we have shown that the paths with minimal cost are exactly those for which $\delta C^M_p = \delta \lambda^M$ and $C_p(f^*) = \delta \lambda^M D + \bar{\beta}$ hold and therefore, this concludes the proof.
		\end{proof}
	}

	Now that we can derive $\Ja_{M}$ using the above result, we finish this section by showing that we can also find $D_M$, as well as an associated WE $f^{D_M}$.
	\begin{corollary}\longthmtitle{Obtaining $D_M$}\label{cor:obtainDM}
		Let $(\PP,\CC)$ be given, and consider the following minimization problem:
		\begin{equation*}
			\begin{aligned}
				\textnormal{ minimize } \quad	& \mymathbf{1}^\top f	\\
				\textnormal{subject to } \quad  &C_p(f) \leq C_r(f) \quad &\text{ for all } p \in \Ja_{M} \text{ and } r \in \PP,	\\
				&f_r = 0 \quad &\text{ for all } r \in (\Ja_{M})^c,	\\
				&f \geq 0.
			\end{aligned}
		\end{equation*}
		For any solution $f^*$ of the above we have $\mymathbf{1}^\top f^* = D_M$ and $f^* \in \WW_{D_M}$.
	\end{corollary}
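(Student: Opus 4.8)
The plan is to establish three things: every feasible flow is a Wardrop equilibrium for its own demand, the equilibria of demand $D_M$ are feasible, and no feasible flow has demand below $D_M$; the corollary follows immediately from these. For the first point, let $f$ be feasible and put $D := \mymathbf{1}^\top f$, so $f \in \FF_D$. Since $f_r = 0$ for $r \in (\Ja_M)^c$ and $f \ge 0$, any path with $f_p > 0$ lies in $\Ja_M$, and then the constraint $C_p(f) \le C_r(f)$ for all $r \in \PP$ is exactly the Wardrop condition~\eqref{eq:WE-condition}; hence $f \in \WW_D$, and moreover $\Ja_M \subseteq \aset_D$. For the second point, take any $f^{D_M} \in \WW_{D_M}$: by Lemma~\ref{lem:inclusion-active-and-used} applied with $D_i = D_M$ we have $\uset_{D_M} \subseteq \Ju_M \subseteq \Ja_M \subseteq \aset_{D_M}$, so $f^{D_M}$ vanishes off $\Ja_M$ while every path of $\Ja_M$ is cost-minimal at $f^{D_M}$; thus $f^{D_M}$ is feasible with objective value $D_M$, and the optimal value is at most $D_M$.

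The third point is the core of the argument. Fix $f^\delta \in \Gamma^M \cap \SOL(\FF_1,A)$, nonempty by Proposition~\ref{prop:characterize-D-infinity}; then $f^\delta \ge 0$, $\mymathbf{1}^\top f^\delta = 1$, $A f^\delta = \delta C^M$, and $f^\delta$ is supported on $\Ju_M \subseteq \Ja_M$. I would also record that $\delta C^M_p = \delta\lambda^M$ for every $p \in \Ja_M$ --- since $\aset_D = \Ja_M$ on $(D_M,\infty)$ forces $\lmvec_p \equiv \lmWE$ there, and both maps are affine on $[D_M,\infty)$ with respective slopes $\delta C^M_p$ and $\delta\lambda^M$ by Proposition~\ref{prop:evolution-lmvec} and Corollary~\ref{cor:evolution-lmWE} --- while $\delta C^M_r \ge \delta\lambda^M$ for all $r \in \PP$ by Proposition~\ref{prop:characterize-D-infinity}. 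Now let $f$ be any feasible flow, $D := \mymathbf{1}^\top f$, and set $g_t := f + t f^\delta$. Clearly $g_t \in \FF_{D+t}$ for every $t \ge 0$, and whenever $(g_t)_p > 0$ we have $p \in \Ja_M$; using $C_p(f) = \lmWE(D)$ (from $\Ja_M \subseteq \aset_D$ and the definition of $\lmWE$), $A_p f^\delta = \delta C^M_p = \delta\lambda^M$, $C_r(f) \ge \lmWE(D)$, and $\delta C^M_r \ge \delta\lambda^M$, we get $C_p(g_t) = \lmWE(D) + t\delta\lambda^M \le C_r(g_t)$ for every $r \in \PP$. Hence $g_t \in \WW_{D+t}$ for all $t \ge 0$, so $f^\delta \in \Gamma_{D+t}$ for all $t \ge 0$ by Definition~\ref{def:directions-of-increase}.

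Suppose now, for contradiction, that $D < D_M$. Then $D \in [D_j,D_{j+1})$ for some $j \in \{0,\dots,M-1\}$, so $D_{j+1} < \infty$ and $(D_{j+1},D_{j+2})$ is nonempty because $j+1 \le M$. Choosing $t$ with $D+t \in (D_j,D_{j+1})$ and invoking Lemma~\ref{cor:evolution-of-GammaD} gives $f^\delta \in \Gamma_{D+t} = \Gamma^j$, while choosing $t$ with $D+t \in (D_{j+1},D_{j+2})$ gives $f^\delta \in \Gamma^{j+1}$; but $\Gamma^j \cap \Gamma^{j+1} = \emptyset$ by the same lemma, a contradiction. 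Therefore every feasible flow has demand at least $D_M$, and the optimal value is exactly $D_M$. Since the feasible set is a polyhedron (all constraints are affine in $f$) and the objective is linear and bounded below, a minimizer $f^*$ exists; by the first two paragraphs it satisfies $\mymathbf{1}^\top f^* = D_M$ and $f^* \in \WW_{D_M}$.

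I expect the main obstacle to be the verification that $g_t$ remains a Wardrop equilibrium for all $t \ge 0$ --- which needs the right bundle of facts about $\delta C^M$, $\delta\lambda^M$, and the support of $f^\delta$ assembled beforehand --- together with the step that converts one constant direction of increase, valid across two consecutive breakpoints, into the disjointness contradiction $f^\delta \in \Gamma^j \cap \Gamma^{j+1}$.
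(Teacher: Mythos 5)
Your proposal is correct and follows essentially the same route as the paper's (much terser) proof: show that every feasible point is a WE for its own demand, that WE at $D_M$ are feasible, and that pushing any feasible flow along a nonnegative direction $f^\delta \in \Gamma^M \cap \SOL(\FF_1,A)$ stays a WE for all $t \ge 0$, which forces the optimal demand to be $D_M$. The only difference is that you make explicit the final step the paper leaves as "it follows" — deriving $\mymathbf{1}^\top f^* \ge D_M$ from the disjointness $\Gamma^{j} \cap \Gamma^{j+1} = \emptyset$ of Lemma~\ref{le:gamma-constant} — which is a correct and welcome completion of the sketch.
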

	\ifinclude{
		\begin{proof}
			Since any WE for a demand in $(D_M,\infty)$ satisfies the constraints, we see that the feasible set is non-empty. Also note that due to the affine and non-strict nature of the constraints, this implies that any WE in $\WW_{D_M}$ is also feasible. 
			It is then easy to prove that any solution $f^*$ to the given minimization problem must be a WE, and also that any flow of the form $f^* + \epsilon f^\delta$, where $f^\delta \in \Gamma_{D_M} \cap \FF_{1}$ and $\epsilon > 0$, is a WE as well. This shows that $f^*$ is a WE in the interval $[D_M, \infty)$, and it follows that $\mymathbf{1}^\top f^* = D_M$, which implies $f^* \in f^{D_M}$. 
		\end{proof}
	}

	\section{Braess's paradox for changing demand} \label{sec:BP-results}
	\begin{figure}
		\centering
		\includegraphics[width = 0.6\textwidth]{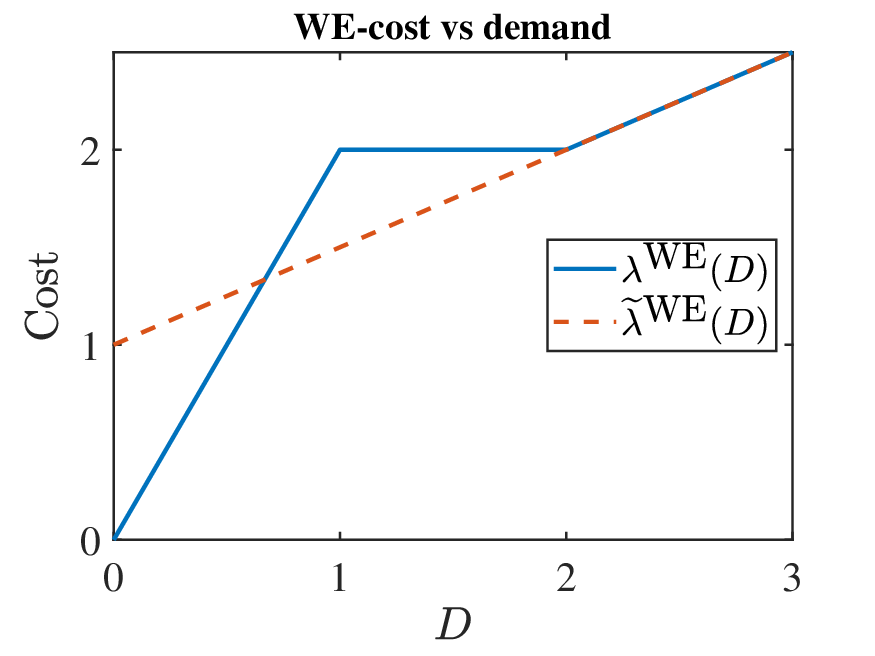}
		\caption{An illustration of BP for a routing game over the Wheatstone network, Figure~\ref{fig:wheatstone}, with and without edge $e_5$ present and with costs defined by \eqref{eq:Wheatstone-simple-path-cost}.}
		\label{fig:wheatstone-cost-bp}
	\end{figure}

	A well-known phenomenon in the context of traffic networks is \textit{Braess's paradox} (BP). A routing game is said to be subject to this paradox when removal of a path from the network leads to lower cost for all participants. The classical example is the one discussed in Example~\ref{ex:secIV-1}, considering the Wheatstone network depicted in Figure~\ref{fig:wheatstone}, where the cost functions are given by \eqref{eq:Wheatstone-simple-edge-costs}. Consequently, path-cost functions are given by \eqref{eq:Wheatstone-simple-path-cost}, and we have an explicit expression for the WE in \eqref{eq:Wheatstone-simple-WE}. Using these expressions, we find that $\lmWE(1) = 2$. Alternatively, we can look at the same network, but with edge $e_5$ removed. The cost functions remain the same, but the path $p_3$ is no longer accessible. In this case the WE is given by $\tillf^D = (\frac{1}{2}, \enskip \frac{1}{2})^\top$, and we have $\tillmWE(1) = 1.5$. Notice that for this level of demand, $D = 1$, removing $e_5$ has improved the situation for all participants.
	
	A natural question now is ``What happens for other levels of demand?''. For the Wheatstone network, the answer to this question is known to some extent. It has been shown that for this network, using affine cost functions, the associated routing game is subject to Braess's paradox on at most a bounded set of demands; i.e., there is an upper bound on the demand above which the routing game is no longer subject to the paradox \cite{EIP-SLP:97, VZ-EA:15}. Still, it is useful to study the given example in a bit more detail to see what happens for different levels of demand. We will therefore analyze this example further. We give explicit expressions for $\lmWE(D)$ and $\tillmWE(D)$, and show that there exists demand where the presence of path $p_3$ is strictly beneficial. 
	\begin{example} \longthmtitle{Evolution of BP} \label{ex:secV}
		\rm{
			\begin{enumerate}[wide= 0pt,label=(\alph*), ref=\ref{ex:secV}\alph*]
				\item \label{ex:secV-1} For the Wheatstone network (Example~\ref{ex:secIV-1} and Figure~\ref{fig:wheatstone})with the edge $e_5$ present we have
				\begin{equation*}
					\lmWE(D) = \begin{cases}
						2D \quad& \text{if } 0 \leq D \leq 1,	\\
						2 \quad & \text{if } 1 \leq D \leq 2,	\\
						\frac{D}{2} + 1 \quad & \text{if } 2 \leq D.
					\end{cases}
				\end{equation*}
				For the network without $e_5$ present, the WE for any $D \geq 0$ is given by
				\begin{equation*}
					\tillf^D = \left(\begin{array}{cc}
						\frac{D}{2},	&\frac{D}{2}
					\end{array}\right)^\top,
				\end{equation*}
				which gives us
				\begin{equation*}
					\tillmWE(D) = \frac{D}{2} + 1.
				\end{equation*}
				The situation is illustrated in Figure~\ref{fig:wheatstone-cost-bp}.
				We see that the presence of path $p_3$ is beneficial when ${0 \leq D < \frac{2}{3}}$, detrimental when $\frac{2}{3} < D < 2$, and neutral when $D \in \{\frac{2}{3}\} \cup [2,\infty)$. Clearly, the game being subject to Braess's paradox at one level of demand does not necessarily imply that adding the responsible path is detrimental overall, and the situation warrants further investigation. The aim of this section is therefore to study how the influence of a set of paths causing BP changes as the demand varies.
				\begin{figure}
					\centering
					\includegraphics[width = 0.6\textwidth]{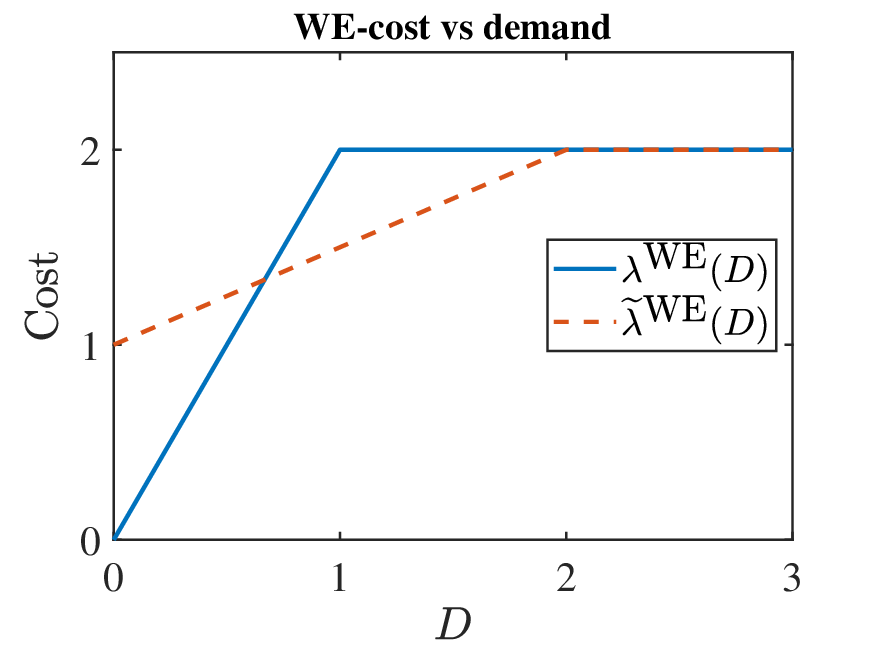}
					\caption{An illustration of BP for a routing game over the network in Figure~\ref{fig:three-node-four-edge}, with and without the path $(e_1,e_4)$ present, and with costs defined by \eqref{eq:Wheatstone-double-cost}.}
					\label{fig:wheatstone-merged-bp}
				\end{figure}
				\item \label{ex:secV-2}
				We note that classically, Braess's paradox refers to the situation where removal of an \emph{edge} (or a set of edges) leads to a lesser cost for all participants, but here we will look at a slightly more generalized form of the paradox, where removal of a \emph{path}, or set of paths, from the network leads to a lesser cost for all participants. The previous example shows an instant where these two cases are the same. However, it is possible that a a routing game is subject to an instance of BP that only emerges as the result of removing a (set  of) path(s), but not as the result of removing a (set of) edge(s). To show this, we revisit Example~\ref{ex:secIV-2}. That is, we consider a routing game over the network in Figure~\ref{fig:three-node-four-edge}, with path-cost given by \eqref{eq:Wheatstone-double-cost}. Figure~\ref{fig:wheatstone-merged-bp} shows the comparison between the cost of this game  and the modified game where path $p_4$ has been removed. Note that the case is essentially identical to that of Example~\ref{ex:secIV-1}. Observe now that the game is subject to BP on the interval $D \in (\frac{2}{3},1)$, but in this case there is no edge which is responsible for the inefficiency. Instead it is the presence of the path $(e_3,e_2)$ that causes BP. \oprocend
			\end{enumerate}
		}
	\end{example}
	\subsection{Preliminaries for analyzing BP}
	To ease the exposition of the upcoming analysis, we introduce definitions and useful facts related to the concepts of \emph{necessary sets}, \emph{modified games} and \emph{directions of decrease}.
	\subsubsection{Necessary sets}
	The set of all subsets of $\PP$ that can not be removed from the game without changing the WE at the demand $D$ is called the \emph{necessary set} at $D$. Formally, we have the following definition.
	\begin{definition}\longthmtitle{Necessary sets}\label{def:necessary}
		For a given $(\PP,\CC)$ we say that a set $\SSnec \subseteq \PP$ is \emph{necessary} at demand $D$ if 
		\begin{equation*}
			f^D_{\SSnec} \neq 0 \text{ for all } f^D \in \WW_{D}. 
		\end{equation*}
		That is, for every WE $f^D$, there exists at least one path in the set $\SSnec$ that takes nonzero flow. We use $\NN_D$ to denote the set of all necessary sets at $D$; that is,
		\begin{equation*}
			\NN_D := \setdef{\SSnec \subseteq \PP}{f^D_{\SSnec} \neq 0 \text{ for all } f^D \in \WW_{D}}.  
		\end{equation*}
		When $\SSS \notin \NN_{D}$ we say that $\SSS$ is \emph{unnecessary} at demand $D$. \oprocend
	\end{definition}
	Although at first glance the necessary set may seem closely related to the used set $\uset_{D}$, we note that unlike $\uset_{D}$ and $\aset_{D}$, the necessary set is not guaranteed to stay constant in between the breakpoints in $\DD$. For instance, in Example~\ref{ex:secIV-2} we have  $\DD = \{0,1,\infty\}$. However, we can deduce from the expression given for $\WW_{D}$ in \eqref{eq:three-node-four-edge-WE} that the singleton set $\{p_3\}$ is necessary at all demands in $(0,2)$ while $\{p_3\}$ is unnecessary for $D \geq 2$.
	\subsubsection{Modified games}
	To better accomodate our path-based notion of BP, we will introduce the concept of a \emph{modified game}. Given a routing game $(\PP,\CC)$, a modified game is constructed by removing a set of paths $\SSrem \subset \PP$ from the game, which is achieved by forcing the flow on paths in $\SSrem$ to be zero, while keeping all cost functions the same. We use $(\tillPP,\CC)$ to denote such a modified game, where $\tillPP = \PP \setminus \SSrem$. We also use the following notation:
	\begin{align*}
		\tillPP &:= \PP \setminus \SSrem,	
		\\
		\tillFF_D &:= \setdef{\tillf \in \realnonnegative^{n}}{\sum_{p \in \tillPP}\tillf_p = D, \enskip \tillf_{\SSrem} = 0},	
		\\
		\tillHH_D &:= \setdef{\tillf \in \real^{n}}{\sum_{p \in \tillPP}\tillf_p = D, \enskip \tillf_{\SSrem} = 0}.
	\end{align*}
	For notational convenience, the dimension of the flows $\tillf \in \tillFF_{D}$ of the modified game is kept equal to $n = \abs{\PP}$.
	For a modified game, we say that $\tillf^D$ is a WE when $\tillf^D \in \tillFF_D$ and for all $p \in \tillPP$ such that $\tillf_p > 0$ we have
	\begin{equation*}
		C_p(\tillf^D) \leq C_r(\tillf^D) \quad \text{for all } r \in \tillPP.
	\end{equation*}
	Analogously, we use the notation $\widetilde{(\cdot)}$ for other items related to the modified game. We write
	\begin{equation}\label{eq:items-modified-game}
		\begin{aligned}
			\tillWW_{D}		&:= \SOL(\tillFF_D,C),			\\
			\tillmvec(D)	&:= C(\tillf^D) \text{ for any } \tillf^D \in \tillWW_{D},									\\
			\tilaset_D		&:= \setdef{p  \in  \tillPP }{ \tillmvec_p(D)  \leq \tillmvec_r(D) \text{ for all } r \in \tillPP},								\\
			\tiluset_D		&:= \setdef{p \in \tillPP}{ \exists \tillf^D \in \tillWW_D \text{ such that } \tillf^D_p > 0},	\\
			\tillmWE(D) &:= \tillmvec_p(D) \text{ for any } p \in \tilaset_{D},	\\
			\tillMM_D		&:= \setdef{\tillf^{\delta}  \in  \tillHH_1 }{\tillf_{\tilaset_D \setminus \tiluset_D}^{\delta} \geq 0, \enskip \tillf_{(\tilaset_D)^c}^{\delta} = 0},			\\
			\tillGamma_D	&:= \SOL(\tillMM_D,A).			
		\end{aligned}
	\end{equation}
	Similarly, we use $\tillDD$ to denote the set of breakpoints of a modified game and use $\tillD_i$ to denote the $i$-th breakpoint of $\tillDD$. We write $\tillM$ for the index of the greatest finite valued breakpoint $\tillD_{\tillM}$ in $\tillDD$ and use $\delta \widetilde{\lambda}^i$ and $\delta \widetilde{C}^i$ to denote the directions in which respectively $\tillmWE$ and $\tillmvec$ evolve on the interval between $\tillD_i$ and $\tillD_{i+1}$.
	
	When we need to discuss multiple modified games simultanuously, as is sometimes the case in a proof, we use a similar notation for concepts related to these modified games, replacing $\widetilde{(\cdot)}$ with $\widecheck{(\cdot)}$ or with $(\cdot)'$, $(\cdot)''$, or $(\cdot)'''$.  (e.g. we use $\widecheck{\PP}$, and $\PP'$, $\PP''$, and $\PP'''$ to denote the related sets of paths, and similarly for the feasible set, WE-cost, etc.)
	
	We note that a modified game is technically not a routing game as presented in Section~\ref{sec:model}. The reason for this is that the feasible set has additional restrictions and the definition of WE for a modified game only takes into account a subset of the paths in $\PP$. Of course, instead of setting $\tillf_{\SSrem} = 0$, we could simply drop this set of paths from consideration, and let $\tillf \in \realnonnegative^{\abs{\tillPP}}$. In this case we could define $\tillFF_D$ without imposing additional restrictions. However, removing a set of paths from a routing game can result in a situation which can no longer be represented by a graph. That is, in that case there does not exists a graph, an associated origin-destination pair and a set of cost functions such that the resulting routing game over all paths from the origin to the destination has the same cost function as that of the modified game. However, the results and proofs in this text are in no way dependent on the existence of such a graph representation for the given set of paths and associated cost functions. Therefore all results that we established for routing games also hold for modified games. We specifically highlight that $\tillWW_{D}$ and $\tillGamma_D$, as defined in~\eqref{eq:items-modified-game}, are therefore indeed the set of WE of the modified game and the set of directions of increase of the modified game respectively, as the notation suggests.
	\subsubsection{Directions of decrease}
	The results in Section~\ref{sec:variation-of-WE} consider the directions of \emph{increase} in which the WE moves as the demand increases, but for some of the upcoming statements it is useful to instead consider the set of \emph{directions of decrease}:
	\begin{definition} \longthmtitle{Set of directions of decrease} \label{def:directions-of-decrease}
		Let $(\PP,\CC,D)$ be given. The set of \emph{directions of decrease} $\Gamma^-_D$
		is the set of all directions $f^\delta \in \HH_{-1}$ in which the flow can be decreased, starting from some flow in $\WW_{D}$, such that the new flow is a WE as long as the decrease is small enough. That is,
		\begin{align*}
			\Gamma^-_{D} := \setdef{f^\delta \in \HH_{-1}
			}{
				\exists f^D \in \WW_{D}, \enskip \bar{\epsilon} > 0 \text{ such that } f^{D} + \epsilon f^\delta \in \WW_{D - \epsilon} \enskip \forall \epsilon \in [0,\bar{\epsilon}]}.
		\end{align*}
	\end{definition}
	Similar to before, we define the set $\MM_{D}^-$ of feasible descent directions as
	\begin{equation*} 
		\MM_D^-  :=  \setdef{f^{\delta}  \in  \HH_{-1} }{f_{\aset_D \setminus \uset_D}^{\delta} \geq 0, \enskip  f_{(\aset_D)^c}^{\delta} = 0}.
	\end{equation*}
	The arguments made in Section~\ref{sec:variation-of-WE} can then be repeated to obtain the following modified version of Proposition~\ref{prop:characterize-directions-of-increase}:
	\begin{lemma}\longthmtitle{Directions of decrease as solutions to a VI} \label{charaterize-gamma-minus}
		Let $(\PP,\CC,D)$ be given. Then,
		\begin{equation*}
			{\Gamma_D^- = \SOL(\MM_D^-,A)}.
		\end{equation*}
	\end{lemma}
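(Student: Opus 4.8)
The statement is the decreasing-demand twin of Proposition~\ref{prop:characterize-directions-of-increase}, and the intention is to obtain it by replaying that proof with the sign of $f^\delta$ flipped. Three ingredients are needed. First, the analogue of Lemma~\ref{le:gamma-m}, namely $\Gamma_D^- \subseteq \MM_D^-$. Second, the fact that $f^\delta \mapsto A f^\delta$ is constant on $\Gamma_D^-$ (the decreasing-demand counterpart of Proposition~\ref{prop:evolution-lmvec}). Third, Proposition~\ref{prop:properties-solution-VI}, which already applies as is: taking $T=-1$, $\QQ=\aset_D$, $\RR=\uset_D$ there turns $\MM$ into exactly $\MM_D^-$, so $\SOL(\MM_D^-,A)$ is nonempty, all its elements share a common value $A f^\delta = \delta C$, and any path carrying positive $f^\delta$ for some $f^\delta\in\SOL(\MM_D^-,A)$ (as well as any path in $\uset_D$) attains $\min_{r\in\aset_D}A_r f^\delta$. (For $D=0$ both sides are empty, so assume $D>0$.)

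First I would prove $\Gamma_D^- \subseteq \MM_D^-$. Given $f^\delta\in\Gamma_D^-$, let $(D_i,D_{i+1}]$ be the piece containing $D$ and pick $\epsilon>0$ small enough that $T:=D-\epsilon\in(D_i,D_{i+1})$ and $f^D+\epsilon f^\delta\in\WW_{D-\epsilon}=\WW_T$ for some $f^D\in\WW_D$; set $f^T:=f^D+\epsilon f^\delta$, so $f^\delta=\epsilon^{-1}(f^T-f^D)\in\HH_{-1}$. Then $\uset_T=\Ju_i$, and $\Ju_i\subseteq\aset_D$: indeed $\Ju_i\subseteq\Ja_i$, with $\Ja_i=\aset_D$ if $D\in(D_i,D_{i+1})$ and $\Ja_i\subseteq\aset_{D_{i+1}}$ by~\eqref{eq:series-inclusions-two} if $D=D_{i+1}$. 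Hence $f^T_{(\aset_D)^c}=0$, and since $f^D_{(\aset_D)^c}=0$ by the WE condition, $f^\delta_{(\aset_D)^c}=0$. Likewise $f^D_{\aset_D\setminus\uset_D}=0$ and $f^T\ge 0$ give $f^\delta_{\aset_D\setminus\uset_D}\ge 0$, so $f^\delta\in\MM_D^-$. For the constancy of $A f^\delta$, combine the affine form~\eqref{eq:path-cost} with~\eqref{eq:lm-t-d}: from $C(f^T)=C(f^D)+\epsilon A f^\delta$ we get $\lmvec(T)-\lmvec(D)=\epsilon A f^\delta$, while~\eqref{eq:lm-t-d} on $[D_i,D_{i+1}]$ gives $\lmvec(T)-\lmvec(D)=(T-D)\,\delta C^i=-\epsilon\,\delta C^i$; hence $A f^\delta=-\delta C^i$ for every $f^\delta\in\Gamma_D^-$.

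With these in hand the two inclusions mimic Proposition~\ref{prop:characterize-directions-of-increase}. For $\SOL(\MM_D^-,A)\subseteq\Gamma_D^-$, fix $f^D\in\WW_D$ with $f^D_{\uset_D}>0$ (obtained by convexity of $\WW_D$ exactly as in that proof) and, for $f^\delta\in\SOL(\MM_D^-,A)$ and small $\epsilon>0$, put $f^{D-\epsilon}:=f^D+\epsilon f^\delta$. Since $f^\delta$ can be negative only on $\uset_D$, where $f^D$ is strictly positive, $f^{D-\epsilon}\ge 0$ for $\epsilon$ small, and $\mymathbf{1}^\top f^{D-\epsilon}=D-\epsilon$. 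If $f^{D-\epsilon}_p>0$ then $f^D_p>0$ or $f^\delta_p>0$, so $p\in\aset_D$ and $A_p f^\delta=\min_{r\in\aset_D}A_r f^\delta$ by Proposition~\ref{prop:properties-solution-VI}-\ref{pr:sol:3}/\ref{pr:sol:4}; together with $C_p(f^D)=\min_{r\in\PP}C_r(f^D)$ and $C_p(f^D)<C_r(f^D)$ for $r\notin\aset_D$, this yields $C_p(f^{D-\epsilon})=C_p(f^D)+\epsilon A_p f^\delta\le\min_{r\in\PP}\bigl(C_r(f^D)+\epsilon A_r f^\delta\bigr)=\min_{r\in\PP}C_r(f^{D-\epsilon})$ for all sufficiently small $\epsilon$ (only finitely many $r$). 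Thus $f^{D-\epsilon}\in\WW_{D-\epsilon}$ and $f^\delta\in\Gamma_D^-$. For the reverse inclusion, given $f^\delta\in\Gamma_D^-$ we already know $\Gamma_D^-\cap\SOL(\MM_D^-,A)\ne\emptyset$, so pick $f^{\delta_0}$ in it; then $f^\delta\in\MM_D^-$ by the first ingredient and $A f^\delta=A f^{\delta_0}$ by the second, so Proposition~\ref{prop:properties-solution-VI}-\ref{pr:sol:2} forces $f^\delta\in\SOL(\MM_D^-,A)$.

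The only place that needs genuine care — and hence the main obstacle — is bookkeeping the direction of the active/used-set inclusions when the demand decreases: one must be sure that moving down a feasible-descent direction cannot secretly route flow onto paths inactive at $D$, which is exactly the inclusion $\uset_T\subseteq\aset_D$ for $T$ just below $D$, and at the right endpoint $D=D_{i+1}$ this is the half~\eqref{eq:series-inclusions-two} of Lemma~\ref{lem:inclusion-active-and-used} rather than~\eqref{eq:series-inclusions-one}. Everything else is the increase-direction argument read with the demand decreasing and the sign of $f^\delta$ reversed.
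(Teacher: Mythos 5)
Your proposal is correct and is exactly the argument the paper intends: the paper offers no separate proof of this lemma, stating only that ``the arguments made in Section~\ref{sec:variation-of-WE} can then be repeated,'' and your write-up is a faithful replay of the proof of Proposition~\ref{prop:characterize-directions-of-increase} with the sign of $f^\delta$ flipped, correctly handling the one genuinely asymmetric point (using the half-open piece $(D_i,D_{i+1}]$ and the inclusion~\eqref{eq:series-inclusions-two} of Lemma~\ref{lem:inclusion-active-and-used} so that $\uset_T\subseteq\aset_D$ for $T$ just below $D$).
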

	\subsubsection{Properties of $V$}
	We recall from Section~\ref{sec:preliminaries} the function $V$, and similarly define the function $\widetilde{V}$ for a modified game as
	\begin{equation} \label{eq:V-for-modified}
		\begin{aligned}
			V(D) = &\min_{f \in \FF_D} \sum_{e_k \in \EE} \int_0^{f_{e_k}}C_{e_k}(z)dz,
			\\
			\tillV(D) := &\min_{\tillf \in \tillFF_D} \sum_{e_k \in \EE} \int_0^{\tillf_{e_k}}C_{e_k}(z)dz.
		\end{aligned}
	\end{equation}
	Using the inclusion $\tillFF_D \subseteq \FF_{D}$ in combination with Proposition~\ref{prop:beckmann} then yields us the following observations on modified games and necessary sets:
	\begin{lemma} \longthmtitle{Relations between the original and the modified game} \label{lem:dominated-V}
		Let $(\PP,\CC,D)$ and $\SSrem \subset \PP$ be given, and let $V$ and $\tillV$ be given by \eqref{eq:V-for-modified}. The following then hold:
		\begin{itemize}
			\item $V(D)	\leq \tillV(D)$,
			\item $V(D)	= \tillV(D)$ if and only if $\SSrem \notin \NN_D$,	
			\item if $\SSrem \notin \NN_D$, then $\tillf^D \in \tillWW_D$ if and only if ${\tillf^D_{\SSrem} = 0}$ and ${\tillf^D \in \WW_{D}}$. As a consequence, we then have ${\lmWE(D) = \tillmWE(D)}$.
		\end{itemize}
	\end{lemma}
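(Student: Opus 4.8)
The engine of the proof is Proposition~\ref{prop:beckmann}, which identifies $\WW_D$ with the set of minimizers of the potential $V$; since (as noted in the text) every routing-game result transfers verbatim to modified games, the same proposition identifies $\tillWW_D$ with the set of minimizers of $\tillV$. So the plan is to argue entirely at the level of these two minimization problems. The one structural fact to record first is the inclusion $\tillFF_D \subseteq \FF_D$: if $\tillf \in \tillFF_D$, then $\tillf_{\SSrem} = \mymathbf{0}$ forces $\sum_{p \in \PP}\tillf_p = \sum_{p \in \tillPP}\tillf_p = D$, and $\tillf \ge \mymathbf{0}$, so $\tillf \in \FF_D$; moreover the edge flows $\tillf_{e_k}$, and hence both potential functionals, agree on $\tillFF_D$ because the removed paths carry no flow. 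Part~1 ($V(D) \le \tillV(D)$) is then immediate: $V(D)$ is the minimum of the functional over $\FF_D$ while $\tillV(D)$ is the minimum of the same functional over the smaller set $\tillFF_D$.

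For Part~2, for the ``if'' direction I would assume $\SSrem \notin \NN_D$ and pick, by Definition~\ref{def:necessary}, a WE $f^D \in \WW_D$ with $f^D_{\SSrem} = \mymathbf{0}$; this $f^D$ lies in $\tillFF_D$ and, by Proposition~\ref{prop:beckmann}, attains the value $V(D)$, so $\tillV(D) \le V(D)$, and equality follows from Part~1. For the ``only if'' direction I would take $V(D) = \tillV(D)$ and let $\tillf^D$ be a minimizer of $\tillV$ (a WE of the modified game); since $\tillf^D \in \tillFF_D \subseteq \FF_D$ attains the value $\tillV(D) = V(D) = \min_{\FF_D}$, it is a minimizer of $V$, hence $\tillf^D \in \WW_D$, and $\tillf^D_{\SSrem} = \mymathbf{0}$ by feasibility, so $\SSrem \notin \NN_D$.

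Part~3 uses the equality $V(D) = \tillV(D)$ supplied by Part~2. If $\tillf^D \in \tillWW_D$, then it minimizes $\tillV$, hence attains $\tillV(D) = V(D)$ over $\tillFF_D \subseteq \FF_D$, hence minimizes $V$ and lies in $\WW_D$; and $\tillf^D_{\SSrem} = \mymathbf{0}$ by membership in $\tillFF_D$. Conversely, if $\tillf^D_{\SSrem} = \mymathbf{0}$ and $\tillf^D \in \WW_D$, then $\tillf^D \in \tillFF_D$ attains $V(D) = \tillV(D)$, hence minimizes $\tillV$ and lies in $\tillWW_D$. For the final equality $\lmWE(D) = \tillmWE(D)$ (for $D > 0$; $D = 0$ being degenerate), I would take $\tillf^D \in \tillWW_D$, which by the above also lies in $\WW_D$, so $\lmWE(D) = \min_{p \in \PP} C_p(\tillf^D)$ and $\tillmWE(D) = \min_{p \in \tillPP} C_p(\tillf^D)$; since $\tillPP \subseteq \PP$ the former is $\le$ the latter, and for the reverse inequality I would note that, as $D > 0$, some $p^\ast \in \tillPP$ carries positive flow under $\tillf^D$, so the WE condition~\eqref{eq:WE-condition} gives $C_{p^\ast}(\tillf^D) = \min_{p \in \PP} C_p(\tillf^D)$, whence the two minima coincide.

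I do not expect a serious obstacle: everything reduces to the elementary principle that a feasible point attaining the minimum value over a set is a minimizer, applied repeatedly across the inclusion $\tillFF_D \subseteq \FF_D$ together with Proposition~\ref{prop:beckmann}. The only points needing care are the bookkeeping that the two potential functionals genuinely agree on $\tillFF_D$, and the small argument in the last consequence that the cost-minimum over $\PP$ is realized inside $\tillPP$ at a modified-game WE, which is where the hypothesis $D>0$ enters.
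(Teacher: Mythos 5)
Your proof is correct and takes exactly the route the paper intends: the paper supplies no written proof of this lemma, only the remark that it follows from the inclusion $\tillFF_D \subseteq \FF_D$ combined with Proposition~\ref{prop:beckmann}, and your argument is precisely the careful elaboration of that remark. Your side observation that the final cost equality $\lmWE(D)=\tillmWE(D)$ needs $D>0$ is also well taken --- at $D=0$ it can fail (e.g.\ for the Wheatstone example, where $\lmWE(0)=0$ but $\tillmWE(0)=1$), a caveat the paper's statement omits.
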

	\begin{figure}
		\centering
		\includegraphics[width = 0.6\textwidth]{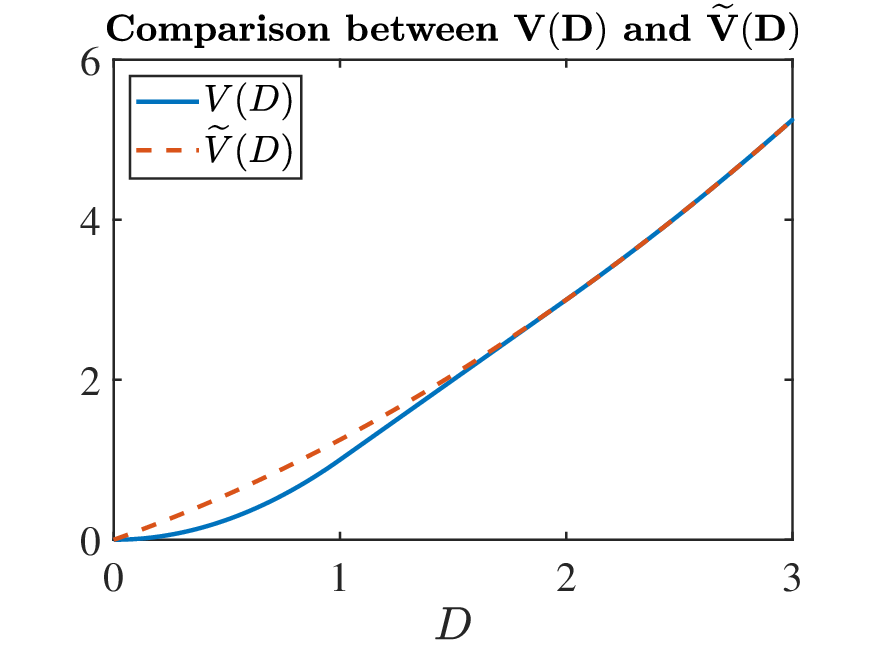}
		\caption{A comparison between $V$ and $\tillV$ for the Wheatstone network in Figure~\ref{fig:wheatstone}, with and without the path $p_3 = (e_1,e_5,e_4)$ present and with costs defined by \eqref{eq:Wheatstone-simple-path-cost}.}
		\label{fig:VD-comparison}
	\end{figure}
	The above statements are illustrated in Figure~\ref{fig:VD-comparison}, which shows the functions $V$ and $\tillV$ for the routing games considered in Example~\ref{ex:secV-1}. Here, the path removed from $\PP$ to form the modified game is $\SSrem = \{p_3\}$. Notice that for demand in the interval $(0,2)$, the singleton set $\SSrem = \{p_3\}$ is a necessary set and as expected, we have  $V(D) <  \tillV(D)$ for all $D \in (0,2)$. Moreover, exactly at the point $D = 2$, the set $\SSrem$ is no longer necessary and we have $V(2) = \tillV(2)$.
	\subsection{Evolution of WE-costs}
	Our upcoming discussion on BP depends for a large part on analysis of the WE-cost $\lmWE$. We show how knowledge on the evolution of this cost can be leveraged to detect when a game is subject to BP. In particular, we look at how the slope of $\lmWE$ changes at the breakpoints in $\DD$. Since $\lmWE$ is potentially non-differentiable at these points, it is useful to consider both the left-hand and right-hand side derivatives:
	\begin{equation*}
		\delta \lambda^+(T) : =\frac{\partial^+}{\partial D} \lmWE(D)\Big|_{D=T},	\quad \delta \lambda^-(T) := \frac{\partial^-}{\partial D} \lmWE(D)\Big|_{D=T}.	
	\end{equation*}
	Note that whenever $T \notin \DD$, we have $\delta \lambda^+(T) = \delta \lambda^-(\bar{D})$. Using Proposition~\ref{prop:properties-solution-VI} with $\MM = \MM_{D}$ and $\MM = \MM^-_{D}$, we then have the following observation:
	\begin{corollary} \longthmtitle{Relation between $\SOL(\MM_{D},A)$ and slope of $\lmWE$} \label{cor:delta-lambda-representation}
		Let $(\PP,\CC,D)$ be given. We have
		\begin{align*}
			\delmplus(D) &= \min_{r \in \aset_{D}} A_r f^\delta \quad \hspace{10 pt} \text{for all } f^\delta \in \SOL(\MM_{D},A),
			\\
			\delmminus(D) &= -\min_{r \in \aset_{D}} A_r f^\delta \quad \text{for all } f^\delta \in \SOL(\MM_{D}^-,A).
		\end{align*}
	\end{corollary}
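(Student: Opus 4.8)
The plan is to prove both identities by essentially the same computation, so I will carry out the right-hand case in detail and then indicate the symmetric modification for the left-hand one. Fix $f^\delta \in \SOL(\MM_D,A)$. By Proposition~\ref{prop:characterize-directions-of-increase} this set equals $\Gamma_D$, so Definition~\ref{def:directions-of-increase} supplies $f^D \in \WW_D$ and $\bar{\epsilon}>0$ with $f^{D+\epsilon}:=f^D+\epsilon f^\delta \in \WW_{D+\epsilon}$ for every $\epsilon\in[0,\bar{\epsilon}]$. Since $C(f)=Af+\beta$ by~\eqref{eq:path-cost}, we have $C(f^{D+\epsilon})=C(f^D)+\epsilon A f^\delta$. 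The first key step is: for $\epsilon\in(0,\bar{\epsilon}]$ pick any path $p$ carrying positive flow under $f^{D+\epsilon}$ (one exists since $D+\epsilon>0$); because $f^{D+\epsilon}$ is a Wardrop equilibrium, $C_p(f^{D+\epsilon})=\lmWE(D+\epsilon)$.

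The second key step is to show that this $p$ satisfies $p\in\aset_D$ and $A_pf^\delta=\min_{r\in\aset_D}A_rf^\delta$. I would distinguish the two reasons $p$ can carry flow: if $f^D_p>0$ then $p\in\uset_D\subseteq\aset_D$ and Proposition~\ref{prop:properties-solution-VI}-\ref{pr:sol:4} (invoked with $\QQ=\aset_D$, $\RR=\uset_D$, so that $\MM=\MM_D$) gives $A_pf^\delta=\min_{r\in\aset_D}A_rf^\delta$; if instead $f^\delta_p>0$ then $p\in\aset_D$ by the definition of $\MM_D$ and Proposition~\ref{prop:properties-solution-VI}-\ref{pr:sol:3} gives the same equality. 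In both cases $p\in\aset_D$, hence $C_p(f^D)=\lmvec_p(D)=\lmWE(D)$, and therefore $\lmWE(D+\epsilon)=C_p(f^D)+\epsilon A_pf^\delta=\lmWE(D)+\epsilon\min_{r\in\aset_D}A_rf^\delta$. Dividing by $\epsilon$ and sending $\epsilon\to0^+$ yields $\delmplus(D)=\min_{r\in\aset_D}A_rf^\delta$.

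For the left-hand derivative I would run the identical argument with $\Gamma_D^-=\SOL(\MM_D^-,A)$ (Lemma~\ref{charaterize-gamma-minus}) in place of $\Gamma_D$, using the witness $f^{D-\epsilon}:=f^D+\epsilon f^\delta\in\WW_{D-\epsilon}$; the set $\MM_D^-$ is of exactly the form treated in Proposition~\ref{prop:properties-solution-VI} (with $T=-1$, $\QQ=\aset_D$, $\RR=\uset_D$), so the same case analysis applies and gives $\lmWE(D-\epsilon)=\lmWE(D)+\epsilon\min_{r\in\aset_D}A_rf^\delta$; then $\delmminus(D)=\lim_{\epsilon\to0^+}\big(\lmWE(D-\epsilon)-\lmWE(D)\big)/(-\epsilon)=-\min_{r\in\aset_D}A_rf^\delta$. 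The step I expect to require the most care is the second one: simultaneously forcing the flow-carrying path $p$ into $\aset_D$ (so its cost at the base flow $f^D$ is exactly $\lmWE(D)$) and obtaining the ``minimal marginal increase'' equality for it; this is where parts~\ref{pr:sol:3} and~\ref{pr:sol:4} of Proposition~\ref{prop:properties-solution-VI} do the work, and one must check that the two sub-cases ($f^D_p>0$ versus $f^\delta_p>0$) exhaust the possibilities. Minor points: the right-hand side is automatically independent of the chosen $f^\delta$ since $\delmplus(D)$ and $\delmminus(D)$ are fixed scalars (consistent with Proposition~\ref{prop:properties-solution-VI}-\ref{pr:sol:2}), and at $D=0$ one has $\MM_0^-=\emptyset$, so the second identity is vacuous there.
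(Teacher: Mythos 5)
Your proof is correct and fills in exactly the argument the paper gestures at (the paper offers no explicit proof beyond "Using Proposition~\ref{prop:properties-solution-VI} with $\MM = \MM_D$ and $\MM = \MM_D^-$"): you identify $\SOL(\MM_D,A)$ with $\Gamma_D$ via Proposition~\ref{prop:characterize-directions-of-increase}, track a flow-carrying path along the perturbed WE, and use parts~\ref{pr:sol:3} and~\ref{pr:sol:4} of Proposition~\ref{prop:properties-solution-VI} to pin its marginal cost to $\min_{r\in\aset_D}A_rf^\delta$. The case analysis ($f^D_p>0$ versus $f^\delta_p>0$) is exhaustive since $f^D\geq 0$, and the sign handling in the left-derivative computation is right, so no gaps.
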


	We see that sets of the form $\SOL(\MM_D,A)$ are important when studying $\delmplus(D)$ and $\delmminus(D)$. Our next result therefore concerns sets of this form and derives properties that will used in studying the evolution of cost. 
	\begin{lemma} \label{lem:dominated-feasible-set} \longthmtitle{Properties of $\SOL(\MM,A)$}
		Let $\PP$, $\CC \subset \classfn$, and sets $\QQ,\RR,\tillQQ,\tillRR \subseteq \PP$ satisfying $\RR \subseteq \QQ$ and $\tillRR \subseteq \tillQQ$ be given. In addition, let
		\begin{align*}
			\MM &:= \setdef{f^\delta \in \HH_1}{f^\delta_{\QQ \setminus \RR} \geq 0, \quad f^\delta_{\QQ^c} = 0},	
			\\
			\tillMM &:= \setdef{f^\delta \in \HH_1}{f^\delta_{\tillQQ \setminus \tillRR} \geq 0, \quad f^\delta_{\tillQQ^c} = 0},	\\
			\MM^- &:= \setdef{f^\delta \in \HH_{-1}}{f^\delta_{\QQ \setminus \RR} \geq 0, \quad f^\delta_{\QQ^c} = 0},	\\
			\tillMM^- &:= \setdef{f^\delta \in \HH_{-1}}{f^\delta_{\tillQQ \setminus \tillRR} \geq 0, \quad f^\delta_{\tillQQ^c} = 0}.
		\end{align*}
		For given $f^\delta \in \SOL(\MM,A)$, $\tillf^\delta \in \SOL(\tillMM,A)$, ${f^{\delta-} \in \SOL(\MM^-,A)}$ and ${\tillf^{\delta-} \in \SOL(\tillMM^-,A)}$,
		if $\tillMM \subseteq \MM$ (respectively, $\tillMM^- \subseteq \MM^-)$ then we have
		\begin{align*}
			\min_{r \in \tillQQ} A_r \tillf^\delta &\geq 	\min_{r \in \QQ} A_r f^\delta,	\\
			\text{(respectively}, \quad \min_{r \in \tillQQ} A_r \tillf^{\delta-} &\leq 	\min_{r \in \QQ} A_r f^{\delta-}).
		\end{align*}
	\end{lemma}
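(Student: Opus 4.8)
The plan is to push everything through the equivalent convex-quadratic-program description of the four solution sets and then invoke monotonicity of the optimal value under shrinking of the feasible set. Note first that Proposition~\ref{prop:properties-solution-VI} applies verbatim to each of $\MM,\tillMM$ (with hyperplane parameter $T=1$) and to $\MM^-,\tillMM^-$ (with $T=-1$), since all four sets have exactly the form assumed there. In particular, because $A$ is symmetric positive semidefinite, $\SOL(\MM,A)$ is precisely the set of minimizers of $f\mapsto\tfrac12 f^\top A f$ over $\MM$ (the observation underlying the proof of Proposition~\ref{prop:properties-solution-VI}-\ref{pr:sol:1}), and analogously for the other three sets.

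The key step is an identity: for any $f^\delta\in\SOL(\MM,A)$ one has $(f^\delta)^\top A f^\delta=\big(\sum_{p\in\PP}f^\delta_p\big)\min_{r\in\QQ}A_rf^\delta$. Indeed, writing $(f^\delta)^\top A f^\delta=\sum_p f^\delta_p(A_pf^\delta)$ and using $f^\delta_{\QQ^c}=0$, only $p\in\QQ$ contribute, and for each such $p$ with $f^\delta_p\neq 0$ we have $A_pf^\delta=\min_{r\in\QQ}A_rf^\delta$: by Proposition~\ref{prop:properties-solution-VI}-\ref{pr:sol:4} if $p\in\RR$, and by Proposition~\ref{prop:properties-solution-VI}-\ref{pr:sol:3} if $p\in\QQ\setminus\RR$ (the constraint $f^\delta_{\QQ\setminus\RR}\ge 0$ upgrades $f^\delta_p\neq 0$ to $f^\delta_p>0$, so part~\ref{pr:sol:3} applies). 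Factoring out the common value gives the identity. Since $f^\delta\in\HH_1$ together with $f^\delta_{\QQ^c}=0$ forces $\sum_p f^\delta_p=1$, this reads $(f^\delta)^\top A f^\delta=\min_{r\in\QQ}A_rf^\delta$, and likewise $(\tillf^\delta)^\top A\tillf^\delta=\min_{r\in\tillQQ}A_r\tillf^\delta$; for the descent directions the coordinates sum to $-1$, so instead $(f^{\delta-})^\top A f^{\delta-}=-\min_{r\in\QQ}A_rf^{\delta-}$ and $(\tillf^{\delta-})^\top A\tillf^{\delta-}=-\min_{r\in\tillQQ}A_r\tillf^{\delta-}$.

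Combining the two steps closes the proof. If $\tillMM\subseteq\MM$, then $\tillf^\delta$ is feasible for the program $\min\{\tfrac12 f^\top Af:f\in\MM\}$, which $f^\delta$ solves, so $\tfrac12(f^\delta)^\top A f^\delta\le\tfrac12(\tillf^\delta)^\top A\tillf^\delta$; substituting the identities yields $\min_{r\in\QQ}A_rf^\delta\le\min_{r\in\tillQQ}A_r\tillf^\delta$, the asserted inequality. The descent case is identical modulo a sign: $\tillMM^-\subseteq\MM^-$ gives $\tfrac12(f^{\delta-})^\top Af^{\delta-}\le\tfrac12(\tillf^{\delta-})^\top A\tillf^{\delta-}$, i.e. $-\min_{r\in\QQ}A_rf^{\delta-}\le-\min_{r\in\tillQQ}A_r\tillf^{\delta-}$, which rearranges to $\min_{r\in\tillQQ}A_r\tillf^{\delta-}\le\min_{r\in\QQ}A_rf^{\delta-}$. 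I do not expect a genuine obstacle; the only part that needs care is the case analysis in the key step — checking that every coordinate where a VI solution is nonzero sits at the minimal row-value — which is precisely what parts~\ref{pr:sol:3} and~\ref{pr:sol:4} of Proposition~\ref{prop:properties-solution-VI} deliver, as long as the hypotheses $\RR\subseteq\QQ$ and $\tillRR\subseteq\tillQQ$ are kept in view.
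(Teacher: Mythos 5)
Your proof is correct. The first half coincides with the paper's: both establish the identity $(f^\delta)^\top A f^\delta = \min_{r\in\QQ}A_r f^\delta$ (and its analogues, with the sign flip for the $\HH_{-1}$ case) by checking, via Proposition~\ref{prop:properties-solution-VI}-\ref{pr:sol:3} and -\ref{pr:sol:4}, that every coordinate where a solution is nonzero attains the minimal row value. Where you diverge is the comparison step. The paper argues by contradiction: it assumes $\min_{r\in\tillQQ}A_r\tillf^\delta < \min_{r\in\QQ}A_rf^\delta$, invokes the VI inequality $(Af^\delta)^\top(\tillf^\delta - f^\delta)\ge 0$ (valid since $\tillf^\delta\in\tillMM\subseteq\MM$), and derives a contradiction with positive semidefiniteness of $A$ applied to $\tillf^\delta - f^\delta$. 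You instead pass to the equivalent quadratic program $\min_{f\in\MM}\tfrac12 f^\top Af$ (the equivalence the paper itself uses in proving Proposition~\ref{prop:properties-solution-VI}-\ref{pr:sol:1}) and observe that the optimal value can only increase when the feasible set shrinks to $\tillMM$. The two arguments use the same ingredients — the VI inequality is the first-order optimality condition for the QP, and the PSD step is its convexity — but yours packages them into a direct one-line monotonicity argument rather than a contradiction, which is slightly cleaner and handles the ascent and descent cases symmetrically. No gaps.
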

	\ifinclude{
		\begin{proof}
			We first prove the result for the case $\tilde{\MM} \subseteq \MM$. For $f^\delta \in \SOL(\MM,A)$, let $p \in \PP$ be a path satisfying $f^\delta_p \neq 0$. Then, either $f^\delta_p > 0$ or $p \in \RR$. For both these cases, from Proposition~\ref{prop:properties-solution-VI}, we have $A_p f^\delta = \min_{r \in \QQ}A_r f^\delta$. Using this and the fact that ${f^\delta \in \HH_1}$ and $f^\delta_{\QQ^c} = 0$, we obtain ${(f^\delta)^\top A f^\delta = \min_{r \in \QQ}A_r f^\delta}$. Similarly we find $(\tillf^\delta)^\top A \tillf^\delta = \min_{r \in \tillQQ}A_r \tillf^\delta$.
			Assume for the sake of contradiction that
			\begin{equation}\label{eq:dominated-feasible-set-contradiction-1}
				\min_{r \in \tillQQ}A_r \tillf^\delta < \min_{r \in \QQ} A_r f^\delta.
			\end{equation}
			Since $f^\delta \in \SOL(\MM,A)$ and $\tillf^\delta \in \tillMM \subseteq \MM$, we have 
			\begin{equation}\label{eq:VI-delta-deltatilde}
				(f^\delta)^\top A (\tillf^\delta - f^\delta) \geq 0,
			\end{equation}
			We now have the following derivation 
			\begin{align*}
				(\tillf^\delta)^\top A \tillf^\delta = \min_{r \in \tillQQ}A_r \tillf^\delta	 < \min_{r \in \QQ}A_r f^\delta	& = (f^\delta)^\top A f^\delta	
				\\
				&\leq (f^{\delta})^\top A \tillf^\delta,
			\end{align*}
			where the first inequality is due to~\eqref{eq:dominated-feasible-set-contradiction-1} and the second inequality follows from~\eqref{eq:VI-delta-deltatilde}.
			The above implies
			\begin{equation} \label{eq:dominated-feasible-set-contradiction-2}
				(\tillf^\delta)^\top A(f^\delta - \tillf^\delta) > 	0.
			\end{equation}
			On the other hand, since $A$ is positive semi-definite
			\begin{equation} \label{eq:semidefinite-delta-deltatilde}
				(\tillf^{\delta} - f^{\delta})^\top A(f^{\delta} - \tillf^{\delta}) \leq 0.
			\end{equation}
			Expanding the left-hand side of this inequality and using~\eqref{eq:VI-delta-deltatilde}, we obtain ${(\tillf^{\delta})^\top A(f^{\delta} - \tillf^{\delta}) \leq 0}$, which contradicts \eqref{eq:dominated-feasible-set-contradiction-2}. Therefore the premise is false and the proof concludes. For the case $\tillMM^- \subseteq \MM^-$, the result can be proven using the same arguments, where the direction of the inequality reverses since $f^{\delta-}$ is in $\HH_{-1}$ rather than $\HH_1$.
		\end{proof}
	}
	Using the above result, we obtain the following:
	\begin{lemma} \longthmtitle{Slope of $\lmWE$ for constant used and active set} \label{lem:simpe-cost-evolution}
		Let $(\PP,\CC)$, $D_{i} \in \DD$ and $i \in [M]$ be given. Then,
		\begin{align*}
			\Ju_{i-1}  &=  \uset_{D_{i}} \Rightarrow \delta \lambda^{i-1} > \delta \lambda^i, \qquad \Ju_{i}  =  \uset_{D_{i}} \Rightarrow \delta \lambda^{i-1} < \delta \lambda^i, 	\\
			\Ja_{i-1}  &=  \aset_{D_i} \Rightarrow \delta \lambda^{i-1} < \delta \lambda^{i}, \qquad \Ja_{i}  = \aset_{D_i} \Rightarrow \delta \lambda^{i-1} > \delta \lambda^{i}.
		\end{align*}
	\end{lemma}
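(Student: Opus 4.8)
The plan is to derive all four implications from one template that couples the variational characterizations of the directions of increase and decrease with positive semidefiniteness of $A$. Fix $i\in[M]$, so $D_{i-1}$ and $D_i$ are finite breakpoints. Write $\MM^{i-1}$ for the (constant) feasible-direction set $\MM_D$ on $(D_{i-1},D_i)$, $\MM^{i,-}$ for the feasible descent-direction set $\MM_D^-$ on $(D_i,D_{i+1})$, and $\MM_{D_i}$, $\MM_{D_i}^-$ for the ascent- and descent-direction sets at the breakpoint itself; by Proposition~\ref{prop:characterize-directions-of-increase} and Lemma~\ref{charaterize-gamma-minus} these are the feasible sets of the VIs solved by the corresponding $\Gamma$'s. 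From Corollary~\ref{cor:evolution-lmWE}, $\lmWE$ is affine with slope $\delta\lambda^{i-1}$ on $[D_{i-1},D_i]$ and slope $\delta\lambda^{i}$ on $[D_i,D_{i+1}]$, so $\delmminus(D_i)=\delta\lambda^{i-1}$ and $\delmplus(D_i)=\delta\lambda^{i}$. Under each of the four hypotheses one of $\MM^{i-1},\MM_{D_i}$ (respectively $\MM^{i,-},\MM_{D_i}^-$) is contained in the other, and the two associated VI solution sets carry ``quadratic values'' $\delta\lambda^{i-1}$ and $\delta\lambda^{i}$; feeding this pair into the semidefiniteness inequality produces the claimed strict comparison.

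Two auxiliary facts are needed. First, for a standard feasible-direction set $\MM$ with parameters $\QQ\supseteq\RR$ and hyperplane $\HH_T$, and any $x\in\SOL(\MM,A)$, one has $x^\top A x=T\min_{r\in\QQ}A_r x$: by the third and fourth assertions of Proposition~\ref{prop:properties-solution-VI}, every coordinate $p$ with $x_p\ne 0$ satisfies $A_p x=\min_{r\in\QQ}A_r x$ (if $x_p<0$ then necessarily $p\in\RR$), so contracting with $x$ and using $\mymathbf{1}^\top x=T$ gives the identity. Together with Corollary~\ref{cor:delta-lambda-representation} and the slopes above, this yields $x^\top A x=\delta\lambda^{i-1}$ for $x\in\SOL(\MM^{i-1},A)$ and for $x\in\SOL(\MM_{D_i}^-,A)$, and $x^\top A x=\delta\lambda^{i}$ for $x\in\SOL(\MM_{D_i},A)$ and for $x\in\SOL(\MM^{i,-},A)$. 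Second, Proposition~\ref{prop:evolution-lmvec} (and, for descent directions, the verbatim analogue obtained as in Section~\ref{sec:variation-of-WE}) gives $Ax=\delta C^{i-1}$ on $\SOL(\MM^{i-1},A)$, $Ax=\delta C^{i}$ on $\SOL(\MM_{D_i},A)$, $Ax=-\delta C^{i-1}$ on $\SOL(\MM_{D_i}^-,A)$, and $Ax=-\delta C^{i}$ on $\SOL(\MM^{i,-},A)$; the sign of the descent image changes because decreasing the demand from $D_i$ moves into the interval to the \emph{left} of $D_i$.

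Using Lemma~\ref{lem:inclusion-active-and-used}, i.e. $\uset_{D_i}\subseteq\Ju_{i-1}\subseteq\Ja_{i-1}\subseteq\aset_{D_i}$ and $\uset_{D_i}\subseteq\Ju_{i}\subseteq\Ja_{i}\subseteq\aset_{D_i}$, the four inclusions are then one-line checks of the defining (in)equalities: $\Ju_{i-1}=\uset_{D_i}\Rightarrow\MM^{i-1}\subseteq\MM_{D_i}$; $\Ja_{i-1}=\aset_{D_i}\Rightarrow\MM_{D_i}\subseteq\MM^{i-1}$; $\Ju_{i}=\uset_{D_i}\Rightarrow\MM^{i,-}\subseteq\MM_{D_i}^-$; $\Ja_{i}=\aset_{D_i}\Rightarrow\MM_{D_i}^-\subseteq\MM^{i,-}$. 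In each case both sets vanish off the same coordinate set (off $\aset_{D_i}$, or off $\Ja_{i-1}$ or $\Ja_{i}$, which under the hypothesis equals $\aset_{D_i}$) and the smaller set carries its sign constraint on the larger index set. Given such an inclusion $\MM_1\subseteq\MM_2$, pick $x\in\SOL(\MM_1,A)$ and $y\in\SOL(\MM_2,A)$; since $x\in\MM_1\subseteq\MM_2$, the variational inequality for $y$ with test point $x$ gives $x^\top Ay=(Ay)^\top x\ge(Ay)^\top y=y^\top Ay$, so with $a:=x^\top Ax$ and $b:=y^\top Ay$,
\[
0\ \le\ (x-y)^\top A(x-y)\ =\ a-2\,x^\top Ay+b\ \le\ a-b .
\]
Reading off $a$ and $b$ from the previous paragraph gives $\delta\lambda^{i-1}\ge\delta\lambda^{i}$ in the cases $\Ju_{i-1}=\uset_{D_i}$ and $\Ja_{i}=\aset_{D_i}$, and $\delta\lambda^{i-1}\le\delta\lambda^{i}$ in the cases $\Ju_{i}=\uset_{D_i}$ and $\Ja_{i-1}=\aset_{D_i}$; this is exactly Lemma~\ref{lem:dominated-feasible-set} specialized to these pairs.

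It remains to make each inequality strict, which is the crux. If equality $a=b$ held, the display forces $(x-y)^\top A(x-y)=0$; writing $A=B^\top QB$ with $Q$ diagonal and nonnegative, this gives $QB(x-y)=0$ and hence $A(x-y)=B^\top QB(x-y)=0$, i.e.\ $Ax=Ay$. But by the second auxiliary fact $Ax$ and $Ay$ equal $\pm\delta C^{i-1}$ and $\pm\delta C^{i}$ with the same sign in each of the four cases, so $Ax=Ay$ forces $\delta C^{i-1}=\delta C^{i}$, contradicting the third assertion of Proposition~\ref{prop:evolution-lmvec}. Hence all four inequalities are strict, which is the statement. The two places requiring care are: choosing, case by case, which pair of ascent/descent direction sets to compare — a naive comparison of the two interval sets flanking $D_i$ is useless, since they are generally incomparable and, when comparable, never bring $\delta\lambda^{i-1}$ and $\delta\lambda^{i}$ into contact, so one must always route through $\MM_{D_i}$ or $\MM_{D_i}^-$ with the correct ascent/descent version; and recognizing that equality in the semidefiniteness bound degenerates it and collapses $\delta C^{i-1}$ and $\delta C^{i}$ onto one another.
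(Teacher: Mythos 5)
Your proof is correct and follows essentially the same route as the paper: both express $\delta\lambda^{i-1}$ and $\delta\lambda^{i}$ via solutions of VIs over nested feasible-direction sets (routing each case through $\MM_{D_i}$ or $\MM_{D_i}^{-}$ against the appropriate flanking interval's set), apply the comparison argument of Lemma~\ref{lem:dominated-feasible-set} using positive semidefiniteness of $A$, and obtain strictness by observing that equality would force $Af^\delta = A\widetilde{f}^\delta$ and hence $\delta C^{i-1}=\delta C^{i}$, contradicting Proposition~\ref{prop:evolution-lmvec}. The only cosmetic difference is that you package the slopes as the quadratic values $x^\top A x$ rather than as $\min_{r}A_r x$ directly, which is equivalent by Proposition~\ref{prop:properties-solution-VI}.
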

	\ifinclude{
		\begin{proof}
			We start with the claim for the case $\Ju_{i-1} = \uset_{D_{i}}$. From Corollary~\ref{cor:evolution-lmWE} we have $\delta \lambda^{i-1} = \min_{r \in  \Ja_{i-1}} \delta C^{i-1}_r$. Now let $D \in (D_{i-1},D_i)$. From Proposition~\ref{prop:evolution-lmvec} we then have $\delta C^{i-1}_r = A f^\delta$ for any ${f^\delta \in \Gamma_{D}}$. Furthermore, Proposition~\ref{prop:characterize-directions-of-increase} gives ${\Gamma_{D} = \SOL(\MM_{D},A)}$ and it follows from the definition of  $\MM_{D}$ and Corollary \ref{cor:interval-of-active-set} that
			\begin{equation*}
				\MM_{D} = \setdef{f^\delta \in \HH_{1}}{f^\delta_{\Ja_{i-1} \setminus \Ju_{i-1}} \geq 0, \enskip f^\delta_{(\Ja_{i-1})^c} = 0}.
			\end{equation*}	
			In summary, we found that $\delta \lambda^{i-1} = \min_{r \in  \Ja_{i-1}} A_rf^\delta$, where ${f^\delta \in \SOL(\MM_{D},A)}$, and $\MM_{D}$ is given by the above equality.
			
			We can obtain a similar result for $\delta \lambda^i$. From Proposition~\ref{prop:evolution-lmvec} we have ${\lmvec(T) = \lmvec(D_i) + \delta C^{i}}$ for any $T \in [D_i,D_{i+1}]$, and from Corollary~\ref{cor:evolution-lmWE} we have ${\lmWE(T) = \lmWE(D_i) + (T - D_i)\delta \lambda^i}$. Since the WE $\lmWE(T)$ is the minimum of all $\lmvec(T)$ it follows that it is equal to the costs of those paths that have minimum cost under WE at demand $D_i$, and that keep minimal cost as $\lmvec$ moves in the direction of $\delta C^{i}$. In other words, $\delta \lambda^i = \min_{r \in  \aset_{D_{i}}} \delta C^i$.
			Proposition~\ref{prop:evolution-lmvec} also gives $\delta C^i = A f^\delta$ for any $f^\delta \in \Gamma_{D_i}$, and Proposition~\ref{prop:characterize-directions-of-increase} gives $\Gamma_{D_i} = \SOL(\MM_{D_i},A)$. In summary, we found that $\delta \lambda^{i} = \min_{r \in  \aset_{D_{i}}} A_r f^\delta$, where ${f^\delta \in \SOL(\MM_{D_i},A)}$, and $\MM_{D_i}$ is given by
			\begin{equation*}
				\MM_{D_i} = \setdef{f^\delta \in \HH_{1}}{f^\delta_{\aset_{D_{i}} \setminus \uset_{D_i}} \geq 0, \enskip f^\delta_{(\aset_{D_i})^c} = 0}.
			\end{equation*}

			By assumption we have $\Ju_{i-1} = \uset_{D_{i}}$, and from Lemma~\ref{lem:inclusion-active-and-used} we have $\Ja_{i-1} \subseteq \aset_{D_{i}}$. Therefore we have $\MM_{D} \subseteq \MM_{D_i}$. The rest of the proof follows in the same manner as the proof of Lemma~\ref{lem:dominated-feasible-set}, with $\MM = \MM_{D_{i}}$ and $\tillMM = \MM_{D}$.
			The only difference is that the inequalities \eqref{eq:dominated-feasible-set-contradiction-1} and \eqref{eq:dominated-feasible-set-contradiction-2} are non-strict while the inequality in \eqref{eq:semidefinite-delta-deltatilde} is strict, therefore preserving the contradiction. To see that \eqref{eq:semidefinite-delta-deltatilde} holds strictly, note that it holds with equality only if $Af^\delta = A \tillf^\delta$. However, in our case we have  $Af^\delta = \delta C^i$ and $A \tillf^\delta = \delta C^{i-1}$. Thus equality of \eqref{eq:semidefinite-delta-deltatilde} would imply $\delta C^i = \delta C^{i-1}$, which contradicts Proposition~\ref{prop:evolution-lmvec}.
			
			This completes the proof for the case $\Ju_{i-1} = \uset_{D_i}$. The claim for the case ${\Ja_{i-1} = \aset_{D_i}}$ follows by the same arguments, where we find $\MM_{D_i} \subseteq \MM_{D}$ instead of $\MM_{D} \subseteq \MM_{D_i}$. The other claims can be proven similarly, where we consider $\MM^-_{D}$ and $\MM^-_{D_i}$ instead of $\MM_{D}$ and $\MM_{D_i}$.
		\end{proof}
	}
	In light of Proposition~\ref{prop:characterize-directions-of-increase}, the above is an intuitive result. For instance, consider the first scenario when $\Ju_{i-1} = \uset_{D_{i}}$. Pick $D \in (D_{i-1},D_i)$. Then, we know from Lemma~\ref{le:gamma-constant} that $\Gamma_{D} \neq \Gamma_{D_i}$. For this to hold, using Proposition~\ref{prop:characterize-directions-of-increase}, we require $\MM_D \neq \MM_{D_i}$. Now focusing on the definition of the direction of feasibility given in~\eqref{eq:feasible-increase-direction}, we obtain $\Ja_{i-1} = \aset_D \neq \aset_{D_{i}}$. However, Lemma~\ref{lem:inclusion-active-and-used} shows that $\Ja_{i-1} \subseteq \aset_{D_{i}}$, and so, $\Ja_{i-1} \subset \aset_{D_{i}}$. Thus, we obtain $\MM_{D} \subset \MM_{D_i}$. This is an important inclusion which shows that the set in which the directions of increase must lie has grown strictly larger as ones moves from the interval $(D_{i-1},D_i)$ to the point $D_i$. In other words, there are more paths at $D_i$ which are feasible for carrying flow. Since the division of flow is a result of the traffic participants trying to minimize their own travel time, it is natural that an increase in options will decrease the ``rate'' at which the WE-cost grows as the demand increases. A similar reasoning can be deduced for the second case $\Ja_{i-1} = \aset_{D_{i}}$ of Lemma~\ref{lem:simpe-cost-evolution}, where
	there are less options for the evolution of the flow at $D_i$ than in the interval $(D_{i-1},D_i)$, and so, the ``rate'' at which the WE-cost grows as the demand increases becomes larger. We can also see this in Example~\ref{ex:secIV-1}, when comparing Figures~\ref{fig:wheatstone-WE}~and~\ref{fig:wheatstone-cost-bp}. At the demand $D_1 = 1$ (which is a breakpoint), paths $p_1$ and $p_2$ become active, but these paths are not in the active set in the interval $(0,1)$. On the other hand, the used set is same in the interval $(0,1)$ and at the point $D_1  = 1$. Thus, from the first implication of Lemma~\ref{lem:simpe-cost-evolution}, the slope of the cost decreases at $D_1 = 1$. This is depicted in Figure~\ref{fig:wheatstone-cost-bp}. Conversely, at demand $D_2 = 2$ path $p_3$ leaves the used set and it is therefore no longer possible to decrease the amount of flow on this path. At the same time, we have that the active set is same in interval $(1,2)$ and the point $D_2 = 2$. Hence, by the second implication of Lemma~\ref{lem:simpe-cost-evolution}, the slope of the cost increases, as observed in Figure~\ref{fig:wheatstone-cost-bp}. Note that this intuition is exactly the one that is defied by Braess's paradox, where we see that more options increases travel time.
	
	A final observation following Lemma~\ref{lem:simpe-cost-evolution} is related to Example~\ref{ex:break-in-active-set-vs-break-in-cost-2}, in which the slope of $\lmWE$ does not change at $D = 2$, despite this being a breakpoint. From Lemma~\ref{lem:simpe-cost-evolution} we see that this can only happen when both the active and used set change simultaneously at the considered breakpoint. That is, there needs to be both a path that loses all flow, and there must be a previously inactive path that becomes active. The slope of $\lmWE$ then remains constant when the effects on the evolution of $\lmWE$ of these changes in the used and active set are exactly opposite. 
	
	As a last step before we finally turn our attention to Braess's paradox, we look at the implications on the evolution of the WE-cost of a modified game when the set $\SSrem$ is \emph{not} necessary. We show that in this case the (left- and right-hand) derivative of $\tillmWE$ upper bounds the  (left- and right-hand) derivative of $\lmWE$.
	\begin{lemma} \longthmtitle{Necessary sets and the slope of $\lmWE$} \label{lem:dominated-cost-increase}
		Let $(\PP,\CC,D)$ and $\SSrem \subset \PP$ be given. If $\SSrem \notin \NN_D$ where $D >0$, then the following hold:
		\begin{align*}
			\tildelmplus(D) &\geq \delmplus(D),	
			\\
			\tildelmminus(D) &\geq \delmminus(D).
		\end{align*}
	\end{lemma}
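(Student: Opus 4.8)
The plan is to reduce the statement to the two set inclusions $\tillMM_D\subseteq\MM_D$ and $\tillMM^-_D\subseteq\MM^-_D$, and then to invoke Lemma~\ref{lem:dominated-feasible-set} together with Corollary~\ref{cor:delta-lambda-representation}. The first step is to extract the structural consequences of the hypothesis $\SSrem\notin\NN_D$. By Lemma~\ref{lem:dominated-V} we have $\tillWW_D=\setdef{f^D\in\WW_D}{f^D_{\SSrem}=0}$ and $\tillmWE(D)=\lmWE(D)$. Since $\tillWW_D\subseteq\WW_D$, the cost-vectors of the two games coincide, $\tillmvec(D)=\lmvec(D)$; consequently $\tiluset_D\subseteq\uset_D$ (a path carrying flow in some WE of the modified game carries flow in some WE of the original one), and, since $\tilaset_D=\setdef{p\in\tillPP}{\lmvec_p(D)=\min_{r\in\tillPP}\lmvec_r(D)}$ with $\min_{r\in\tillPP}\lmvec_r(D)=\tillmWE(D)=\lmWE(D)=\min_{r\in\PP}\lmvec_r(D)$, one obtains $\tilaset_D=\aset_D\cap\tillPP=\aset_D\setminus\SSrem$.

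Next I would verify $\tillMM_D\subseteq\MM_D$. Let $\tillf^\delta\in\tillMM_D$; then $\tillf^\delta\in\tillHH_1\subseteq\HH_1$ and $\tillf^\delta_{\SSrem}=0$. From $\tilaset_D=\aset_D\setminus\SSrem$ we get $(\aset_D)^c\subseteq\PP\setminus\tilaset_D$, hence $\tillf^\delta_{(\aset_D)^c}=0$ (any ambiguity about whether the complement of $\tilaset_D$ is taken in $\PP$ or in $\tillPP$ is harmless, as $\tillf^\delta$ vanishes on $\SSrem$ in any case). For $p\in\aset_D\setminus\uset_D$: if $p\in\SSrem$ then $\tillf^\delta_p=0\ge0$; otherwise $p\in\aset_D\setminus\SSrem=\tilaset_D$ and, since $p\notin\uset_D\supseteq\tiluset_D$, also $p\in\tilaset_D\setminus\tiluset_D$, so $\tillf^\delta_p\ge0$. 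Thus $\tillf^\delta\in\MM_D$. The identical argument with $\HH_{-1}$ in place of $\HH_1$ yields $\tillMM^-_D\subseteq\MM^-_D$.

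Finally I would apply Lemma~\ref{lem:dominated-feasible-set} with $(\QQ,\RR)=(\aset_D,\uset_D)$ and $(\tillQQ,\tillRR)=(\tilaset_D,\tiluset_D)$ (the required nestings $\uset_D\subseteq\aset_D$ and $\tiluset_D\subseteq\tilaset_D$ hold). For the direction-of-increase sets this gives $\min_{r\in\tilaset_D}A_r\tillf^\delta\ge\min_{r\in\aset_D}A_rf^\delta$ for $\tillf^\delta\in\SOL(\tillMM_D,A)$ and $f^\delta\in\SOL(\MM_D,A)$, and for the descent sets it gives the reversed inequality for $f^{\delta-}\in\SOL(\MM^-_D,A)$ and $\tillf^{\delta-}\in\SOL(\tillMM^-_D,A)$. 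Corollary~\ref{cor:delta-lambda-representation}, applied both to $(\PP,\CC,D)$ and to the modified game (all results of the paper carry over to modified games), identifies the left-hand sides with $\tildelmplus(D)$ and $-\tildelmminus(D)$, and the right-hand sides with $\delmplus(D)$ and $-\delmminus(D)$; the two inequalities therefore read exactly $\tildelmplus(D)\ge\delmplus(D)$ and $\tildelmminus(D)\ge\delmminus(D)$, which is the claim. Here $D>0$ is used only to ensure both one-sided derivatives are defined.

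The main obstacle is the middle step: proving $\tilaset_D=\aset_D\setminus\SSrem$ and $\tiluset_D\subseteq\uset_D$ and then chasing these carefully through the definitions of $\tillMM_D$ and $\MM_D$. This is precisely where the hypothesis $\SSrem\notin\NN_D$ is indispensable — it is what forces every WE of the modified game to be a WE of the original with an identical cost-vector; without it $\tillmvec(D)$ and $\lmvec(D)$ need not agree and the inclusion $\tillMM_D\subseteq\MM_D$ can fail. Everything after that inclusion is a mechanical combination of Lemma~\ref{lem:dominated-feasible-set} and Corollary~\ref{cor:delta-lambda-representation}.
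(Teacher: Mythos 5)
Your proposal is correct and follows essentially the same route as the paper's proof: use Lemma~\ref{lem:dominated-V} to get $\tillmvec(D)=\lmvec(D)$, $\tilaset_D=\aset_D\cap(\SSrem)^c$ and $\tiluset_D\subseteq\uset_D$, deduce $\tillMM_D\subseteq\MM_D$ (and the analogue for descent directions), and conclude via Lemma~\ref{lem:dominated-feasible-set} and Corollary~\ref{cor:delta-lambda-representation}. The only difference is that you spell out the element-chasing for the inclusion $\tillMM_D\subseteq\MM_D$ in more detail than the paper does.
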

	\ifinclude{
		\begin{proof}
			We start by proving the first inequality. From Corollary~\ref{cor:delta-lambda-representation} we obtain
			\begin{equation}\label{eq:delta-lam-plus}
				\delmplus(D) = \min_{r \in \aset_{D}} A_r f^\delta, \quad \tildelmplus(D) = \min_{r \in \tilaset_{D}} A_r \tillf^\delta,
			\end{equation}
			where $f^\delta \in \SOL(\MM_{D},A)$ and $\tillf^\delta \in \SOL(\tillMM_D,A)$. Here,
			\begin{equation*} 
				\tillMM_D = \setdef{\tillf^\delta \in \tillHH_1}{\tillf_{\tilaset_D \setminus \tiluset_D}^{\delta} \geq 0, \enskip  \tillf_{(\tilaset_D)^c}^{\delta} = 0}.
			\end{equation*}
			Now, since $\SSrem \notin \NN_D$ it follows from Lemma~\ref{lem:dominated-V} that $\tillf^D \in \tillWW_D$ if and only if $\tillf^D \in \WW_{D} \cap \tillFF_D$. Consequently ${\tiluset_{D} \subseteq \uset_{D} \cap (\SSrem)^c}$ and in addition we have $\tillmvec(D) = \lmvec(D)$, which implies ${\tilaset_{D} = \aset_{D} \cap (\SSrem)^c}$. These facts collectively imply ${\tillMM_{D} \subseteq \MM_{D}}$ and it follows from Lemma~\ref{lem:dominated-feasible-set} that 
			\begin{equation*}
				\min_{r \in \tilaset_{D}} A_r \tillf^\delta \geq \min_{r \in \aset_{D}} A_r f^\delta.
			\end{equation*}
			Thus, using~\eqref{eq:delta-lam-plus} in the above inequality, we obtain
			\begin{equation*}
				\tildelmplus (D) \geq
				\delmplus(D).
			\end{equation*}
			This establishes the first inequality. The second inequality follows by similar arguments, but considering sets of directions of decrease $\MM_{D}^-$ in Lemma~\ref{lem:dominated-feasible-set}.
		\end{proof}
	}

	\subsection{Conditions revealing Braess's paradox} \label{sec:BP-conditions}
	Now we are ready to give our first result concerning Braess's paradox. It states that if a set $\SSrem \subset \PP$ is unecessary at $D$, then either this set subjects the game to BP at some lower levels of demand, or it is unnecessary for all lower levels of demand.
	\begin{proposition} \longthmtitle{Sets not in $\NN_D$ are ``non-essential'' for lower demands} \label{prop:unnecesary-and-BP}
		Let $(\PP,\CC,D)$ and $\SSrem \subset \PP$ be given. If $\SSrem \notin \NN_D$ then exactly one of the following holds:
		\begin{itemize}
			\item $\SSrem \notin \NN_T$ for all $T \in [0,D]$.
			\item There exists $D^-,D^+$ satisfying $0 < D^- < D^+ \leq D$ and
			\begin{equation} \label{eq:dominated-lmWE}
				\lmWE(T) > \tillmWE(T) \quad \text{for all } T \in (D^-,D^+).
			\end{equation}
		\end{itemize}
	\end{proposition}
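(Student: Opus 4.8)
The plan is to reduce the statement to a single application of the mean value theorem to the function $g := \tillV - V$, where $V$ and $\tillV$ are the Beckmann potentials from~\eqref{eq:V-for-modified}; the point is that, although comparing $\lmWE$ and $\tillmWE$ pointwise is delicate, comparing their integrals is governed directly by Lemma~\ref{lem:dominated-V}.

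\textbf{Mutual exclusivity.} First I would check the two alternatives cannot both hold. Suppose the first bullet holds, i.e.\ $\SSrem \notin \NN_T$ for all $T \in [0,D]$. Then by the third item of Lemma~\ref{lem:dominated-V} we have $\lmWE(T) = \tillmWE(T)$ for every $T \in (0,D]$, so the strict inequality~\eqref{eq:dominated-lmWE} can hold on no open interval $(D^-,D^+)$ with $0 < D^- < D^+ \le D$; hence the second bullet fails. Thus at most one of the two alternatives occurs, and it remains to show at least one does.

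\textbf{Setting up the relevant demands.} Assume the first bullet fails; I must then establish the second. Failure of the first bullet gives some $T_0 \in [0,D]$ with $\SSrem \in \NN_{T_0}$. Since $\WW_0 = \{\mymathbf{0}\}$, every WE at demand $0$ puts zero flow on $\SSrem$, so $\SSrem \notin \NN_0$ and therefore $T_0 \neq 0$; and $\SSrem \notin \NN_D$ forces $T_0 \neq D$. Hence $T_0 \in (0,D)$, and in particular $D > 0$. By the first two items of Lemma~\ref{lem:dominated-V}, $\SSrem \in \NN_{T_0}$ yields $V(T_0) < \tillV(T_0)$, while $\SSrem \notin \NN_D$ yields $V(D) = \tillV(D)$.

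\textbf{Mean value theorem.} Since every result of Sections~\ref{sec:preliminaries} and~\ref{sec:variation-of-WE} applies verbatim to modified games, Proposition~\ref{prop:pw-affine} gives that $V$ and $\tillV$ are differentiable on $(0,\infty)$ with $V' = \lmWE$ and $\tillV' = \tillmWE$ there, both derivatives being continuous. Hence $g := \tillV - V$ is continuously differentiable on $(0,\infty)$ with $g' = \tillmWE - \lmWE$. As $[T_0,D] \subset (0,\infty)$, the mean value theorem produces $\xi \in (T_0,D)$ with
\begin{equation*}
	g'(\xi) = \frac{g(D) - g(T_0)}{D - T_0} = \frac{-\bigl(\tillV(T_0) - V(T_0)\bigr)}{D - T_0} < 0 ,
\end{equation*}
using $\tillV(D) = V(D)$ and $\tillV(T_0) > V(T_0)$. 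By continuity of $g'$ there is an open interval $(D^-,D^+) \ni \xi$ with $(D^-,D^+) \subset (T_0,D)$ on which $g' < 0$, i.e.\ $\lmWE(T) > \tillmWE(T)$ for all $T \in (D^-,D^+)$. Since $0 < T_0 \le D^- < D^+ \le D$, this is exactly~\eqref{eq:dominated-lmWE}.

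\textbf{Anticipated difficulty.} The computation is short, so the only things needing care are (i) invoking Proposition~\ref{prop:pw-affine} for the modified game (legitimate by the paper's remark that all earlier results carry over), and (ii) the exceptional behaviour at $D = 0$, where $\lmWE(0)$ and $\tillmWE(0)$ may differ even though $\SSrem \notin \NN_0$ (e.g.\ if the cheapest path lies in $\SSrem$); this is precisely why the analysis is kept on $(0,D]$ and uses the integrated quantities $V,\tillV$ instead of a pointwise comparison of $\lmWE$ and $\tillmWE$.
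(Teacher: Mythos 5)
Your proof is correct, but it takes a genuinely different route from the paper's. The paper defines $D^+$ as the smallest value such that $\lmWE$ and $\tillmWE$ agree on $[D^+,D]$, and then, when $D^+>0$, invokes Lemma~\ref{lem:dominated-cost-increase} — and hence the whole variational-inequality machinery behind Lemma~\ref{lem:dominated-feasible-set} and Corollary~\ref{cor:delta-lambda-representation} — to show that the left-hand slope of $\tillmWE$ at $D^+$ strictly exceeds that of $\lmWE$, which forces the strict inequality~\eqref{eq:dominated-lmWE} immediately to the left of $D^+$. You bypass all of that: you locate a demand $T_0\in(0,D)$ at which $\SSrem$ is necessary, convert this via Lemma~\ref{lem:dominated-V} into $\tillV(T_0)>V(T_0)$ while $\tillV(D)=V(D)$, and apply the mean value theorem to $\tillV-V$, needing only Proposition~\ref{prop:pw-affine} and continuity of the derivatives. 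This is shorter and more elementary (it is essentially the same integral argument the paper itself uses later to prove Corollary~\ref{cor:BP-beneficial}), and you additionally make explicit two points the paper leaves implicit: the mutual exclusivity of the two alternatives, and the degenerate behaviour at $D=0$ where $\lmWE(0)$ and $\tillmWE(0)$ may differ even though no set is necessary at zero demand. What the paper's longer argument buys in exchange is structural information your proof does not provide: it shows that the interval on which $\lmWE>\tillmWE$ can be taken to abut the maximal interval $[D^+,D]$ on which $\SSrem$ remains unnecessary, i.e.\ it pins down \emph{where} the paradox occurs rather than merely that it occurs somewhere in $(T_0,D)$. Since the proposition as stated only asserts existence of such an interval, your proof fully establishes it.
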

	\ifinclude{
		\begin{proof}
			First we recall from Proposition~\ref{prop:pw-affine} that $V$ and $\tillV$ are continuously differentiable, with ${\frac{\partial}{\partial T}V(T) = \lmWE(T)}$ and $\frac{\partial}{\partial T}\tillV(T) = \tillmWE(T)$. Furthermore, since ${\SSrem \notin \NN_D}$, from Lemma~\ref{lem:dominated-V} we have $V(D) = \tillV(D)$ and ${\lmWE(D) = \tillmWE(D)}$. Define $D^+ \leq D$ as the smallest value such that $\lmWE(T) = \tillmWE(T)$ for all $T \in [D^+,D]$.
			In other words, the derivatives of $V$ and $\tillV$ are equal on the interval $[D^+,D]$. Consequently, since $V(D) = \tillV(D)$, this implies $V(T) = \tillV(T)$ for all $T \in [D^+,D]$. By Lemma~\ref{lem:dominated-V} it follows that $\SSrem \notin \NN_T$ for all $T \in [D^+,D]$. If $D^+ = 0$ this gives us $\SSrem \notin \NN_T$ for all $T \in [0,D]$, which corresponds to the first scenario.
			
			Alternatively, if $D^+ > 0$, then it remains to be shown that there exists a $D^- \in (0,D^+)$ such that \eqref{eq:dominated-lmWE} holds. First, we note that using the fact $\SSrem \notin \NN_{D^+}$ in Lemma~\ref{lem:dominated-cost-increase} gives us $\tildelmminus(D^+) \geq \delmminus(D^+)$. That is, either $\tildelmminus(D^+) = \delmminus(D^+)$ or $\tildelmminus(D^+) > \delmminus(D^+)$. The former of these is not possible. To see this, recall that by definition of $D^+$, we have $\lmWE(D^+) = \tillmWE(D^+)$. Since $\lmWE$ and $\tillmWE$ are continuous, piece-wise affine functions with only finitely many points in which the functions are not differentiable, it follows that if $\tildelmminus(D^+) = \delmminus(D^+)$, then there exists some $\epsilon > 0$ such that $\lmWE(T) = \tillmWE(T)$ for all $T \in (D^+ - \epsilon,D^+]$. This however contradicts the definition of $D^+$ as the smallest value such that $\lmWE(T) = \tillmWE(T)$ for all $T \in [D^+,D]$. Therefore, $\tildelmminus(D^+) = \delmminus(D^+)$ is not possible and we have $\tildelmminus(D^+) > \delmminus(D^+)$. This then implies that there exists some $D^- > 0$ such that $D^- < D^+$ and \eqref{eq:dominated-lmWE} holds, completing the proof.
		\end{proof}
	}

	Proposition~\ref{prop:unnecesary-and-BP} gives us our first way of detecting sets of paths that are at least candidates for subjecting a game to BP. In Example~\ref{ex:secIV-1} for instance, Figure~\ref{fig:wheatstone-WE} shows that the set $\{p_3\}$ is an unnecessary set for any demand $D \geq 2$, and indeed this path subjects the game to BP at a set of demands lower than $2$.  Similarly in Example~\ref{ex:secIV-2}, we deduce from \eqref{eq:three-node-four-edge-WE} that the set $\{p_1, p_2\}$ is unnecessary for all demands, and $\{p_3\}$ is unnecessary for any demand $D \geq 2$. Again, path $p_3$ subjects the game to BP at lower levels of demand, but the set $\{p_1, p_2\}$ is merely ``non-essential'' in the sense that removing either or both of the paths in this set does not change the edge flow under WE and the associated WE-cost. Note that the converse of Proposition~\ref{prop:unnecesary-and-BP} does not hold; that is, a set of paths that subjects the game to BP at some demand does not always become an unnecessary set at some higher level of demand. We discuss this further in Example~\ref{ex:complex-counter-example}. 
	
	We note that as a method for finding BP in a given network, Proposition~\ref{prop:unnecesary-and-BP} has drawbacks. It does not differentiate between sets of paths that subject the game to BP and sets of paths that are only unnecessary. In addition, it only tells us about a potential BP at some lower level of demand. 
	It would be preferable if we could find a condition which guarantees BP at the level of demand under consideration. The next result is the essential observation that allows us to find more efficient ways of detecting BP. First, we require the following definition.
	
	\begin{definition}\longthmtitle{Affine extension functions}\label{def:affine-ext}
		Let $(\PP,\CC)$ and $\vecPP \subseteq \PP$ be given, and consider the game $(\vecPP,\CC)$. For $i \in [\vecM]_0$,  we define the following function:
		\begin{equation*}
			T \mapsto u_{\vecPP,i}(T) := \vecmWE(\vecD_i) + (T - \vecD_i) \delta \veclamb^{i},
		\end{equation*}
		where $\vecD_i \in \vecDD$. Note that $u_{\vecPP,i}$ is simply the affine function that describes $\vecmWE$ on the interval $[\vecD_i,\vecD_{i+1})$, but extended to $\realnonnegative$. We call $u_{\vecPP,i}$ an \emph{affine extension}. \oprocend 
	\end{definition}
	The next results shows that for any affine extension $u_{\tillPP,i}$ of any modified game $(\tillPP,\CC)$, there exists another modified game $(\widecheck{\PP},\CC)$ with $\widecheck{\PP} \subseteq \tillPP$ that achieves a WE-cost equal to or less than $u_{\tillPP,i}(D)$ for all $D \leq \widetilde{D}_{i+1}$.
	\begin{lemma} \longthmtitle{An upper bound on the minimum WE-cost over all modified games} \label{lem:lambda_vs_u}
		Let $(\PP,\CC)$ be given. For any $\tillPP \subseteq \PP$, $i \in [\widetilde{M}]_0$ and $D \leq {\widetilde{D}_{i + 1}}$, there exists a set $\SSrem_{i,D} \subset \PP$ such that for the routing game $(\widecheck{\PP},\CC)$, where $\widecheck{\PP} := \PP \setminus \SSrem_{i,D}$, the associated WE-cost, given by $\checklmWE_{i,D}$, satisfies
		\begin{equation*}
			\checklmWE_{i,D}(D) \leq u_{\widetilde{\PP},i}(D).
		\end{equation*}
	\end{lemma}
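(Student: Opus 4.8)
The plan is to argue by induction on the pair $(\abs{\tillPP}, i)$ ordered lexicographically, exploiting throughout that the affine extension $u_{\tillPP,i}$ agrees with $\tillmWE$ on the whole interval $[\tillD_i,\tillD_{i+1}]$ (Definition~\ref{def:affine-ext} together with Corollary~\ref{cor:evolution-lmWE} and continuity from Proposition~\ref{prop:pw-affine}). Hence, whenever $D \in [\tillD_i,\tillD_{i+1}]$ one may simply take $\SSrem_{i,D} := \PP \setminus \tillPP$, so that $\widecheck{\PP} = \tillPP$ and $\checklmWE_{i,D}(D) = \tillmWE(D) = u_{\tillPP,i}(D)$; in particular this settles the base case $i=0$, where necessarily $D \le \tillD_1$. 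So the only remaining situation is $D < \tillD_i$ (hence $i \ge 1$), and there I would split into two cases according to whether the used set $\tillJu_i$ of the game $(\tillPP,\CC)$ on the interval $(\tillD_i,\tillD_{i+1})$ equals $\tillPP$ or is a proper subset of it.

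Suppose first that $\tillJu_i = \tillPP$. Then also $\tillJa_i = \tillPP$, and Lemma~\ref{lem:inclusion-active-and-used} forces $\tilaset_{\tillD_i} = \tillPP$ as well, so $\tillJa_i = \tilaset_{\tillD_i}$; the implication ``$\Ja_i = \aset_{D_i} \Rightarrow \delta\lambda^{i-1} > \delta\lambda^i$'' of Lemma~\ref{lem:simpe-cost-evolution}, applied to the game $(\tillPP,\CC)$, then gives $\delta\tilllamb^{i-1} > \delta\tilllamb^i$. Since $u_{\tillPP,i-1}$ and $u_{\tillPP,i}$ take the same value $\tillmWE(\tillD_i)$ at $\tillD_i$ and $u_{\tillPP,i-1}$ has the larger slope, it follows that $u_{\tillPP,i-1}(T) \le u_{\tillPP,i}(T)$ for all $T \le \tillD_i$. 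Applying the induction hypothesis to $(\tillPP,i-1,D)$ --- admissible because $D < \tillD_i = \tillD_{(i-1)+1}$ and $(\abs{\tillPP},i-1)$ precedes $(\abs{\tillPP},i)$ --- yields a set $\widecheck{\PP}$ whose WE-cost at $D$ is at most $u_{\tillPP,i-1}(D) \le u_{\tillPP,i}(D)$.

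Suppose instead that $\tillJu_i \subsetneq \tillPP$, and pass to the game $(\tillJu_i,\CC)$. The key preliminary observation is that on $(\tillD_i,\tillD_{i+1})$ this game has exactly the same Wardrop equilibria as $(\tillPP,\CC)$: ``$\subseteq$'' is immediate because equilibria of $(\tillPP,\CC)$ are supported on $\tillJu_i$, and ``$\supseteq$'' follows from the essential uniqueness of equilibrium costs. Hence on $(\tillD_i,\tillD_{i+1})$ the WE-cost of $(\tillJu_i,\CC)$ equals $\tillmWE$, and its active set equals its used set equals $\tillJu_i$ --- i.e.\ all of its paths. Letting $(D',D'')$ be the piece on which $(\tillJu_i,\CC)$ has constant active set and containing points just below $\tillD_i$ (so $D' < \tillD_i \le D''$), a short case analysis on whether $\tillD_i$ is interior to this piece or is its right endpoint (using, in the latter case, the implication ``$\Ja_i = \aset_{D_i} \Rightarrow \delta\lambda^{i-1}>\delta\lambda^i$'' of Lemma~\ref{lem:simpe-cost-evolution} at the breakpoint $\tillD_i$, where the active set equals the active set $\tillJu_i$ of the interval above) shows that the slope of the WE-cost of $(\tillJu_i,\CC)$ on $(D',\tillD_i)$ is $\ge \delta\tilllamb^i$, while its value at $\tillD_i$ is $\tillmWE(\tillD_i)=u_{\tillPP,i}(\tillD_i)$. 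Consequently the WE-cost of $(\tillJu_i,\CC)$ lies below $u_{\tillPP,i}$ on $[D',\tillD_i]$ and, for the index $k$ of the piece $(D',D'')$, the affine extension $u_{\tillJu_i,k}$ lies below $u_{\tillPP,i}$ on $(-\infty,D']$. Therefore: if $D \in [D',\tillD_i]$ we are done with $\widecheck{\PP} = \tillJu_i$; and if $D < D'$ we apply the induction hypothesis to $(\tillJu_i,k,D)$ --- admissible because $D \le D'' = \tillD_{k+1}$ for the game $(\tillJu_i,\CC)$ and $\abs{\tillJu_i} < \abs{\tillPP}$ --- obtaining $\widecheck{\PP}$ with WE-cost at $D$ at most $u_{\tillJu_i,k}(D) \le u_{\tillPP,i}(D)$. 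In every case one sets $\SSrem_{i,D} := \PP \setminus \widecheck{\PP}$.

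The main obstacle is precisely the case $\tillJu_i \subsetneq \tillPP$: the heart of the proof is recognizing that deleting the paths of $(\tillPP,\CC)$ that are unused on $(\tillD_i,\tillD_{i+1})$ produces a game whose active set fills the whole reduced path set on that interval, which by Lemma~\ref{lem:simpe-cost-evolution} is exactly the configuration that forces the WE-cost to have (weakly) larger slope immediately to the left of $\tillD_i$ --- so the reduced game's cost slips below $u_{\tillPP,i}$ there while strictly fewer paths remain, giving both the slack and the complexity decrease needed to close the lexicographic induction. The routine but slightly delicate supporting steps are the ``$\supseteq$'' direction of the equilibrium coincidence on $(\tillD_i,\tillD_{i+1})$ via essential uniqueness, and the bookkeeping turning the slope comparison into the inequalities $u_{\tillJu_i,k}\le u_{\tillPP,i}$ that are fed back into the induction.
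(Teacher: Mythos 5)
Your proof is correct and follows essentially the same route as the paper's: both arguments repeatedly delete the paths unused on the current piece, invoke Lemma~\ref{lem:simpe-cost-evolution} at the breakpoint where the reduced game's active set fills the whole path set to force a strict slope increase below it, and descend until the target demand is reached. The only difference is organizational — you package the descent as a lexicographic induction on $(\abs{\tillPP},i)$ with an explicit two-case split, where the paper writes it as an iteration terminating because the path sets strictly shrink.
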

	\ifinclude{
		\begin{proof}
			For notational convenience, we proof the result for the case $\widetilde{\PP} = \PP$. The case $\widetilde{\PP} \subset \PP$ can be proven using the same arguments.
			Let $\Ju_i$ be the used-set for the game $(\PP,\CC)$ on the interval $(D_i,D_{i+1})$, and let $\SSS' = (\Ju_i)^c$. We consider the game $(\PP',\CC)$, where ${\PP' := \PP \setminus \SSS'}$. Clearly, $(\Ju_i)^c$ is unnecessary with respect to the game $(\PP,\CC)$ on interval $(D_i,D_{i+1})$; i.e., $(\Ju_i)^c \notin \NN_D$ for all ${D \in (D_i,D_{i+1})}$. By Lemma~\ref{lem:dominated-V}, the cost under WE for the game $(\PP',\CC)$, denoted $\lambda'^{\operatorname{WE}}(\cdot)$, then satisfies 
			\begin{equation}\label{eq:lm-lmprime}
				\lambda'^{\operatorname{WE}}(D) = \lmWE(D), \quad \text{for all }D \in [D_i,D_{i+1}].
			\end{equation} This already shows that 
			${\lambda'^{\operatorname{WE}}(D) = u_{\PP,i}(D)}$ holds on the interval $[D_i,D_{i+1})$. Therefore, if $D_i = 0$, the proof would be complete. 	
			
			Now, assume $D_i > 0$ and note $\PP' = \RR'^{\operatorname{use}}_D = \RR'^{\operatorname{use}}_T$ for all ${D,T \in (D_i,D_{i+1})}$. It follows from Corollary~\ref{cor:interval-of-active-set} that there exist $D'_j,D'_{j+1} \in \DD'$ such that ${(D_i,D_{i+1}) \subseteq (D'_j,D'_{j+1})}$ and $\JJ'^{\operatorname{use}}_j = \PP'$. Therefore, using this inclusion and~\eqref{eq:lm-lmprime}, we have $\delta \lambda'^j = \delta \lambda^i$. This fact in combination with $\lambda'^{\operatorname{WE}}(D_i) = \lmWE(D_i)$ shows that ${\lambda'^{\operatorname{WE}}(D) = u_{\PP,i}(D)}$ for all $D \in [D'_j,D'_{j+1})$. As before, if $D'_j = 0$, the proof would be complete.
			
			When $D'_j > 0$, note that $\JJ'^{\operatorname{act}}_j = \JJ'^{\operatorname{use}}_j = \PP'$. From Lemma~\ref{lem:inclusion-active-and-used} we then have $ \RR'^{\operatorname{act}}_{D'_j} = \JJ'^{\operatorname{act}}_j$, and it follows from Lemma~\ref{lem:simpe-cost-evolution} that $\delta \lambda'^{j-1} > \delta \lambda'^j$. This in combination with $\delta \lambda'^j = \delta \lambda^i$ and $\lambda'^{\operatorname{WE}}(D'_j) = u_{\PP,i}(D'_j)$ shows that $\lambda'^{\operatorname{WE}}(D) < u_{\PP,i}(D)$ for all $D \in [D'_{j-1},D'_{j})$. As before, if $D'_{j-1} = 0$, the proof would be complete. Note that the conclusions up until this point also give ${u_{\PP',j-1}(D) < u_{\PP,i}(D)}$ for all $D < D'_{j}$.
			
			If $D'_{j-1} > 0$, we can define $\SSS'' = (\JJ'^{\operatorname{use}}_{j-1})^c$ and consider the game $(\PP'',\CC)$ , where $\PP'' := \PP \setminus \SSS''$. First we show that $\PP''$ is nonempty. To see this, note that $\PP'' = \emptyset$ implies $\JJ'^{\operatorname{use}}_{j-1} = \emptyset$, which in turn implies that for all demands $D \in (D'_{j-1},D'_{j})$ and all WE $f'^D \in \WW_{D}$, $f'^D_p = 0$ for all $p 
			\in PP'$. This however means that the $f^D \in \FF_{0}$, which contradicts the assumption that $D'_{j-1} > 0$. We conclude that $\PP''$ is nonempty.
			
			We also have $\JJ'^{\operatorname{use}}_{j-1} \subseteq \PP'$, and Corollary~\ref{cor:interval-of-active-set} gives $\JJ'^{\operatorname{use}}_{j-1} \neq \JJ'^{\operatorname{use}}_{j}$. Since $\JJ'^{\operatorname{use}}_{j} = \PP'$ we see that $\JJ'^{\operatorname{use}}_{j-1} \subset \PP'$. Therefore $\PP'' \subset \PP'$.
			
			We can now apply the arguments made for comparing $\lambda'^{\operatorname{WE}}$ with $u_{\PP,i}$ to compare $\lambda''^{\operatorname{WE}}$ with $u_{\PP',j-1}$, which gives
			\begin{equation}
				\lambda''^{\operatorname{WE}}(D) =  u_{\PP',j-1} \quad \text{for all } D \in [D''_{k},D''_{k+1}),
			\end{equation}
			where $k$ is such that  ${[D'_{j-1},D'_{j}) \subseteq [D''_{k},D''_{k+1})}$ for some  $D''_{k},D''_{k+1} \in \DD''$. If $D''_k = 0$, the proof is complete, and if $D''_k > 0$, the same arguments as before give
			\begin{align*}
				\lambda''^{\operatorname{WE}}(D) &<  u_{\PP',j-1} \quad \text{for all } D \in [D''_{k-1},D''_{k}),
				\\
				u_{\PP'',k-1} &<  u_{\PP',j-1} \quad \text{for all } D \leq D''_{k},
			\end{align*}
			where $D''_{k-1} \in \DD''$ satisfies $D''_{k-1} < D''_{k} \leq D'_{j-1}$. This then shows that the result holds on the interval $[D''_{k-1},D_{i+1})$. If $D''_{k-1} = 0$, the proof is complete. If $D''_{k-1} > 0$ we can again repeat our arguments, to extend the interval on which the statement is shown to hold. Each time we repeat the arguments the set of paths under consideration is a strict subset of the previous considered set (e.g. $\PP'' \subset \PP' \subset \PP$). Since there are only finitely many paths, we can only repeat the arguments finitely many times, but we can always repeat the argument as long as the used set at the lowest value in the interval where the statement is shown to hold is non-empty. We conclude that after a finite number of repititions the used set at lowest value of the interval where the statement is shown to hold is empty, which implies that demand at this point is zero. Therefore, the proof is complete.
		\end{proof}
	}
	We outline how the above result can help detect BP. Recall from the start of this section that the routing game discussed in Example~\ref{ex:secIV} and associated to Figure~\ref{fig:wheatstone} is subject to BP when $D \in (\frac{2}{3},2)$. This instance of BP is caused by the set of paths $\SSrem = \{p_3\}$. We wish to derive that the game is subect to BP on this interval using Lemma~\ref{lem:lambda_vs_u}, without knowing apriori that we need to construct a modified game by removing the path $p_3$. For this, imagine that the WE-cost $\lmWE$ of the original game is availabe to us. Observe from the figure that the affine extension $u_{\PP,2}$ is the extension to $\realnonnegative$ of the linear map describing $\lmWE$ on the interval $(2,\infty)$ (in fact, $u_{\PP,2}$ is the same as $\tillmWE$).
	From Lemma~\ref{lem:lambda_vs_u} we then have that, for any $D \in [0,2)$, there exists some set of paths $\hatSSS$ which when removed from $\PP$ yield a routing game with $\hatlmWE(D) \leq u_{\PP,i}(D)$. However, the figure show that the WE-cost $\lmWE$ lies above this line $u_{\PP,2}$ for any $D \in (\frac{2}{3},2)$.   Thus, we have $\lmWE(D) > \hatlmWE(D)$ on the interval  $D \in (\frac{2}{3},2)$, which shows that the network is subject to BP on this interval.
	
	Note that we cannot always count on a BP to be revealed by comparing $\lmWE$ with functions of the form $u_{\PP,i}$. For instance, the BP present in Example~\ref{ex:secIV-2}, shown in Figure~\ref{fig:wheatstone-merged-bp}, is not revealed in this way, but instead by comparing $\lmWE$ with $u_{\vecPP,0}$, where $\vecPP = \PP \setminus \{(e_1,e_4)\}$. The full potential of Lemma~\ref{lem:lambda_vs_u} for detecting BP is revealed later in Proposition~\ref{prop:BP-ifandonlyif}. However, before we can establish that result, we present intermediate statements which themselves give additional useful methods for detecting BP. The first of these statements is that as a consequence of Lemma~\ref{lem:lambda_vs_u}, any increase in the slope of $\lmWE$ at a breakpoint implies the game is subject to BP at lower levels of demand.
	\begin{corollary} \longthmtitle{BP revealed by increase in slope of $\lmWE$} \label{prop:delta-lambda-increase-BP}
		Let $(\PP,\CC)$, $D_i \in \DD$ and $i \in [M]$ be given. If $\delta \lambda^{i-1} < \delta \lambda^i$, then there exists, for all ${T \in [D_{i-1},D_i)}$, a set $\SSrem_T \subset \PP$ such that $\tillmWE(T) < \lmWE(T)$, where $\tillmWE(T)$ is the WE-cost at demand $T$ for the modified game formed by removing paths in $\SSrem_T$.
	\end{corollary}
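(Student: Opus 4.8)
The plan is to reduce the statement to a single application of Lemma~\ref{lem:lambda_vs_u} together with an elementary comparison of two affine functions. Take $\widetilde{\PP} = \PP$ in that lemma (so the ``modified'' game is the original game, $\widetilde{M} = M$ and $\widetilde{D}_j = D_j$), and work with the index $i$ appearing in the hypothesis. The relevant affine extension is then $u_{\PP,i}(T) = \lmWE(D_i) + (T - D_i)\,\delta\lambda^i$ for all $T \in \realnonnegative$, i.e. the affine piece describing $\lmWE$ on $[D_i,D_{i+1})$ extended to the whole non-negative axis.

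First I would record the two affine descriptions of the cost near $D_i$. By Corollary~\ref{cor:evolution-lmWE}, $\lmWE$ is affine on $[D_{i-1},D_i]$ with slope $\delta\lambda^{i-1}$; since $\lmWE$ is continuous and takes the value $\lmWE(D_i)$ at $D_i$, this gives $\lmWE(T) = \lmWE(D_i) + (T-D_i)\,\delta\lambda^{i-1}$ for every $T \in [D_{i-1},D_i]$. Subtracting the expression for $u_{\PP,i}$ yields $\lmWE(T) - u_{\PP,i}(T) = (T - D_i)\big(\delta\lambda^{i-1} - \delta\lambda^i\big)$. For $T \in [D_{i-1},D_i)$ the factor $T - D_i$ is strictly negative, and by hypothesis $\delta\lambda^{i-1} - \delta\lambda^i < 0$, so the product is strictly positive; hence $u_{\PP,i}(T) < \lmWE(T)$ for all such $T$.

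Next I would invoke Lemma~\ref{lem:lambda_vs_u}. Since $i \in [M]$ we have $i \in [\widetilde{M}]_0$, and because $T < D_i < D_{i+1} = \widetilde{D}_{i+1}$ the hypotheses of the lemma hold at every $T \in [D_{i-1},D_i)$. The lemma then furnishes a set $\SSrem_T := \SSrem_{i,T} \subset \PP$ such that the modified game $(\PP \setminus \SSrem_T,\CC)$ has WE-cost $\tillmWE$ with $\tillmWE(T) \le u_{\PP,i}(T)$. Combining this with the previous step gives $\tillmWE(T) \le u_{\PP,i}(T) < \lmWE(T)$, which is exactly the desired conclusion; note that the strict inequality forces $\SSrem_T \neq \emptyset$, so this is a bona fide instance of Braess's paradox.

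The argument is short precisely because Lemma~\ref{lem:lambda_vs_u} does all the heavy lifting; the only genuine content left is the slope comparison. The main (minor) point to get right is the bookkeeping: anchoring the affine extension $u_{\PP,i}$ at $D_i$ with the correct slope, and checking that $D_i$ is a finite breakpoint (true because $i \le M$), so that $D_{i-1}$ is well defined and $[D_{i-1},D_i)$ is a nonempty interval on which the whole comparison is carried out. I do not anticipate any obstacle beyond this bookkeeping.
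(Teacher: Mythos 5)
Your proposal is correct and follows essentially the same route as the paper's own proof: both establish $\lmWE(T) > u_{\PP,i}(T)$ on $[D_{i-1},D_i)$ by comparing the two affine pieces anchored at $D_i$ with slopes $\delta\lambda^{i-1}$ and $\delta\lambda^{i}$, and then invoke Lemma~\ref{lem:lambda_vs_u} to produce the set $\SSrem_T$ with $\tillmWE(T) \le u_{\PP,i}(T) < \lmWE(T)$. No gaps.
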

	\ifinclude{
		\begin{proof}
			The result follows from Lemma~\ref{lem:lambda_vs_u} after noting that $\lmWE(D_i) = u_{\emptyset,i}(D_i)$ and that for $D \in (D_{i-1},D_i)$ we have
			\begin{align*}
				\lmWE(D) &= \lmWE(D_i) + (D - D_{i})\delta \lambda^{i-1},
				\\
				u_{\PP,i}(D) &= u_{\emptyset,i}(D_i) + (D - D_{i})\delta \lambda^{i}.
			\end{align*}
		\end{proof}
	}
	Lemma~\ref{lem:lambda_vs_u} and the above reveal potentially useful ways of detecting BP, however, they rely on investigating the WE-cost for multiple levels of demand. The following result finally gives us a sufficient condition for a game to be subject to BP at one level of demand $D$, which requires no investigation of modified games, or of the same game at multiple levels of demand.
	
	\begin{proposition} \longthmtitle{Paths losing flow reveals BP} \label{prop:losing-flow-implies-BP}
		Let $(\PP,\CC,D)$ be given. If ${\Gamma_{D} \cap \FF_1 = \emptyset}$, then there exists a set $\SSrem$ such that ${\tillmWE(D) < \lmWE(D)}$.
	\end{proposition}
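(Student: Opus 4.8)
The plan is to exploit Proposition~\ref{prop:characterize-D-infinity} together with Lemma~\ref{lem:lambda_vs_u}. The condition $\Gamma_D \cap \FF_1 = \emptyset$ says that at demand $D$ there is no nonnegative direction of increase; by the construction in the proof of Proposition~\ref{prop:characterize-D-infinity}, a nonnegative direction of increase exists whenever $D \ge D_M$. Hence $\Gamma_D \cap \FF_1 = \emptyset$ forces $D < D_M$, so $D$ lies strictly to the left of the final breakpoint. First I would pin down which interval $(D_i, D_{i+1})$ (with $D_{i+1} \le D_M$) contains $D$; without loss of generality take $D$ in the open interval, handling the breakpoint case by the continuity arguments already used elsewhere. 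Since $D < D_M$, the final interval $(D_M,\infty)$ gives rise to the affine extension $u_{\PP,M}$, which by Corollary~\ref{cor:evolution-lmWE} and Lemma~\ref{lem:obtaining-lambda-D_M} is the affine function $T \mapsto \delta\lambda^M T + \bar\beta$ describing $\lmWE$ on $[D_M,\infty)$ extended to all of $\realnonnegative$.

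The key step is to show $\lmWE(D) > u_{\PP,M}(D)$ for every $D < D_M$ with $\Gamma_D \cap \FF_1 = \emptyset$. For this I would argue that along the entire interval $[D, D_M]$ the slope of $\lmWE$ strictly exceeds $\delta\lambda^M$ somewhere, while matching it at $D_M$. More precisely: on each subinterval $(D_j, D_{j+1})$ between $D$ and $D_M$ the slope of $\lmWE$ is $\delta\lambda^j$, and at $D_M$ itself $\lmWE(D_M) = u_{\PP,M}(D_M)$ since $u_{\PP,M}$ agrees with $\lmWE$ on $[D_M,\infty)$ and $\lmWE$ is continuous. To get the strict inequality $\lmWE(D) > u_{\PP,M}(D)$ it suffices that $\delta\lambda^j \ge \delta\lambda^M$ for all such $j$, with strict inequality for at least one $j$. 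The ``at least one strict'' part is where the hypothesis $\Gamma_D \cap \FF_1 = \emptyset$ does its work: if $\delta\lambda^j = \delta\lambda^M$ held for every subinterval from $D$ to $D_M$, one can show $\Gamma_D$ would contain a nonnegative direction — essentially because a constant WE-cost slope all the way up to the final interval would let the directions of increase be chained as in the proof of Proposition~\ref{prop:nonnegative-Gamma_M} to produce an element of $\SOL(\FF_1,A) \subseteq \FF_1$ — contradicting $\Gamma_D \cap \FF_1 = \emptyset$. The monotonicity $\delta\lambda^j \ge \delta\lambda^M$ for $j$ between the interval of $D$ and $M$ should follow from Lemma~\ref{lem:simpe-cost-evolution} and the fact that $\delta\lambda^M = \min_{r\in\PP}\delta C^M_r$ is the smallest possible slope (Proposition~\ref{prop:characterize-D-infinity}-3), so the slope of $\lmWE$ can never drop below it.

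Once $\lmWE(D) > u_{\PP,M}(D)$ is established, apply Lemma~\ref{lem:lambda_vs_u} with $\tillPP = \PP$ and $i = M$ (legitimate since $D < D_M = \widetilde D_{M+1}$ in that lemma's notation, after noting $\DD$ for the unmodified game has final finite breakpoint $D_M$): there exists $\SSrem_{M,D} \subset \PP$ such that the modified game $(\widecheck\PP,\CC)$ with $\widecheck\PP = \PP \setminus \SSrem_{M,D}$ satisfies $\checklmWE_{M,D}(D) \le u_{\PP,M}(D) < \lmWE(D)$. Taking $\SSrem := \SSrem_{M,D}$ and $\tillmWE := \checklmWE_{M,D}$ gives $\tillmWE(D) < \lmWE(D)$, which is exactly the claim.

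The main obstacle I anticipate is the ``at least one strict inequality'' step — turning the \emph{negative} hypothesis $\Gamma_D \cap \FF_1 = \emptyset$ into a \emph{positive} statement that some slope $\delta\lambda^j$ strictly exceeds $\delta\lambda^M$. The clean way is contrapositive: assume $\delta\lambda^j = \delta\lambda^M$ for all subintervals from $D$'s interval up to and including $(D_M,\infty)$, and construct a nonnegative direction in $\Gamma_D$. One builds it by taking the nonnegative $f^\delta \in \Gamma_{D_M} \cap \SOL(\FF_1,A)$ from Proposition~\ref{prop:characterize-D-infinity}/\ref{prop:nonnegative-Gamma_M} and walking it backwards through the intervals: at each breakpoint $D_{j+1}$ one has $\Gamma_{D_{j+1}} \subseteq \Gamma^{j+1}$ and, because consecutive slopes coincide, Proposition~\ref{prop:evolution-lmvec} forces $\delta C^j = \delta C^{j+1}$ — but that directly contradicts Proposition~\ref{prop:evolution-lmvec}-3 ($\delta C^j \ne \delta C^{j+1}$) unless there is at most one interval between $D$ and $D_M$. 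So in fact the cleanest route may be: $\delta C^j \ne \delta C^{j+1}$ already forces the slopes to genuinely vary between consecutive intervals, and combined with $\delta\lambda^j \ge \delta\lambda^M$ and $\lmWE(D_M) = u_{\PP,M}(D_M)$, a short convexity/telescoping computation on $\lmWE(D) = \lmWE(D_M) - \int_D^{D_M}\!(\text{slope})\,dT$ yields $\lmWE(D) > u_{\PP,M}(D)$ whenever $D < D_M$, \emph{provided} we separately rule out the degenerate possibility that $\delta\lambda^j = \delta\lambda^M$ on every intermediate interval — and that degeneracy is precisely what $\Gamma_D \cap \FF_1 = \emptyset$ excludes, since in that degenerate case $\Gamma_D$ would inherit a nonnegative direction from $\Gamma^M$.
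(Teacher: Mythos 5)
There is a genuine gap: the intermediate claim your whole argument rests on --- that $\Gamma_{D} \cap \FF_1 = \emptyset$ implies $\lmWE(D) > u_{\PP,M}(D)$ --- is false, and the paper's own Example~\ref{ex:break-in-active-set-vs-break-in-cost-1} refutes it. There (Wheatstone plus a parallel edge $e_6$ of constant cost $2.1$) one has $\delta \lambda^M = 0$ and $u_{\PP,M} \equiv 2.1$, while for $D \in [1,2)$ the unique direction of increase is $(1,1,-1,0)$, so $\Gamma_{D} \cap \FF_1 = \emptyset$, and yet $\lmWE(D) = 2 < 2.1 = u_{\PP,M}(D)$. Two subsidiary assertions are also wrong. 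First, the monotonicity $\delta\lambda^j \geq \delta\lambda^M$ fails: already in Example~\ref{ex:secIV-1}, $\delta\lambda^1 = 0 < \tfrac{1}{2} = \delta\lambda^M$. Proposition~\ref{prop:characterize-D-infinity}-3 states $\delta\lambda^M = \min_{r \in \PP}\delta C^M_r$, a minimum over \emph{paths} of the components of $\delta C^M$, not a minimum over \emph{intervals} of the slopes $\delta\lambda^j$. Second, the inequality points the wrong way: since $\lmWE$ and $u_{\PP,M}$ agree at $D_M$, one has $\lmWE(D) - u_{\PP,M}(D) = \int_D^{D_M}\bigl(\delta\lambda^M - \tfrac{\partial}{\partial T}\lmWE(T)\bigr)dT$, so slopes everywhere $\geq \delta\lambda^M$ with one strict would give $\lmWE(D) < u_{\PP,M}(D)$; you would need the average slope on $[D,D_M]$ to be strictly \emph{smaller} than $\delta\lambda^M$, which the hypothesis does not provide.

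The paper's proof does not compare against $u_{\PP,M}$ at all. With $D \in [D_i,D_{i+1})$ it first removes the unused paths, setting $\SSrem := (\Ju_i)^c$ and $\tillPP := \Ju_i$, so that $\lmWE = \tillmWE$ near $D$ by Lemma~\ref{lem:dominated-V} and it suffices to exhibit BP in the modified game. It then shows $\tillGamma_D \subseteq \Gamma_D$, hence $\tillGamma_D \cap \FF_1 = \emptyset$, so by Proposition~\ref{prop:characterize-D-infinity} the demand $D$ is not in the final interval of the modified game. Because every path of $\tillPP$ is used, the active set of the modified game on the current interval equals all of $\tillPP$; Lemma~\ref{lem:inclusion-active-and-used} then forces the active set at the next breakpoint to coincide with it, Lemma~\ref{lem:simpe-cost-evolution} gives a strict increase in the slope of $\tillmWE$ there, and Corollary~\ref{prop:delta-lambda-increase-BP} produces the further path removal that strictly lowers the WE-cost at $D$. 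The pre-processing step of stripping unused paths is precisely what neutralises distractors such as the path $p_4$ in the counterexample above, and it is the ingredient your argument is missing.
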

	\ifinclude{
		\begin{proof}
			Let $D$ be such that $\Gamma_{D} \cap \FF_1 = \emptyset$. In addition, let $D_i,D_{i+1} \in \DD$ be such that $D \in [D_{i},D_{i+1})$ and let ${\SSrem := (\Ju_{i})^c}$. We consider the modified game over the set of paths ${\tillPP := \PP \setminus \SSrem = \Ju_{i}}$. Note that by definition $\SSrem \notin \NN_{T}$ for all $T \in [D_{i},D_{i+1})$ and so, by Lemma~\ref{lem:dominated-V}, we therefore have
			$\lmWE(T) = \tillmWE(T)$ for all $T \in [D_{i},D_{i+1}]$. Consequently, for proving the result, it suffices to show that there exists a set $\hatPP \subset \tillPP$ such that $\hatlmWE(D) < \tillmWE(D)$.
			
			Our first aim is to show that for the game defined over the set of paths $\tillPP := \Ju_{i}$ we have  $D < \tillD_{\tillM}$; that is, $D$ does not lie in the ``final'' interval of the modified game. This we do by proving that $\tillGamma_D \cap \FF_1 = \emptyset$. Indeed this is enough as from Proposition~\ref{prop:characterize-D-infinity}, if $D \ge \tillD_{\tillM}$, then $\tillGamma_D \cap \SOL(\FF_1,A)$ is nonempty. 
			
			Note that we have $\SSrem \notin \NN_{D}$ from above. Consequently, by Lemma~\ref{lem:dominated-V}, we have $\tillf^D \in \tillWW_{D}$ if and only if $\tillf^D \in \WW_{D}$ and $\tillf^D_{\SSrem} = 0$. Now pick $\tillf^D \in \tillWW_{D}$ and $\tillf^\delta \in \tillGamma_D$, such that $\tillf^D + \epsilon \tillf^\delta \in \tillWW_{D + \epsilon}$ as long as $\epsilon$ is small enough. For any $\epsilon > 0$ that then also satisfies $D + \epsilon \in (D,D_{i+1})$ we thus obtain a WE of the modified game over the set $\Ju_{i}$ at demand $D + \epsilon$. Note that we have $(\Ju_{i})^c \notin \NN_{D}$ and $(\Ju_{i})^c \notin \NN_{D + \epsilon}$. It follows from Lemma~\ref{lem:dominated-V} that  $\tillf^D \in \WW_{D}$, and as long as $\epsilon$ is small enough we also have $\tillf^D + \epsilon \tillf^\delta \in \WW_{D + \epsilon}$, which shows that $\tillf^\delta \in \Gamma_{D}$. Thus we see that $\tillGamma_D \subseteq \Gamma_{D}$, which implies $\tillGamma_D \cap \FF_1 = \emptyset$ and so, we have $D < \tillD_{\tillM}$. 
			
			Now let $\tillD_j,\tillD_{j+1} \in \tillD$ satsify $D \in [\tillD_{j},\tillD_{j+1})$. Since $\tilaset_T = \tillJa_j = \tillPP$ for all $T \in (\tillD_{j},\tillD_{j+1})$ it follows from Lemma~\ref{lem:inclusion-active-and-used} that $\tillJa_j = \tilaset_{\tillD_{j+1}}$. Lemma~\ref{lem:simpe-cost-evolution} therefore gives us $\delta \tilllamb^j <
			\delta \tilllamb^{j+1}$. The statement then follows from Corollary~\ref{prop:delta-lambda-increase-BP}.
		\end{proof}
	}

	Proposition~\ref{prop:losing-flow-implies-BP} gives us a feasible way of detecting Braess's paradox in a network at one specific level of demand. The downside is that the given condition is only sufficient; a game can still be subject to BP even when the given condition is not satisfied. For instance, considering the routing game defined by \eqref{eq:Wheatstone-simple-path-cost} in Example \ref{ex:secIV-1} and looking at the evolution of the associated WE in Figure~\ref{fig:wheatstone-WE}, we see that ${\Gamma_{D} \cap \FF_1 = \emptyset}$ for $D \in [1,2)$, revealing that the game is subject to BP in this range of demands, while the condition is not satisfied for $D \in (\frac{2}{3},1)$. Furthermore, in Example~\ref{ex:secIV-2}, in the routing game defined by \eqref{eq:Wheatstone-double-cost}, the condition ${\Gamma_{D} \cap \FF_1 = \emptyset}$ does not hold for any level of demand, completely missing the fact that the game is subject to BP in that example. To address this shortcoming our next result gives a necessary and sufficient condition for existence of Braess's paradox. 
	\begin{proposition} \longthmtitle{Final cost evolution of modified games reveals all BPs} \label{prop:BP-ifandonlyif}
		Let $(\PP,\CC)$ be given. The routing game is subject to a Braess's paradox at demand $D$ if and only if there exists a set $\SSrem \subset \PP$ such that $\tillGamma_{D} \cap \FF_1 \neq \emptyset$ and
		\begin{equation}\label{eq:u-lm-ineq}
			u_{\tillPP,\tillM}(D) < \lmWE(D).
		\end{equation}
	\end{proposition}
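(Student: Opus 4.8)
The plan is to prove the two implications of the biconditional separately. The ``if'' direction is short — it is essentially one application of Lemma~\ref{lem:lambda_vs_u}. The ``only if'' direction is the substantive one: starting from an arbitrary instance of Braess's paradox at $D$, I must produce a modified game in whose \emph{final} interval the demand $D$ already sits, so that its final affine extension $u$ coincides with its WE-cost at $D$.

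\textbf{Sufficiency.} Suppose $\SSrem\subset\PP$ satisfies $\tillGamma_D\cap\FF_1\neq\emptyset$ and $u_{\tillPP,\tillM}(D)<\lmWE(D)$. Here I would simply invoke Lemma~\ref{lem:lambda_vs_u} for the game $(\PP,\CC)$ with the subset $\tillPP$, the index $i=\tillM$, and the demand $D$ (the requirement $D\le\tillD_{\tillM+1}=\infty$ is automatic). It yields a set $\SSrem'\subset\PP$ such that the modified game $(\PP\setminus\SSrem',\CC)$ has WE-cost at $D$ bounded above by $u_{\tillPP,\tillM}(D)$, hence strictly below $\lmWE(D)$: an instance of BP at demand $D$. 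The hypothesis $\tillGamma_D\cap\FF_1\neq\emptyset$ is in fact not used here; its purpose is to make the converse exact, because (by Proposition~\ref{prop:characterize-D-infinity}) it forces $D$ to lie in the final interval of $(\tillPP,\CC)$, which is precisely what makes $u_{\tillPP,\tillM}(D)$ equal to $\tillmWE(D)$.

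\textbf{Necessity.} Assume BP at $D$, so the family $\mathcal{F}:=\setdef{\QQ\subseteq\PP}{\lambda^{\operatorname{WE}}_{\QQ}(D)<\lmWE(D)}$ of modified games that beat $\lmWE$ at $D$ is non-empty; here $\lambda^{\operatorname{WE}}_{\QQ}$ denotes the WE-cost of the modified game over $\QQ$. I would pick $\QQ^\star\in\mathcal{F}$ of \emph{minimal cardinality} and let $j$ be the index with $D\in[\tillD_j,\tillD_{j+1})$ in the breakpoint structure of $(\QQ^\star,\CC)$. A first observation is that the used set of $(\QQ^\star,\CC)$ on $(\tillD_j,\tillD_{j+1})$ must be all of $\QQ^\star$: any path outside it is unnecessary at every demand in $[\tillD_j,\tillD_{j+1})$ (using Lemma~\ref{lem:inclusion-active-and-used} to handle the endpoint $\tillD_j$), so by Lemma~\ref{lem:dominated-V} removing all such paths leaves the WE-cost at $D$ unchanged — contradicting minimality unless there were none. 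Consequently the active set on $(\tillD_j,\tillD_{j+1})$ equals $\QQ^\star$ as well, and then by Lemma~\ref{lem:inclusion-active-and-used} so does the active set at $\tillD_{j+1}$. I then claim $j$ is the last finite index of $(\QQ^\star,\CC)$, i.e.\ $D$ is in its final interval. If not, then $\tillD_{j+1}$ is a breakpoint at which the active set is unchanged, so the implication ``$\Ja_{i-1}=\aset_{D_i}\Rightarrow\delta\lambda^{i-1}<\delta\lambda^i$'' of Lemma~\ref{lem:simpe-cost-evolution} shows that the slope of $\lambda^{\operatorname{WE}}_{\QQ^\star}$ strictly increases at $\tillD_{j+1}$. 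Corollary~\ref{prop:delta-lambda-increase-BP} applied to $(\QQ^\star,\CC)$ at $T=D$ then produces a non-empty set of paths whose removal from $\QQ^\star$ gives a game with WE-cost at $D$ strictly below $\lambda^{\operatorname{WE}}_{\QQ^\star}(D)<\lmWE(D)$; the resulting path set is a strict subset of $\QQ^\star$ still in $\mathcal{F}$, contradicting minimality. Hence $D\ge\tillD_{\tillM}$ for $(\QQ^\star,\CC)$. By Proposition~\ref{prop:characterize-D-infinity} this gives $\tillGamma_D\cap\SOL(\FF_1,A)\neq\emptyset$, so in particular $\tillGamma_D\cap\FF_1\neq\emptyset$; and since an affine extension agrees with the WE-cost on the final interval (Definition~\ref{def:affine-ext}), $u_{\QQ^\star,\tillM}(D)=\lambda^{\operatorname{WE}}_{\QQ^\star}(D)<\lmWE(D)$. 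Taking $\SSrem=\PP\setminus\QQ^\star$ completes the proof.

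\textbf{Main obstacle.} The crux is the necessity direction, and within it the structural fact that a ``full'' interval of a modified game — one on which every path is used, hence every path is active — which is not the final interval must terminate with a \emph{strict} increase in the slope of the WE-cost. This is exactly the input Corollary~\ref{prop:delta-lambda-increase-BP} needs in order to manufacture a strictly smaller modified game that still witnesses BP at $D$, which is what forces the minimal-cardinality choice $\QQ^\star$ to have $D$ in its final interval. The remaining care is bookkeeping: verifying that deleting unnecessary paths preserves both $\lambda^{\operatorname{WE}}(D)$ and the ``all paths used on $D$'s interval'' property, and noting that the descent terminates because $\PP$ has finitely many subsets and each invocation of Corollary~\ref{prop:delta-lambda-increase-BP} strictly decreases the number of paths.
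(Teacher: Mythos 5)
Your proof is correct, and while the sufficiency direction matches the paper's (indeed slightly more streamlined: a single application of Lemma~\ref{lem:lambda_vs_u} with $i=\tillM$, for which $D\le\tillD_{\tillM+1}=\infty$ is automatic, replaces the paper's case split on whether $D<\tillD_{\tillM}$; and, like the paper, you correctly observe that the hypothesis $\tillGamma_D\cap\FF_1\neq\emptyset$ plays no role there), the necessity direction takes a genuinely different route. The paper argues constructively: starting from a witness $\SSS'$ of BP, it alternately invokes Proposition~\ref{prop:losing-flow-implies-BP} to enlarge the removed set whenever the current modified game has $\Gamma_D\cap\FF_1=\emptyset$, and strips away the unused paths $(\uset_D)^c$, iterating until it reaches a modified game $(\tillPP,\CC)$ with $\tiluset_D=\tillPP$ and $\tillGamma_D\cap\FF_1\neq\emptyset$; it then shows directly that a nonnegative direction of increase on a fully used path set remains a direction of increase for all $T\ge D$, so that $D$ lies in the final interval by the distinctness of the sets $\Gamma^i$ (Lemma~\ref{le:gamma-constant}). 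You instead take a minimal-cardinality witness $\QQ^\star$ and obtain the same two structural facts by contradiction: full usage via Lemma~\ref{lem:dominated-V} (any unused path could be deleted without changing the WE-cost at $D$), and membership of $D$ in the final interval via the chain Lemma~\ref{lem:inclusion-active-and-used} $\Rightarrow$ Lemma~\ref{lem:simpe-cost-evolution} $\Rightarrow$ Corollary~\ref{prop:delta-lambda-increase-BP}, each failure producing a strictly smaller member of your family $\mathcal{F}$. The extremal framing buys a cleaner termination argument (the paper's ``repeat until'' iteration is somewhat informal about why it stops), at the price of routing the final-interval claim through the slope-monotonicity machinery rather than the paper's more elementary direct argument about directions of increase; you also recover $\tillGamma_D\cap\FF_1\neq\emptyset$ as a consequence of Proposition~\ref{prop:characterize-D-infinity} rather than building it into the construction. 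Both arguments rest on the same underlying toolkit, and yours is sound.
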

	\ifinclude{
		\begin{proof}
			First assume that $(\PP,\CC)$ is subject to Braess's paradox at demand $D$. That is, there exists a set $\SSS'$ such that $\lambda'^{\operatorname{WE}}(D) < \lmWE(D)$, where $\lambda'^{\operatorname{WE}}$ stands for the WE-cost for the game $(\PP \setminus \SSS',\CC)$. To reiterate, we need to show that there exists $\SSrem$ such that $\tillGamma_{D} \cap \FF_1 \neq \emptyset$ and~\eqref{eq:u-lm-ineq} holds. Consider therefore the case when $\Gamma'_D \cap \FF_1 = \emptyset$. It follows from Proposition~\ref{prop:losing-flow-implies-BP} that the routing game $(\PP \setminus \SSS',\CC)$ is also subject to Braess's paradox at demand $D$. That is, there exists a set $\SSS'' \subset \PP'$ such that ${\lambda''^{\operatorname{WE}}(D) < \lambda'^{\operatorname{WE}}(D)}$. With this we deduce that since $\PP$ is subject to BP caused by the set $\SSS'$ and $\PP'$ is subject to BP by removing the paths $\SSS''$, then $\PP$ is subject to BP caused by the set $\SSS' \cup \SSS''$. We can repeat this argument until we find a set $\hatSSS$ such that the modified game $(\PP \setminus \hatSSS)$ satisfies $\hatGamma_D \cap \FF_1 \neq \emptyset$. Let $\hatSSS' = (\hatuset_D)^c$. We then consider the game $(\hatPP \setminus \hatSSS',\CC)$. Once again, if $\hatGamma'_D \cap \FF_1 = \emptyset$, this means that the game $(\hatPP \setminus \hatSSS',\CC)$ is subject to BP at demand $D$, and therefore there exists a set $\hatSSS'' \subset \hatPP'$ such that for the game over $(\hatPP' \setminus \hatSSS'',\CC)$ we find $\widehat{\lambda}''^{\operatorname{WE}}(D) < \widehat{\lambda}'^{\operatorname{WE}}(D)$. We can repeat the above arguments until we find a set $\SSrem \subset \PP$ that satisfies $\tillPP := \PP \setminus \SSrem$, $\tiluset_D = \tillPP$, $\tillGamma_D \cap \FF_1 \neq \emptyset$, and $\tillmWE(D) < \lmWE(D)$. 
			
			Next, let $ \tillf^\delta \in \tillGamma_D \cap \FF_1$. Since $\tiluset_D = \tillPP$ we have $\tillmWE_p(D) = \tillmWE_r(D)$ for all $p,r \in \tillPP$. In addition we have
			\begin{equation*}
				\tillMM_D = \setdef{f^\delta \in \HH_{1}}{f^\delta_{\tillPP^c} = 0},
			\end{equation*}
			and from Proposition~\ref{prop:characterize-directions-of-increase} we know that $\tillf^\delta \in \SOL(\tillMM_D,A)$.
			It follows from Propositions~\ref{prop:properties-solution-VI} that $A_p\tillf^\delta = A_r\tillf^\delta$ holds for all $p,r \in \tillPP$. In other words, as we move in the direction of $\tillf^\delta$, the cost of all paths remain equal. Since $\tillf^\delta \in \FF_1$ it also follows that for any $\tillf^D \in \tillWW_{D}$ and any $\epsilon > 0$ we have $\tillf^D + \epsilon \tillf^\delta \geq 0$. Consequently, $\tillf^D + \epsilon \tillf^\delta$ is a WE for any $\epsilon > 0$.
			From this it follows that $\tillf^\delta \in \tillGamma_T$ for all $T \geq D$. Since we know from Lemma~\ref{cor:evolution-of-GammaD} that $\tillGamma^i \neq \tillGamma^{i+1}$ for all $i \in [\tillM]$, it follows that $D \in [\tillD_{\tillM},\infty)$. 
			As a consequence, using the definition of the affine extension function $u_{\tillPP,\tillM}$ we arrive at $\tilllamb(D) = u_{\tillPP,\tillM}(D)$. 
			And since $\tilllamb(D) < \lmWE(D)$, we conclude that $u_{\tillPP,\tillM}(D) < \lmWE(D)$. This shows one direction of the implication.
			
			For the other direction, assume that there exists some $\SSrem \subset \PP$ such that $u_{\tillPP,\tillM}(D) < \lmWE(D)$. If $D < \tillD_{\tillM}$, then it follows from Lemma~\ref{lem:lambda_vs_u} that the game $(\PP,\CC)$ is subject to BP at demand $D$. If $D \geq \tillD_{\tillM}$, then we have $\tillmWE(D) = u_{\tillPP,\tillM}(D)$ and therefore $\tillmWE(D) < \lmWE(D)$, which shows that the game is subject to BP at demand $D$. This completes the proof.
		\end{proof}
	}
	The merit of the above proposition lies in that for different levels of demand we do not have to check for BP related to a set $\SSrem$ separately. Instead, given $\SSrem$, we have one function, namely $u_{\tillPP,\tillM}$ which can be used to check for BP related to $\SSrem$ for all demands. This can be done simply by checking whether the value of that function exceeds the achieved WE-cost or not. However, the condition depends on the choice for $\SSrem$ and as such, it can be computationally infeasible to check this condition for all possible $\SSrem$ in order to detect BP. Despite this, it is still a useful result. For instance, we observed earlier that for Example~\ref{ex:secIV-1}, comparing $\lmWE$ to $u_{\PP,2}$ reveals BP on the interval $D \in (\frac{2}{3},1)$ and we can now generalize this observation with the following result:
	\begin{corollary} \longthmtitle{Easily obtainable upper bound on achievable WE-cost}
		Let $(\PP,\CC,D)$ be given. If ${u_{\PP,M}(D) < \lmWE(D)}$, then the network is subject to Braess's paradox at demand $D$. 
	\end{corollary}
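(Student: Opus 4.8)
The plan is to obtain this corollary as an immediate consequence of Lemma~\ref{lem:lambda_vs_u}, applied to the original game itself. First I would unpack the hypothesis: by Definition~\ref{def:affine-ext} with $\vecPP = \PP$, the function $u_{\PP,M}$ is the affine map $T \mapsto \lmWE(D_M) + (T-D_M)\delta\lambda^M$, that is, the affine piece describing $\lmWE$ on the final interval $[D_M,D_{M+1})$, extended to all of $\realnonnegative$. Because $D_{M+1} = +\infty$, this piece actually coincides with $\lmWE$ on the whole of $[D_M,\infty)$, so the hypothesis $u_{\PP,M}(D) < \lmWE(D)$ can only be met for some $D < D_M$; this is a useful sanity check but is not needed for the argument.

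Next I would invoke Lemma~\ref{lem:lambda_vs_u} with the parameter choices $\tillPP := \PP$ and $i := M$. Taking $\tillPP = \PP$ is legitimate, since the lemma is stated for an arbitrary $\tillPP \subseteq \PP$; with this choice $\tillM = M$ and $\tillD_{i+1} = D_{M+1} = \infty$, so the lemma's standing requirement $D \le \tillD_{i+1}$ holds for every $D \ge 0$ and imposes no restriction. The lemma then yields a set $\SSrem := \SSrem_{M,D} \subset \PP$ such that the modified game $(\widecheck{\PP},\CC)$, with $\widecheck{\PP} := \PP \setminus \SSrem$, has WE-cost satisfying $\checklmWE_{M,D}(D) \le u_{\PP,M}(D)$.

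Chaining this bound with the hypothesis $u_{\PP,M}(D) < \lmWE(D)$ gives $\checklmWE_{M,D}(D) < \lmWE(D)$. In particular $\SSrem$ cannot be empty, because otherwise $\widecheck{\PP} = \PP$ and $\checklmWE_{M,D} = \lmWE$, contradicting the strict inequality just obtained. Thus $\SSrem$ is a genuine set of paths whose removal strictly decreases the WE-cost at demand $D$, which is exactly the statement that the routing game is subject to Braess's paradox at demand $D$. There is essentially no technical obstacle here; the only points that deserve a moment's care are checking that $\tillPP = \PP$ is an allowed choice in Lemma~\ref{lem:lambda_vs_u} and that picking $i = M$ makes its demand restriction vacuous, since the last breakpoint is $+\infty$. (The same conclusion could instead be reached through Proposition~\ref{prop:BP-ifandonlyif}, but that route would require first passing --- via Proposition~\ref{prop:losing-flow-implies-BP} --- to a modified game whose $\tillGamma_D$ meets $\FF_1$, so the direct appeal to Lemma~\ref{lem:lambda_vs_u} is the shortest path.)
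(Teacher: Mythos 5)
Your proposal is correct and matches the paper's intended argument: the paper gives no explicit proof, but the discussion following Lemma~\ref{lem:lambda_vs_u} (detecting BP in Example~\ref{ex:secIV-1} by comparing $\lmWE$ with $u_{\PP,2}$) is precisely your application of that lemma with $\tillPP = \PP$ and $i = M$, where the constraint $D \le \tillD_{M+1} = \infty$ is vacuous. Your additional check that the resulting $\SSrem$ must be nonempty is a sensible touch the paper leaves implicit.
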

	Note that thanks to Lemma \ref{lem:obtaining-lambda-D_M} we have a direct method of obtaining $u_{\PP,M}$, which highlights the usefulness of the above corollary.
	
	Before finishing our exposition on BP in this section, we use the obtained results to show that any game can only be subject to BP on a finite interval of demands. To the best of our knowledge, this result has not been established before in full generality, though versions limited to the Wheatstone network in Figure~\ref{fig:wheatstone} have been obtained in~\cite{EIP-SLP:97,VZ-EA:15}.
	\begin{proposition} \longthmtitle{Braess's paradox occurs on a finite interval}
		Let $(\PP,\CC)$ be given. There exists a value $D^{\operatorname{BP}} \geq 0$ such that for any $\SSrem$, we have $\lmWE(D) \leq \tillmWE(D)$ for all $D \geq D^{\operatorname{BP}}$.
	\end{proposition}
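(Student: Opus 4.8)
The plan is to reduce the statement to a comparison of the \emph{slopes} of $\lmWE$ and $\tillmWE$ in their respective final (affine) intervals, using the fact that $\PP$ is finite so there are only finitely many sets $\SSrem \subsetneq \PP$ to consider. Thus it suffices to produce, for each such $\SSrem$, a threshold $D_{\SSrem}$ with $\lmWE(D) \le \tillmWE(D)$ for all $D \ge D_{\SSrem}$, and then set $D^{\operatorname{BP}} := \max_{\SSrem} D_{\SSrem}$ (the case $\SSrem = \emptyset$ being trivial). Fix $\SSrem$ and let $(\tillPP,\CC)$ be the modified game; since $\tillPP \neq \emptyset$, the modified game is well-defined, and all the results of the earlier sections apply to it. By Lemma~\ref{lem:obtaining-lambda-D_M} and its modified-game analogue, $\lmWE$ is affine on $[D_M,\infty)$ and $\tillmWE$ is affine on $[\tillD_{\tillM},\infty)$; write $\lmWE(D) = aD+b$ for $D\ge D_M$ and $\tillmWE(D) = \tilde a D + \tilde b$ for $D \ge \tillD_{\tillM}$, where $a = \delta\lambda^M \ge 0$ and $\tilde a = \delta\tilllamb^{\tillM} \ge 0$ by Proposition~\ref{prop:pw-affine}. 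The result will follow once I establish $\tilde a \ge a$, together with $\tilde b \ge b$ in the borderline case $\tilde a = a$.

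To compare the slopes I integrate and invoke the relation between the original and the modified game. By Proposition~\ref{prop:pw-affine} we have $\tfrac{\partial}{\partial D}V = \lmWE$ on $(0,\infty)$, and likewise $\tfrac{\partial}{\partial D}\tillV = \tillmWE$ for the modified game; integrating the affine expressions above, there exist constants $c,\tilde c$ such that
\[
V(D) = \tfrac{1}{2}aD^2 + bD + c, \qquad \tillV(D) = \tfrac{1}{2}\tilde a D^2 + \tilde b D + \tilde c
\]
for all sufficiently large $D$. On the other hand, Lemma~\ref{lem:dominated-V} gives $V(D) \le \tillV(D)$ for every $D \ge 0$ (using $\tillFF_D \subseteq \FF_D$, which is nonempty since $\tillPP \neq \emptyset$). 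Hence $\tfrac{1}{2}(\tilde a - a)D^2 + (\tilde b - b)D + (\tilde c - c) \ge 0$ for all large $D$; letting $D \to \infty$ forces $\tilde a \ge a$, and in the case $\tilde a = a$ the surviving linear inequality $(\tilde b - b)D + (\tilde c - c) \ge 0$ then forces $\tilde b \ge b$.

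With this in hand the conclusion is immediate: for $D \ge \max(D_M,\tillD_{\tillM})$ we have $\tillmWE(D) - \lmWE(D) = (\tilde a - a)D + (\tilde b - b)$, which tends to $+\infty$ if $\tilde a > a$ (hence is nonnegative beyond some $D_{\SSrem}$), and equals the constant $\tilde b - b \ge 0$ if $\tilde a = a$ (so $D_{\SSrem} = \max(D_M,\tillD_{\tillM})$ works). Taking $D^{\operatorname{BP}} = \max_{\SSrem \subsetneq \PP} D_{\SSrem}$ finishes the argument. I expect the only genuinely delicate point to be the equal-slope case $\delta\lambda^M = \delta\tilllamb^{\tillM}$: there the two affine pieces are parallel, so one cannot order $\lmWE$ and $\tillmWE$ from their slopes alone, and it is precisely the global inequality $V \le \tillV$ — i.e. the integrated (cumulative) cost comparison — that pins down the correct ordering of the intercepts. (The supporting facts that the modified game's $\tillmWE$, $\tillV$ satisfy the analogues of Proposition~\ref{prop:pw-affine} and Lemma~\ref{lem:obtaining-lambda-D_M} are used freely, as the paper has already justified that all earlier results transfer to modified games.)
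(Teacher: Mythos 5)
Your proof is correct and follows essentially the same route as the paper's: both compare the final-interval slopes $\delta\lambda^M$ and $\delta\tilllamb^{\tillM}$ and use the inequality $V \le \tillV$ from Lemma~\ref{lem:dominated-V} to rule out $\delta\tilllamb^{\tillM} < \delta\lambda^M$ and, in the equal-slope case, to order the intercepts. The only difference is presentational (you integrate to explicit quadratics rather than arguing by contradiction on the derivative of $\tillV - V$), plus you make explicit the finiteness-of-subsets step and the maximum over thresholds $D_{\SSrem}$, which the paper leaves implicit.
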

	\ifinclude{
		\begin{proof}
			For all $T \geq \max(D_M,\tillD_{\tillM})$ we have
			\begin{align*}
				\lmWE(T) &= \lmWE(D_M) + (T-D_M) \delta \lambda^M,
				\\
				\tillmWE(T) &= \tillmWE(\tillD_{\tillM}) + (T-\tillD_{\tillM}) \delta \tilllamb^{\tillM}.
			\end{align*}
			For the sake of contradiction, assume that $\delta \tilllamb^{\tillM} < \delta \lambda^M$. From Proposition~\ref{prop:pw-affine}, we have
			\begin{equation} \label{eq:par-v-delta}
				\begin{aligned}
					\frac{\partial}{\partial D}\big(\tillV(T)  -  V(T)\big)  &=  \tillmWE(T)  -  \lmWE(T) \\
					&=  \lmWE(D_M)  -  \tillmWE(\tillD_{\tillM})	 
					\\ 
					&+  (T  -  D_M) \delta \lambda^M   -  (T -  \tillD_{\tillM}) \delta \tilllamb^{\tillM} 
				\end{aligned}
			\end{equation}
			for $T \geq \max(D_M,\tillD_{\tillM})$. Since we assume $\delta \tilllamb^{\tillM} < \delta \lambda^M$, the above relation implies that for large enough $T$, we get $\tillV(T) < V(T)$ which contradicts Lemma~\ref{lem:dominated-V}. Therefore, we obtain $\delta \tilllamb^{\tillM} \geq \delta \lambda^M$.
			Now consider two cases: (a) ${\delta \tilllamb^{\tillM} = \delta \lambda^M}$ and (b) $\delta \tilllamb^{\tillM} > \delta \lambda^M$. For (a) note that if $\tillmWE(T) < \lmWE(T)$ for any ${T \geq \max(D_M,\tillD_{\tillM})}$, then we arrive at a similar contradiction with Lemma~\ref{lem:dominated-V} as before. Thus, for case (a), we must have ${\tillmWE(T) \geq \lmWE(T)}$ for all $T \geq \max(D_M,\tillD_{\tillM})$. For case (b), from~\eqref{eq:par-v-delta}, for all large values of $T$, we have $\tillmWE(T) \geq \lmWE(T)$. Hence, combining the reasoning of both cases, we find that there exists some value $D^{\operatorname{BP}}$ such that $\tillmWE(T) \geq \lmWE(T)$ for all $T \geq D^{\operatorname{BP}}$. This completes the proof.
		\end{proof} 
	}
	Note that $D^{\operatorname{BP}}$ can be strictly larger then $D_M$, as is the case in Example~\ref{ex:secIV-2}, where $D_M = 1$ while Figure~\ref{fig:wheatstone-merged-bp} shows that the game is subject to BP on the interval $D \in (\frac{2}{3},2)$.
	\subsection{Implications for Braess's paradox}
	In the final part of this section, we discuss how the results we have obtained on BP show that even when a game is known to be subject to BP, one should be careful in drawing the conclusion that the responsible set of paths is better removed. We start from Proposition~\ref{prop:unnecesary-and-BP} from which it may seem that the usefulness of a set $\SSrem$ that is not necessary at some demand $D$ is questionable. If this set of paths had not been present, either the WE-cost would have stayed the same for all lower levels of demand, or better yet, would have decreased for some of these demands. 
	
	However, Proposition~\ref{prop:unnecesary-and-BP} does not conclude anything about behavior at higher demands. A path that is unnecessary for one level of demand may be very important when the demand is higher. It can even be the case that a path is necessary at some level of demand, becomes unnecessary at a higher level of demand, and finally becomes a necessary part of the ``final'' set of used paths $\Ju_{M}$. This phenomenon is showcased in the following example:
	\begin{example} \longthmtitle{Paths causing BP can become useful again} \label{ex:complex-counter-example}
		\begin{figure}
			\centering
			\begin{minipage}[t]{0.5 \textwidth}
				\centering
				\begin{tikzpicture}[,->,shorten >=1pt,auto,node distance=1.5 cm,
					semithick]
					
					\tikzset{VertexStyle/.style = {shape          = circle,
							ball color     = white,
							text           = black,
							inner sep      = 2pt,
							outer sep      = 0pt,
							minimum size   = 18 pt}}
					\tikzset{EdgeStyle/.style   = {
							double          = black,
					}}
					\tikzset{LabelStyle/.style =   {draw,
							fill           = white,
							text           = black}}
					\node[VertexStyle](A){$v_o$};
					\node[VertexStyle, below right=of A](B){};
					\node[VertexStyle, right=of B](C){};
					\node[VertexStyle, above right=of C](D){$v_d$};
					\node[VertexStyle, above right= of A, xshift=10.7mm](E){};
					
					\path (A)  edge      node{$e_1 $}(E)
					(E)  edge      node{$e_2 $} (D)
					(A)  edge             node[below]{$e_3 \enskip$} (B)
					(E)	edge[dashed]			node{$e_5$}	(C)
					(B)	edge			node{$e_7$}	(C)
					(C)	edge			node[below]{$\enskip e_4$}	(D)
					(B)	edge[bend right = 80]	node[below]{$\enskip e_6$}	(D);
				\end{tikzpicture}
				\caption{Modification of the Wheatstone network, in which there is one additional path.}
				\label{fig:complex-counter}
			\end{minipage} \hfill
			\begin{minipage}[t]{0.48 \textwidth}
				\centering
				\includegraphics[width =\textwidth]{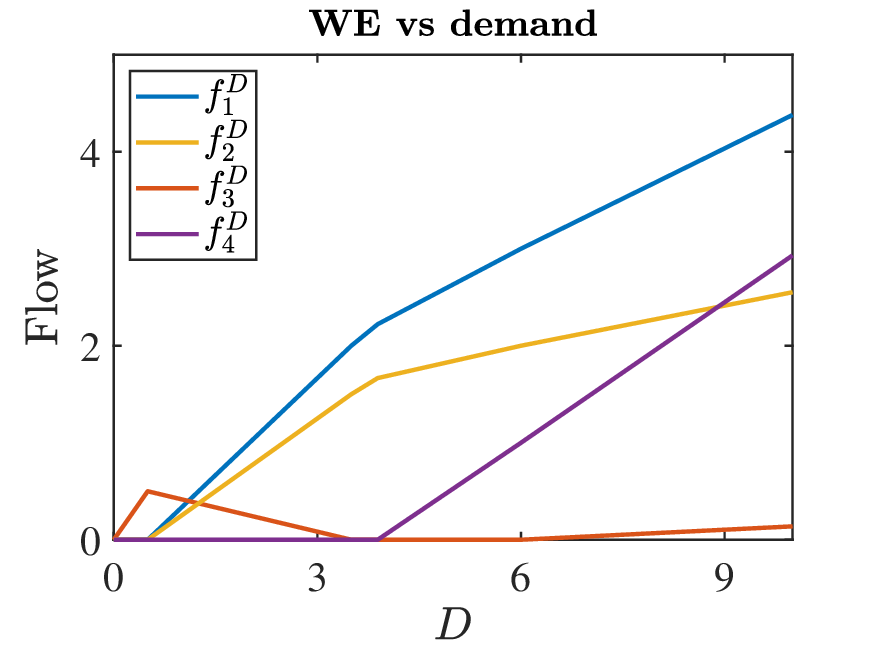}
				\caption{The evolution of $f^D$ for the routing game over the network in Figure~\ref{fig:complex-counter} defined by the costs \eqref{eq:complex-counter-cost}.}
				\label{fig:complex-counter-WE}
			\end{minipage}
		\end{figure}
		\rm{
			Consider the network in Figure~\ref{fig:complex-counter}, and let the cost functions of the edges be given by
			\begin{equation*}
				\begin{aligned}
					&C_{e_1}(f_{e_1}) = 2 f_{e_1},	&		&C_{e_2}(f_{e_2}) = f_{e_2} + 1,	
					\\
					&C_{e_3}(f_{e_3}) = f_{e_3} + 1,	&	&C_{e_4}(f_{e_4}) 	= 2f_{e_4}, \hspace{8 pt}	
					\\
					&C_{e_5}(f_{e_5}) = 0,	&		&C_{e_6}(f_{e_6}) = f_{e_6} + 5,	
					\\
					&	&C_{e_7}(f_{e_7}) = f_{e_7}.	&
				\end{aligned}
			\end{equation*}
			This network has four paths from origin to destination, namely $p_1 = (e_1,e_2)$, ${p_2 = (e_3,e_7,e_4)}$, $p_3 = (e_1,e_5,e_4)$ and ${p_4 = (e_3,e_6)}$. The path-cost function is then given by
			\begin{equation} \label{eq:complex-counter-cost}
				C(f) = Af + b = \left(\begin{array}{ccccc}
					3	&0	&2	&0	\\
					0	&4	&2	&1	\\
					2	&2	&4	&0	\\
					0	&1	&0	&2	\\
				\end{array}\right)f + \left(\begin{array}{c}
					1	\\
					1	\\
					0	\\
					6	\\
				\end{array}\right).
			\end{equation}
			We see the evolution of the WE, which is unique in this case, in Figure~\ref{fig:complex-counter-WE}. The explicit expression for the WE is 
			\begin{equation*}
				f^D  = 
				\begin{cases}
					\left(\begin{array}{cccc}
						0	&0	&D	&0   
					\end{array}\right)^\top & \text{for } D  \in  [0,\frac{1}{2}],	\\
					\left(\begin{array}{cccc}
						\frac{8D - 4}{12}	&\frac{6D - 3}{12}	&\frac{7 - 2D}{12}	&0    
					\end{array}\right)^\top 
					& \text{for } D   \in  [\frac{1}{2},\frac{7}{2}],	\\
					\left(\begin{array}{cccc}
						\frac{4D}{7}	&\frac{3D}{7}	&0	&0    
					\end{array}\right)^\top & \text{for } D   \in   [\frac{7}{2},\frac{35}{9}],	\\
					\left(\begin{array}{cccc}
						\frac{7D + 15}{19}	&\frac{3D + 20}{19}	&0	&\frac{9D - 35}{19}   
					\end{array}\right)^\top & \text{for } D   \in  [\frac{35}{9},6],
					\\
					\left(\begin{array}{cccc}
						\frac{10D + 27}{29}	&\frac{4D + 34}{29}	&\frac{D - 6}{29}&\frac{14D - 55}{29}	   
					\end{array}\right)^\top & \text{for } D   \in  [6,\infty).
				\end{cases}
			\end{equation*}
			Note that for $D \in [1,\frac{1}{2}]$ path $p_3$ carries all flow. Then, on the interval $[\frac{1}{2},\frac{7}{2}]$ path $p_3$ loses all flow as it is rerouted onto the paths $p_1$ and $p_2$, and consequently, $p_3$ is unnecessary in the interval $[\frac{7}{2},\frac{35}{9}]$. Finally $p_3$ becomes necessary again for $D > 6$ and we see that $\Ju_{M} = \PP$ in this case. \oprocend
		}
	\end{example}
	The above shows that one needs to be cautious in dismissing a set of paths that are unnecessary at a certain demand to be useless or detrimental overall. 
	Though this set subjects the game to BP at some level of demand, it may become useful again at higher levels of demand. However, even for lower levels of demand the conclusions are not that straightforward. The following result shows that when a set $\SSrem$ subjects the game to BP at some demand $D$, its presence must have been strictly beneficial at some lower level of demand.
	\begin{corollary} \longthmtitle{Paths causing BP are useful at lower demands} \label{cor:BP-beneficial}
		Let $(\PP,\CC,D)$ and $\SSrem \subset \PP$ be given. If $\lmWE(D) > \tillmWE(D)$, then there exist $D^-,D^+$ such that ${0 < D^- < D^+ < D}$ and 
		\begin{equation*}
			\lmWE(T) < \tillmWE(T) \quad \text{for all } T \in (D^-,D^+).
		\end{equation*}
	\end{corollary}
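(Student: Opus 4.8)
The plan is to run the whole argument in terms of the potential functions $V$ and $\tillV$ from~\eqref{eq:V-for-modified}, mirroring the proofs of Lemma~\ref{lem:dominated-V} and Proposition~\ref{prop:unnecesary-and-BP}, and to close with an elementary mean value argument. Intuitively: $\tillV-V$ is nonnegative everywhere, vanishes at $D=0$, and is strictly positive at $D$; since its derivative is $\tillmWE-\lmWE$, this derivative must be strictly positive somewhere in $(0,D)$, hence on a whole subinterval by continuity, and that is precisely a window where removing $\SSrem$ is detrimental.

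First I would upgrade the hypothesis to a strict inequality between the potentials. Since $\lmWE(D) > \tillmWE(D)$, the third item of Lemma~\ref{lem:dominated-V} rules out the case $\SSrem \notin \NN_D$ (that case would force $\lmWE(D) = \tillmWE(D)$), so $\SSrem \in \NN_D$, and the first two items of the same lemma then give $V(D) < \tillV(D)$. Next I would collect the regularity needed: $V$ and $\tillV$ are continuous on $[0,\infty)$ (for $D$ near $0$ one has $0 \le V(D) \le \max_{f \in \FF_D}\sum_{e_k}\int_0^{f_{e_k}}C_{e_k}(z)\,dz \to 0 = V(0)$, and likewise for $\tillV$), and by Proposition~\ref{prop:pw-affine} they are differentiable on $(0,\infty)$ with $\tfrac{\partial}{\partial T}V = \lmWE$ and $\tfrac{\partial}{\partial T}\tillV = \tillmWE$, both of the latter continuous.

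Set $g := \tillV - V$. Then $g$ is continuous on $[0,\infty)$, differentiable on $(0,\infty)$ with $g' = \tillmWE - \lmWE$, and from the first step together with $V(0) = \tillV(0) = 0$ (because $\FF_0 = \tillFF_0 = \{\mymathbf{0}\}$) we get $g(0) = 0 < g(D)$. Applying the mean value theorem on $[0,D]$ produces $c \in (0,D)$ with $g'(c) = g(D)/D > 0$, that is, $\tillmWE(c) > \lmWE(c)$; by continuity of $g'$ there is an open interval around $c$, contained in $(0,D)$, on which $\tillmWE > \lmWE$, and shrinking it slightly yields $0 < D^- < D^+ < D$ with $\lmWE(T) < \tillmWE(T)$ for all $T \in (D^-,D^+)$, which is the claim.

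I do not anticipate a genuine obstacle here; the only points requiring a line of justification are the continuity of $V$ and $\tillV$ at the origin (so that the mean value theorem applies on the closed interval $[0,D]$, equivalently so that $g(0)$ is well defined and equal to $0$) and the purely routine bookkeeping of selecting $D^{-},D^{+}$ strictly inside $(0,D)$.
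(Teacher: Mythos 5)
Your proposal is correct and follows essentially the same route as the paper: both arguments reduce the claim to $V(D) < \tillV(D)$ via Lemma~\ref{lem:dominated-V}, use $V(0)=\tillV(0)=0$ together with $\frac{\partial}{\partial T}(\tillV - V) = \tillmWE - \lmWE$, and conclude that the derivative must be strictly positive on some subinterval of $(0,D)$. The paper leaves the final step implicit where you spell it out with the mean value theorem and continuity of $\tillmWE - \lmWE$; that is a faithful filling-in of the same argument, not a different one.
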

	\ifinclude{
		\begin{proof}
			The arguments for proving this are similar to those for Proposition~\ref{prop:unnecesary-and-BP}. We note that $\lmWE$ and $\tillmWE$ are the derivatives of $V$ and $\tillV$, respectively, and are in addition continuous and piece-wise affine with only finitely many points at which they are not differentiable. From Lemma~\ref{lem:dominated-V} we know that $V(T) \leq \tillV(T)$ for all $T \in \realnonnegative$, and we can derive from the same lemma that if $V(T) = \tillV(T)$, then ${\lmWE(D) = \tillmWE(D)}$. It follows that $\lmWE(D) > \tillmWE(D)$ implies $V(D) < \tillV(D)$. Since $V(0) = \tillV(0) = 0$ this implies that for some range of demands $(D^-,D^+)$ with $0 \leq D^- < D^+ < D$ we must have $\lmWE(T) < \tillmWE(T)$ for all $T \in (D^-,D^+)$.
		\end{proof}
	}
	The above result lends us a different perspective on Braess' paradox. We have already seen that the phenomenon is highly dependent on the demand, but this corollary shows that even though addition of a set of paths may increase the travel-time of all participants at one level of demand, looking at a more complete picture, we see that the same set of paths must have decreased travel time for some lower level of demand. When making the decision to keep or remove a set of paths from a network it would thus be helpful to consider the effect of those paths on the network for the entire range of demands in which the network is planned to function. 
	
	This leads naturally to the question of how to quantify the value of a path to the network while considering a range of demands. We first consider the following function
	\begin{equation*}
		J(D) = \int_{0}^{D} \tillmWE(z) - \lmWE(z)dz
	\end{equation*}
	as a somewhat simplistic measure of the value of a set of paths $\SSrem$ to the network on the range of demands from zero to $D$.
	For this we have the following result:
	\begin{proposition} \longthmtitle{Benefits of a path using a simple measure}\label{pr:J}
		Let $(\PP,\CC,D)$ and $\SSrem \subset \PP$ be given. We have
		\begin{equation*}
			J(D) \geq 0 \text{ for all } D \in [0,\infty),
		\end{equation*}
		with $J(D) = 0$ if and only if $\SSrem \notin \NN_D$.
	\end{proposition}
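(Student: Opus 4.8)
The plan is to recognize $J$ as simply the difference $\tillV - V$ and then invoke Lemma~\ref{lem:dominated-V}. First I would record the two facts from Proposition~\ref{prop:pw-affine} (applied to both the original and the modified game, the latter being permissible since, as noted in Section~\ref{sec:BP-results}, all results for routing games carry over to modified games): the maps $\lmWE$ and $\tillmWE$ are continuous on $[0,\infty)$, and $\frac{\partial}{\partial D}V(D) = \lmWE(D)$, $\frac{\partial}{\partial D}\tillV(D) = \tillmWE(D)$ for $D>0$, with $V,\tillV$ continuous and $\lmWE,\tillmWE$ continuous on all of $[0,\infty)$. Also, directly from the definitions in~\eqref{eq:V-for-modified}, since $\FF_0 = \tillFF_0 = \{\mymathbf{0}\}$, we have $V(0) = \tillV(0) = 0$.

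Next I would apply the fundamental theorem of calculus: because $\lmWE$ is continuous and equals $V'$ on $(0,\infty)$ while $V$ is continuous at $0$, we get $\int_0^D \lmWE(z)\,dz = V(D) - V(0) = V(D)$, and likewise $\int_0^D \tillmWE(z)\,dz = \tillV(D)$. Subtracting,
\begin{equation*}
	J(D) = \int_0^D \bigl(\tillmWE(z) - \lmWE(z)\bigr)\,dz = \tillV(D) - V(D).
\end{equation*}
The nonnegativity of $J$ is then immediate from the first bullet of Lemma~\ref{lem:dominated-V}, which gives $V(D) \le \tillV(D)$ for all $D$.

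For the characterization of equality, I would argue in both directions using the second bullet of Lemma~\ref{lem:dominated-V}, which states $V(D) = \tillV(D)$ if and only if $\SSrem \notin \NN_D$. If $\SSrem \notin \NN_D$, then $J(D) = \tillV(D) - V(D) = 0$. Conversely, if $J(D) = 0$, then $\tillV(D) = V(D)$, hence $\SSrem \notin \NN_D$. I do not anticipate a genuine obstacle here; the only point requiring a modicum of care is the justification for passing from $\tillmWE - \lmWE = (\tillV - V)'$ on $(0,\infty)$ to the integral identity on the closed interval $[0,D]$, which is handled by continuity of $V,\tillV$ at the origin together with $V(0)=\tillV(0)=0$.
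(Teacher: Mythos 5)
Your proposal is correct and follows essentially the same route as the paper: identify $J(D) = \tillV(D) - V(D)$ via the fundamental theorem of calculus using $\frac{\partial}{\partial D}V = \lmWE$ and $\frac{\partial}{\partial D}\tillV = \tillmWE$, then conclude both the nonnegativity and the equality characterization from Lemma~\ref{lem:dominated-V}. The extra care you take at the origin ($V(0)=\tillV(0)=0$ and continuity) is a welcome but minor refinement of the paper's argument.
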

	\ifinclude{
		\begin{proof}
			Since $\lmWE(D) = V'(D)$ and ${\tillmWE(D) = \tillV'(D)}$ we get
			\begin{align*}
				\int_{0}^{D} \tillmWE(z) - \lmWE(z)dz	&= \tillV(z)|_0^D - V(z)|_0^D	\\
				&= \tillV(D) - V(D)	\\
				&\geq 0,
			\end{align*}
			with equality holding if and only if $\tillV(D) = V(D)$. Using Lemma~\ref{lem:dominated-V} the proof is finished.
		\end{proof}
	}
	Using this measure, we note that in the worst-case scenario, the set $\SSrem$ is ``neutral'' to the network, which only occurs when $\SSrem$ is unnecessary at demand $D$. If this is not the case, then the presence of $\SSrem$ is strictly beneficial to some degree.
	
	As mentioned, this may be a fairly naive measure. Even when not knowing anything about the levels of demand that a network is likely to carry, we may weigh certain levels more than others. A reasonable approach may be to weigh each level of demand by that amount of demand, as it is more important for the system to perform well when demand is high than when demand is low. In this case we have the measure
	\begin{equation*}
		W(D) = \int_{0}^{D} z\big(\tillmWE(z) - \lmWE(z)\big)dz.
	\end{equation*}
	For which we have the following result:
	\begin{proposition} \longthmtitle{Detriments of a path using a measure weighed by demand}\label{pr:W}
		Let $(\PP,\CC,D)$ and $\SSrem \subset \PP$ be given, where $D > 0$. If $\SSrem \notin \NN_D$ then
		\begin{equation*}
			W(D) \leq 0
		\end{equation*}
		with equality holding if and only if $\SSrem \notin \NN_T$ for all $T \in (0,D)$.
	\end{proposition}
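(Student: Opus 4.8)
The plan is to reduce the statement to the nonnegative ``gap'' function
\[
g(z) := \tillV(z) - V(z),
\]
together with a single integration by parts. Recall from Proposition~\ref{prop:pw-affine}, applied to both the original game and the modified game $(\tillPP,\CC)$ (legitimate by the discussion following~\eqref{eq:items-modified-game}), that $\lmWE = V'$ and $\tillmWE = \tillV'$ on $(0,\infty)$, and that $V$ and $\tillV$ are continuous on $[0,\infty)$ with continuous, piece-wise affine derivatives; hence $g$ is continuous on $[0,D]$, continuously differentiable on $(0,D]$, and $g'(z) = \tillmWE(z) - \lmWE(z)$.

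\textbf{Key steps.} First I would record the three properties of $g$ that drive the argument. (i) $g(0) = \tillV(0) - V(0) = 0$, since the only feasible flow at demand $0$ is the zero flow in both games. (ii) $g(z) \ge 0$ for all $z \in [0,D]$, which is the first bullet of Lemma~\ref{lem:dominated-V}. (iii) $g(D) = 0$: this is the second bullet of Lemma~\ref{lem:dominated-V} combined with the hypothesis $\SSrem \notin \NN_D$. With these in hand, integrating by parts on $[0,D]$ gives
\[
W(D) = \int_0^D z\, g'(z)\, dz = \bigl[z\, g(z)\bigr]_0^D - \int_0^D g(z)\, dz = D\, g(D) - \int_0^D g(z)\, dz = -\int_0^D g(z)\, dz,
\]
the boundary term vanishing because $g(D) = 0$ and because of the factor $z$ at the lower endpoint. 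Since $g \ge 0$ on $[0,D]$ by (ii), the integral on the right is nonnegative, so $W(D) \le 0$, which is the inequality.

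\textbf{Equality case.} Next I would argue that $W(D) = 0$ iff $\int_0^D g(z)\, dz = 0$, and since $g$ is continuous and nonnegative on $[0,D]$, this holds iff $g \equiv 0$ on $[0,D]$, i.e. $V(T) = \tillV(T)$ for every $T \in [0,D]$. By the second bullet of Lemma~\ref{lem:dominated-V} this is equivalent to $\SSrem \notin \NN_T$ for all $T \in [0,D]$; since $\SSrem$ is automatically unnecessary at $T = 0$ (zero flow) and, by hypothesis, at $T = D$, this is the same as $\SSrem \notin \NN_T$ for all $T \in (0,D)$, as claimed.

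\textbf{Main obstacle.} Structurally this is the ``demand-weighted'' mirror of Proposition~\ref{pr:J}, where the unweighted measure equalled $\tillV(D) - V(D)$ outright and here the weight $z$ costs one integration by parts. The only genuinely fiddly point — and it is minor — is justifying that integration by parts at the endpoint $z = 0$, where $V$ is only known to be differentiable on the open interval: one handles this by noting that $g$ is continuous with $g(0) = 0$ and $g'$ is bounded and piece-wise affine, so the integrand is Riemann integrable, applying the fundamental theorem of calculus on $[\varepsilon,D]$, and letting $\varepsilon \to 0^+$ using $\varepsilon\, g(\varepsilon) \to 0$. I do not anticipate any other difficulty.
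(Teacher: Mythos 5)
Your proposal is correct and follows essentially the same route as the paper's proof: a single integration by parts reducing $W(D)$ to $-\int_0^D\bigl(\tillV(z)-V(z)\bigr)dz$, followed by the sign and equality characterizations from Lemma~\ref{lem:dominated-V}. The extra care you take at the endpoint $z=0$ is a minor refinement the paper glosses over, but the argument is the same.
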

	\ifinclude{
		\begin{proof}
			The result can be obtained by integration by parts:
			\begin{align*}
				W(D)	&= \int_{0}^{D} z\big(\tillmWE(z) - \lmWE(z)\big)dz	\\
				&= z\big(\tillV(z) - V(z)\big)|_0^D - \int_{0}^{D}\tillV(z) - V(z)dz	\\
				&= \int_{0}^{D}V(z) - \tillV(z)dz.
			\end{align*}
			Here the last equality follows since $\SSrem \notin \NN_D$ and therefore we have, by Lemma~\ref{lem:dominated-V}, that $V(D) = \tillV(D)$. From the same lemma we know that $V(T) \leq \tillV(T)$ for all ${T \in [0,D]}$ and it follows that we end up with an integral over a nonpositive function. This is only equal to the zero function if ${\tillV(T) = V(T)}$ for all $T \in (0,D)$, which happens, by Lemma~\ref{lem:dominated-V}, if and only if $\SSrem$ is unnecessary for all $T \in (0,D)$. This completes the proof.
		\end{proof}
	}
	Thus, if we use this measure, and we look at a range of demands from zero to a demand at which the set $\SSrem$ is unnecessary, then it follows that  $\SSrem$ is at best ``neutral'' to the network performance, and this best case scenario only occurs when $\SSrem$ is not used, that is, the set is unnecessary for the whole range of considered demands.

	\section{Conclusion}
	We have studied non-atomic, single origin-destination routing games with affine, non-decreasing cost functions on the edges, and how changes in the demand of such a game affect the set of Wardrop equilibria. We have characterized the set of directions in which the set of WE varies as the set of solutions of a variational inequality problem. Subsequently, we have used this characterization and related results to obtain various easy-to-check sufficient conditions for the presence of Braess's paradox in a network, as well as a necessary and sufficient condition that can be usefully employed, but is computationally intractable to check in full. We have also shown that any set of paths responsible for BP at some level of demand must at other levels of demand strictly reduce the WE-cost. Using two measures on the value of a set of paths we have shown that even when a set of paths is observed to cause BP at some demand, removal of that set from the network could still be detrimental to the performance of the network overall.
	
	For future work, we aim to investigate to what extend these results can be generalized to routing games with more general cost functions, and to networks with multiple origin-destination pairs. There are also several open questions whose answers could help with the efficient implementations of methods for detecting BP on the basis of our presented results. Among those is the question if and to what extent there exist efficient ways of selecting subsets $\SSrem$ that when used in Proposition~\ref{prop:BP-ifandonlyif} supply the tighest possible upper bounds on the WE-cost of the original routing game. Finding such tightest upper bounds would equate to fully revealing the presence of BP, so this may be challenging but worthwhile line of inquiry.
	
	
	\begin{appendices}
		\section{Proof of Corollary~\ref{cor:interval-of-active-set}} \label{ap:proof-of-corollary-used-sets}
		\ifinclude{
			\begin{proof}
				The claim about active sets $\aset_{D}$ is shown in~\cite[Section 4]{RC-VD-MS:21}. That is, there exist points $\DD := ( D_0,D_1,\cdots,D_M,D_{M+1})$ with $D_0  =  0$, $D_{M+1} =  \infty$ and $D_{j} > D_{j - 1}$ for all ${j \in [M+1]}$ along with sets $\{\Ja_0,\Ja_1,\cdots,\Ja_M\}$ such that $\aset_D = \Ja_i$ for all $i \in [M]$ and $D \in (D_i,D_{i+1})$. Using this result and for the defined points in $\DD$ we will show the existence of sets $\{\Ju_0,\Ju_1,\cdots,\Ju_M\}$ such that $\uset_D = \Ju_i$ for all $D \in (D_i,D_{i+1})$ and all $i$. Pick some $D_i,D_{i+1} \in \DD$. Let $D \in (D_i,D_{i+1})$ and $p \in \uset_D$. Consider any other demand $D' \in (D_i,D_{i+1})$ with $D' \not = D$. Associated to $D$ and $D'$, select $T$ such that $T \in (D_i,D_{i+1})$ and $D'$ can be written as a convex combination of $T$ and $D$. That is, $D' = \mu D + (1-\mu) T$ for some $\mu \in (0,1)$. Since $p \in \uset_D$, there exists $f^D \in \WW_{D}$ such that $f^D_p > 0$. Furthermore, $f^T_p \ge 0$ for any $f^T \in \WW_{T}$. Since $\aset_D = \aset_T$, from Lemma~\ref{lem:facts-on-evolution-for-affine}-2, there exists $f^{D'} \in \WW_{D'}$ such that $f^{D'} = \mu f^D + (1-\mu) f^T$ and so $f^{D'}_p > 0$. Thus, $p \in \uset_{D'}$. Since $D$ and $D'$ were selected arbitrarily, we conclude that the used set remains the same in the interval $(D_i,D_{i+1})$.
				
				Now, for $D_i,D_{i+1},D_j,D_{j+1} \in \DD$ with $i \neq j$ let $D^- \in (D_i,D_{i+1})$ and ${D^+ \in (D_j,D_{j+1})}$. Note that therefore $\aset_{D^-} \neq \aset_{D^+}$. For the sake of contradiction assume that $\uset_{D^-} = \uset_{D^+}$. For any $p \in \uset_{D^-}$ we know that ${C_p(f^{D^-}) \leq C_r(f^{D^-})}$ for all $r \in \PP$ and $f^{D^-} \in \WW_{D^-}$. Similarly $C_p(f^{D^+}) \leq C_r(f^{D^+})$ for all $r \in \PP$ and $f^{D^+} \in \WW_{D^+}$. Since $C(\cdot)$ is an affine function (see \eqref{eq:path-cost}) it follows that for $f^{T} = \coco_{\mu}(f^{D^-},f^{D^+})$ with $f^{D^-} \in \WW_{D^-}$, $f^{D^+} \in \WW_{D^+}$, $\mu \in [0,1]$ and $T = \coco_{\mu}(D^-,D^+)$ we have $C_p(f^{T}) \leq C_r(f^{T})$ for all $p \in \uset_{D^-}$. Also note that $f^T_p > 0$ implies $f^{D^-}_p > 0$ or $f^{D^+}_p > 0$, which in turn gives us $p \in \uset_{D^-}$.
				In other words $f^T$ is a WE. Since $\aset_{D^-} \neq \aset_{D^+}$ we can without loss of generality pick $r \in \PP$ such that $r \in \aset_{D^-}$ and $r \notin \aset_{D^+}$. In other words, the cost of path $p$ is minimal at the demand $D^-$, but not minimal at demand $D^+$. Using \eqref{eq:path-cost} and the fact that $f^T$ is a WE it follows that the cost of $r$ is not minimal under WE at demand $T = \coco_{\mu}(D^-,D^+)$. Since $D^- \in (D_i,D_{i+1})$ we can set $\mu$ such that $T \in (D_{i},D_{i+1})$. This would then imply  that $\aset_{T} \neq \aset_{D^-}$, which contradicts Corollary~\ref{cor:interval-of-active-set}. Therefore our premise is false, which shows that $\uset_{D^-} \neq \uset_{D^+}$, finishing the proof.
			\end{proof}
		}
	\end{appendices}
	
	\bibliographystyle{ieeetr}
	\bibliography{bibliography.bib}
	
\end{document}